\numberwithin{equation}{section}
\newtheorem{theorem}{Theorem}
\newtheorem{pro}{Proposition}
\newtheorem{con}{Condition}
\newtheorem{lem}{Lemma}
\newtheorem{rem}{Remark}
\DeclareMathOperator{\sign}{sgn}
\DeclareMathOperator{\ii}{\textbf{i}}
\DeclareMathOperator{\rank}{rank}
\DeclareMathOperator{\diag}{dg}
\DeclareMathOperator{\x}{\textbf{x}}
\DeclareMathOperator{\y}{\textbf{y}}
\DeclareMathOperator{\A}{\textbf{A}}
\DeclareMathOperator{\Vx}{\mathrm{V}_{\x}}
\DeclareMathOperator{\Ax}{\textbf{Ax}}
\DeclareMathOperator{\Ay}{\textbf{Ay}}
\title{Signal Reconstruction from Phase-only Measurements: Uniqueness Condition, Minimal Measurement Number and Beyond}
\author{Junren Chen\thanks{Department of 
		Mathematics, The University of Hong Kong.
		(E-mail: chenjr58@connect.hku.hk).
		The work of this author was supported by an HKPFS scholarship from the Hong Kong Research Grants Council (RGC).
		} \and Michael K. Ng\thanks{Department of 
		Mathematics, The University of Hong Kong. (E-mail: mng@maths.hku.hk). The work of this author was supported in part by the Hong Kong RGC GRF 12300519, 17201020, 17300021 C1013-21GF,
C7004-21GF and Jointly by NSFC-RGC N-HKU76921.	
		}}
\date{\today}
\begin{document}
	
	\maketitle
	
	
	
	
	\begin{abstract}
	This paper studies the phase-only reconstruction problem of recovering a complex-valued signal $\textbf{x}$ in $\mathbb{C}^d$ from the phase of $\textbf{Ax}$ where $\textbf{A}$ is a given measurement matrix in $\mathbb{C}^{m\times d}$. The reconstruction, if possible, should be up to a positive scaling factor. 
By using 	the rank of discriminant matrices, 
uniqueness conditions are derived to characterize 
whether the underlying signal can be uniquely reconstructed. 
We are also interested in the problem of minimal measurement number.  
We show that at least $2d$ but no more than $4d-2$ measurements are needed for the reconstruction of all $\textbf{x}\in\mathbb{C}^d$, whereas the minimal measurement number is exactly $2d-1$ if we pursue the recovery of almost all signals.
Moreover, when 
adapted to 
the phase-only reconstruction of $\textbf{x}\in\mathbb{R}^d$, our uniqueness conditions are 
more practical and general than existing ones. 
Finally, we show that our theoretical results can be straightforwardly extended to affine 
phase-only reconstruction where the phase of $\textbf{Ax}+\textbf{b}$ is observed 
for some $\textbf{b}\in\mathbb{C}^d$.  
	\end{abstract}
	
	\vspace{2mm}
	\noindent
	{Keywords:} phase-only reconstruction, magnitude retrieval, phase retrieval, measurement matrix, 
	minimal measurement number	
	
	\vspace{2mm}
	\noindent
	{MSC:} 15A03, 15A09, 15A29

	\section{Introduction}\label{sec1}
 Signal reconstruction from only the phase or magnitude of Fourier transform was intensively studied in the 1980s 
\cite{hayes1980signal,hayes1982reconstruction,oppenheim1981importance,fienup1978reconstruction}. The reconstruction from phase-only observations is often referred  to as a phase-only  reconstruction  problem in the literature 
 (e.g., \cite{hayes1980signal,oppenheim1981importance}), while the reconstruction based on  measurement magnitude is commonly termed as a phase retrieval problem 
 (e.g., \cite{fienup1982phase}). 
 
 This paper concerns the phase-only reconstruction problem that has found many real-world applications, including blind deconvolution \cite{hayes1982reconstruction,stockham1975blind}, signal and image coding \cite{hayes1980signal,oppenheim1981importance}, kinoforms \cite{espy1983effects,oppenheim1981importance}, image alignment \cite{kughlin1975phase}, radiolocation \cite{li1983arrival}. For example, the recovery of a     signal blurred with an unknown distorting signal is called blind deconvolution, and it  reduces to phase-only reconstruction if the distorting signal has zero Fourier phase. This special case occurs  in   images  blurred by defocused lenses with circular aperture stops or signals under long-term exposure to atmospheric turbulence (e.g., see   \cite[section IV]{oppenheim1981importance} or   \cite[Section I]{hayes1982reconstruction}). Then, many subsequent works further explored and developed the applications, specifically in the cases of image processing like  restoration \cite{behar1992image,urieli1998optimal} and inpainting \cite{hua2007image}, object shape retrieval \cite{bartolini2005warp} and speech reconstruction   \cite{loveimi2010objective}. More recently,   phase-only measurement was applied in multiple-input and multiple-output (MIMO) because of its potential for high-bandwidth communication \cite{wang2016multiuser}. Also, researchers from the compressed sensing community began to study a phase-only   sensing scenario because phase-only measurement is robust to multiplicative corruption and enjoys easier quantization \cite{feuillen2020ell,boufounos2013sparse,jacques2021importance,chen2022uniform}.

From an algorithmic perspective, some algorithms have been proposed for solving the problem of phase-only reconstruction. Hayes  
et al. proposed an iterative algorithm and closed form solution in \cite{hayes1980signal}. The iterative algorithm alternatively imposes the  signal support and the Fourier phase as constraints, while the closed form solution is derived by solving a linear system.  
Later, Levi and Stark developed the Projection Onto Convex Sets (POCS) algorithm  that incorporates the Fourier phase in a different manner \cite{levi1983signal}. The performance of POCS algorithm in image restoration was   extensively investigated in \cite{urieli1998optimal}. We note that these algorithms   only apply to the reconstruction of a real-valued signal from its Fourier phase, for which the theoretical basis is the uniqueness condition established in \cite{hayes1980signal,hayes1982reconstruction}. 
More precisely, 
if a real-valued signal admits $z$-transform that does not have zero in reciprocal pair or on the unit circle, then it can be uniquely specified (up to a positive scaling factor\footnote{This is   the unavoidable trivial ambiguity in phase-only reconstruction. In this paper, terms such as ``exact reconstruction'', ``exactly recovered", ``uniquely specified" are used up to this trivial ambiguity.}) by the Fourier phase. It should also be noted that a   more practical  necessary and sufficient condition was obtained in \cite{ma1991novel}. More recently, the MagnitudeCut algorithm was proposed in \cite{wu2016phase}, and a quadratic programming algorithm was developed in \cite{kishore2020phasesense}. These two algorithms can be used in the reconstruction of a complex-valued signal from the phase  of general linear measurements. Nevertheless, the theoretical foundation of this generalized setting is far from solid. Specifically, it is unclear how to determine whether the signal can be uniquely recovered, and if so, then how many measurements are required.  

Despite the aforementioned applications and algorithms,   phase-only reconstruction has received far less attention than   phase retrieval in the past two decades. Note that, the theories for recovering a complex-valued signal from the magnitudes of general linear measurements\footnote{This is often referred to as a generalized phase retrieval problem (e.g., \cite{candes2015phase,sun2018geometric}) because the  measurement is not restricted to be Fourier magnitude. Accordingly, the problem in our work can be termed as   generalized phase-only reconstruction, but herein we simply refer to it  as phase-only reconstruction.} have been well established, especially from the perspective of the minimal measurement number, see   \cite{balan2006signal,bandeira2014saving,conca2015algebraic,huang2021almost} for instance. In particular,  these   references investigated 
at least 
how many measurements are sufficient for phase retrieval of either all signals \cite{balan2006signal,bandeira2014saving,conca2015algebraic} or almost all signals \cite{balan2006signal,huang2021almost}. 
This line of works motivates us to study the minimal measurement number for phase-only reconstruction.

The main aim of this paper is 
to provide a theoretical study for phase-only reconstruction that accommodates general measurement matrix and complex-valued signal, mainly from the viewpoint of minimal measurement number. Specifically, we study the reconstruction of $\textbf{x}\in\mathbb{C}^d$ from the phase of $\textbf{Ax}$, where $\textbf{A}\in \mathbb{C}^{m\times d}$ is the measurement matrix with measurement number $m$. A theoretical framework  equipped with a complete suite of notations is built, and the framework is then employed to study the minimal measurement number for reconstruction of all or almost all signals in $\mathbb{C}^d$. We note that the technicalities   in the proofs   essentially  depart from the algebraic argument in phase retrieval (e.g., \cite{conca2015algebraic}), and indeed, most analyses are based on linear algebra. Our main contributions are summarized as follows: 
\begin{itemize}
    \item We propose necessary and sufficient      uniqueness conditions based on  the rank of discriminant matrices (Theorems \ref{T1}-\ref{T2}). These results can be directly adjusted to phase-only reconstruction of a real-valued signal 
    (Theorems \ref{realcasedisd}-\ref{realdise}) and then recover all previously known uniqueness criteria presented in \cite{hayes1980signal,ma1991novel}, see Section \ref{compare}. 
    
    \item We prove that the minimal measurement number for the reconstruction of all signals in $\mathbb{C}^d$ is at least $2d$ but no more than $4d-2$, see Theorems \ref{T5}-\ref{T4}. We also show $2d-1$ is the minimal measurement number for recovering almost all signals. Specifically, phase of $2d-1$ generic linear measurements can specify a generic signal up to a positive scaling factor (Theorem \ref{T6}). 
\end{itemize}

Besides the uniqueness conditions and the results on minimal measurement number, we present 
some interesting properties   
for a phase-only system ($\sign(\textbf{Ax}) = \textbf{b}$) 
as a side contribution, 
see Theorem \ref{T8} and Remark \ref{remark4}. 
Moreover, we  note that our theories not only support the algorithms in \cite{wu2016phase,kishore2020phasesense},  but also shed light on the understanding of previous simulation results, see Remark \ref{remark3}. By using similar technical analyses, 
we carry over the theoretical framework to affine phase-only reconstruction, which is  inspired by some recent works on affine phase retrieval \cite{gao2018phase,gao2022newton,huang2021phase}.

This paper is structured as follows. Some preliminaries and notations are given in the 
remaining Section \ref{sec1}.	In Section \ref{sec2}, we propose uniqueness conditions based on    the rank of discriminant matrix that precisely characterize whether a signal can be uniquely specified. Using these uniqueness conditions as main tools,	  we study the minimal measurement number required for reconstruction of all signals or almost all signals in Sections \ref{sec3}--\ref{sec4}. Two other interesting results  are presented in Section \ref{sec5}. In Section \ref{compare}, we carefully compare our results with existing works to show our technical contributions explicitly. In Section \ref{sec6}, the whole theory is straightforwardly extended to affine phase-only reconstruction. Finally, some concluding remarks are given in Section \ref{sec7} to close the paper. Most proofs for phase-only reconstruction of real-valued signal (Section \ref{compare}) and affine phase-only reconstruction (Section \ref{sec6}) are provided in Appendices.

	\subsection{Preliminaries and notations}
\label{preli}
    Following the convention in previous works on minimal measurement number of phase retrieval, we will use the terminology   “generic”. 
     Here, we adopt the definition in  \cite[section 2.2]{conca2015algebraic} (see also a similar introduction  in \cite{bandeira2014saving}), and keep it as concise as possible. A subset of $\mathbb{R}^n$ is called a real algebraic variety if it is  defined to be the common zeros of finitely many polynomials in $  \mathbb{R}[x_1,...,x_n]$. Then, declaring all real algebraic varieties to be closed set defines the Zariski topology of $\mathbb{R}^n$. Evidently, a non-empty Zariski open set has full Lebesgue measure, and is open, dense under the standard Euclidean topology (recall that a set is said to be dense if its closure is the full space)\footnote{For readers unfamiliar with Zariski topology, it shall be fine to simply think of non-empty Zariski open set as an extremely large set whose complement is of zero Lebesgue measure, and nowhere dense under Euclidean topology (recall that a set is said to be nowhere dense if its closure has no interior).}. In this work, similar to \cite{conca2015algebraic,bandeira2014saving}, $\mathbb{C}^n$ is identified with $\mathbb{R}^{2n}$ when we talk about the Zariski topology of $\mathbb{C}^n$. That is, $V_0\subset \mathbb{C}^n$ is Zariski closed/open if $\{(\textbf{x},\textbf{y})\in\mathbb{R}^{2n}:\textbf{x}+\textbf{i}\textbf{y}\in V_0\}$ is Zariski closed/open in $\mathbb{R}^{2n}$. Likewise, suppose $V$ is a $s$-dimensional linear subspace (or linear submanifold) of $\mathbb{C}^n$, we talk about its Zariski topology by viewing it as $\mathbb{C}^s$\footnote{More precisely, $V_0\subset V$ is Zariski closed/open if for some invertible linear (or affine) transformation $\mathcal{A}(\cdot)$ between $V$ and $\mathbb{C}^s$, $\mathcal{A}(V_0) = \{\mathcal{A}(\textbf{x}):\textbf{x}\in V_0\}$ is Zariski closed/open in $\mathbb{C}^s$.}. Following \cite{conca2015algebraic}, we say that a generic point of $U$ has certain property if there is a non-empty Zariski open set of points having this property. For example, $\{\textbf{x}\in \mathbb{C}^n:f_j(\textbf{x})\neq 0,~\forall j = 1,2,\cdots,n_0\}$ is non-empty Zariski open 
    set of $\mathbb{C}^d$ if $f_j(\textbf{x})$ is non-zero polynomials of    $[\Re(\textbf{x}),\Im(\textbf{x})]$ with complex coefficients. This simple fact will be used in the proof of Lemma \ref{lemma4}.

    Next, we   give some general notations, with more introduced when appropriate in subsequent developments. One principle is that capital boldface letters, lowercase boldface letters, regular letters are used to denote matrices, vectors, scalars, respectively. $\bm{0}_{n_1\times n_2}$ (or $\bm{1}_{n_1\times n_2}$) represents the $n_1\times n_2$ matrix with all entries being $0$ (or $1$), but we simply write $\bm{0}$ (or $\bm{1}$) if the dimension is self-evident. We use $\textbf{i}$ to represent $\sqrt{-1}$, while $i$ still   serves frequently for other purposes. For   $a\in \mathbb{C}$, we let $\Re(a)$, $\Im(a)$, $|a|$ be its real part, imaginary part, absolute value, respectively.
    For non-zero $a$ we   define its phase to be $\sign(a)=\frac{a}{|a|}$, and we adopt the convention $\sign(0)=0$. In this paper, $\Re(\cdot)$, $\Im(\cdot)$, $|\cdot|$ and $\sign(\cdot)$ entry-wisely operate on vectors or matrices. Now, the phase-only reconstruction can be formulated to be the recovery of $\textbf{x}\in \mathbb{C}^d$ from $\sign(\textbf{Ax})$. We let $[n] = \{1,\cdots, n\}$. For $\textbf{v}\in\mathbb{C}^{n}$, we define $\mathrm{dg}(\textbf{v})$ to be the $n\times n$ diagonal matrix with main diagonal $\textbf{v}$. The Hadamard product between $\textbf{x} = [x_i]$ and $\textbf{y}= [y_i]$ is given by $\textbf{x}\odot \textbf{y} = \sum_i x_iy_i$. $\mathrm{N}(\textbf{v})$ is used to represent the support set, e.g., $\mathrm{N}(\textbf{v}) = \{1,3\}$ if $\textbf{v}= (1,0,2)^\top$. Let $|\mathcal{S}|$ be the number of elements in the finite set $\mathcal{S}$. For $\textbf{A}\in\mathbb{C}^{n_1\times n_2}$, $\mathcal{S}\subset [n_1]$, $\mathcal{T}\subset [n_2]$, $\textbf{A}^\mathcal{S}_\mathcal{T}$ denotes the submatrix constituted by rows in $\mathcal{S}$ and columns in $\mathcal{T}$. To keep notation light, we let $\textbf{A}^\mathcal{S}:=\textbf{A}^\mathcal{S}_{[n_2]}$, $\textbf{A}_\mathcal{T} : =\textbf{A}^{[n_1]}_{\mathcal{T}}$. In addition, we define  $\ker(\textbf{A}) = \{\textbf{x}\in \mathbb{C}^{n_2}:\textbf{Ax}=\bm{0}\}$.

Given $\textbf{A}\in\mathbb{C}^{n_1\times n_2}$, the mappings $\varphi$ and $\varphi_1$ are defined as  
	\begin{equation}
		\begin{aligned}
			\varphi(\textbf{A})=\begin{bmatrix}
					\Re(\A)& \Im(\A)\\
					-\Im(\A) & \Re(\A)
				\end{bmatrix};~\varphi_1(\textbf{A}) = \begin{bmatrix}
					\Re(\A) \\
					-\Im(\A) 
				\end{bmatrix}.
				\label{1.1}
		\end{aligned}
	\end{equation}
	 It is evident that $\textbf{A} = \textbf{0}$, $\varphi(\textbf{A})=\textbf{0}$ and $\varphi_1(\textbf{A})=\textbf{0}$ are equivalent. Besides, we note the following useful relations that can be easily verified: $\varphi_1(\textbf{A}+\textbf{B})=\varphi_1(\textbf{A})+\varphi_1(\textbf{B})$, $\varphi_1(\textbf{AB})=\varphi(\textbf{A})\varphi_1(\textbf{B})$, $\varphi(\textbf{AB})=\varphi(\textbf{A})\varphi(\textbf{B})$, $\rank(\varphi(\textbf{A}))=2 \cdot \rank(\textbf{A})$.

	\section{Discriminant matrices} \label{sec2}
	
	In this section, we introduce two discriminant matrices 	whose rank precisely characterizes whether the signal is uniquely specified by the phase-only measurements. These two matrices will be the key ingredients in our subsequent analysis.

	We consider the reconstruction of $\x$ from $\sign(\textbf{Ax})$. As has been noted, the trivial ambiguity of a positive scaling factor is unavoidable. Thus, given the measurement matrix $\A$,  the set of signals that can be uniquely recovered from phase is 
	\begin{equation}
	    \mathcal{W}_{\textbf{A}} =\{\textbf{x}\in\mathbb{C}^d:\sign(\textbf{Ay})=\sign(\textbf{Ax})\text{ implies }\textbf{x}=t\cdot \textbf{y}\text{ for some }t>0\}.
	\end{equation}
Note that $\rank(\A)<d$ implies $\dim(\ker(\A))\geq 1$, which trivially leads to $\mathcal{W}_{\bm{\mathrm{A}}}=\varnothing$. Thus, to study the problem of minimal measurement number that can uniquely reconstruct (almost) all $\textbf{x}\in \mathbb{C}^d$, we simply focus on $\A$ of full column rank.

Evidently, exchanging rows of $\A$ will not change $\mathcal{W}_{\A}$. Moreover, for any invertible $\textbf{P} \in \mathbb{C}^{d\times d}$, we have $\x \in \mathcal{W}_{\A}$ if and only if $\textbf{P}^{-1}\textbf{x}\in \mathcal{W}_{\textbf{AP}}$. Taking the \enquote{if} part for example, provided $\textbf{P}^{-1}\x \in \mathcal{W}_{\textbf{AP}}$, we note that $\sign(\Ax)=\sign(\Ay)$ equals $\sign(\textbf{APP}^{-1}\textbf{x})=\sign(\textbf{APP}^{-1}\textbf{y})$, so we have $\textbf{P}^{-1}\textbf{y}=t\cdot \textbf{P}^{-1}\textbf{x}$ for some $t>0$, which gives $\textbf{y}=t\cdot \textbf{x}$. It follows that $\x \in \mathcal{W}_{\A}$. The \enquote{only if} part can be similarly verified. Thus, we have $\mathcal{W}_{\A}=\textbf{P}\mathcal{W}_{\textbf{AP}}$, i.e., $\mathcal{W}_{\A}$ and $\mathcal{W}_{\textbf{AP}}$ only differ by an invertible linear transformation $\textbf{P}$.

Given any $\textbf{A}\in \mathbb{R}^{m\times d}$ of full column rank,  we now derive a corresponding measurement matrix in a special form. Specifically, 
we can assume $\textbf{A}^{[d]}$ is invertible by exchanging suitable rows, 
and note that 
$$\widetilde{\textbf{A}}=\textbf{A}(\A^{[d]})^{-1} =\begin{bmatrix}
\textbf{A}^{[d]}\\ \textbf{A}^{[n]\setminus[d]}
\end{bmatrix} (\A^{[d]})^{-1} = \begin{bmatrix}
    \bm{\mathrm{I}_d} \\ \bm{\mathrm{A_1}}
\end{bmatrix}.$$
Here $\widetilde{\textbf{A}}$   in a special form $[\bm{\mathrm{I}_d},\bm{\mathrm{A_1^\top}}]^\top$ has $\mathcal{W}_{\widetilde{\A}}$ that differs from $\mathcal{W}_{\A}$ by only an invertible linear transformation. This implies that $\mathcal{W}_{\widetilde{\A}}$ and $\mathcal{W}_{\A}$ are 
simultaneously 
the full space $\mathbb{C}^d$ or having full Lebesgue measure, so in many cases (e.g., when studying the problem of minimal measurement number) one can simply consider the special $\textbf{A}$ with $\textbf{A}^{[d]} = \bm{\mathrm{I}_d}$ without loss of any generality. For convenience,    $\A$ in the special  form $[\bm{\mathrm{I}_d},{\bm{\mathrm{B^\top}}}]^\top$ for some $\bm{\mathrm{B}}\in\mathbb{R}^{(m-d)\times d}$ is hereinafter referred to as 
a canonical measurement matrix.

\subsection{General measurement matrix}

We first characterize recoverable signal $\x$ under a general $\textbf{A}$ with no specific structure. For $\A$ of full column rank one always has $\bm{0} \in \mathcal{W}_{\A}$, so we simply focus on non-zero $\x$. Shortly, we will show that whether $\x \in \mathcal{W}_{\A}$ holds can be characterized by the rank of discriminant matrix $\mathcal{D}_{\textbf{A}}(\textbf{x})$  defined as  
\begin{equation}
	\begin{aligned}
	\label{disD}
			 \mathcal{D}_{\bm{\mathrm{A}}}(\x) =[\varphi(\A)\  \varphi_1(\mathrm{dg}(\Ax))]
			=\begin{bmatrix}
				\Re(\A)& \Im(\A)& \Re\big(\mathrm{dg}(\Ax)\big)\\
				-\Im(\A)& \Re(\A) & -\Im\big(\mathrm{dg}(\Ax)\big)
			\end{bmatrix}.
	\end{aligned}
\end{equation}
	For specific $\A$ and $\x$, we 
	define a linear system with real variables $\bm{\lambda} \in \mathbb{R}^{|\mathrm{N}(\Ax)|}$ and complex variables $\y\in \mathbb{C}^d$ as  
	\begin{equation}
		\Ay=[\mathrm{dg}(\sign(\Ax))]_{\mathrm{N}(\Ax)}\bm{\lambda}. 
		\label{eq1}
	\end{equation}
    This becomes a linear system of $\y$ when $\bm{\lambda}$ is specified, and one can easily verify that \begin{equation}
        \label{subspace}
       \mathrm{V}_{\x}: =\{\bm{\lambda}\in \mathbb{R}^{|\mathrm{N}(\Ax)|}: (\ref{eq1}) ~\text{is consistent (i.e., has a solution)}\}
    \end{equation} 
	is a linear subspace. The following Lemma characterizes $\x \in \mathcal{W}_{\A}$ via the dimension of $\mathrm{V}_{\x}$. We write $\mathrm{U}_{\x}: = \diag(\sign (\Ax)) $, and we assume $\A$ is fixed when we use the notations $\mathrm{V}_{\x},\mathrm{U}_{\x}$.  
	
	\begin{lem}
		Assume $\bm{\mathrm{A}}\in\mathbb{C}^{m\times d}$ has full column rank, $\bm{\mathrm{x}}\in \mathbb{C}^d$ is non-zero, then $\bm{\mathrm{x}} \in \mathcal{W}_{\bm{\mathrm{A}}}$ if and only if $\dim (\mathrm{V}_{\bm{\mathrm{x}}}) = 1$.
		\label{lemma1}
	\end{lem}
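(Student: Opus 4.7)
The plan is to set up a correspondence between solutions $\textbf{y}$ of $\sign(\textbf{Ay})=\sign(\textbf{Ax})$ and positive vectors in $\mathrm{V}_{\textbf{x}}$, and then exploit the openness of the positive orthant. The key observation is that, by the definition of (\ref{eq1}), a vector $\textbf{y}\in\mathbb{C}^d$ satisfies $\sign(\textbf{Ay})=\sign(\textbf{Ax})$ if and only if there exists a strictly positive $\bm{\lambda}\in\mathrm{V}_{\textbf{x}}$ such that (\ref{eq1}) holds with this pair $(\textbf{y},\bm{\lambda})$; the coordinates of $\bm{\lambda}$ are precisely the moduli $|(\textbf{Ay})_k|$ for $k\in \mathrm{N}(\textbf{Ax})$. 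Since $\textbf{A}$ has full column rank, the map $\bm{\lambda}\mapsto \textbf{y}$ on $\mathrm{V}_{\textbf{x}}$ is well-defined and injective, so this correspondence is a bijection.

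Next, I would single out the distinguished element $\bm{\lambda}^\star:=|\textbf{Ax}|_{\mathrm{N}(\textbf{Ax})}\in\mathrm{V}_{\textbf{x}}$ which corresponds to the choice $\textbf{y}=\textbf{x}$. This element is strictly positive coordinate-wise, and a direct substitution in (\ref{eq1}) together with the full column rank of $\textbf{A}$ shows that a pair $(\textbf{y},\bm{\lambda})$ has $\bm{\lambda}=t\bm{\lambda}^\star$ for some scalar $t$ if and only if $\textbf{y}=t\textbf{x}$, and in that case the sign of $t$ is the same on both sides. Hence the recovery condition $\textbf{x}\in\mathcal{W}_{\textbf{A}}$ translates exactly into: every strictly positive $\bm{\lambda}\in\mathrm{V}_{\textbf{x}}$ is a positive scalar multiple of $\bm{\lambda}^\star$.

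For the ``if'' direction, if $\dim(\mathrm{V}_{\textbf{x}})=1$ then $\mathrm{V}_{\textbf{x}}=\mathbb{R}\bm{\lambda}^\star$, and any positive $\bm{\lambda}$ in $\mathrm{V}_{\textbf{x}}$ is automatically a positive multiple of $\bm{\lambda}^\star$, giving $\textbf{x}\in\mathcal{W}_{\textbf{A}}$. For the ``only if'' direction I would use a perturbation argument: note first that $\dim(\mathrm{V}_{\textbf{x}})\ge 1$ since $\bm{\lambda}^\star\neq \bm{0}$ (here I use $\textbf{x}\neq \bm{0}$ and $\rank(\textbf{A})=d$, so $\textbf{Ax}\neq \bm{0}$). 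If $\dim(\mathrm{V}_{\textbf{x}})\ge 2$, pick any $\bm{\lambda}'\in\mathrm{V}_{\textbf{x}}$ linearly independent from $\bm{\lambda}^\star$; since $\bm{\lambda}^\star$ lies in the open positive orthant, $\bm{\lambda}^\star+\epsilon\bm{\lambda}'$ remains strictly positive for all sufficiently small $\epsilon>0$, yet it is not a positive scalar multiple of $\bm{\lambda}^\star$. Via the correspondence above this yields a $\textbf{y}$ with $\sign(\textbf{Ay})=\sign(\textbf{Ax})$ but $\textbf{y}\neq t\textbf{x}$ for any $t>0$, contradicting $\textbf{x}\in\mathcal{W}_{\textbf{A}}$.

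The main subtlety, rather than an obstacle, will be carefully unpacking (\ref{eq1}) for indices inside and outside $\mathrm{N}(\textbf{Ax})$ to confirm that the bijection between $\{\textbf{y}:\sign(\textbf{Ay})=\sign(\textbf{Ax})\}$ and $\{\bm{\lambda}\in\mathrm{V}_{\textbf{x}}:\bm{\lambda}>\bm{0}\}$ is exact; once this is stated cleanly, both implications follow from the one-sentence topological fact that a linear subspace of $\mathbb{R}^n$ meeting the open positive orthant in a single ray must be one-dimensional.
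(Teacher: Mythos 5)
Your proposal is correct and follows essentially the same route as the paper's proof: both identify $|\textbf{Ax}|^{\mathrm{N}(\textbf{Ax})}$ as a strictly positive element of $\mathrm{V}_{\textbf{x}}$, use $\dim(\mathrm{V}_{\textbf{x}})=1$ together with positivity of the moduli for the ``if'' direction, and for the ``only if'' direction exploit the openness of the positive orthant to produce a positive $\bm{\lambda}\in\mathrm{V}_{\textbf{x}}$ linearly independent from $|\textbf{Ax}|^{\mathrm{N}(\textbf{Ax})}$, contradicting recoverability. Your framing as an explicit bijection between $\{\textbf{y}:\sign(\textbf{Ay})=\sign(\textbf{Ax})\}$ and the strictly positive vectors of $\mathrm{V}_{\textbf{x}}$ is just a cleaner packaging of the same argument.
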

    	
    	\begin{proof}
	Note that  $(\mathrm{U}_{\x})_{\mathrm{N}(\Ax)}|\Ax|^{\mathrm{N}(\Ax)}=\mathrm{U}_{\x}|\Ax|=\Ax$, which gives $|\Ax|^{\mathrm{N}(\Ax)}\in \mathrm{V}_{\x}$. Since $|\Ax|^{\mathrm{N}(\Ax)}\neq 0$, we have $\dim(\mathrm{V}_{\x})\geq1$.

	We start from the "if" part. Assume  $\sign(\bm{\mathrm{A\tilde{x}}})=\sign(\Ax)$ for some $\bm{\mathrm{\tilde{x}}}$ (note that this implies $\mathrm{N}(\Ax)=\mathrm{N}(\bm{\mathrm{A\tilde{x}}})$), then the relation:
	 $$\bm{\mathrm{A\tilde{x}}}=\mathrm{dg}(\sign(\bm{\mathrm{A\tilde{x}}}))|\bm{\mathrm{A\tilde{x}}}|=(\mathrm{U}_{\x})_{\mathrm{N}(\Ax)}|\bm{\mathrm{A\tilde{x}}}|^{\mathrm{N}(\Ax)}$$ gives $|\bm{\mathrm{A\tilde{x}}}|^{\mathrm{N}(\Ax)}\in \mathrm{V}_{\x}$. Combining with $\dim(\mathrm{V}_{\x})=1$, we obtain $|\bm{\mathrm{A\tilde{x}}}|^{\mathrm{N}(\Ax)}=t\cdot |\Ax|^{\mathrm{N}(\Ax)}$ for some $t>0$. Note that $\mathrm{N}(\Ax)=\mathrm{N}(\bm{\mathrm{A\tilde{x}}})$, we obtain $|\bm{\mathrm{A\tilde{x}}}|=t\cdot|\Ax|$, and hence $\mathrm{U}_{\x}|\bm{\mathrm{A\tilde{x}}}|=t\cdot\mathrm{U}_{\x}|\Ax|$, i.e., $\bm{\mathrm{A\tilde{x}}} =t\cdot\Ax$. Now we use $\rank(\A)=d$ and obtain $\bm{\mathrm{\tilde{x}}}=t \cdot\x$. Hence $\x \in \mathcal{W}_{\bm{\mathrm{A}}}$ is concluded.

	For the "only if" part, because of $\dim(\mathrm{V}_{\x})\geq1$, it remains to show  that $\dim(\mathrm{V}_{\x})>1$   leads to a contradiction. 
	Suppose $\dim(\mathrm{V}_{\x})>1$. 
	Since $|\Ax|^{\mathrm{N}(\Ax)}\in \mathbb{R}_+^{|\mathrm{N}(\Ax)|} \cap \Vx$, we can choose $\bm{\lambda_1 }\in \mathrm{V}_{\x} \cap \mathbb{R}_+^{|\mathrm{N}(\Ax)|} $ such that there exists no $t>0$ such that $\bm{\lambda_1 }=t\cdot |\Ax|^{\mathrm{N}(\Ax)}$. By definition of $\Vx$, we have $\Ay_1=(\mathrm{U}_{\x})_{\mathrm{N}(\Ax)}\bm{\lambda_1 }$ for some $\bm{\mathrm{y_1}}$.
	This implies that $\sign(\bm{\mathrm{Ay_1}})=\sign(\Ax)$. Now we invoke the  condition $\x \in \mathcal{W}_{\bm{\mathrm{A}}}$, and obtain $\bm{\mathrm{y_1}}=t_1\cdot \x$ for some $t_1 > 0$. Thus, $\bm{\mathrm{Ay_1}}=t_1\cdot \Ax$, or equivalently  $(\mathrm{U}_{\x})_{\mathrm{N}(\Ax)}\bm{\lambda_1 }=t_1\cdot(\mathrm{U}_{\x})_{\mathrm{N}(\Ax)}|\Ax|^{\mathrm{N}(\Ax)}$. Since $(\mathrm{U}_{\x})_{\mathrm{N}(\Ax)}$ is of full column rank, we obtain $\bm{\lambda_1}=t_1\cdot |\Ax|^{\mathrm{N}(\Ax)}$, which is contradictory to our choice of $\bm{\lambda_1}$. 
\end{proof}
	
	The explicit calculation of  
	$\dim(\Vx)$ yields our first uniqueness condition.
	
	\begin{theorem}
		Assume $\bm{\mathrm{A}} \in \mathbb{C}^{m\times d}$ has full column rank, $\bm{\mathrm{x}}\in \mathbb{C}^d$ is non-zero, we have 
		$\bm{\mathrm{x}} \in \mathcal{W}_{\bm{\mathrm{A}}}$  if and only if $\rank( \mathcal{D}_{\bm{\mathrm{A}}}(\bm{\mathrm{x}}))=2d+|\mathrm{N}(\bm{\mathrm{Ax}})|-1$.
		\label{T1}
	\end{theorem}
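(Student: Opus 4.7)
My plan is to reduce the theorem to Lemma \ref{lemma1} by proving the rank identity
\[
\dim(\Vx) \;=\; 2d + |\mathrm{N}(\Ax)| - \rank\bigl(\mathcal{D}_{\A}(\x)\bigr).
\]
Once this identity is in hand, $\dim(\Vx)=1$ becomes equivalent to the rank equality in the theorem, and Lemma \ref{lemma1} closes the argument. The overall strategy is to realize $\Vx$ as the $\boldsymbol{\lambda}$-projection of the kernel of a suitable real block matrix whose rank matches $\rank(\mathcal{D}_\A(\x))$, and then apply rank-nullity.

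First I would convert the consistency of \eqref{eq1} to real form. Identifying $\y\in\mathbb{C}^d$ with $\boldsymbol{\zeta}=\varphi_1(\y)\in\mathbb{R}^{2d}$ through the injective map $\varphi_1$, and using the identity $\varphi_1(\Ay)=\varphi(\A)\varphi_1(\y)$ together with the fact that $\varphi_1$ commutes with right-multiplication by a real vector, the system \eqref{eq1} is equivalent to
\[
\varphi(\A)\,\boldsymbol{\zeta} \;=\; \varphi_1\bigl((\mathrm{U}_{\x})_{\mathrm{N}(\Ax)}\bigr)\,\boldsymbol{\lambda}.
\]
Consequently $\Vx$ equals the image of $\ker(\mathbf{M})$ under the projection $(\boldsymbol{\zeta},\boldsymbol{\lambda})\mapsto \boldsymbol{\lambda}$, where $\mathbf{M} := \bigl[\,\varphi(\A)\;\;-\varphi_1\bigl((\mathrm{U}_{\x})_{\mathrm{N}(\Ax)}\bigr)\,\bigr]$.

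Next I would relate $\rank(\mathbf{M})$ to $\rank(\mathcal{D}_{\A}(\x))$. The columns of $\varphi_1(\diag(\Ax))$ indexed by $[m]\setminus\mathrm{N}(\Ax)$ vanish and may be discarded without changing the rank. The surviving columns factor as $\varphi_1\bigl((\mathrm{U}_{\x})_{\mathrm{N}(\Ax)}\bigr)\cdot\diag\bigl(|\Ax|^{\mathrm{N}(\Ax)}\bigr)$, with a positive invertible real diagonal on the right, so $\rank(\mathbf{M})=\rank(\mathcal{D}_{\A}(\x))$. Rank-nullity then yields $\dim(\ker\mathbf{M}) = 2d + |\mathrm{N}(\Ax)| - \rank(\mathcal{D}_{\A}(\x))$.

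Finally, I would verify that the projection $\ker(\mathbf{M})\to\Vx$ is a linear isomorphism, so that $\dim(\Vx)=\dim(\ker\mathbf{M})$. Surjectivity is immediate from the construction; for injectivity, $(\boldsymbol{\zeta},\bm{0})\in\ker(\mathbf{M})$ forces $\varphi(\A)\boldsymbol{\zeta}=\bm{0}$, and since $\rank(\A)=d$ gives $\rank(\varphi(\A))=2d$, we conclude $\boldsymbol{\zeta}=\bm{0}$. The main obstacle I anticipate is the bookkeeping around the support $\mathrm{N}(\Ax)$: verifying that discarding the zero columns of $\varphi_1(\diag(\Ax))$ aligns cleanly with the $|\mathrm{N}(\Ax)|$-dimensional real parameter space on which $\Vx$ is defined, and confirming that $\varphi_1$ passes through right multiplication by a real diagonal in the way needed for the factorization above.
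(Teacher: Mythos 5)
Your proposal is correct and follows essentially the same route as the paper: reduce to Lemma \ref{lemma1}, realify the system via $\varphi$ and $\varphi_1$ to obtain the block matrix $[\varphi(\A)\ -\varphi_1((\mathrm{U}_{\x})_{\mathrm{N}(\Ax)})]$, use $\rank(\varphi(\A))=2d$ to identify $\dim(\Vx)$ with the kernel dimension, and match ranks with $\mathcal{D}_{\A}(\x)$. The only difference is that you spell out the rank identification (dropping zero columns and factoring out the positive diagonal $\diag(|\Ax|^{\mathrm{N}(\Ax)})$) that the paper dismisses as a ``simple observation,'' which is a welcome addition rather than a deviation.
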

	
	\begin{proof}
	By using the notations $\mathrm{U}_{\x}$, $\mathrm{V}_{\x}$,     (\ref{eq1}) becomes $\Ay=(\mathrm{U}_{\x})_{\mathrm{N}(\Ax)}\bm{\lambda }$.   We  deduce (\ref{eq1}) to a real linear system to explicitly calculate $\dim(\mathrm{V}_{\x})$. By using $\varphi(\cdot)$ and $\varphi_1(\cdot)$, we obtain \begin{equation}
	    \nonumber
	    \begin{aligned}
	       & \Ay=(\mathrm{U}_{\x})_{\mathrm{N}(\Ax)}\bm{\lambda}\iff\varphi_1(\Ay)=\varphi_1((\mathrm{U}_{\x})_{\mathrm{N}(\Ax)}\bm{\lambda})\\ &\iff\varphi(\A)\varphi_1(\y)-\varphi((\mathrm{U}_{\x})_{\mathrm{N}(\Ax)})\varphi_1(\bm{\lambda})=\bm{0}.
	    \end{aligned}
	\end{equation} 
	Since $\bm{\lambda} \in \mathbb{R}^{|\mathrm{N}(\Ax)|}$, we have $\varphi_1(\bm{\lambda})=[\bm{\lambda}^\top~\bm{0}^\top]^\top$, hence  
	  (\ref{eq1}) is equivalent to $\varphi(\A)\varphi_1(\y)-\varphi_1((\mathrm{U}_{\x})_{\mathrm{N}(\Ax)})\bm{\lambda}=\bm{0}$, which can be given in a matrix form
	\begin{equation}
		\begin{bmatrix}
			\varphi(\A)& -\varphi_1((\mathrm{U}_{\x})_{\mathrm{N}(\Ax)})
		\end{bmatrix}\begin{bmatrix}
			\varphi_1(\y)\\ \bm{\lambda}
		\end{bmatrix}=0.
		\label{eq2}
	\end{equation} 
	Note that $\rank(\varphi(\A))=2\cdot\rank(\A)=2d$, so $\varphi_1(\textbf{y})$ can be uniquely determined by a specific $\bm{\lambda}\in\Vx$. Therefore, the solution space of (\ref{eq2}), namely $\ker([\varphi(\A)\ -\varphi_1((\mathrm{U}_{\x})_{\mathrm{N}(\Ax)})])$, has the same dimension as $\mathrm{V}_{\x}$. This delivers
	\begin{equation}
		\begin{split}
			\begin{aligned}
				&\dim(\mathrm{V}_{\x}) =\dim(\ker([\varphi(\A)\ -\varphi_1((\mathrm{U}_{\x})_{\mathrm{N}(\Ax)})]))\\
				&=2d+|\mathrm{N}(\Ax)|-\rank([\varphi(\A)\ -\varphi_1((\mathrm{U}_{\x})_{\mathrm{N}(\Ax)})]).
				\label{eeq1}
			\end{aligned}
		\end{split}
	\end{equation}
	Comparing with (\ref{disD}), we confirm 
	$\rank( \mathcal{D}_{\bm{\mathrm{A}}}(\x))=\rank([\varphi(\A)~~ -\varphi_1((\mathrm{U}_{\x})_{\mathrm{N}(\Ax)})])$. 
	Then we conclude the proof by using Lemma \ref{lemma1}. 
	\end{proof}
	
    Although   $\mathcal{D}_{\textbf{A}}(\textbf{x})$ involves the linear measurement $\textbf{Ax}$ that is in general unknown, by noting $\rank(\mathcal{D}_{\textbf{A}}(\textbf{x}))= \rank([\varphi(\textbf{A}),-\varphi_1(\mathrm{U}_{\x})])$, and that the latter matrix can be constructed from $(\textbf{A},\sign(\textbf{Ax}))$, our uniqueness condition can be applied in practice.

	\subsection{Canonical measurement matrix}
	\label{canoin}
	
	In this subsection, we consider the canonical measurement matrix $\A$	that has the special form $[\bm{\mathrm{I}_d},{\bm{\mathrm{A_1^\top}}}]^\top$ for some $\bm{\mathrm{A_1}}\in\mathbb{R}^{(m-d)\times d}$. 
	 Let $\bm{\mathrm{\gamma_j^\top}}$ be the $j$-th row of $\A$. 
	Herein, we   exclusively use the entry-wise notation
	\begin{equation}
	    \label{entryax}
	    \textbf{A} = [r_{jk}\cdot e^{\textbf{i}\theta_{jk}}]_{{j\in [m] , k\in [d]}},~\x = [|x_k|\cdot e^{\textbf{i}\alpha_k}]_{k\in [d]}.
	\end{equation}
	For the zero entry of $\bm{\mathrm{A_1}}$ or $\x$, we simply let $\theta_{jk}=0$ or $\alpha_k=0$. Note that for $k\in [d]$ with $x_k\neq 0$, the current notation can already express $k$-th measurement as $e^{\ii \alpha_k}$, but we still need to introduce the notation of the $j$-th measurement when $j\geq d+1$. For this purpose, we define
	\begin{equation}
	    \label{jthme}
	    e^{\textbf{i}\delta_j}:=\sign(\bm{\mathrm{\gamma_j^\top}}\textbf{x}), \text{ when }j\in [m]\setminus[d],~\bm{\mathrm{\gamma_j^\top}}\textbf{x}\neq 0.
	\end{equation}
	Then, for $j\in[m]\setminus[d]$ such that $\bm{\mathrm{\gamma_j^\top}}\textbf{x}\neq 0$, 
	the measurement can  be written as   
	\begin{equation}
	    \label{2.6}
	    \sign\Big(\sum_{k=1}^d r_{jk}\cdot e^{\textbf{i}\theta_{jk}}\cdot |x_k| \cdot e^{\textbf{i}\alpha_k}\Big)= e^{\textbf{i}\delta_j} \iff  \sum_{k=1}^d r_{jk}\cdot e^{\textbf{i} (\theta_{jk}+\alpha_k-\delta_j)}|x_k| >0.
	\end{equation}
	By taking the real part, the right-hand side of (\ref{2.6}) implies a 
	linear equation $\sum_{k=1}^dr_{jk}\sin(\theta_{jk}+\alpha_k-\delta_j)|x_k|=0$. Therefore, for $j>d$ such that $\bm{\mathrm{\gamma_j^\top}}\textbf{x}\neq 0$, we define 
\begin{equation}
    \label{2.7}
    	\Psi_j(\x):=\begin{bmatrix}
		r_{j1}\sin(\theta_{j1}+\alpha_1-\delta_j) & r_{j2}\sin(\theta_{j2}+\alpha_2-\delta_j)&\cdots&r_{jd}\sin(\theta_{jd}+\alpha_d-\delta_j)
	\end{bmatrix},
\end{equation}
so that the linear equation can be written as $\Psi_j(\textbf{x})|\textbf{x}|=0$.
For $j\in [m]\setminus [d]$ such that $\bm{\mathrm{\gamma_j^\top}}\textbf{x} = 0$, because of $\sign(0)=0$, the linear equation $\bm{\mathrm{\gamma_j^\top}}\textbf{x}=0$ is provided. By using the element-wise notation, this can be formulated as $\sum_{k=1}^dr_{jk}e^{\textbf{i}\theta_{jk}}|x_k|e^{\textbf{i}\alpha_k}=0$ and evidently equals to two linear equations $\sum_{k=1}^dr_{jk}\sin(\theta_{jk}+\alpha_k)|x_k|=0$, $\sum_{k=1}^dr_{jk}\cos(\theta_{jk}+\alpha_k)|x_k|=0$.  Thus, for $j>d$ such that $\bm{\mathrm{\gamma_j^\top}}\textbf{x}=0$, we similarly define
\begin{equation}
    \label{2.8}
    	\Psi_j(\x):=\begin{bmatrix}
		r_{j1}\sin(\theta_{j1}+\alpha_1)&r_{j2}\sin(\theta_{j2}+\alpha_2)&\cdots &r_{jd}\sin(\theta_{jd}+\alpha_d)\\
		r_{j1}\cos(\theta_{j1}+\alpha_1)&r_{j2}\cos(\theta_{j2}+\alpha_2)&\cdots&r_{jd}\cos(\theta_{jd}+\alpha_d)
	\end{bmatrix}
\end{equation}
so that the two linear equations can be given in a compact form $\Psi_j(\textbf{x})|\textbf{x}|=0$. Note that we have defined $\Psi_j(\textbf{x})$ for all $j\in [m]\setminus [d]$, see (\ref{2.7}) if $\bm{\mathrm{\gamma_j^\top}}\textbf{x}\neq 0$, and (\ref{2.8}) otherwise.

 Now, we stack all   $\Psi_j(\x)$ for $j\in [m]\setminus [d]$ to define the discriminant matrix $\mathcal{E}_{\textbf{A}}(\x)$ as  
\begin{equation}
    \label{2.9}
    \mathcal{E}_{\textbf{A}}^0(\x)=\begin{bmatrix}
		\Psi_{d+1}(\x)\\
		\Psi_{d+2}(\x)\\
		\vdots\\
		\Psi_m(\x)
	\end{bmatrix} \quad 
	{\rm and} \quad
	\mathcal{E}_{\textbf{A}}(\x)=(\mathcal{E}_{\textbf{A}}^0(\x))_{\mathrm{N}(\x)}.
\end{equation}
By definition of $\Psi_j(\x)$, we always have 
\begin{equation}
    \label{2.10}
    \mathcal{E}_{\textbf{A}}^0(\x)|\textbf{x}| = \mathcal{E}_{\textbf{A}}(\x)|\textbf{x}|^{\mathrm{N}(\x)}=0.
\end{equation}
Under the canonical measurement matrix, we show in Theorem \ref{T2} that, $\rank(\mathcal{E}_{\textbf{A}}(\x))$ exactly characterizes the validity of $\textbf{x} \in \mathcal{W}_{\textbf{A}}$.
	
	\begin{theorem}
		Assume $\bm{\mathrm{A}}=[\bm{\mathrm{I}_d},{\bm{\mathrm{A_1^\top}}}]^{\bm{\top}}$ for some $\bm{\mathrm{A_1}}\in\mathbb{R}^{(m-d)\times d}$, $\bm{\mathrm{x}} \neq 0$, then $\bm{\mathrm{x}}
		\in \mathcal{W}_{\bm{\mathrm{A}}}$ if and only if $\rank(\mathcal{E}_{\bm{\mathrm{A}}}(\x))=|\mathrm{N}(\bm{\mathrm{x}})|-1$.
		\label{T2}
	\end{theorem}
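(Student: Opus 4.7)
The plan is to exploit the linear relation $\mathcal{E}_{\textbf{A}}(\textbf{x})|\textbf{x}|^{\mathrm{N}(\textbf{x})}=\bm{0}$ from (\ref{2.10}). Under a canonical $\textbf{A}$, the first $d$ rows already reveal, from $\sign(\textbf{Ax})$, both the support $\mathrm{N}(\textbf{x})$ and the phases $\alpha_k$ for $k\in\mathrm{N}(\textbf{x})$; the remaining rows similarly pin down each $\delta_j$ (whenever $\bm{\gamma}_j^\top\textbf{x}\neq 0$) and impose the constraint $\bm{\gamma}_j^\top\textbf{x}=0$ otherwise. Thus the only degree of freedom left after observing $\sign(\textbf{Ax})$ is the magnitude vector $|\textbf{x}|^{\mathrm{N}(\textbf{x})}\in\mathbb{R}^{|\mathrm{N}(\textbf{x})|}_{+}$, and by (\ref{2.10}) this magnitude vector must lie in $\ker\mathcal{E}_{\textbf{A}}(\textbf{x})$. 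Uniqueness up to positive scaling therefore reduces to the question: is $|\textbf{x}|^{\mathrm{N}(\textbf{x})}$ the unique (up to positive scaling) positive vector in $\ker\mathcal{E}_{\textbf{A}}(\textbf{x})$? Note also that since $|\textbf{x}|^{\mathrm{N}(\textbf{x})}\neq\bm{0}$ lies in this kernel, we always have $\rank(\mathcal{E}_{\textbf{A}}(\textbf{x}))\leq|\mathrm{N}(\textbf{x})|-1$.

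For the \textbf{if} direction, suppose $\rank(\mathcal{E}_{\textbf{A}}(\textbf{x}))=|\mathrm{N}(\textbf{x})|-1$, so $\ker\mathcal{E}_{\textbf{A}}(\textbf{x})=\mathrm{span}(|\textbf{x}|^{\mathrm{N}(\textbf{x})})$. Take any $\tilde{\textbf{x}}$ with $\sign(\textbf{A}\tilde{\textbf{x}})=\sign(\textbf{Ax})$. The first $d$ rows of $\textbf{A}$ force $\mathrm{N}(\tilde{\textbf{x}})=\mathrm{N}(\textbf{x})$ and $\tilde{\alpha}_k=\alpha_k$ for $k\in\mathrm{N}(\textbf{x})$; the remaining rows force $\tilde{\delta}_j=\delta_j$ whenever $\bm{\gamma}_j^\top\textbf{x}\neq 0$. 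By the definitions (\ref{2.7})--(\ref{2.9}), these facts imply $\mathcal{E}_{\textbf{A}}(\tilde{\textbf{x}})=\mathcal{E}_{\textbf{A}}(\textbf{x})$, so $|\tilde{\textbf{x}}|^{\mathrm{N}(\textbf{x})}\in\ker\mathcal{E}_{\textbf{A}}(\textbf{x})$. Since this kernel is $1$-dimensional and $|\tilde{\textbf{x}}|^{\mathrm{N}(\textbf{x})}$ is positive, it must be a positive multiple of $|\textbf{x}|^{\mathrm{N}(\textbf{x})}$, giving $\tilde{\textbf{x}}=t\textbf{x}$ for some $t>0$.

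For the \textbf{only if} direction, I would argue by contrapositive: suppose $\rank(\mathcal{E}_{\textbf{A}}(\textbf{x}))<|\mathrm{N}(\textbf{x})|-1$, and pick $\textbf{v}\in\ker\mathcal{E}_{\textbf{A}}(\textbf{x})$ linearly independent from $|\textbf{x}|^{\mathrm{N}(\textbf{x})}$. For $t>0$ sufficiently small, $|\textbf{x}|^{\mathrm{N}(\textbf{x})}+t\textbf{v}$ remains strictly positive; define $\tilde{\textbf{x}}(t)$ to have support $\mathrm{N}(\textbf{x})$, phases $\alpha_k$, and magnitudes $|\textbf{x}|^{\mathrm{N}(\textbf{x})}+t\textbf{v}$. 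Then $\mathcal{E}_{\textbf{A}}^0(\textbf{x})|\tilde{\textbf{x}}(t)|=\mathcal{E}_{\textbf{A}}(\textbf{x})(|\textbf{x}|^{\mathrm{N}(\textbf{x})}+t\textbf{v})=\bm{0}$. For $j\in[m]\setminus[d]$ with $\bm{\gamma}_j^\top\textbf{x}=0$, the two-row $\Psi_j$ block forces $\bm{\gamma}_j^\top\tilde{\textbf{x}}(t)=0$ directly. For $j$ with $\bm{\gamma}_j^\top\textbf{x}\neq 0$, the single-row $\Psi_j$ equation gives $\Im(e^{-\textbf{i}\delta_j}\bm{\gamma}_j^\top\tilde{\textbf{x}}(t))=0$, so $\bm{\gamma}_j^\top\tilde{\textbf{x}}(t)$ has phase $\pm e^{\textbf{i}\delta_j}$; continuity of $\bm{\gamma}_j^\top\tilde{\textbf{x}}(t)$ in $t$, together with $\Re(e^{-\textbf{i}\delta_j}\bm{\gamma}_j^\top\textbf{x})=|\bm{\gamma}_j^\top\textbf{x}|>0$, rules out the minus sign for all sufficiently small $t$. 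Thus $\sign(\textbf{A}\tilde{\textbf{x}}(t))=\sign(\textbf{Ax})$, yet $\tilde{\textbf{x}}(t)$ is not a positive multiple of $\textbf{x}$ by construction, witnessing $\textbf{x}\notin\mathcal{W}_{\textbf{A}}$.

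The main obstacle I anticipate is the final continuity/positivity check in the only-if step: I must uniformly control, over all $j\in[m]\setminus[d]$ with $\bm{\gamma}_j^\top\textbf{x}\neq 0$, that the real part $\Re(e^{-\textbf{i}\delta_j}\bm{\gamma}_j^\top\tilde{\textbf{x}}(t))$ stays strictly positive, so the cleanest route is to bound $t$ by a single threshold taken as the minimum over this finite set. The if direction is essentially bookkeeping, once one observes that $\mathcal{E}_{\textbf{A}}(\cdot)$ is determined entirely by the phase data already fixed by $\sign(\textbf{Ax})$.
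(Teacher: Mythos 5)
Your proposal is correct and follows essentially the same route as the paper's proof: both directions rest on the reformulation of $\sign(\textbf{Ay})=\sign(\textbf{Ax})$ into the phase constraints plus $\mathcal{E}_{\textbf{A}}(\textbf{x})|\textbf{y}|^{\mathrm{N}(\textbf{x})}=\bm{0}$ plus the strict positivity of the real parts, with the key observation that $\mathcal{E}_{\textbf{A}}(\cdot)$ is determined by the phase data alone. Your perturbation $|\textbf{x}|^{\mathrm{N}(\textbf{x})}+t\textbf{v}$ with small $t>0$ is just an explicit parametrization of the paper's choice of a kernel vector $\bm{\lambda}$ sufficiently close to, but linearly independent from, $|\textbf{x}|^{\mathrm{N}(\textbf{x})}$, and your continuity check of $\Re(e^{-\textbf{i}\delta_j}\bm{\gamma}_j^{\top}\tilde{\textbf{x}}(t))>0$ is exactly the paper's appeal to the second line of (\ref{2.13}).
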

	
\begin{proof}	We will use the    notations introduced in (\ref{entryax}) and (\ref{jthme}). For such canonical $\A$ and signal $\x$, the first $d$ measurement gives $\sign(\x)$. For $j\in \mathcal{J}_{\neq 0}:=\{j\in [m]\setminus [d]:\bm{\mathrm{\gamma_j^\top x}}\neq 0\}$, from (\ref{2.6}) one can see   
	$$\text{The $j$-th measurement } (\ref{2.6} )\iff \Psi_j(\x)|\textbf{x}|=0,~\sum_{k=1}^d r_{jk}\cos(\theta_{jk}+\alpha_k-\delta_j)|x_k|>0.$$
	When $j\in \mathcal{J}_{=0}(\x):= \{j\in [m]\setminus [d]:\bm{\mathrm{\gamma_j^\top x}}=0\}$, the $j$-th measurement is equivalent to $\Psi_j(\x)|\textbf{x}|=0$ with $\Psi_j(\textbf{x})$ defined in (\ref{2.8}).

	Therefore, for $\y = [y_k]\in \mathbb{C}^d$, we have a reformulation of  $	\sign(\Ay)=\sign(\Ax)$ 
	\begin{equation}
			\begin{aligned}
			\sign(\Ay)=\sign(\Ax) \iff	\begin{cases}
					&\sign(\y)=\sign(\x),\\
					&\mathcal{E}_{\textbf{A}}^0(\x)|\textbf{y}| = \bm{0}, \\
					&{\displaystyle \sum_{k\in [d]}  r_{jk}\cos(\theta_{jk}+\alpha_k-\delta_j)|y_k|>0}, ~ \forall j\in \mathcal{J}_{\neq0}(\x).
					\label{2.11}
				\end{cases}
			\end{aligned}
	\end{equation}
	The second equation in right-hand side of (\ref{2.11}) is due to (\ref{2.10}) and the observation $\mathcal{E}_{\textbf{A}}^0(\y)=\mathcal{E}_{\textbf{A}}^0(\x)$.
    Moreover, by noting that $\sign(\y)=\sign(\x)$   implies $\mathrm{N}(\y)=\mathrm{N}(\x)$, we can further restrict the summation to $k\in \mathrm{N}(\x)$, it follows that 
	\begin{equation}
			\begin{aligned}
			\sign(\Ay)=\sign(\Ax) \iff		\begin{cases}
					\label{2.12}
					&\sign(\y)=\sign(\x),\\
			 	&\mathcal{E}_{\textbf{A}}(\x)|\textbf{y}|^{\mathrm{N}(\x)}=0,\\
					& \displaystyle \sum_{k\in \mathrm{N}(\x)}  r_{jk}\cos(\theta_{jk}+\alpha_k-\delta_j)|y_k|>0, ~ \forall j\in \mathcal{J}_{\neq 0}(\x).
				\end{cases}
			\end{aligned}
	\end{equation}
	Specifically, letting $\y=\x$ in (\ref{2.12}) yields
	\begin{equation}
		\begin{split}
			\begin{aligned}
		       \mathcal{E}_{\textbf{A}}(\x)|\textbf{x}|^{\mathrm{N}(\x)}=\bm{0},&\\
					  \displaystyle \sum_{k\in \mathrm{N}(\x)}  r_{jk}\cos(\theta_{jk}+\alpha_k-\delta_j)|x_k|>0&, ~ \forall ~j\in \mathcal{J}_{\neq0}(\x). 
					\label{2.13}
			\end{aligned}
		\end{split}
	\end{equation}
	We now consider  the equivalence between $\bm{\mathrm{x}}
		\in \mathcal{W}_{\bm{\mathrm{A}}}$ and $\rank(\mathcal{E}_{\bm{\mathrm{A}}}(\x))=|\mathrm{N}(\bm{\mathrm{x}})|-1$.

	For the "if" part, we assume $\rank(\mathcal{E}_{\bm{\mathrm{A}}}(\x))=|\mathrm{N}(\bm{\mathrm{x}})|-1$. By (\ref{2.12}), $\sign(\Ay)=\sign(\Ax)$ implies
	$\mathcal{E}_{\textbf{A}}(\x)|\textbf{y} |^{\mathrm{N}(\x)}=\bm{0}$. Combined with $\rank(\mathcal{E}_{\textbf{A}}(\x))=|\mathrm{N}(\x)|-1$ and the first equation in (\ref{2.13}), it gives $|\textbf{y}|^{\mathrm{N}(\x)} =t\cdot  |\textbf{x}|^{\mathrm{N}(\x)}$ for some $t>0$, and hence $|\textbf{y}|= t\cdot |\textbf{x}|$. We further use
	$\sign(\y)=\sign(\x)$ and write $\sign(\y)\odot|\textbf{y}|=t\cdot \sign(\x)\odot|\textbf{x}|$. This delivers $\y = t\cdot \x$, so $\x\in\mathcal{W}_{\textbf{A}}$ follows.

	For the "only if" part, we assume $\x\in\mathcal{W}_{\textbf{A}}$. By the first equation in (\ref{2.13}) and $|\textbf{x}|^{\mathrm{N}(\x)}\neq \bm{0}$ we obtain
	$\rank(\mathcal{E}_{\textbf{A}}(\x))\leq |\mathrm{N}(\x)|-1$.
	Thus, we only need to assume $\rank(\mathcal{E}_{\textbf{A}}(\x))<|\mathrm{N}(\x)|-1$ and show that this   leads to a contradiction. 
	By noting $|\textbf{x}|^{\mathrm{N}(\x)} \in \mathbb{R}_+^{|\mathrm{N}(\x)|}\cap \ker(\mathcal{E}_{\textbf{A}}(\x))$ and 
	$$\dim(\ker(\mathcal{E}_{\textbf{A}}(\x))) = |\mathrm{N}(\x)| - \rank(\mathcal{E}_{\textbf{A}}(\x))\geq 2,$$ we can find
	$\bm{\lambda} \in \mathbb{R}_+^{|\mathrm{N}(\x)|}\cap \ker(\mathcal{E}_{\textbf{A}}(\x))$ such that $\bm{\lambda}$ and $|\textbf{x}|^{\mathrm{N}(\x)}$ are sufficiently close but linearly independent. Note that we can uniquely construct a signal $\y \in\mathbb{C}^d$ such that $\sign(\y)=\sign(\x)$ and $|\textbf{y}|^{\mathrm{N}(\x)}=\bm{\lambda}$. Thus, the construction of $\y$ gives $ \bm{0} =\mathcal{E}_{\textbf{A}}(\x)\bm{\lambda} = \mathcal{E}_{\textbf{A}}(\x) |\textbf{y}|^{\mathrm{N}(\x)}$. Now, the first two equations in (\ref{2.12}) are displayed.
Moreover, sufficiently close $\bm{\lambda}$ and $|\textbf{x}|^{\mathrm{N}(x)}$ can guarantee sufficiently close $\y$ and $\x$. Thus, recalling the second equation in (\ref{2.13}), the third equation in (\ref{2.12}) can be guaranteed.
  Hence, (\ref{2.12}) gives $\sign(\Ay)=\sign(\Ax)$, and combined with $\x\in\mathcal{W}_{\textbf{A}}$ we obtain $\x = t_1\cdot   \y$ for some $t_1>0$. To complete the proof, we take absolute value and then restrict the vectors to $\mathrm{N}(\x)$, it gives $|\textbf{x}|^{\mathrm{N}(\x)} = t_1\cdot |\textbf{y}|^{\mathrm{N}(\x)}=t_1\cdot \bm{\lambda} $, which is contradictory to linearly independence between $\bm{\lambda}$ and $|\textbf{x}|^{\mathrm{N}(\x)}$.
	\end{proof}

\begin{rem}
	With two discriminant matrices in place, we briefly comment on how to choose a suitable one in application. Generally speaking,	the  $2m\times (2d+m)$ matrix $\mathcal{D}_{\bm{\mathrm{A}}}(\bm{\mathrm{x}})$  has relatively simple polynomials entries, so it is  more amenable to the analyses involving properties of polynomial, e.g., those needed in Theorem \ref{T6}. 
	Although the entries of $\mathcal{E}_{\bm{\mathrm{A}}}(\bm{\mathrm{x}})$ are slightly more complicated, one may select some $\bm{\mathrm{x}}$ such that $\mathcal{E}_{\bm{\mathrm{A}}}(\bm{\mathrm{x}})$ has specific structure that may be conducive to the proof, see Theorem \ref{T4} for instance.  
\label{remark1}
\end{rem}

	\section{Reconstruction of all signals}\label{sec3}
    
    In this section, we 
    study the minimal measurement number $m$ required for $\A \in \mathbb{C}^{m\times d}$ to recover all $\x \in\mathbb{C}^d$, or equivalently  $\mathcal{W}_{\bm{\mathrm{A}}}=\mathbb{C}^d$. We mimic the term in \cite{huang2021almost} and say $\A$ is {\it magnitude retrievable} if $\mathcal{W}_{\bm{\mathrm{A}}}=\mathbb{C}^d$. The minimal measurement number of interest can be precisely formulated as 
    \begin{equation}
        \label{3.1}
        \bm{\mathrm{m_{all}}}(d) = \big\{m\in 
        \mathbb{N}_+: \text{some }\textbf{A}\in\mathbb{C}^{m\times d} \text{ is magnitude retrievable} \big\}.
    \end{equation}
    To the best of our knowledge, $\bm{\mathrm{m_{all}}}(d)$ has not yet been explored   previously, and   it is even not clear 
    whether $\bm{\mathrm{m_{all}}}(d)$ is finite. Indeed, most existing theoretical results are restricted to the Fourier measurement matrix whose rows read as $\textbf{f}(\omega)=[e^{-\textbf{i}\omega},e^{-\textbf{i}(2\omega)},\cdots,e^{-\textbf{i}(d\omega)}]$ for some frequency $\omega$. Note that  the convolution theorem gives $\textbf{f}(\omega) (\textbf{x}\ast \textbf{h}) =\big(\textbf{f}(\omega)\textbf{x}\big)\cdot \big(\textbf{f}(\omega)\textbf{h}\big)$, so  $\textbf{x}\ast \textbf{h}$ and $\textbf{x}$ cannot be distinguished from the Fourier phase  if $\textbf{f}(\omega)\textbf{h}>0$ for all $\omega$\footnote{One may consider symmetric $\textbf{h}$ to see this is possible.}. Thus, the Fourier measurement matrix is not magnitude retrievable. A more concrete example is the failure of recovering a symmetric signal from the Fourier phase, which has been noted in previous works (e.g., \cite{levi1983signal,urieli1998optimal}) and will be rigorously presented in Proposition \ref{pro2} of this work.

    To study $\bm{\mathrm{m_{all}}}(d)$, we first give a proposition to show that magnitude retrievable becomes possible under a  general measurement matrix, thus confirming 
    $\bm{\mathrm{m_{all}}}(d)<\infty$.

    \begin{pro}
    \label{pro1}
     For  all signals $\bm{\mathrm{x}}=[x_k]\in\mathbb{C}^d$, the $d + d\cdot (d-1)=d^2$ measurements given by $\{\sign(x_k):k\in [d]\}$, $\{\sign(x_k+ x_l),\sign(x_k+\bm{\mathrm{i}} x_l): 1\leq k<l\leq d\}$ can reconstruct $\bm{\mathrm{x}}$ up to a positive scaling factor. Thus, for each positive integer $d$, some $\bm{\mathrm{A}} \in \mathbb{C}^{d^2\times d}$ is magnitude retrievable, and $\bm{\mathrm{m_{all}}}(d) \leq d^2$.   
    \end{pro}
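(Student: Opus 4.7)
The plan is to proceed directly from the definition of $\mathcal{W}_{\textbf{A}}$ rather than through the discriminant matrix formalism, since the measurements admit a transparent pairwise structure. Suppose $\textbf{y}\in\mathbb{C}^d$ yields the same list of $d^2$ phases as $\textbf{x}$. The first $d$ measurements $\sign(y_k)=\sign(x_k)$ immediately force $\mathrm{N}(\textbf{y})=\mathrm{N}(\textbf{x})$, so we may write $x_k=r_k e^{\textbf{i}\alpha_k}$ and $y_k=s_k e^{\textbf{i}\alpha_k}$ with $r_k,s_k>0$ for each $k\in \mathrm{N}(\textbf{x})$. The remaining task is to show that the ratio $s_k/r_k$ is a common constant $t>0$ across all $k\in\mathrm{N}(\textbf{x})$; combined with $y_k=0=t\cdot x_k$ for $k\notin\mathrm{N}(\textbf{x})$, this would yield $\textbf{y}=t\textbf{x}$.

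For any pair $k<l$ with $x_k,x_l\neq 0$, I would derive $s_k/r_k=s_l/r_l$ using one of the two pairwise measurements, chosen by the angular configuration. If $\alpha_k\not\equiv \alpha_l \pmod{\pi}$, then $\{e^{\textbf{i}\alpha_k},e^{\textbf{i}\alpha_l}\}$ is $\mathbb{R}$-linearly independent, which automatically rules out $x_k+x_l=0$; the equality $\sign(y_k+y_l)=\sign(x_k+x_l)$ then gives some $c>0$ with $s_k e^{\textbf{i}\alpha_k}+s_l e^{\textbf{i}\alpha_l}=c\,(r_k e^{\textbf{i}\alpha_k}+r_l e^{\textbf{i}\alpha_l})$, and matching real coefficients yields $s_k=c\,r_k$, $s_l=c\,r_l$. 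In the complementary case $\alpha_k\equiv \alpha_l\pmod{\pi}$, I would use $\sign(x_k+\textbf{i} x_l)=\sign(y_k+\textbf{i} y_l)$: rewriting $\textbf{i} e^{\textbf{i}\alpha_l}=e^{\textbf{i}(\alpha_l+\pi/2)}$ and noting that $\alpha_k\not\equiv \alpha_l+\pi/2\pmod{\pi}$, the set $\{e^{\textbf{i}\alpha_k},e^{\textbf{i}(\alpha_l+\pi/2)}\}$ is $\mathbb{R}$-linearly independent, so the same coefficient-matching argument applies. Chaining these equalities through any spanning tree of the indices in $\mathrm{N}(\textbf{x})$ yields the desired common constant $t$, and the count $d+2\binom{d}{2}=d^2$ gives $\bm{\mathrm{m_{all}}}(d)\leq d^2$.

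The only genuinely delicate point is the case split on $\alpha_k \bmod \pi$: a single family $\sign(x_k+x_l)$ alone cannot disambiguate the magnitude ratio when $e^{\textbf{i}\alpha_k}$ and $e^{\textbf{i}\alpha_l}$ are $\mathbb{R}$-parallel, which is precisely why the $\textbf{i}$-twisted family $\sign(x_k+\textbf{i} x_l)$ is included as a complementary probe. The two families together cover all angular configurations of each coordinate pair, and this is the structural reason the construction requires $d^2$ measurements rather than $d+\binom{d}{2}$.
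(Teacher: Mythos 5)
Your proof is correct and follows essentially the same route as the paper's: the case split on whether $\alpha_k\equiv\alpha_l\pmod{\pi}$ is exactly the paper's split on $\sign(x_k)=\pm\sign(x_l)$, and your coefficient-matching via $\mathbb{R}$-linear independence of $\{e^{\textbf{i}\alpha_k},e^{\textbf{i}\alpha_l}\}$ is the same computation the paper performs by taking the imaginary part in (\ref{E3}). The only cosmetic difference is that you phrase it as a uniqueness argument (any $\textbf{y}$ with matching phases is a positive multiple of $\textbf{x}$) while the paper phrases it as explicit recovery of the ratios $|x_k|/|x_l|$; these are equivalent.
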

    
   \begin{proof} From  measurements  $\sign(\x)$ one knows     $\mathrm{N}(\x)$ and $ 
	\{\sign(x_k):k\in \mathrm{N}(\x)\}
	$. To reconstruct $\x$ up to a positive scaling factor, we only need to recover the magnitude ratio $ \{\frac{|x_k|}{|x_l|}:k,l\in \mathrm{N}(\textbf{x})\}$. Hence, it remains to show for fixed $k,l\in \mathrm{N}(\textbf{x})$, $k\neq l$, the measurements $(\sign(x_k),  \sign(x_l),  \sign(x_k+x_l),  \sign(x_k+\bm{\mathrm{i}}x_l))$ suffice to recover $\frac{|x_k|}{|x_l|}$. We discuss two cases.
	
	\noindent {\it Case 1.} When $\sign(x_k)\neq \pm \sign(x_l)$, then $x_k+x_l \neq 0$, hence we have 
	$$
	\frac{x_k+x_l}{\sign(x_k+x_l)}>0 \iff 
	\frac{|x_k|\sign(x_k)+|x_l|\sign(x_l)}{\sign(x_k+x_l)}>0.
	$$
	We take the imaginary part and obtain  
	\begin{equation}
		a\cdot \frac{|x_k|}{|x_l|}+b=0, \mathrm{where~} a:=\Im \left(\frac{\sign(x_k)}{\sign(x_k+x_l)} \right),~b:=\Im
		\left (\frac{\sign(x_l)}{\sign(x_k+x_l)} \right ).
		\label{E3}
	\end{equation} 
	If $a = 0$, then $\frac{\sign(x_k)}{\sign(x_k+x_l)}\in \mathbb{R}$, hence $\frac{|x_k|\sign(x_k)}{|x_k+x_l|\sign(x_k+x_l)}=\frac{x_k}{x_k+x_l}\in \mathbb{R}$. It is not difficult to see that this   leads to  $\frac{x_k}{x_l}\in\mathbb{R}$,
	which is contradictory to the assumption $\sign(x_k)\neq \pm \sign(x_l)$. 
	Thus, based on the measurements $\sign(x_k),\sign(x_l),\sign(x_k+x_l)$ one can obtain $\frac{|x_k|}{|x_l|}$ by solving (\ref{E3}).
	
	\vspace{2mm}
	\noindent {\it Case 2.} When $\sign(x_k)=\pm \sign(x_l)$, we have $\sign(x_k)\neq \pm \sign(\textbf{i}x_l)$. 
	By using the same arguments for {\it Case 1}, one can recover $\frac{|x_k|}{|\textbf{i}x_l|}$, i.e. $\frac{|x_k|}{|x_l|}$, from the measurements $\sign(x_k),\sign(x_l),\sign(x_k+\textbf{i}x_l)$.
	 Therefore, from the $d^2$ measurements $\{\sign(x_k):k\in [d]\}$ and $ \{\frac{|x_k|}{|x_l|}:k,l\in \mathrm{N}(\textbf{x})\}$, any $\textbf{x}\in\mathbb{C}^d$ can be recovered up to a positive scaling factor. Thus the proof is concluded.	
	\end{proof}
	
\begin{rem}
\label{remarknew}
    If we do not pursue the reconstruction of all $d$-dimensional complex-valued signals but  only  the recovery of a   fixed  $\bm{\mathrm{x}}$, the above construction suggests   adaptively using $d+|\mathrm{N}(\bm{\mathrm{x}})|-1$ measurements. More precisely, we can first measure $\sign(\bm{\mathrm{x}})$ that indicates $|\mathrm{N}(\bm{\mathrm{x}})|$. Furthermore, taking a specific $k_0 \in \mathrm{N}(\bm{\mathrm{x}})$,  the additional $|\mathrm{N}(\bm{\mathrm{x}})|-1$ measurements  $\{\sign(x_{k_0}+c_lx_l):l\in \mathrm{N}(\bm{\mathrm{x}}),l\neq k_0\}$   can deliver the magnitude ratio $\{\frac{|x_{k_0}|}{|x_l|}:l\in  \mathrm{N}(\bm{\mathrm{x}})\}$, hence $\bm{\mathrm{x}}$ is specified up to a positive scaling. Note that in the latter $|\mathrm{N}(\bm{\mathrm{x}})|-1$ measurements, based on $\sign(\textbf{x})$, $c_l$ should be selected adaptively  to guarantee $|c_l|=1$ and $c_l\neq \pm\frac{\sign(x_{k_0})}{\sign(x_l)}$.
\end{rem}


Next, we aim to lower the upper bound of $\bm{\mathrm{m_{all}}}(d)$. Recall that in the proof of Proposition \ref{pro1}  we   propose a concrete $\textbf{A}$ and then show its magnitude retrievable property. One shall see that we only use
elementary arguments. This is due to the simplicity of $\textbf{A}$, specifically its rows have two non-zero entries at most. However, it can be shown that for   $\textbf{A}$ with such simple rows, at least ${d(d-1)} $ measurements are required to deliver magnitude retrievable property\footnote{For any specific $1\leq j<k\leq d$, we need at least two rows with $j$-th and $k$-th entries being non-zero.}. Hence, to essentially reduce the current sample complexity $O(d^2)$, it is necessary to consider more complicated measurement matrix $\textbf{A}$.

We first present the new upper bound $4d-2$ as the following theorem.  

	\begin{theorem}
	 If $m\geq 4d-2$, then almost all $\bm{\mathrm{A}}$ in $\mathbb{C}^{m\times d}$ are magnitude retrievable, or equivalently, satisfy $\mathcal{W}_{\bm{\mathrm{A}}} = \mathbb{C}^d$. Specifically, when $d\geq 4$ we have the upper bound for the minimal measurement number $\bm{\mathrm{m_{all}}}(d) \leq 4d   -2$.
		\label{T5}
	\end{theorem}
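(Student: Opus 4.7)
The plan is to reduce the statement to a dimension count via Theorem~\ref{T1}. By Theorem~\ref{T1}, $\mathcal{W}_{\bm{\mathrm{A}}}\neq \mathbb{C}^d$ is equivalent to the existence of a nonzero $\mathbf{x}$ whose discriminant matrix has deficient rank, and unpacking Lemma~\ref{lemma1} shows this is in turn equivalent to the existence of a genuine ``ghost'' $\mathbf{y}\in \mathbb{C}^d$ with $\mathbf{y}\notin \mathbb{R}_+ \mathbf{x}$ and $\sign(\mathbf{Ay})=\sign(\mathbf{Ax})$. Rewriting the sign condition as $\mathbf{Ay}=\mathbf{t}\odot \mathbf{Ax}$ with $\mathbf{t}\in\mathbb{R}^m_{\geq 0}$ and $t_j>0$ on $\mathrm{N}(\mathbf{Ax})$, the set of non-magnitude-retrievable $\mathbf{A}$ is the projection onto the $\mathbf{A}$-coordinate of
\[
B=\big\{(\mathbf{A},\mathbf{x},\mathbf{y},\mathbf{t}):\mathbf{x},\mathbf{y}\text{ linearly independent over }\mathbb{C},~\mathbf{t}\in\mathbb{R}^m_{\geq 0},~\mathbf{Ay}=\mathbf{t}\odot\mathbf{Ax}\big\}.
\]
The linear-independence restriction is legitimate: if $\mathbf{y}=c\mathbf{x}$ for some $c\in\mathbb{C}$, then $c\mathbf{Ax}=\mathbf{t}\odot\mathbf{Ax}$ forces $t_j=c$ for $j\in \mathrm{N}(\mathbf{Ax})$, and the reality and positivity of $\mathbf{t}$ there gives $c\in\mathbb{R}_+$, contradicting $\mathbf{y}\notin\mathbb{R}_+\mathbf{x}$.

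To bound $\dim_{\mathbb{R}} B$, I would fix $(\mathbf{x},\mathbf{y})$ linearly independent over $\mathbb{C}$ and count row by row. For each $j$ the constraint on $(t_j,\bm{\gamma}_j^\top)\in\mathbb{R}\times\mathbb{C}^d$ is the complex equation $\bm{\gamma}_j^\top(\mathbf{y}-t_j\mathbf{x})=0$; because $\mathbf{y}-t_j\mathbf{x}\neq \mathbf{0}$ for every $t_j\in\mathbb{R}$ (by independence), this is a nontrivial complex-linear equation on $\bm{\gamma}_j$, imposing two real equations on the $(2d+1)$-real-dim space of $(t_j,\bm{\gamma}_j)$. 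So the admissible set for each row has real dimension at most $2d-1$; summing over $m$ rows and adding the $4d$ real dimensions of $(\mathbf{x},\mathbf{y})$ gives
\[
\dim_{\mathbb{R}} B\ \leq\ m(2d-1)+4d\ =\ 2md-m+4d.
\]

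Next I would exploit the symmetries of $B$ to improve the projection bound. The group $\mathbb{C}^*\times \mathbb{R}_+$ acts on $B$ by $(c,s)\cdot(\mathbf{x},\mathbf{y},\mathbf{t})=(c\mathbf{x},cs\mathbf{y},s\mathbf{t})$; direct substitution shows this preserves $\mathbf{Ay}=\mathbf{t}\odot\mathbf{Ax}$, $\mathbf{t}\geq \mathbf{0}$, and the linear-independence of $\mathbf{x},\mathbf{y}$. Since $\mathbf{x}\neq \mathbf{0}$ forces $c=1$ and $\mathbf{t}\neq \mathbf{0}$ (implied by $\mathbf{Ay}\neq \mathbf{0}$) forces $s=1$, the stabilizer is trivial and every fiber of the projection $\pi:B\to\mathbb{C}^{m\times d}$ contains a scaling orbit of real dimension exactly $3$. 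The fiber-dimension inequality for semi-algebraic maps then yields
\[
\dim_{\mathbb{R}} \pi(B)\ \leq\ \dim_{\mathbb{R}} B-3\ \leq\ 2md-(m-4d+3).
\]
For $m\geq 4d-2$ we have $m-4d+3\geq 1$, so $\dim_{\mathbb{R}}\pi(B)\leq 2md-1<2md=\dim_{\mathbb{R}} \mathbb{C}^{m\times d}$. Thus $\pi(B)$ is a proper semi-algebraic subset of $\mathbb{C}^{m\times d}$ and has Lebesgue measure zero, so almost all $\mathbf{A}\in \mathbb{C}^{m\times d}$ are magnitude retrievable, and in particular $\bm{\mathrm{m_{all}}}(d)\leq 4d-2$.

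The main obstacle is to execute the fiber-dimension argument rigorously in the semi-algebraic setting: one must verify that the scaling orbits do not collapse (which follows from the trivial-stabilizer computation) and that the row-by-row counting bound for $\dim_{\mathbb{R}} B$ applies to each irreducible component of $B$. The inequality $\mathbf{t}\geq \mathbf{0}$ defines an open condition that only further shrinks $B$ and so does not affect the upper bound, while the linear-independence condition is the Zariski-open locus where the row-by-row count is tight; together these ensure the bound $m(2d-1)+4d$ genuinely controls $\dim_{\mathbb{R}} B$.
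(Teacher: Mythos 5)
Your argument is correct and lands on exactly the same threshold $m\geq 4d-2$, but by a genuinely different route from the paper. The paper first reduces to canonical matrices $[\mathbf{I}_d,\mathbf{B}^{\top}]^{\top}$, normalizes the ghost pair by fixing $x_1=y_1=1$, splits into cases according to which measurement ratios $\lambda_j$ equal $1$, solves one entry of each row of $\mathbf{B}$ in terms of the remaining data (Lemma \ref{lemmanew}), and concludes via Sard's theorem that each of the resulting smooth parametrizations has null image; the threshold emerges from comparing the domain dimension $(2m-2d+3)(d-1)+|\mathcal{J}|$ with the target dimension $2(m-d)d$. You instead work with the incidence set $B$ of tuples $(\mathbf{A},\mathbf{x},\mathbf{y},\mathbf{t})$ over \emph{all} of $\mathbb{C}^{m\times d}$, bound $\dim_{\mathbb{R}}B\le 4d+m(2d-1)$ by a row-by-row count (valid because the constraints for distinct rows involve disjoint variables $(t_j,\bm{\gamma}_j)$, and $\mathbf{y}-t_j\mathbf{x}\neq\mathbf{0}$ for every real $t_j$ by linear independence), and then recover the same net saving of $3$ that the paper obtains by its normalizations, via the free $\mathbb{C}^{*}\times\mathbb{R}_+$ action on the fibers of $\pi$. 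What your route buys is the elimination of the canonical-form reduction, the permutations $\mathscr{T}_{p,q}$, and the case split over $\mathcal{J}$ versus $\mathcal{J}_1$; what it costs is reliance on the fiber-dimension theorem for semialgebraic maps, which is standard but a heavier tool than the elementary ``smooth image of a lower-dimensional manifold is null'' fact the paper invokes.

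Two loose ends, both one-line fixes rather than genuine gaps. First, your reduction of ``not magnitude retrievable'' to the existence of a \emph{linearly independent} ghost pair is only valid for full-column-rank $\mathbf{A}$: if $\mathbf{Ax}=\bm{0}$ with $\mathbf{x}\neq\bm{0}$, the identity $c\,\mathbf{Ax}=\mathbf{t}\odot\mathbf{Ax}$ imposes nothing on $c$, and indeed $\bm{0}\notin\mathcal{W}_{\mathbf{A}}$ for such $\mathbf{A}$. You should either discard rank-deficient $\mathbf{A}$ as a separate null set or note that they all lie in $\pi(B)$ anyway (take $\mathbf{y}\in\ker(\mathbf{A})\setminus\{\bm{0}\}$, any $\mathbf{x}$ independent of $\mathbf{y}$, and $\mathbf{t}=\bm{0}$). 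Second, your triviality-of-stabilizer argument via ``$\mathbf{t}\neq\bm{0}$, implied by $\mathbf{Ay}\neq\bm{0}$'' again presupposes full rank; it is cleaner to deduce $s=1$ from $cs\,\mathbf{y}=\mathbf{y}$ and $\mathbf{y}\neq\bm{0}$, which holds on all of $B$ and makes the lower bound $\dim\pi^{-1}(\mathbf{A})\cap B\geq 3$ valid over every $\mathbf{A}\in\pi(B)$.
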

 Note that the upper bound of $\bm{\mathrm{m_{all}}}(d)$ is presented   for $d\geq 4$ since the bound $\bm{\mathrm{m_{all}}}(d) \leq d^2$ in Proposition \ref{pro1} is tighter  when $d \in [3]$.

  To prove Theorem \ref{T5}, we will concentrate on the canonical measurement matrix    $\textbf{A} = [\bm{\mathrm{I}_d},\bm{\mathrm{B^\top}}]^{\bm{\top}}$ for some $\textbf{B}\in\mathbb{C}^{(m-d)\times d}$, and we denote the $(j,k)$-th entry of \textbf{A}   by $b_{jk}$ (rather than adopting (\ref{entryax})). The proof strategy is to first identify   \textbf{A} that is not magnitude retrievable   with the image of some smooth mappings, as done in Lemma \ref{lemmanew}. Then,   Sard's Theorem \cite{sard1942measure} delivers that the set of these undesired \textbf{A} is of zero Lebesgue measure (in $\mathbb{C}^{m\times d}$) when $m\geq 4d-2$. 
  
  \begin{lem}
      \label{lemmanew}
      Assume $m> d \geq 2$, $\bm{\mathrm{A}} = [\bm{\mathrm{I}_d},\bm{\mathrm{B^\top}}]^{\bm{\top}}$ for some $\bm{\mathrm{B}}\in\mathbb{C}^{(m-d)\times d}$. For any $\mathcal{J}\subset \{d+1,\cdots,m\}$, we write $\mathcal{J}_1 = \{j\in [m]\setminus [d]: j\notin \mathcal{J} \}$ and   define $(m-d)(d-1)$ positions of $([m]\setminus[ d])\times [d]$ as 
       \begin{equation}
           \label{726add}
           \mathcal{T}_{\mathcal{J}}=\{(j,k)\in ([m]\setminus[ d])\times [d]: k\neq 1\ \mathrm{when}\ j\in \mathcal{J},\ k\neq 2\ \mathrm{when}\ j\in \mathcal{J}_1\}.
       \end{equation}
       Moreover, we define $f_{\mathcal{J}}$ that maps the domain \begin{equation}
           \label{726add2}
           \Omega_{\mathcal{J}} =\mathbb{C}^{(m-d)(d-1)}\times (\mathbb{C}\setminus \{0\})\times \mathbb{C}^{d-2}\times (\mathbb{R}_+ \setminus \{1\}) \times \mathbb{R}_+^{d-2}\times (\mathbb{R}_+\setminus \{1\})^{|\mathcal{J}|}
       \end{equation} to $\mathbb{C}^{(m-d)d}$. Specifically, for the element 
       \begin{equation}
       \label{726add4}
           \bm{{\beta}}:=\big([\beta_{jk}]_{(j,k)\in \mathcal{T}_{\mathcal{J}}},\beta_2,[\beta_k]_{3\leq k\leq d}, \lambda_2, [\lambda_k]_{3\leq k\leq d}, [\lambda_j]_{j\in \mathcal{T}_{\mathcal{J}}} \big) \in \Omega_{\mathcal{J}},
       \end{equation}
       we define $\bm{\zeta}:=f_{\mathcal{J}}(\bm{{\beta}})=[\zeta_{jk}]_{j\in [m]\setminus[d],k\in [d]}$ entry-wisely by
       \begin{equation}
           \label{726add3}
           \begin{aligned}
               \zeta_{jk} = \beta_{jk}\text{ if }(j,k)\in \mathcal{T}_{\mathcal{J}},~\zeta_{j1} =\sum_{k=2}^d \frac{   \beta_k (1-\lambda_k\lambda_j )\beta_{jk}}{\lambda_j-1}\text{ if }j\in\mathcal{J},&\\\zeta_{j2} =\sum_{k=3}^d\frac{\beta_k(1-\lambda_k)\beta_{jk}}{(\lambda_2-1)\beta_2} \text{ if }j\in\mathcal{J}_1.&
           \end{aligned}
       \end{equation}
         For $1\leq p\leq q \leq d$ we let $\mathscr{T}_{p,q}$ be matrix obtained by exchanging the $p$-th row and $q$-th row of $\bm{\mathrm{I}_d}$. If we denote the set of the $\bm{\mathrm{B}}$ such that $\bm{\mathrm{A}} = [\bm{\mathrm{I}_d},\bm{\mathrm{B^\top}}]^{\bm{\top}}$ is not magnitude retrievable by $\mathscr{X}$, then it holds that 
       \begin{equation}
           \label{3.3}
           \mathscr{X}=  \bigcup_{1\leq p<q\leq d}\bigcup_{\mathcal{J}= [m]\setminus [d]} f_{\mathcal{J}}(\Omega_{\mathcal{J}}) \mathscr{T}_{2,q}\mathscr{T}_{1,p}.
       \end{equation}
  \end{lem}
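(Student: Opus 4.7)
The plan is to characterize $\mathscr{X}$ directly from the definition of $\mathcal{W}_{\bm{\mathrm{A}}}$ and then match the parameterization $f_{\mathcal{J}}$ by explicit algebra. By definition, $\bm{\mathrm{B}} \in \mathscr{X}$ iff there exist $\bm{\mathrm{x}}, \bm{\mathrm{y}} \in \mathbb{C}^d$ with $\sign(\Ay) = \sign(\Ax)$ but $\bm{\mathrm{y}} \notin \{t\bm{\mathrm{x}} : t > 0\}$. The identity block of $\bm{\mathrm{A}}$ forces $\sign(\bm{\mathrm{y}}) = \sign(\bm{\mathrm{x}})$, so $y_k = \mu_k x_k$ with $\mu_k > 0$ on $\mathrm{N}(\bm{\mathrm{x}})$; non-proportionality yields distinct $p, q \in \mathrm{N}(\bm{\mathrm{x}})$ with $\mu_p \neq \mu_q$, and rescaling $\bm{\mathrm{y}}$ by $\mu_p^{-1}$ normalizes $\mu_p = 1$ while $\mu_q \neq 1$ (WLOG $p < q$). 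Since column-permuting $\bm{\mathrm{B}}$ is equivalent to reindexing $\bm{\mathrm{x}}$ and preserves magnitude retrievability, right-multiplication by $\mathscr{T}_{2,q}\mathscr{T}_{1,p}$ reduces to the canonical case $p=1, q=2$ with $x_1, x_2 \neq 0$; the outer union over $p < q$ then restores generality.

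In the canonical case, set $\beta_k := x_k/x_1$ (so $\beta_2 \neq 0$) and $\lambda_k := \mu_k$ for $k \geq 2$ (so $\lambda_2 \in \mathbb{R}_+\setminus\{1\}$ and $\lambda_k \in \mathbb{R}_+$ for $k \geq 3$). For each $j \in [m]\setminus[d]$, the sign-matching constraint is equivalent to $\sum_k b_{jk} y_k = \lambda_j' \sum_k b_{jk} x_k$ for some $\lambda_j' > 0$, subsuming the orthogonal subcase (both sums zero). Partition $[m]\setminus[d] = \mathcal{J} \sqcup \mathcal{J}_1$ so that $j \in \mathcal{J}$ exactly when $\lambda_j' \neq 1$ is forced and $j \in \mathcal{J}_1$ when $\lambda_j' = 1$ is admissible. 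Substituting $y_k = \mu_k x_k$ and solving the row equation for $b_{j1}$ (respectively $b_{j2}$) yields
\[
b_{j1} = \sum_{k=2}^{d} \frac{\beta_k(\lambda_k - \lambda_j') b_{jk}}{\lambda_j' - 1}, \qquad b_{j2} = \sum_{k=3}^{d} \frac{\beta_k(1 - \lambda_k) b_{jk}}{(\lambda_2 - 1)\beta_2}.
\]
The change of variable $\lambda_j := 1/\lambda_j'$ (a bijection of $\mathbb{R}_+\setminus\{1\}$ onto itself) rewrites the first formula as exactly $\zeta_{j1}$ in~(\ref{726add3}); the second is already $\zeta_{j2}$. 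Collecting these row-wise identities and undoing the permutation $\mathscr{T}_{2,q}\mathscr{T}_{1,p}$ gives the inclusion $\subseteq$.

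For the reverse inclusion, given $\bm{\beta} \in \Omega_{\mathcal{J}}$ and $\bm{\mathrm{B}} = f_{\mathcal{J}}(\bm{\beta})$, construct the explicit witness $\bm{\mathrm{x}} := (1, \beta_2, \ldots, \beta_d)^\top$ and $\bm{\mathrm{y}} := (1, \lambda_2\beta_2, \ldots, \lambda_d\beta_d)^\top$. Then $\sign(\bm{\mathrm{y}}) = \sign(\bm{\mathrm{x}})$ holds since each $\lambda_k > 0$, and $\bm{\mathrm{y}} \notin \{t\bm{\mathrm{x}} : t > 0\}$ because $\lambda_2 \neq 1$. A direct expansion (plugging the $\zeta$-formulas into $\sum_k b_{jk} x_k$ and $\sum_k b_{jk} y_k$ and collecting the factor $1/(\lambda_j - 1)$) shows $\sum_k b_{jk} y_k = \lambda_j^{-1} \sum_k b_{jk} x_k$ for $j \in \mathcal{J}$ and $\sum_k b_{jk} y_k = \sum_k b_{jk} x_k$ for $j \in \mathcal{J}_1$, so all signs match and $\bm{\mathrm{B}} \in \mathscr{X}$. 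Combining with the $\mathscr{T}_{2,q}\mathscr{T}_{1,p}$-action yields the full identity (\ref{3.3}). The principal obstacle is bookkeeping rather than technique: tracking the bijection $\lambda_j' \leftrightarrow 1/\lambda_j$, confirming the orthogonal case $\sum_k b_{jk} x_k = 0$ is absorbed inside $\mathcal{J}_1$ as a specialization where the free parameter $b_{j1}$ happens to satisfy orthogonality, and ensuring the partition $\mathcal{J} \sqcup \mathcal{J}_1$ ranges over all $2^{m-d}$ subsets so that no bad $\bm{\mathrm{B}}$ is missed.
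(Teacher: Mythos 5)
Your proof is correct and follows essentially the same route as the paper: characterize the failure of magnitude retrievability by a pair $(\bm{\mathrm{x}},\bm{\mathrm{y}})$, normalize via the permutations $\mathscr{T}_{2,q}\mathscr{T}_{1,p}$ and a positive rescaling, introduce a positive multiplier per row, partition $[m]\setminus[d]$ by whether that multiplier equals $1$, and solve each row equation for $b_{j1}$ or $b_{j2}$. The only cosmetic differences are that you place the multiplier on the opposite side of the row equation (hence the reciprocal substitution $\lambda_j = 1/\lambda_j'$) and that you verify the reverse inclusion by an explicit witness, whereas the paper relies on its chain of reformulations being equivalences throughout.
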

  
 \begin{proof}
  We start by rephrasing an element $\textbf{B}$ in $\mathscr{X}$. Evidently, $\textbf{B}\in\mathscr{X}$ if and only if there exist $\x,\y \in\mathbb{C}^d$, $\sign(\x)=\sign(\y)$, $\sign(\textbf{Bx})=\sign(\textbf{By})$, but one cannot find $t>0$ such that $\x = t\cdot \y$. It is not hard to see that this is equivalent to 
  \begin{equation}
      \label{3.4}
  \exists \x,\y\in \mathbb{C}^d,~\text{such that }    \begin{cases}
       \sign(\x)=\sign(\y),~\sign(\textbf{Bx})=
			\sign(\textbf{By}), {\rm \ and} \\
       \exists~ p,q\in \mathrm{N}(\x),~p<q,~\mathrm{such ~that}~\frac{x_q}{x_p}\neq \frac{y_q}{y_p}.
      \end{cases}
  \end{equation}
  We define $\textbf{B}^{p,q} :=\textbf{B} \mathscr{T}_{1,p}\mathscr{T}_{2,q}$, then (\ref{3.4}) can be further equivalently reformulated as 
  \begin{equation}
      \label{3.5}
      \exists \x,\y\in \mathbb{C}^d,~1\leq p<q\leq d,~\text{such that }    \begin{cases}
       \sign(\x)=\sign(\y),\\
			\sign({\textbf{B}^{p,q}\textbf{x}})=\sign(\textbf{B}^{p,q}\textbf{y}), \\
   x_1=  y_1 = 1,~x_2 \neq y_2.
      \end{cases}
  \end{equation}
  To see the equivalence between (\ref{3.4}) and (\ref{3.5}), we   assume $\x,\y\in\mathbb{C}^d$ satisfy (\ref{3.4}) and consider $\bm{\mathrm{\hat{x}}}=[\hat{x}_i] := \mathscr{T}_{2,q}\mathscr{T}_{1,p}\textbf{x}/x_p$,     $\bm{\mathrm{\hat{y}}}=[\hat{y}_i] := \mathscr{T}_{2,q}\mathscr{T}_{1,p}\textbf{y}/y_p$. Then by some algebraic operations, we obtain
  \begin{equation}
      \begin{aligned}
           \sign(\bm{\mathrm{\hat{x}}}) = \mathscr{T}_{2,q}\mathscr{T}_{1,p} \sign(\textbf{x}) /\sign(x_p) =\mathscr{T}_{2,q}\mathscr{T}_{1,p} \sign(\textbf{y}) /\sign(y_p) = \sign(\bm{\mathrm{\hat{y}}}), &\\
            \sign(\textbf{B}^{p,q}\bm{\mathrm{\hat{x}}})=\sign\Big(\frac{\textbf{Bx}}{x_p}\Big) = \frac{\sign(\textbf{Bx})}{\sign(x_p)} = \frac{\sign(\textbf{By})}{\sign(y_p)}= \sign\Big(\frac{\textbf{By}}{y_p}\Big)= \sign(\textbf{B}^{p,q}\bm{\mathrm{\hat{y}}}), &\\
            \hat{x}_1 = \frac{x_p}{x_p } = 1,~\hat{y}_1 = \frac{y_p}{y_p}= 1,~\hat{x}_2 = \frac{x_q}{x_p}\neq \frac{y_q}{y_p} = \hat{y}_2.&
      \end{aligned}
  \end{equation}
Similary, we show that 
(\ref{3.5}) can lead to (\ref{3.4}). Now we consider  fixed $p,q$, $1\leq p<q\leq d$, and entry-wisely denote $\textbf{B}^{p,q} = [b_{jk}]_{j\in [m]\setminus[d],k\in [d]}$. Since $\sign(a_1) = \sign(a_2)$ if and only if $a_1 = t\cdot a_2$ for some $t>0$, we can substitute $y_k,2\leq k\leq d$ in (\ref{3.5}) with $\lambda_k\cdot x_k$ for some $\lambda_k >0$, and $\lambda_2 \neq 1$ to guarantee $x_2\neq y_2$. Analogously, $\sign(\textbf{B}^{p,q}\textbf{x}) = \sign(\textbf{B}^{p,q}\textbf{y})$ can be written as $\sign(\sum_{k=1}^d b_{jk}x_k) = \sign(\sum_{k=1}^d b_{jk}y_k)$ for all $d+1\leq j\leq m$, then for each $j$ we can introduce positive number $\lambda_j$  and reformulate it as $\sum_{k=1}^d b_{jk}x_k = \lambda_j \sum_{k=1}^d b_{jk}y_k$. Thus, under fixed $p,q$ ($1\leq p<q\leq d$), (\ref{3.5}) can be equivalently given as (we let $x_1=\lambda_1 = 1$ so $y_1 = \lambda_1x_1= 1$)
  \begin{equation}
      \begin{aligned}
          \label{3.7}
         & \exists~ x_2 \in \mathbb{C}\setminus \{0\}, x_3,\cdots,x_d\in \mathbb{C} ,\lambda_2 \in \mathbb{R}_+\setminus\{1\}, \lambda_3,\cdots,\lambda_d,\lambda_{d+1},\cdots,\lambda_m \in \mathbb{R}_+,\\ &\text{ such that ~}\forall ~d+1\leq j\leq m,~(1-\lambda_j)b_{j1}+ x_2(1-\lambda_2\lambda_j)b_{j2}+\sum_{k=3}^d x_k(1-\lambda_k\lambda_j)b_{jk} = 0.
      \end{aligned}
  \end{equation}
  For $d+1\leq j\leq m$, if $\lambda_j \neq 1$,   the   equation contained in (\ref{3.7}) is equivalent to solving $b_{j1}$ as
  \begin{equation}
      \label{3.8}
      b_{j1} = \frac{\sum_{k=2}^d x_k(1-\lambda_k\lambda_j)b_{jk} }{\lambda_j - 1}.
  \end{equation}
  Otherwise, if $\lambda_j = 1$, the equation in (\ref{3.7}) can be written as 
  \begin{equation}
      \label{3.9}
      b_{j2}=\frac{\sum_{k=3}^dx_k(1-\lambda_k )b_{jk}}{(\lambda_2-1)x_2}.
  \end{equation}
  Therefore, we can use $\mathcal{J}\subset \{d+1,\cdots,m\}$ to denote the set of  $j$ such that $\lambda_j \neq  1$, and let $\mathcal{J}_1 = \big([m]\setminus[d]\big)\setminus \mathcal{J}$, then (\ref{3.7}) can be rephrased as 
    \begin{equation}
      \begin{aligned}
          \label{3.10}
         & \exists~ \mathcal{J}\subset \{d+1,\cdots,m\}, x_2 \in \mathbb{C}\setminus \{0\}, x_3,\cdots,x_d\in \mathbb{C} ,\\ &\lambda_2 \in \mathbb{R}_+\setminus\{1\},\lambda_3,\cdots,\lambda_d \in \mathbb{R}_+, \{\lambda_j:j\in\mathcal{J}\}\subset \mathbb{R}_+,\\ &\text{ such that ~} \forall ~j\in\mathcal{J},~(\ref{3.8}) \text{ holds},~\forall j\in\mathcal{J}_1,~(\ref{3.9}) \text{ holds}.
      \end{aligned}
  \end{equation}
  Thus, recall the definition of $\mathcal{T}_{\mathcal{J}}$ given in (\ref{726add}), under fixed $p,q$, for those $\textbf{B}^{p,q}=[b_{jk}]$ satisfying (\ref{3.10}), $b_{jk}$ when $(j,k)\in \mathcal{T}_{\mathcal{J}}$ can take any   value in $\mathbb{C}$, while $b_{jk}$ with $(j,k)\notin \mathcal{T}_{\mathcal{J}}$ should be determined by   (\ref{3.8}) or (\ref{3.9}). Compared with the mapping $f_{\mathcal{J}}$ defined in the Theorem (See (\ref{726add2}), (\ref{726add4}) and (\ref{726add3})), we conclude that   $\bm{\mathrm{A}} = [\bm{\mathrm{I}_d},\bm{\mathrm{B^\top}}]^{\bm{\top}}$ is not magnitude retrievable, if and only if  for some $p,q$ ($1\leq p <q \leq d$), for some $\mathcal{J} \subset [m]\setminus [d]$, $\textbf{B}^{p,q} =\textbf{B} \mathscr{T}_{1,p}\mathscr{T}_{2,q} \in f_\mathcal{J}(\Omega_\mathcal{J})$. We slightly abuse the notation and allow $\mathscr{T}^{p,q}$ to element-wisely operate on a set, then 
    this can be written  as $\textbf{B} \in f_\mathcal{J}(\Omega_\mathcal{J}) \mathscr{T}_{2,q}\mathscr{T}_{1,p}$. By taking the union over $p,q,\mathcal{J}$, (\ref{3.3}) follows. 
    \end{proof}

  \vspace{2mm}
  Now we are in a position to give the proof of Theorem \ref{T5}.

 \vspace{2mm}

	\noindent
	{\it Proof of Theorem \ref{T5}:}
	By identifying $z\in \mathbb{C}$ with $(\Re(z)\ \Im(z))^{\top}\in \mathbb{R}^2$, $f_\mathcal{J}$ can be equivalently viewed as a mapping from $$\Omega_{\mathcal{J},\mathbb{R}}:=\mathbb{R}^{2(m-d)(d-1)}\times \big(\mathbb{R}^2 \setminus \{(0,0)\}\big)\times \mathbb{R}^{2(d-2)}\times \mathbb{R}_+\setminus \{1\}\times \mathbb{R}_+^{d-2}\times \big(\mathbb{R}_+\setminus \{1\}\big)^{|\mathcal{J}|}$$ to $\mathbb{R}^{2(m-d)d}$. Note that the $\Omega_{\mathcal{J},\mathbb{R}}$ is an open subset of $\mathbb{R}^{(2m-2d+3)(d-1)+|\mathcal{J}|}$. It is quite obvious that $f_\mathcal{J}$ is   smooth  (i.e., infinitely continuously differentiable). To be more concrete, we confirm this via calculations. 
	We use the notations in the definition of $f_\mathcal{J}$ in   Lemma \ref{lemmanew} (See   (\ref{726add2}), (\ref{726add4}) and (\ref{726add3})) and write $\bm{\zeta}=[\zeta_{jk}]=f_\mathcal{J}(\bm{\beta})$. Here we view $f_\mathcal{J}$ as a real  mapping, and so $\Re(\beta_{jk}),\Im(\beta_{jk}),\Re(\beta_k),\Im(\beta_k),\lambda_j$ are variables (i.e., components of $\bm{\beta}$ in (\ref{726add4})), and $\Re (\zeta_{jk}),\Im (\zeta_{jk})$ are components of the $\bm{\zeta}$. Thus, some algebra gives the translation of (\ref{726add3}) as    When   $(j,k)\in \mathcal{T}_{\mathcal{J}}$, 
	$\Re(\zeta_{jk})=\Re(\beta_{jk}),\ 
	\Im(\zeta_{jk})=\Im(\beta_{jk})$;  when $j\in \mathcal{J}$,
	\begin{equation}
		\begin{cases}
		 \Re(\zeta_{j1})= {\displaystyle \sum_{k=2}^d} \frac{1-\lambda_k\lambda_j}{\lambda_j-1} \cdot  
			\big(\Re(\beta_{jk})\Re(\beta_k)-\Im(\beta_{j k})\Im(\beta_k)\big),  \\
			\Im(\zeta_{j1})= {\displaystyle \sum_{k=2}^d} 
			\frac{1-\lambda_k\lambda_j}{\lambda_j-1} \cdot  \big(\Re(\beta_{jk}) \Im (\beta_k)+ \Im(\beta_{jk})\Re(\beta_k) \big). 
		\end{cases}
	\nonumber
	\end{equation}
	When $j\in \mathcal{J}_1$,
		\begin{equation}
		\begin{cases}
			  \Re(\zeta_{j2})= \\
				{\displaystyle \sum_{k=3}^d
				\frac{(1-\lambda_k) \cdot  (
			 \Re(\beta_{jk})\Re(\beta_k)\Re(\beta_2)+\Re(\beta_{jk})\Im(\beta_k)\Im(\beta_2)-
			 \Im(\beta_{jk})\Im(\beta_k)\Re(\beta_2)+\Im(\beta_{jk})\Re(\beta_k)\Im(\beta_2)  )}{(\lambda_2-1)\cdot(   [\Re(\beta_2)]^2+[\Im(\beta_2)]^2) }},    \\
			\Im(\zeta_{j2})= \\
			{\displaystyle \sum_{k=3}^d
			\frac{ (1-\lambda_k)\cdot(
			\Re(\beta_{jk})\Im(\beta_k)\Re(\beta_2)-\Re(\beta_{jk})\Re(\beta_k)\Im(\beta_2)+
			\Im(\beta_{jk})\Re(\beta_k)\Re(\beta_2)+\Im(\beta_{jk})\Im(\beta_k)\Im(\beta_2))}{(\lambda_2-1)\cdot([\Re(\beta_2)]^2+[\Im(\beta_2)]^2)} }.
		\end{cases}
		\nonumber
	\end{equation} 
  Thus, $f_\mathcal{J}$ is smooth. When $m\geq 4d-2$, for any $\mathcal{J}\subset [m]\setminus [d]$ we have 
$$
2(m-d)d >(2m-2d+3)(d-1)+m-d \geq (2m-2d+3)(d-1)+|\mathcal{J}|.
$$ 
 Then by Sard Theorem (see, e.g., \cite{sard1942measure}), $f_\mathcal{J}(\Omega_Y)$ has zero Lebesgue measure, which by (\ref{3.3}) implies that $\mathscr{X}$, the set of "undesired" \textbf{B}, 
has zero Lebesgue measure. Hence, when $m\geq 4d-2$, almost all $\textbf{B}$ in $\mathbb{C}^{(m-d)\times d}$ belong to $\mathbb{C}^{(m-d)\times d}\setminus \mathscr{X}$, i.e., $[\bm{\mathrm{I}_d},\bm{\mathrm{B^\top}}]^{\bm{\top}}\in \mathbb{C}^{m\times d}$ is magnitude retrievable.

To prove the first statement in Theorem \ref{T5} we still need to extend   canonical measurement matrix to the general $\textbf{A}\in\mathbb{C}^{m\times d}$ when $m\geq 4d-2$.
Given $ \bm{\mathrm{A}}= [\bm{\mathrm{A_1^\top}},\bm{\mathrm{A_2^\top}}]^{\bm{\top}}\in \mathbb{C}^{n\times d}$ with $\bm{\mathrm{A_1 }}\in\mathbb{C}^{d\times d}$, $\bm{\mathrm{A_2 }}\in\mathbb{C}^{(m-d)\times d}$,  a   mapping $\widetilde{f}$ from $\mathbb{C}^{m\times d}$ to $\mathbb{C}^{m\times d}$ is defined by $\widetilde{f}(\bm{\mathrm{A}}) =[\bm{\mathrm{A_1^\top}},\bm{\mathrm{(A_2A_1)^\top}}]^{\bm{\top}}$. Moreover, we let $  \widetilde{\mathscr{X}}:= \big\{\textbf{A} \in\mathbb{C}^{m\times d}: \mathcal{W}_{\textbf{A}} \neq \mathbb{C}^d \big\}$ and aim to show $\widetilde{\mathscr{X}}$ is of zero Lebesgue measure. Assume $\textbf{A} = [\bm{\mathrm{A_1^\top}},\bm{\mathrm{A_2^\top}}]^{\bm{\top}}\in\widetilde{\mathscr{X}}$. If $\bm{\mathrm{A_1 }}\in\mathbb{C}^{d\times d}$ is invertible, then we have $\mathcal{W}_{\bm{\mathrm{AA_1^{-1}}}} = \bm{\mathrm{A_1}}\mathcal{W}_{\textbf{A}}$ (see the discussion at the beginning of Section \ref{sec2}). Thus, $\bm{\mathrm{AA_1^{-1}}} = [\bm{\mathrm{I}_d}, \bm{\mathrm{(A_2A_1^{-1})^\top}}]^{\bm{\top}}\in \widetilde{\mathscr{X}}$, which gives $\bm{\mathrm{A_2A_1^{-1}}}\in \mathscr{X}$. Evidently, it holds that $$\textbf{A} = \begin{bmatrix}
    \bm{\mathrm{A_1}} \\ \bm{\mathrm{A_2}}
\end{bmatrix} = \widetilde{f}\left(\begin{bmatrix}
    \bm{\mathrm{A_1}} \\ \bm{\mathrm{A_2A_1^{-1}}}
\end{bmatrix}\right).$$
Therefore, we obtain $\textbf{A}\subset \widetilde{f}\big(\mathbb{C}^{d\times d} \times \mathscr{X}\big)$. We further take singular $\bm{\mathrm{A_1}}$ into account, it delivers 
\begin{equation}
    \label{3.15}
    \widetilde{\mathscr{X}}\subset  \widetilde{f}\big(\mathbb{C}^{d\times d} \times \mathscr{X}\big) \cup \{\textbf{A}:\bm{\mathrm{A_1}}\text{ is singular}\}.
\end{equation}
From (\ref{3.15}) it is evident that when $m\geq 4d-2$ almost all $\textbf{A}$ are magnitude retrievable.
\hfill $\square $

\vspace{2mm}

    In the following, we turn to bound $\bm{\mathrm{m_{all}}}(d) $ from below. 
   From now on, we assume $\A = [\bm{\gamma_1},\bm{\gamma_2},\cdots,\bm{\gamma_m}]^\top$ has no zero row, that is, $\bm{\mathrm{\gamma_j^\top}}\neq \bm{0}$ for each $j\in [m]$. 
   We define the set of the $\x$ such that $|\mathrm{N}(\Ax)| = m $ to be
	\begin{equation}
	    \label{3.16}
	    \mathcal{H}_{\textbf{A}}:=\{\x\in \mathbb{C}^d: \bm{\mathrm{\gamma_j^\top}}\x\neq 0,~\forall j\in [m]\}.
	\end{equation}
    Note that   no zero measurement  $\bm{\mathrm{\gamma_j^\top}}\textbf{x}=0$ occurs if $\x \in \mathcal{H}_{\textbf{A}}$, hence for such signals one observes  purely phase-only measurements. Besides,
	note that $\ker(\bm{\mathrm{\gamma_j^\top}})=\{\x\in \mathbb{C}^d:\bm{\mathrm{\gamma_j^\top}}\textbf{x}=0\}$ is $(d-1)$-dimensional linear subspace of $\mathbb{C}^d$, by writing
	$$\mathcal{H}_{\textbf{A}}= \bigcap_{j=1}^m  \big(\mathbb{C}^d\setminus\ker(\bm{\mathrm{\gamma_j^\top}})\big)  =
	\mathbb{C}^d\setminus \big(\bigcup_{j=1}^m\ker(\bm{\mathrm{\gamma_j^\top}})\big),
	$$ 
	one shall easily see that $\mathcal{H}_{\textbf{A}}$ is Zariski open.

	If $\x \in \mathcal{H}_{\textbf{A}}$ can be recovered, Theorem \ref{T1} gives $\rank (\mathcal{D}_{\textbf{A}}(\textbf{x})) = 2d+|\mathrm{N}(\textbf{Ax})|-1 = 2d+m-1$, thus we arrive at $2m \geq 2d+m-1$ (since $\mathcal{D}_{\textbf{A}}(\textbf{x})$ has $2d$ rows in total). This directly delivers the lower bound $\bm{\mathrm{m_{all}}}(d)  \geq 2d-1$. In the next Theorem, we use $\mathcal{E}_{\textbf{A}}(\textbf{x})$ instead and show a slightly tighter result. The key idea is to construct a specific $\bm{\mathrm{x_0}}\in\mathcal{H}_{\textbf{A}}$ such that one row of $\mathcal{E}_{\textbf{A}}(\bm{\mathrm{x_0}})$ is zero, then the result follows from similar rank argument.   
 
	\begin{theorem}
		When $d>1$, $\bm{\mathrm{m_{all}}}(d) \geq 2d$.
		\label{T4}
	\end{theorem}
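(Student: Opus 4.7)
The plan is to produce, for every $\mathbf{A}\in \mathbb{C}^{m\times d}$ with $m\leq 2d-1$, a specific $\mathbf{x}_0\in\mathbb{C}^d$ outside $\mathcal{W}_{\mathbf{A}}$. For $m<d$ no such $\mathbf{A}$ has full column rank, so $\mathcal{W}_{\mathbf{A}}=\varnothing$; for $m=d$, Theorem \ref{T2} already refutes $\mathbf{x}_0\in \mathcal{W}_{\mathbf{A}}$ for any $\mathbf{x}_0$ with all nonzero entries, since $\mathcal{E}_{\mathbf{A}}(\mathbf{x}_0)$ is then the empty matrix with rank $0\neq d-1$ when $d\geq 2$. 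So I assume $d+1\leq m\leq 2d-1$, and following the hint preceding the theorem, I would arrange $|\mathrm{N}(\mathbf{x}_0)|=d$, $\mathbf{x}_0\in \mathcal{H}_{\mathbf{A}}$, and one row $\Psi_{j_0}(\mathbf{x}_0)$ of $\mathcal{E}_{\mathbf{A}}(\mathbf{x}_0)$ identically zero. Then $\mathcal{E}_{\mathbf{A}}(\mathbf{x}_0)$ has exactly $m-d$ rows (one per $j\in [m]\setminus[d]$) with at least one zero, so $\rank(\mathcal{E}_{\mathbf{A}}(\mathbf{x}_0))\leq m-d-1 \leq d-2 < d-1 = |\mathrm{N}(\mathbf{x}_0)|-1$, and Theorem \ref{T2} forces $\mathbf{x}_0\notin \mathcal{W}_{\mathbf{A}}$.

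To apply this construction uniformly, I would first reduce to canonical form: after deleting any zero rows (which preserves $\mathcal{W}_{\mathbf{A}}$ since they carry no information) and invoking the row-exchange plus right-multiplication argument opening Section \ref{sec2}, I may assume $\mathbf{A}=[\mathbf{I}_d^\top,\mathbf{B}^\top]^\top$ with $\mathbf{B}\in\mathbb{C}^{(m-d)\times d}$ containing no zero rows; magnitude retrievability is invariant under this reduction. Choosing $j_0=d+1$, let $\mathcal{S}:=\mathrm{N}(\bm{\gamma}_{d+1})\neq \varnothing$. Take $\mathbf{x}_0=[|x_k|e^{\textbf{i}\alpha_k}]_{k\in[d]}$ with $\alpha_k=-\theta_{d+1,k}$ for $k\in\mathcal{S}$, $\alpha_k$ free for $k\notin \mathcal{S}$, and $|x_k|>0$ for every $k$. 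Under this choice $\bm{\gamma}_{d+1}^\top\mathbf{x}_0 = \sum_{k\in\mathcal{S}} r_{d+1,k}|x_k|>0$, hence $\delta_{d+1}=0$ and $\theta_{d+1,k}+\alpha_k-\delta_{d+1}=0$ for every $k\in\mathcal{S}$; since $r_{d+1,k}=0$ for $k\notin\mathcal{S}$, formula (\ref{2.7}) immediately forces $\Psi_{d+1}(\mathbf{x}_0)=\mathbf{0}$.

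It remains to guarantee $\mathbf{x}_0\in \mathcal{H}_{\mathbf{A}}$. Since all $x_k\neq 0$, the first $d$ rows satisfy $\bm{\gamma}_k^\top\mathbf{x}_0=x_k\neq 0$. For each of the finitely many $j\in \{d+2,\ldots,m\}$, the equation $\bm{\gamma}_j^\top\mathbf{x}_0=0$ is a single nontrivial constraint on the remaining free parameters, so generic choices avoid it: if $\mathrm{N}(\bm{\gamma}_j)\not\subset \mathcal{S}$, any $k_0\in \mathrm{N}(\bm{\gamma}_j)\setminus\mathcal{S}$ gives a free $\alpha_{k_0}$ that can be tuned so the total is nonzero. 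I expect the main delicate sub-case to be $\mathrm{N}(\bm{\gamma}_j)\subset \mathcal{S}$, where no $\alpha_k$ in the sum $\bm{\gamma}_j^\top\mathbf{x}_0$ is free and only the positive magnitudes $|x_k|$ remain adjustable; there the constraint becomes $\sum_{k\in \mathrm{N}(\bm{\gamma}_j)} c_k|x_k|=0$ with complex coefficients $c_k=\gamma_{jk}e^{-\textbf{i}\theta_{d+1,k}}$ not all zero, and scaling one $|x_{k_0}|$ with $c_{k_0}\neq 0$ to dominate the other terms shows this equation carves out a proper subset of $\mathbb{R}_+^{|\mathrm{N}(\bm{\gamma}_j)|}$, hence is avoidable. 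Sidestepping these finitely many loci simultaneously places $\mathbf{x}_0$ in $\mathcal{H}_{\mathbf{A}}$ with $|\mathrm{N}(\mathbf{x}_0)|=d$, and the rank count combined with Theorem \ref{T2} finishes the proof.
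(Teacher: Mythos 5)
Your proposal is correct and takes essentially the same route as the paper's proof: both reduce to the canonical form, construct $\bm{\mathrm{x_0}}$ with phases $\alpha_k=-\theta_{d+1,k}$ so that $\bm{\mathrm{\gamma_{d+1}^\top}}\bm{\mathrm{x_0}}>0$ and the corresponding row of $\mathcal{E}_{\textbf{A}}(\bm{\mathrm{x_0}})$ vanishes, arrange $\bm{\mathrm{x_0}}\in\mathcal{H}_{\textbf{A}}$ by avoiding finitely many proper loci, and conclude via the rank criterion of Theorem \ref{T2}. Your small deviations (handling $m\le d$ explicitly, fixing $\alpha_k$ only on $\mathrm{N}(\bm{\mathrm{\gamma_{d+1}}})$, and a slightly different avoidance argument in place of the paper's real/imaginary-part trick with $\bm{\mathrm{\xi_j}}$) are minor refinements of the same argument.
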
 
	
\begin{proof}	We can only consider $\bm{\mathrm{A}} = [\bm{\mathrm{I}_d},\bm{\mathrm{A_1^\top}}]^{\bm{\top}}(\in \mathbb{C}^{m\times d})$ that is magnitude retrievable, i.e.,  $\mathcal{W}_{\bm{\mathrm{A}}}=\mathbb{C}^d$. For $\x\in \mathcal{H}_{\textbf{A}}$, recall the notations introduced at the beginning of Section \ref{canoin} $\textbf{A} = [r_{jk}e^{\textbf{i}\theta_{jk}}]$, $\textbf{x} = [|x_k|e^{\textbf{i}\alpha_k}]$, $e^{\textbf{i}\delta_j} = \sign(\bm{\mathrm{\gamma_j^\top}} \textbf{x})$ when $j\in [m]\setminus [d]$, and also the construction of $\mathcal{E}_{\textbf{A}}(\x)$ in (\ref{2.7}), (\ref{2.8}), (\ref{2.9}). It is evident that $\mathcal{E}_{\textbf{A}}(\x)=(r_{jk}\sin(\theta_{jk}+\alpha_k-\delta_j))_{j\in [m]\setminus [d],k\in [d] }\in \mathbb{R}^{(m-d)\times d}$. Besides, Theorem \ref{T2} gives $\rank(\mathcal{E}_{\textbf{A}}(\x))=|\mathrm{N}(\x)|-1=d-1$. 
	
	We now further specify a signal $\bm{\mathrm{x_0}}$ in $\mathcal{H}_{\textbf{A}}$ such that the first row of $\mathcal{E}_{\textbf{A}}(\bm{\mathrm{x_0}})$ vanishes.	Firstly we let $\alpha_k = -\theta_{d+1,k}$, hence we are considering $$	\x=[\lambda_1e^{-\textbf{i}\theta_{d+1,1}},\lambda_2e^{-\textbf{i}\theta_{d+1,2}},\cdots,\lambda_de^{-\textbf{i}\theta_{d+1,d}}]^{\top}=\bm{\mathrm{E\Lambda}}$$
	where $\textbf{E}=\mathrm{dg}([e^{-\textbf{i}\theta_{d+1,1}},\cdots,e^{-\textbf{i}\theta_{d+1,d}}]^{\top})$ has been specified, while $\bm{\Lambda}=[\lambda_1,\cdots,\lambda_d]^{\top}\in \mathbb{R}_+^d$ will be properly set later, to guarantee our initial assumption $\x\in\mathcal{H}_{\textbf{A}}$.

	For $j \in [m]\setminus [d]$, if $\Re(\bm{\mathrm{\gamma_j^\top}}\textbf{E})\neq \bm{0}$, we let $\bm{\mathrm{\xi_j^{\top}}} =\Re(\bm{\mathrm{\gamma_j^\top}}\textbf{E})$. Otherwise, (i.e., $\Re(\bm{\mathrm{\gamma_j^\top}}\textbf{E})= \bm{0}$), since $\bm{\mathrm{\gamma_j}}\neq \bm{0}$, we have $\bm{\mathrm{\gamma_j^\top}}\textbf{E}\neq \bm{0}$, so we can let 
	$\bm{\mathrm{\xi_j^{\top}}} =\Im(\bm{\mathrm{\gamma_j^\top}}\textbf{E})$. Note that  $\bm{\mathrm{\xi_j}} \in \mathbb{R}^d \setminus \bm{0}$, and so there exists $\bm{\Lambda_0} = [\lambda_{10},\cdots,\lambda_{d0}]^\top \in\mathbb{R}_+^d$ such that $\bm{\mathrm{\xi_j^{\top}}} \bm{\Lambda_0} \neq 0 $ for all $j\in [m]\setminus [d]$. On the other hand, since either $\bm{\mathrm{\xi_j^{\top}}} =\Re(\bm{\mathrm{\gamma_j^\top}}\textbf{E})$ or $\bm{\mathrm{\xi_j^{\top}}} =\Im(\bm{\mathrm{\gamma_j^\top}}\textbf{E})$ holds, $\bm{\mathrm{\xi_j^{\top}}}\bm{\Lambda_0} \neq 0 $ can imply $\bm{\mathrm{\xi_j^{\top}}} \big(\textbf{E}\bm{\Lambda_0}\big) \neq 0$. Thus, we can consider the signal $$\bm{\mathrm{x_0}} :=\textbf{E}\bm{\Lambda_0}= [\lambda_{10}e^{-\textbf{i}\theta_{d+1,1}},\lambda_{20}e^{-\textbf{i}\theta_{d+1,2}},\cdots,\lambda_{d0}e^{-\textbf{i}\theta_{d+1,d}}]^{\top}$$
	that satisfies $\bm{\mathrm{x_0}}\in \mathcal{H}_{\textbf{A}}$, $\sign(x_k) = e^{\textbf{i}\alpha_k} = e^{-\textbf{i}\theta_{d+1,k}}$. 
	
	We calculate   the $d+1$ measurement $e^{\textbf{i}\delta_{d+1}}$ as follows $$
	e^{\textbf{i}\delta_{d+1}}=\sign \left 
	(\sum_{k=1}^dr_{d+1,k}\cdot \lambda_{k0}\cdot e^{\textbf{i}(\theta_{d+1,k}+\alpha_k)} \right )=\sign \left (\sum_{k=1}^dr_{d+1,k}\cdot \lambda_{k0} \right )=1.
	$$ This gives $\delta_{d+1}=0$, which leads to  $ \sin(\theta_{d+1,k}+\alpha_k-\delta_{d+1})=0$, indicating that the first row of $\mathcal{E}_{\textbf{A}}(\bm{\mathrm{x_0}})$ vanishes. By applying $\rank(\mathcal{E}_{\textbf{A}}(\bm{\mathrm{x_0}}))=d-1$, we have $m-d-1\geq d-1$, which gives $m\geq 2d$ and hence the proof is concluded.
	\end{proof}

	\section{Reconstruction of almost all signals}\label{sec4}
 
	Although the minimal measurement number for $\mathcal{W}_{\textbf{A}}=\mathbb{C}^d$ (i.e., $\bm{\mathrm{m_{all}}}(d)$) is of fundamental theoretical interest, from the perspective of practicality, a sufficiently large set of recoverable signals is often satisfactory. We point out that, the measurement number   for  phase retrieval  of almost all signals has   been  studied in \cite{balan2006signal,huang2021almost}.

	In this section, we study the minimal measurement number for recovering   almost all $d$-dimensional complex-valued signals from phase. By convention, the measurement matrix $\textbf{A}\in\mathbb{C}^{m\times d}$ is said to be {\it almost everywhere magnitude retrievable} if $\mathbb{C}^d \setminus \mathcal{W}_{\textbf{A}}$ has zero Lebesgue measure. Thus, the measurement number of interest can be formally defined as
	\begin{equation}
	    \label{4.1}
	    \bm{\mathrm{m_{ae}}}(d)=\big\{m\in 
        \mathbb{N}_+: \text{some }\textbf{A}\in\mathbb{C}^{m\times d} \text{ is almost everywhere magnitude retrievable}\big\}.
	\end{equation}

	  The main result in this section states that a generic $\textbf{A}$ possesses $\mathcal{W}_{\textbf{A}}$ that contains a generic signal if $m\geq 2d-1$. Recall that $\mathcal{W}_{\textbf{A}}$ containing a generic $\x$ has   complement $\mathbb{C}^d\setminus\mathcal{W}_{\textbf{A}}$  of zero Lebesgue measure (Section \ref{preli}), this main result directly implies $ \bm{\mathrm{m_{ae}}}(d)\leq 2d-1$. Combining with an easier fact $\bm{\mathrm{m_{ae}}}(d)\geq 2d-1$,  $ \bm{\mathrm{m_{ae}}}(d) = 2d-1$ can be concluded.

	Before proceeding we need to introduce some notations. We first extend the definition of $\mathcal{H}_{\textbf{A}}$ in (\ref{3.16}). Given $\mathcal{S}\subset [m]$, we write    $[m]\setminus \mathcal{S}$ as $\mathcal{S}^c$ and then define 
	\begin{equation}
	    \label{4.2}
	    \mathcal{H}_{\textbf{A}}(\mathcal{S})=\{\x\in \mathbb{C}^d:\bm{\mathrm{\gamma_j^{\top}x}}=0,~\forall j\in \mathcal{S};~\bm{\mathrm{\gamma_j^{\top}x}}\neq 0,~\forall j\in \mathcal{S}^c\},
	\end{equation}
	which can also be equivalently given by
	$ \mathcal{H}_{\textbf{A}}(\mathcal{S})=\ker(\textbf{A}^\mathcal{S})\cap\mathcal{H}_{\textbf{A}^{\mathcal{S}^c}}.  $
	Note that $\mathcal{H}_{\textbf{A}}(\varnothing)$ recovers $\mathcal{H}_{\textbf{A}}$ defined in (\ref{3.16}), and  we have the partition 
	$ \mathbb{C}^d=\bigcup_{\mathcal{S}\subset [m]}\mathcal{H}_{\textbf{A}}(\mathcal{S})$  where  $ \mathcal{H}_{\textbf{A}}(\mathcal{S})\cap \mathcal{H}_{\textbf{A}}(\mathcal{T})=\varnothing$ if $ \mathcal{S}\neq \mathcal{T}.$ We will deal with signals in each $\mathcal{H}_{\textbf{A}}(\mathcal{S})$ separately.

	Previously, we focus on analyzing $\mathcal{W}_{\textbf{A}}$ for a fixed $\textbf{A}$, but in some cases it is more conducive to consider the measurement matrices that can recover a fixed signal $\x$. More notations are needed to this end. For a fixed measurement number $m$, we collect the   $\textbf{A}$'s such that $\x\in \mathcal{W}_{\textbf{A}}$ in the set 
	\begin{equation}
	    \label{4.3}
	    \mathcal{W}_{\textbf{x}}(m)= \big\{\textbf{A}\in\mathbb{C}^{m\times d}:\textbf{x}\in\mathcal{W}_{\textbf{A}}\big\}.
	\end{equation}
	Similarly, corresponding to $\mathcal{H}_{\textbf{A}}$, the measurement matrices that give purely phase-only measurements for a fixed $\textbf{x}$ are collected in the set 
	\begin{equation}
	    \label{4.4}
	        \mathcal{H}_{\textbf{x}}(m) = \big\{\textbf{A}\in\mathbb{C}^{m\times d}: \textbf{x}\in\mathcal{H}_{\A}\big\}.
	\end{equation}

	We will use the shorthand $[\mathcal{S}]$ to denote $[|\mathcal{S}|]$, i.e., $[\mathcal{S}]=\{1,\cdots,|\mathcal{S}|\}.$
	
	\begin{theorem}
		Consider $\bm{\mathrm{A}}\in \mathbb{C}^{m\times d}$. When $m\leq 2d-2$, $\mathcal{W}_{\bm{\mathrm{A}}}$ is nowhere dense (under Euclidean topology) and of zero Lebesgue measure. 
		When $m\geq 2d-1$, a generic  $\bm{\mathrm{A}}$ satisfies that
		\begin{equation}
			\text{for all } \mathcal{S}\subset [m],\ \mathcal{W}_{\bm{\mathrm{A}}}\cap \ker(\bm{\mathrm{A}}^\mathcal{S})\ is \ generic\  in\  \ker(\bm{\mathrm{A}}^\mathcal{S}).
			\label{4.5}
		\end{equation}
		Specifically, let $\mathcal{S}=\varnothing$, it gives that $\mathcal{W}_{\bm{\mathrm{A}}}$ is generic in $\mathbb{C}^d$. Therefore, the minimal measurement number for almost everywhere magnitude retrievable property is  $\bm{\mathrm{m_{ae}}}(d) = 2d-1$.
		\label{T6}
	\end{theorem}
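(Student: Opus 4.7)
The proof splits into the lower bound ($m \leq 2d-2$) and the upper bound ($m \geq 2d-1$), from which the equality $\bm{\mathrm{m_{ae}}}(d) = 2d-1$ follows immediately.

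For the lower bound, I would apply Theorem \ref{T1} directly to the Zariski-open set $\mathcal{H}_{\textbf{A}}$: if some $\textbf{x} \in \mathcal{H}_{\textbf{A}}$ lay in $\mathcal{W}_{\textbf{A}}$, then $\rank(\mathcal{D}_{\textbf{A}}(\textbf{x})) = 2d + m - 1$, but $\mathcal{D}_{\textbf{A}}(\textbf{x})$ has only $2m$ rows, forcing $m \geq 2d-1$. Hence when $m \leq 2d-2$ we have $\mathcal{W}_{\textbf{A}} \subset \mathbb{C}^d \setminus \mathcal{H}_{\textbf{A}} = \bigcup_{j=1}^m \ker(\bm{\mathrm{\gamma_j^{\top}}})$, a finite union of proper complex-linear subspaces of $\mathbb{C}^d$. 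Such a set is closed, has empty interior, and carries zero Lebesgue measure, so $\mathcal{W}_{\textbf{A}}$ is nowhere dense and null.

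For the upper bound, fix $\mathcal{S} \subset [m]$. Since $\mathcal{H}_{\textbf{A}}(\mathcal{S}) = \ker(\textbf{A}^{\mathcal{S}}) \cap \mathcal{H}_{\textbf{A}^{\mathcal{S}^c}}$ is Zariski open in $\ker(\textbf{A}^{\mathcal{S}})$, it suffices to show that for generic $\textbf{A}$, a generic $\textbf{x} \in \mathcal{H}_{\textbf{A}}(\mathcal{S})$ lies in $\mathcal{W}_{\textbf{A}}$. By Theorem \ref{T1} combined with Lemma \ref{lemma1}, this is equivalent to the maximal-rank condition $\rank(\mathcal{D}_{\textbf{A}}(\textbf{x})) = 2d + |\mathcal{S}^c| - 1$, where the rank is computed after discarding the $|\mathcal{S}|$ structurally-zero columns of $\mathcal{D}_{\textbf{A}}(\textbf{x})$ indexed by vanishing measurements. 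This condition is cut out by the non-vanishing of certain $(2d + |\mathcal{S}^c| - 1)$-minors, which are real polynomials in the real and imaginary parts of the entries of $\textbf{A}$ and $\textbf{x}$, restricted to the incidence variety $\mathcal{I}_{\mathcal{S}} = \{(\textbf{A}, \textbf{x}) : \textbf{x} \in \ker(\textbf{A}^{\mathcal{S}})\}$. Exhibiting a single witness $(\textbf{A}^*, \textbf{x}^*) \in \mathcal{I}_{\mathcal{S}}$ with $\textbf{x}^* \in \mathcal{H}_{(\textbf{A}^*)^{\mathcal{S}^c}}$ that achieves the maximal rank shows those minors are not identically zero on $\mathcal{I}_{\mathcal{S}}$, so their non-vanishing locus is non-empty Zariski open. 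A standard upper-semicontinuity argument applied to the projection $\mathcal{I}_{\mathcal{S}} \to \mathbb{C}^{m\times d}$ then yields a non-empty Zariski open $\mathcal{A}_{\mathcal{S}} \subset \mathbb{C}^{m \times d}$ such that for every $\textbf{A} \in \mathcal{A}_{\mathcal{S}}$, $\mathcal{W}_{\textbf{A}} \cap \ker(\textbf{A}^{\mathcal{S}})$ is generic in $\ker(\textbf{A}^{\mathcal{S}})$. Intersecting over the finitely many $\mathcal{S} \subset [m]$ gives (\ref{4.5}); specializing to $\mathcal{S} = \varnothing$ yields $\bm{\mathrm{m_{ae}}}(d) \leq 2d-1$, and combined with the lower bound, $\bm{\mathrm{m_{ae}}}(d) = 2d-1$.

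The main obstacle I expect is the construction of the witness $(\textbf{A}^*, \textbf{x}^*)$. It is cleanest to concentrate first on the base case $\mathcal{S} = \varnothing$ with $m = 2d-1$, where $\mathcal{D}_{\textbf{A}}(\textbf{x})$ is $(4d-2) \times (4d-1)$ and the claim reduces to exhibiting one concrete $(\textbf{A}^*, \textbf{x}^*)$ at which $\mathcal{D}_{\textbf{A}^*}(\textbf{x}^*)$ has full row rank, equivalently $\dim(\mathrm{V}_{\textbf{x}^*}) = 1$. A natural candidate is $\textbf{x}^* = \mathbf{1}$ together with $\textbf{A}^*$ in the canonical form of Section \ref{canoin}, with the last $d-1$ rows chosen to have sufficiently generic complex entries; the one-dimensionality of $\mathrm{V}_{\textbf{x}^*}$ can then be verified by a direct minor computation. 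The general $\mathcal{S}$ case should follow by adjoining $|\mathcal{S}|$ additional rows that annihilate $\textbf{x}^*$ to a witness for the reduced problem on $\ker(\textbf{A}^{\mathcal{S}})$ of size $m - |\mathcal{S}|$, using that the added zero-measurement rows contribute only to the already-discarded columns of $\mathcal{D}_{\textbf{A}^*}(\textbf{x}^*)$ and therefore do not alter the effective rank.
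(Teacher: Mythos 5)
Your overall architecture is the same as the paper's: the lower bound is exactly the paper's argument (the $2m$-row bound on $\mathcal{D}_{\textbf{A}}(\textbf{x})$ via Lemma \ref{lemma3} forces $\mathcal{W}_{\textbf{A}}\subset\bigcup_j\ker(\bm{\mathrm{\gamma_j^\top}})$ when $m\le 2d-2$), and the upper bound is "the rank condition is Zariski open, exhibit a witness, handle each $\mathcal{S}$ and intersect." However, two steps that carry the real weight are asserted rather than proved, and each corresponds to a dedicated lemma in the paper.

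First, the step "a standard upper-semicontinuity argument applied to the projection $\mathcal{I}_{\mathcal{S}}\to\mathbb{C}^{m\times d}$ yields a non-empty Zariski open $\mathcal{A}_{\mathcal{S}}$ such that the fiber condition is generic in $\ker(\textbf{A}^{\mathcal{S}})$" is not standard in the form you need: genericity in a \emph{varying} fiber for a generic base point does not follow from non-vanishing of minors on the total space alone. What makes it work is an explicit trivialization of the family $\{\ker(\textbf{A}^{\mathcal{S}})\}$ over the locus $\rank(\textbf{A}^{\mathcal{S}}_{[\mathcal{S}]})=|\mathcal{S}|$, namely $\textbf{x}^{[\mathcal{S}]}=-(\textbf{A}^{\mathcal{S}}_{[\mathcal{S}]})^{-1}\textbf{A}^{\mathcal{S}}_{[d]\setminus[\mathcal{S}]}\textbf{x}^{[d]\setminus[\mathcal{S}]}$ as in (\ref{4.7}), which converts the fiber condition into a rank condition on the reduced matrix $\textbf{A}(\mathcal{S})$ of (\ref{4.6}) whose entries are \emph{rational} (not polynomial) in $\textbf{A}$; this is precisely the content of Lemma \ref{lemma5}, and the rational-entry rank statement needs Lemma \ref{lemma4}. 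You should supply this reduction explicitly; once you do, the "coefficients of the fiber polynomial are rational functions of $\textbf{A}$, not all identically zero" argument closes the quantifier order. Second, your witness is not actually constructed: "choose the last $d-1$ rows sufficiently generic and verify by a direct minor computation" is circular as a certificate that the relevant minors are not identically zero — you must exhibit one concrete $(\textbf{A}^*,\textbf{x}^*)$, as the paper does with the matrix in (\ref{constru}) recovering $\bm{\mathrm{e_1}}$ (Lemma \ref{lemmanew6}). Finally, your remark that the adjoined zero-measurement rows "contribute only to the already-discarded columns of $\mathcal{D}_{\textbf{A}^*}(\textbf{x}^*)$" is false as stated: each such row contributes two new rows to the $\varphi(\textbf{A})$ block, so it can change the rank; the intended conclusion survives only if you take the block-structured witness $\textbf{A}^{\mathcal{S}}_{[\mathcal{S}]}=\textbf{I}$, $\textbf{A}^{\mathcal{S}}_{[d]\setminus[\mathcal{S}]}=\bm{0}$, $\textbf{A}^{\mathcal{S}^c}_{[\mathcal{S}]}=\bm{0}$, for which the rank decomposes as a sum across the two diagonal blocks — again exactly the choice made in the paper's Step 1.
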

	
For clarity, let us first present some lemmas 
for the proof of Theorem \ref{T6}. We give Lemma \ref{lemma3} to characterize signals in $\mathcal{W}_{\bm{\mathrm{A}}}\cap \mathcal{H}_{\textbf{A}}$ via discriminant matrix $\mathcal{D}_{\textbf{A}}(\textbf{x})$. 

\begin{lem}
	$\bm{\mathrm{x}}\in \mathcal{W}_{\bm{\mathrm{A}}} \cap \mathcal{H}_{\bm{\mathrm{A}}}$ if and only if $\rank( \mathcal{D}_{\bm{\mathrm{A}}}(\bm{\mathrm{x}}))\geq2d+m-1$.
	\label{lemma3}
\end{lem}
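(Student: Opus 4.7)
The plan is to prove Lemma~\ref{lemma3} by leveraging Theorem~\ref{T1} together with a rank count on $\mathcal{D}_{\bm{\mathrm{A}}}(\bm{\mathrm{x}})$. The key observation is that the two conditions appearing in $\mathcal{W}_{\bm{\mathrm{A}}}\cap \mathcal{H}_{\bm{\mathrm{A}}}$ are precisely ``$\bm{\mathrm{x}}\in\mathcal{W}_{\bm{\mathrm{A}}}$'' (which by Theorem~\ref{T1} forces $\rank(\mathcal{D}_{\bm{\mathrm{A}}}(\bm{\mathrm{x}}))=2d+|\mathrm{N}(\bm{\mathrm{Ax}})|-1$) and ``$|\mathrm{N}(\bm{\mathrm{Ax}})|=m$''; combining the two gives exactly $\rank(\mathcal{D}_{\bm{\mathrm{A}}}(\bm{\mathrm{x}}))=2d+m-1$. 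So the ``only if'' direction is essentially immediate: assuming $\bm{\mathrm{x}}\in\mathcal{W}_{\bm{\mathrm{A}}}\cap\mathcal{H}_{\bm{\mathrm{A}}}$, we first note that $\bm{\mathrm{x}}\in\mathcal{H}_{\bm{\mathrm{A}}}$ entails both $|\mathrm{N}(\bm{\mathrm{Ax}})|=m$ and $\bm{\mathrm{x}}\neq \bm{0}$, while $\bm{\mathrm{x}}\in\mathcal{W}_{\bm{\mathrm{A}}}$ forces $\rank(\bm{\mathrm{A}})=d$ (as recorded at the beginning of Section~\ref{sec2}). Theorem~\ref{T1} then yields $\rank(\mathcal{D}_{\bm{\mathrm{A}}}(\bm{\mathrm{x}}))=2d+m-1$.

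For the ``if'' direction I would first extract from the proof of Theorem~\ref{T1} the general upper bound
\[
\rank(\mathcal{D}_{\bm{\mathrm{A}}}(\bm{\mathrm{x}}))\le 2d+|\mathrm{N}(\bm{\mathrm{Ax}})|-1,
\]
which follows because the dimension formula $\dim(\mathrm{V}_{\bm{\mathrm{x}}})=2d+|\mathrm{N}(\bm{\mathrm{Ax}})|-\rank(\mathcal{D}_{\bm{\mathrm{A}}}(\bm{\mathrm{x}}))$ is always valid for nonzero $\bm{\mathrm{x}}$, while $\dim(\mathrm{V}_{\bm{\mathrm{x}}})\ge 1$ holds thanks to $|\bm{\mathrm{Ax}}|^{\mathrm{N}(\bm{\mathrm{Ax}})}\in \mathrm{V}_{\bm{\mathrm{x}}}$. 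This inequality, which is used implicitly but not displayed in the proof of Theorem~\ref{T1}, is the one substantive ingredient I will need to state explicitly.

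Next, starting from the hypothesis $\rank(\mathcal{D}_{\bm{\mathrm{A}}}(\bm{\mathrm{x}}))\ge 2d+m-1$, I would use the crude subadditive bound $\rank(\mathcal{D}_{\bm{\mathrm{A}}}(\bm{\mathrm{x}}))\le 2\rank(\bm{\mathrm{A}})+|\mathrm{N}(\bm{\mathrm{Ax}})|$ (justified by splitting $\mathcal{D}_{\bm{\mathrm{A}}}(\bm{\mathrm{x}})$ into its two blocks $\varphi(\bm{\mathrm{A}})$ and $\varphi_1(\mathrm{dg}(\bm{\mathrm{Ax}}))$, the latter having nonzero columns only for $j\in\mathrm{N}(\bm{\mathrm{Ax}})$) to force $\rank(\bm{\mathrm{A}})=d$: indeed $\rank(\bm{\mathrm{A}})\le d-1$ would give $\rank(\mathcal{D}_{\bm{\mathrm{A}}}(\bm{\mathrm{x}}))\le 2d-2+m<2d+m-1$, a contradiction. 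The same bound together with $\bm{\mathrm{Ax}}=\bm{0}$ would force $\rank\le 2d$, so one also rules out $\bm{\mathrm{x}}=\bm{0}$ (assuming $m\ge 2$, which is the only interesting regime). Having secured $\rank(\bm{\mathrm{A}})=d$ and $\bm{\mathrm{x}}\neq\bm{0}$, the inequality from the previous paragraph upgrades to
\[
2d+m-1\le \rank(\mathcal{D}_{\bm{\mathrm{A}}}(\bm{\mathrm{x}}))\le 2d+|\mathrm{N}(\bm{\mathrm{Ax}})|-1\le 2d+m-1,
\]
so both inequalities are in fact equalities. The first equality is $|\mathrm{N}(\bm{\mathrm{Ax}})|=m$, i.e.\ $\bm{\mathrm{x}}\in\mathcal{H}_{\bm{\mathrm{A}}}$; the second equality then matches the characterization in Theorem~\ref{T1}, yielding $\bm{\mathrm{x}}\in\mathcal{W}_{\bm{\mathrm{A}}}$.

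There is no real obstacle here: the argument is essentially a rank-counting bookkeeping built on top of Theorem~\ref{T1}. The only step requiring modest care is to isolate the auxiliary bound $\rank(\mathcal{D}_{\bm{\mathrm{A}}}(\bm{\mathrm{x}}))\le 2d+|\mathrm{N}(\bm{\mathrm{Ax}})|-1$ from the working inside the proof of Theorem~\ref{T1}, since Theorem~\ref{T1} is phrased as an equality characterizing $\mathcal{W}_{\bm{\mathrm{A}}}$ rather than as an inequality valid for all nonzero $\bm{\mathrm{x}}$.
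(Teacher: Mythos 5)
Your proof is correct and follows essentially the same route as the paper: the ``only if'' direction is read off from Theorem \ref{T1}, and the ``if'' direction rests on the bound $\rank(\mathcal{D}_{\bm{\mathrm{A}}}(\bm{\mathrm{x}}))\le 2d+|\mathrm{N}(\bm{\mathrm{Ax}})|-1$ extracted from the dimension formula (\ref{eeq1}) together with $\dim(\mathrm{V}_{\bm{\mathrm{x}}})\ge 1$, followed by the same squeeze forcing $|\mathrm{N}(\bm{\mathrm{Ax}})|=m$ and the equality needed to invoke Theorem \ref{T1}. Your additional subadditive rank bound verifying $\rank(\bm{\mathrm{A}})=d$ and $\bm{\mathrm{x}}\neq\bm{0}$ is a small piece of extra care that the paper leaves implicit (it works under the standing assumptions of Section \ref{sec2}), but it does not change the argument.
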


\begin{proof}
The "only if" part follows directly from Theorem \ref{T1}, so it  remains to show the "if" part. Recall the linear system (\ref{eq1}) and linear subspace (\ref{subspace}), the beginning of the proof of Lemma \ref{lemma1} gives $\dim(\mathrm{V}_{\x})\geq 1$. Moreover, the proof of Theorem \ref{T1} indeed delivers $\dim(\mathrm{V}_{\x})=2d+|\mathrm{N}(\Ax)|-\rank( \mathcal{D}_{\bm{\mathrm{A}}}(\x))$, see (\ref{eeq1}). Thus, we obtain $\rank( \mathcal{D}_{\bm{\mathrm{A}}}(\x))\leq 2d+|\mathrm{N}(\Ax)|-1$. We invoke the condition $\rank( \mathcal{D}_{\bm{\mathrm{A}}}(\x))\geq 2d+m-1$, it gives $2d+|\mathrm{N}(\Ax)|-1\geq 2d+m-1$, and hence $|\mathrm{N}(\Ax)|\geq m$. Evidently, $|\mathrm{N}(\Ax)|\leq m$, so it leads to $|\mathrm{N}(\Ax)|=m$, or equivalently, $\x\in \mathcal{H}_{\textbf{A}}$. On the other hand, it has also been verified that $\rank(\mathcal{D}_{\bm{\mathrm{A}}}(\x))=2d+|\mathrm{N}(\Ax)|-1$, thus Theorem \ref{T1} delivers $\x\in \mathcal{W}_{\bm{\mathrm{A}}}$. Hence, $\x\in \mathcal{W}_{\bm{\mathrm{A}}} \cap \mathcal{H}_{\textbf{A}}$. The proof is  complete. 
\end{proof}

The next lemma is concerned with the rank of a  matrix with components being fractional functions of a complex-valued vector $\x$.  

\begin{lem}
	Assume $\bm{\mathrm{x}}\in \mathbb{C}^{n_0}$, $\bm{\Phi}(\bm{\mathrm{x}})=\left 
	[\displaystyle \frac{f_{i j}(\bm{\mathrm{x}})}{g_{i j}(\bm{\mathrm{x}})} \right ]\in \mathbb{C}^{n_1\times n_2}$, where $ f_{i j}(\bm{\mathrm{x}}), g_{i j}(\bm{\mathrm{x}}) $ are polynomials with real variables $[\Re(\bm{\mathrm{x}})^\top, \Im(\bm{\mathrm{x}})^\top ]^\top$ and possibly complex coefficients, $g_{ij}(\bm{\mathrm{x}})$ is not zero polynomial. 
	 We consider  $ \Omega=\{\bm{\mathrm{x}}\in\mathbb{C}^{n_0}:g_{i  j}(\bm{\mathrm{x}})\neq 0,~\forall ~(i,j)\in [n_1]\times [n_2]\}$. Then given any positive integer
	  $r $,   $\{\bm{\mathrm{x}}\in\Omega:\rank(\bm{\Phi}(\bm{\mathrm{x}}))\geq r \}$ is Zariski open set. 
	\label{lemma4}
\end{lem}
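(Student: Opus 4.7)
The plan is to reduce the rank condition to the non-vanishing of certain polynomials in $\Re(\mathbf{x})$ and $\Im(\mathbf{x})$, using the standard fact that $\rank(\bm{\Phi}(\mathbf{x}))\ge r$ holds if and only if at least one $r\times r$ minor of $\bm{\Phi}(\mathbf{x})$ is non-zero. Since the set of interest is a finite union over all choices of $r$ rows and $r$ columns, it will suffice to show that each non-vanishing condition on a single minor defines a Zariski open subset of $\Omega$, and then to observe that a finite union of Zariski open sets is Zariski open.

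First I would verify that $\Omega$ itself is Zariski open in $\mathbb{C}^{n_0}$. Writing each $g_{ij}(\mathbf{x}) = g^{(1)}_{ij}(\mathbf{x}) + \mathbf{i}\,g^{(2)}_{ij}(\mathbf{x})$ with real polynomials $g^{(1)}_{ij},g^{(2)}_{ij}\in\mathbb{R}[\Re(\mathbf{x}),\Im(\mathbf{x})]$, the condition $g_{ij}(\mathbf{x})\neq 0$ is equivalent to $\bigl(g^{(1)}_{ij}(\mathbf{x})\bigr)^2+\bigl(g^{(2)}_{ij}(\mathbf{x})\bigr)^2\neq 0$, which is the complement of a real algebraic variety in $\mathbb{R}^{2n_0}$. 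A finite intersection of such Zariski open sets over $(i,j)\in[n_1]\times[n_2]$ is again Zariski open; in particular $\Omega$ is non-empty Zariski open provided each $g_{ij}$ is not the zero polynomial, as assumed.

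Next, for any $I\subset[n_1]$, $J\subset[n_2]$ with $|I|=|J|=r$, I would clear denominators in the Leibniz expansion of the minor to write
\begin{equation*}
\det\bigl(\bm{\Phi}^{I}_{J}(\mathbf{x})\bigr)=\frac{P_{I,J}(\mathbf{x})}{\prod_{i\in I,\,j\in J} g_{ij}(\mathbf{x})},
\end{equation*}
where $P_{I,J}$ is a polynomial in $\Re(\mathbf{x}),\Im(\mathbf{x})$ with possibly complex coefficients. On $\Omega$ the denominator never vanishes, so the minor is non-zero precisely when $P_{I,J}(\mathbf{x})\neq 0$. Decomposing $P_{I,J}=P^{(1)}_{I,J}+\mathbf{i}\,P^{(2)}_{I,J}$ into real and imaginary parts exactly as for $g_{ij}$, the locus $\{\mathbf{x}\in\mathbb{C}^{n_0}:P_{I,J}(\mathbf{x})\neq 0\}$ is the complement of the real algebraic variety $\{P^{(1)}_{I,J}=0\}\cap\{P^{(2)}_{I,J}=0\}$, hence Zariski open. (If $P_{I,J}$ happens to vanish identically, this locus is empty, which is still Zariski open.) Intersecting with the Zariski open set $\Omega$ preserves Zariski openness.

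Finally, I would take the union
\begin{equation*}
\bigl\{\mathbf{x}\in\Omega:\rank(\bm{\Phi}(\mathbf{x}))\ge r\bigr\}=\bigcup_{\substack{I\subset[n_1],\,J\subset[n_2]\\|I|=|J|=r}}\bigl\{\mathbf{x}\in\Omega: P_{I,J}(\mathbf{x})\neq 0\bigr\},
\end{equation*}
which is a finite union of Zariski open subsets of $\Omega$ and therefore Zariski open. There is no real obstacle in this argument; the only subtlety to watch is that polynomials with complex coefficients in $\Re(\mathbf{x}),\Im(\mathbf{x})$ are handled via the $P^{(1)}+\mathbf{i}P^{(2)}$ decomposition so that the Zariski topology on $\mathbb{C}^{n_0}\cong\mathbb{R}^{2n_0}$, as defined in Section \ref{preli}, applies directly.
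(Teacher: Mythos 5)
Your proposal is correct and follows essentially the same route as the paper's proof: enumerate the $r\times r$ minors, clear denominators by multiplying through by $\prod_{i,j} g_{ij}(\mathbf{x})$ to reduce each non-vanishing condition to a polynomial condition in $\Re(\mathbf{x}),\Im(\mathbf{x})$, and conclude via a finite union of Zariski open sets. The only cosmetic difference is that you spell out the real/imaginary decomposition and the openness of $\Omega$ explicitly, which the paper leaves implicit.
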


\begin{proof}
The conclusion is trivial when $\{\textbf{x}:\rank(\bm{\Phi}(\x))\geq r \}=\varnothing$, so we only consider non-empty $\{\textbf{x}:\rank(\bm{\Phi}(\x))\geq r \}$, this will lead to $r\leq \min\{n_1,n_2\}$. We use $\bm{\Phi}_t(\x),\ t\in \mathcal{T}_r$ to denote all the $r\times r$ submatrices of $\bm{\Phi}(\x)$, and evidently $\mathcal{T}_r$ is a finite set. Then we have 
\begin{equation}
	\begin{split}
		\begin{aligned}
			&\{\x\in\Omega:\rank(\bm{\phi}(\x))\geq r \}=\bigcup_{t\in \mathcal{T}_r}\{\x\in \Omega: \det(\bm{\phi}_t(\x))\neq 0\}\\
			&=\bigcup_{t\in \mathcal{T}_r}\{\x\in\mathbb{C}^{n_0}:g_{ij}(\x)\neq 0, ~\forall ~(i,j)\in [n_1]\times [n_2];\ \det(\bm{\phi}_t(\x))\neq 0\}\\
			&=\bigcup_{t\in \mathcal{T}_r}\{\x\in\mathbb{C}^{n_0}:g_{ij}(\x)\neq 0,~\forall ~(i,j)\in [n_1]\times [n_2];\ \prod_{i,j}g_{i j}(\x)\det(\bm{\phi}_t(\x))\neq 0\},
			\nonumber
		\end{aligned}
	\end{split}
\end{equation}
and it is not hard to see that  $\{\prod_{i,j}g_{i j}(\x)\det(\bm{\phi}_t(\x)):t\in \mathcal{T}_r\}$ are polynomials with real variables $[\Re(\bm{\mathrm{x}})^\top, \Im(\bm{\mathrm{x}})^\top ]^\top$ and   complex coefficients.  Since $\{\textbf{x}:\rank(\bm{\phi}(\x))\geq r\}\neq\varnothing$, there exists $t_0 \in \mathcal{T}_r$ such that $\prod_{i,j}g_{i,j}(\x)\det(\bm{\phi}_t(\x))$ is nonzero polynomial, and hence $\{\textbf{x}:\rank(\bm{\phi}(\x))\geq r\}$ is non-empty Zariski open set. 
\end{proof}

Note that in Lemma \ref{lemma3} $\mathcal{W}_{\textbf{A}}\cap \mathcal{H}_{\textbf{A}}$ is precisely characterized by $\{\x:\rank\big(\mathcal{D}_{\textbf{A}}(\textbf{x})\big)\geq 2d+m-1\}$, which is Zariski open by Lemma \ref{lemma4}. However, it is currently unclear whether similar results can be established for $\mathcal{W}_{\textbf{A}}\cap \big(\mathbb{C}^d \setminus \mathcal{H}_{\textbf{A}}\big) = \bigcup_{\mathcal{S}\neq \varnothing}\mathcal{W}_{\textbf{A}}\cap \mathcal{H}_{\textbf{A}}(\mathcal{S})$. Lemma \ref{lemma5} affirmatively answers this question. Particularly, it   transfers $\mathcal{W}_{\textbf{A}}\cap \mathcal{H}_{\textbf{A}}(\mathcal{S})$   ($\mathcal{S}\neq \varnothing$) to the more amenable set $\mathcal{W}_{\bm{\mathrm{A^{'}}}}\cap \mathcal{H}_{\bm{\mathrm{A^{'}}}}$ that can be handled via Lemma \ref{lemma3}. Here, $ {\bm{\mathrm{A^{'}}}}$ is a new matrix constructed from $\textbf{A}$. The proof of Lemma \ref{lemma5} is quite tedious, whereas the core spirit is rather elementary and comes from the elimination method for solving a linear system. 

\begin{lem}
    Consider $\mathcal{S}\subset [m]$, $1\leq|\mathcal{S}|<d$, and recall the notation $[\mathcal{S}]=\{1,\cdots, |\mathcal{S}|\}$. Given $\bm{\mathrm{A}}\in\mathbb{C}^{m\times d}$ we assume $\rank(\bm{\mathrm{A}}^\mathcal{S}_{[\mathcal{S}]})=|\mathcal{S}|$, $\bm{\mathrm{x}}\in \ker(\bm{\mathrm{A}}^\mathcal{S})$. Define \begin{equation}
    \label{4.6}
        \bm{\mathrm{A}}(\mathcal{S})=\bm{\mathrm{A}}^{\mathcal{S}^c}_{[d]\setminus [\mathcal{S}]}-\bm{\mathrm{A}}^{\mathcal{S}^c}_{[\mathcal{S}]}(\bm{\mathrm{A}}^\mathcal{S}_{[\mathcal{S}]})^{-1}\bm{\mathrm{A}}^{\mathcal{S} }_{[d]\setminus [\mathcal{S}]} \in \mathbb{C}^{(m-|\mathcal{S}|)\times (d-|\mathcal{S}|)}.
    \end{equation} Then we have $\bm{\mathrm{x}}\in \mathcal{W}_{\bm{\mathrm{A}}}\cap \mathcal{H}_{\bm{\mathrm{A}}}(\mathcal{S})$ if and only if $\bm{\mathrm{x}}^{[d]\setminus [\mathcal{S}]} \in \mathcal{W}_{\bm{\mathrm{A}}(\mathcal{S})}\cap \mathcal{H}_{\bm{\mathrm{A}}(\mathcal{S})}$.
	\label{lemma5}
\end{lem}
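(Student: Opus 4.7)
The plan is to prove the lemma by a direct elimination argument that converts a signal $\x$ in $\ker(\bm{\mathrm{A}}^{\mathcal{S}})$ into the subvector $\x^{[d]\setminus[\mathcal{S}]}$ under the reduced measurement matrix $\bm{\mathrm{A}}(\mathcal{S})$, and matches the measurement processes on the two sides. First I would exploit the assumption $\x\in\ker(\bm{\mathrm{A}}^{\mathcal{S}})$: writing out $\bm{\mathrm{A}}^{\mathcal{S}}_{[\mathcal{S}]}\x^{[\mathcal{S}]}+\bm{\mathrm{A}}^{\mathcal{S}}_{[d]\setminus[\mathcal{S}]}\x^{[d]\setminus[\mathcal{S}]}=\bm{0}$ and using invertibility of $\bm{\mathrm{A}}^{\mathcal{S}}_{[\mathcal{S}]}$ gives the key elimination formula $\x^{[\mathcal{S}]}=-(\bm{\mathrm{A}}^{\mathcal{S}}_{[\mathcal{S}]})^{-1}\bm{\mathrm{A}}^{\mathcal{S}}_{[d]\setminus[\mathcal{S}]}\x^{[d]\setminus[\mathcal{S}]}$. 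Substituting this into $\bm{\mathrm{A}}^{\mathcal{S}^c}\x = \bm{\mathrm{A}}^{\mathcal{S}^c}_{[\mathcal{S}]}\x^{[\mathcal{S}]}+\bm{\mathrm{A}}^{\mathcal{S}^c}_{[d]\setminus[\mathcal{S}]}\x^{[d]\setminus[\mathcal{S}]}$ and invoking the definition (\ref{4.6}) yields the central identity $\bm{\mathrm{A}}^{\mathcal{S}^c}\x=\bm{\mathrm{A}}(\mathcal{S})\x^{[d]\setminus[\mathcal{S}]}$. Therefore $\sign(\bm{\mathrm{A}}^{\mathcal{S}^c}\x) = \sign\bigl(\bm{\mathrm{A}}(\mathcal{S})\x^{[d]\setminus[\mathcal{S}]}\bigr)$ already takes care of the membership in $\mathcal{H}_{\bm{\mathrm{A}}}(\mathcal{S})$ versus $\mathcal{H}_{\bm{\mathrm{A}}(\mathcal{S})}$: the non-vanishing of all $\mathcal{S}^c$-entries on one side matches the purely phase-only condition on the other.

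Next I would establish the two directions of the uniqueness equivalence by lifting/projecting any competing signal. For the ``$\Rightarrow$'' direction, suppose $\bm{\mathrm{w}}\in\mathbb{C}^{d-|\mathcal{S}|}$ satisfies $\sign(\bm{\mathrm{A}}(\mathcal{S})\bm{\mathrm{w}})=\sign\bigl(\bm{\mathrm{A}}(\mathcal{S})\x^{[d]\setminus[\mathcal{S}]}\bigr)$; I would lift $\bm{\mathrm{w}}$ to $\y\in\mathbb{C}^d$ by setting $\y^{[d]\setminus[\mathcal{S}]}=\bm{\mathrm{w}}$ and defining $\y^{[\mathcal{S}]}=-(\bm{\mathrm{A}}^{\mathcal{S}}_{[\mathcal{S}]})^{-1}\bm{\mathrm{A}}^{\mathcal{S}}_{[d]\setminus[\mathcal{S}]}\bm{\mathrm{w}}$, so that $\bm{\mathrm{A}}^{\mathcal{S}}\y=\bm{0}$ and $\bm{\mathrm{A}}^{\mathcal{S}^c}\y=\bm{\mathrm{A}}(\mathcal{S})\bm{\mathrm{w}}$. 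This gives $\sign(\bm{\mathrm{Ay}})=\sign(\bm{\mathrm{Ax}})$, so $\x\in\mathcal{W}_{\bm{\mathrm{A}}}$ forces $\y=t\cdot\x$, whence $\bm{\mathrm{w}}=t\cdot\x^{[d]\setminus[\mathcal{S}]}$. For the ``$\Leftarrow$'' direction, any $\y$ with $\sign(\bm{\mathrm{Ay}})=\sign(\bm{\mathrm{Ax}})$ automatically satisfies $\bm{\mathrm{A}}^{\mathcal{S}}\y=\bm{0}$ (because the corresponding phases are zero), so the same elimination identity applies to $\y$; projecting to $\y^{[d]\setminus[\mathcal{S}]}$ yields a vector with matching phase under $\bm{\mathrm{A}}(\mathcal{S})$, and the assumed recoverability of $\x^{[d]\setminus[\mathcal{S}]}$ gives $\y^{[d]\setminus[\mathcal{S}]}=t\cdot\x^{[d]\setminus[\mathcal{S}]}$; the elimination formula then propagates the same scalar $t$ to $\y^{[\mathcal{S}]}$.

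The argument is conceptually routine once the elimination identity $\bm{\mathrm{A}}^{\mathcal{S}^c}\x=\bm{\mathrm{A}}(\mathcal{S})\x^{[d]\setminus[\mathcal{S}]}$ is in hand, and the main bookkeeping obstacle is just to keep the index sets straight: one has to be careful that the indexing convention $[\mathcal{S}]=\{1,\dots,|\mathcal{S}|\}$ is used consistently when partitioning rows of $\bm{\mathrm{A}}$ into $\mathcal{S}$ and $\mathcal{S}^c$ and columns into $[\mathcal{S}]$ and $[d]\setminus[\mathcal{S}]$, and that the invertibility of $\bm{\mathrm{A}}^{\mathcal{S}}_{[\mathcal{S}]}$ is used precisely where needed. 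No deeper algebraic geometry or rank machinery is required here—Theorems \ref{T1}, \ref{T2} and Lemma \ref{lemma3} are used only implicitly through the definitions of $\mathcal{W}_{\cdot}$ and $\mathcal{H}_{\cdot}(\cdot)$—so the proof reduces to carefully verifying the two implications via the lift/project pair described above.
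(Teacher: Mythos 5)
Your proposal is correct and follows essentially the same route as the paper's proof: the same elimination identity $\bm{\mathrm{A}}^{\mathcal{S}^c}\bm{\mathrm{x}}=\bm{\mathrm{A}}(\mathcal{S})\bm{\mathrm{x}}^{[d]\setminus[\mathcal{S}]}$, the same lift of a competitor $\bm{\mathrm{w}}$ to $\bm{\mathrm{y}}\in\ker(\bm{\mathrm{A}}^{\mathcal{S}})$ in one direction, and the same projection plus propagation of the scalar $t$ in the other. No substantive differences from the paper's argument.
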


\begin{proof}
Note that $\textbf{x}\in \ker\big(\textbf{A}^\mathcal{S}\big)$ equals $\textbf{A}^\mathcal{S}\textbf{x} = \bm{0}$. Due to the condition $\rank(\bm{\mathrm{A}}^\mathcal{S}_{[\mathcal{S}]})=|\mathcal{S}|$, it can  also be equivalently given by 
\begin{equation}
    \label{4.7}
    \begin{bmatrix}
         \textbf{A}^{\mathcal{S}}_{[\mathcal{S}] } & \textbf{A}^\mathcal{S}_{[d]\setminus [\mathcal{S}]}
    \end{bmatrix}\cdot \begin{bmatrix}
         \textbf{x}^{[\mathcal{S}]} \\ \textbf{x}^{[d]\setminus [\mathcal{S}]}
    \end{bmatrix} = \bm{0} \iff \textbf{x}^{[\mathcal{S}]}=-(\textbf{A}^\mathcal{S}_{[\mathcal{S}]})^{-1}\textbf{A}^\mathcal{S}_{[d]\setminus[\mathcal{S}]}\cdot \textbf{x}^{[d]\setminus [\mathcal{S}]}.
\end{equation}
Based on (\ref{4.7}), and recall $\textbf{A}(\mathcal{S})$ in (\ref{4.6}), some algebra yields 
\begin{equation}
    \label{4.8}
    \textbf{A}^{\mathcal{S}^c}\textbf{x} =  \begin{bmatrix}
         \textbf{A}^{\mathcal{S}^c}_{[\mathcal{S}] } & \textbf{A}^{\mathcal{S}^c}_{[d]\setminus [\mathcal{S}]}
    \end{bmatrix}\cdot \begin{bmatrix}
         -(\textbf{A}^\mathcal{S}_{[\mathcal{S}]})^{-1}\textbf{A}^\mathcal{S}_{[d]\setminus[\mathcal{S}]}\cdot \textbf{x}^{[d]\setminus [\mathcal{S}]} \\ \textbf{x}^{[d]\setminus [\mathcal{S}]}
    \end{bmatrix} = \textbf{A}(\mathcal{S})\textbf{x}^{[d]\setminus [\mathcal{S}]}.
\end{equation}
Under the above relations, we need to show the equivalence between $\bm{\mathrm{x}}\in \mathcal{W}_{\bm{\mathrm{A}}}\cap \mathcal{H}_{\bm{\mathrm{A}}}(\mathcal{S})$ and $\bm{\mathrm{x}}^{[d]\setminus [\mathcal{S}]} \in \mathcal{W}_{\bm{\mathrm{A}}(\mathcal{S})}\cap \mathcal{H}_{\bm{\mathrm{A}}(\mathcal{S})}$.

We assume $\bm{\mathrm{x}}\in \mathcal{W}_{\bm{\mathrm{A}}}\cap \mathcal{H}_{\bm{\mathrm{A}}}(\mathcal{S})$. By definition in  (\ref{4.2}), $\bm{\mathrm{A^{\mathcal{S}^c}x}}$   contains no zero entries, and so neither does $\textbf{A}(\mathcal{S})\textbf{x}^{[d]\setminus [\mathcal{S}]}$ by (\ref{4.8}). Hence, we obtain $\bm{\mathrm{x}}^{[d]\setminus [\mathcal{S}]} \in\mathcal{H}_{\bm{\mathrm{A}}(\mathcal{S})}$. To show $\bm{\mathrm{x}}^{[d]\setminus [\mathcal{S}]} \in\mathcal{W}_{\bm{\mathrm{A}}(\mathcal{S})}$ we start from $\sign\big(\bm{\mathrm{A}}(\mathcal{S})\bm{\mathrm{x}}^{[d]\setminus [\mathcal{S}]}\big) =\sign\big( \bm{\mathrm{A}}(\mathcal{S}) \bm{\mathrm{y_0}}\big)$ for some $\bm{\mathrm{y_0}}\in\mathbb{C}^{d-|\mathcal{S}|}$. Motivated by (\ref{4.7}) we consider the $d$-dimensional 
complex-valued signal  
$$ \textbf{y}=  \begin{bmatrix}
-(\textbf{A}^\mathcal{S}_{[\mathcal{S}]})^{-1}\textbf{A}^\mathcal{S}_{[d]\setminus[\mathcal{S}]}\cdot \bm{\mathrm{y_0}} \\ \bm{\mathrm{y_0}}
\end{bmatrix}$$
that satisfies $\textbf{A}^\mathcal{S}\textbf{y} = \bm{0} =\textbf{A}^\mathcal{S}\textbf{x}$. Similar to (\ref{4.8}), some algebra   confirms that $ \textbf{A}^{\mathcal{S}^c}\textbf{y} = \textbf{A}(\mathcal{S})\bm{\mathrm{y_0}}$, hence $\sign\big(\bm{\mathrm{A}}(\mathcal{S})\bm{\mathrm{x}}^{[d]\setminus [\mathcal{S}]}\big) =\sign\big( \bm{\mathrm{A}}(\mathcal{S}) \bm{\mathrm{y_0}}\big)$   implies $\sign\big(\textbf{A}^{\mathcal{S}^c} \textbf{x}\big) =\sign\big(\textbf{A}^{\mathcal{S}^c} \textbf{y}\big)$. Thus, we obtain $\sign(\Ax) = \sign(\Ay)$, hence applying $\textbf{x}\in\mathcal{W}_{\textbf{A}}$ delivers $\x = t\cdot \textbf{y}$ for some $t>0$. It is evident that this can yield $\bm{\mathrm{y_0}} = t^{-1}\cdot \bm{\mathrm{x}}^{[d]\setminus [\mathcal{S}]}$, so $\bm{\mathrm{x}}^{[d]\setminus [\mathcal{S}]} \in\mathcal{W}_{\bm{\mathrm{A}}(\mathcal{S})}$. Thus, $\bm{\mathrm{x}}^{[d]\setminus [\mathcal{S}]} \in \mathcal{W}_{\bm{\mathrm{A}}(\mathcal{S})}\cap \mathcal{H}_{\bm{\mathrm{A}}(\mathcal{S})}$ is concluded.

For the other direction, we assume $ \bm{\mathrm{x}}^{[d]\setminus [\mathcal{S}]} \in \mathcal{W}_{\bm{\mathrm{A}}(\mathcal{S})}\cap \mathcal{H}_{\bm{\mathrm{A}}(\mathcal{S})}$. By (\ref{4.8}) one can easily see that $\textbf{A}^{\mathcal{S}^c}\textbf{x} $   contains no zero entries, which together with the assumption $\textbf{A}^\mathcal{S}\textbf{x} = \bm{0}$ gives $\bm{\mathrm{x}}\in\mathcal{H}_{\bm{\mathrm{A}}}(\mathcal{S})$. It remains to show $\bm{\mathrm{x}}\in \mathcal{W}_{\bm{\mathrm{A}}} $, so we assume $\sign(\Ax) = \sign(\Ay)$, which is equivalent to $ \textbf{A}^\mathcal{S}\textbf{y}  =\textbf{A}^\mathcal{S}\textbf{x} = \bm{0}$ and $\sign\big(\textbf{A}^{\mathcal{S}^c}\textbf{y}\big) =\sign\big(\textbf{A}^{\mathcal{S}^c}\textbf{x}\big)$. Similar to (\ref{4.7}) and (\ref{4.8}), based on $\textbf{A}^\mathcal{S}\textbf{y}= \bm{0}$ one can verify the relation $\textbf{A}^{\mathcal{S}^c}\textbf{y} = \textbf{A}(\mathcal{S})\textbf{y}^{[d]\setminus [\mathcal{S}]}$, hence it holds that  $\sign\big(\textbf{A}(\mathcal{S})\textbf{x}^{[d]\setminus [\mathcal{S}]}\big)=\sign\big(\textbf{A}(\mathcal{S})\textbf{y}^{[d]\setminus [\mathcal{S}]}\big)$. Now we can invoke $\bm{\mathrm{x}}^{[d]\setminus [\mathcal{S}]} \in \mathcal{W}_{\bm{\mathrm{A}}(\mathcal{S})}$ and conclude that $\textbf{y}^{[d]\setminus [\mathcal{S}]} = t\cdot \textbf{x}^{[d]\setminus [\mathcal{S}]}$ for some $t>0$. By $\textbf{A}^\mathcal{S}\textbf{y}  =\textbf{A}^\mathcal{S}\textbf{x} = \bm{0}$, $\textbf{x}^{[\mathcal{S}]}$ and $\textbf{y}^{[\mathcal{S}]}$ can be uniquely determined by $\textbf{x}^{[d]\setminus [\mathcal{S}]}$ and $\textbf{y}^{[d]\setminus [\mathcal{S}]}$, respectively. Therefore, $\textbf{y} = t\cdot \textbf{x}$, and hence $\textbf{x}\in \mathcal{W}_{\textbf{A}}$. The proof is concluded. 
\end{proof}


Considering a fixed signal $\x$ ($\x \neq \bm{0}$),  the next Lemma shows that when $m\geq 2d-1$, $\mathcal{W}_{\textbf{x}}(m)\cap \mathcal{H}_{\textbf{x}}(m)$ contains a generic $\textbf{A}$ in $\mathbb{C}^{m\times d}$. We let $\bm{\mathrm{e_k}}  $ be the $k$-th column of $\bm{\mathrm{I}_d}$. For   positive integer $l\neq d$, $\bm{\mathrm{e_k}}[l]$ is also used to denote the $k$-th column of  $\bm{\mathrm{I}_{l}}$.

\begin{lem}
    \label{lemmanew6}
    Consider a fixed non-zero signal $\bm{\mathrm{x}}$. When $m\geq 2d-1$, a generic $\bm{\mathrm{A}}$ in $\mathbb{C}^{m\times d}$ can recover $\bm{\mathrm{x}}$ via purely phase-only measurements, i.e., $\bm{\mathrm{x}}\in \mathcal{W}_{\bm{\mathrm{A}}}\cap \mathcal{H}_{\bm{\mathrm{A}}}$. In other words, $\mathcal{W}_{\bm{\mathrm{x}}}(m) \cap \mathcal{H}_{\bm{\mathrm{x}}}(m)$ contains a generic $\bm{\mathrm{A}}$ when $m\geq 2d-1$.
\end{lem}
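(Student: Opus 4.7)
The plan is to establish that $\mathcal{W}_{\bm{\mathrm{x}}}(m) \cap \mathcal{H}_{\bm{\mathrm{x}}}(m)$ is a non-empty Zariski open subset of $\mathbb{C}^{m\times d}$ whenever $m \geq 2d-1$; by the preliminaries in Section \ref{preli}, this is exactly the ``generic'' statement of the lemma. The starting point is Lemma \ref{lemma3}, which collapses both conditions into the single rank inequality $\rank(\mathcal{D}_{\bm{\mathrm{A}}}(\bm{\mathrm{x}})) \geq 2d+m-1$.

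For Zariski openness, the key observation is that with $\bm{\mathrm{x}}$ held fixed, every entry of $\mathcal{D}_{\bm{\mathrm{A}}}(\bm{\mathrm{x}}) = [\varphi(\bm{\mathrm{A}})\ \varphi_1(\mathrm{dg}(\bm{\mathrm{Ax}}))]$ is a linear polynomial in the real and imaginary parts of the entries of $\bm{\mathrm{A}}$ (no $\sign(\cdot)$ appears here, in contrast to $\mathcal{E}_{\bm{\mathrm{A}}}(\bm{\mathrm{x}})$). Applying Lemma \ref{lemma4} with trivial (constant) denominators then makes $\{\bm{\mathrm{A}} : \rank(\mathcal{D}_{\bm{\mathrm{A}}}(\bm{\mathrm{x}})) \geq 2d+m-1\}$ Zariski open, provided it is non-empty.

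For non-emptiness at the boundary $m = 2d-1$, I would first reduce to the full-support case via the invariance $\mathcal{W}_{\bm{\mathrm{A}}} = \bm{\mathrm{P}}\mathcal{W}_{\bm{\mathrm{AP}}}$ from Section \ref{sec2}: choose any invertible $\bm{\mathrm{P}}\in\mathbb{C}^{d\times d}$ so that $\bm{\mathrm{x}}' := \bm{\mathrm{P}}^{-1}\bm{\mathrm{x}}$ has full support. Since $\bm{\mathrm{Ax}} = (\bm{\mathrm{AP}})\bm{\mathrm{x}}'$ also transfers the $\mathcal{H}$-condition, it suffices to build a canonical $\tilde{\bm{\mathrm{A}}}$ with $\bm{\mathrm{x}}' \in \mathcal{W}_{\tilde{\bm{\mathrm{A}}}} \cap \mathcal{H}_{\tilde{\bm{\mathrm{A}}}}$ and then set $\bm{\mathrm{A}}^* := \tilde{\bm{\mathrm{A}}}\bm{\mathrm{P}}^{-1}$. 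For $\bm{\mathrm{x}}'$ of full support, I would use the adaptive construction from Remark \ref{remarknew}: $\tilde{\bm{\mathrm{A}}}^{[d]} = \bm{\mathrm{I}}_d$ together with $d-1$ extra rows of the form $\bm{\mathrm{e}}_1^\top + c_l\, \bm{\mathrm{e}}_l^\top$, $l = 2,\ldots,d$, where each $c_l$ is picked on the unit circle avoiding the finitely many forbidden values $\pm\sign(x'_1)/\sign(x'_l)$ (so that Remark \ref{remarknew} applies) together with $-x'_1/x'_l$ (so that $\tilde{\bm{\mathrm{A}}}\bm{\mathrm{x}}'$ is entry-wise nonzero). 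For $m > 2d-1$, I would append any $m-(2d-1)$ additional rows whose inner products with $\bm{\mathrm{x}}$ are nonzero: adding rows can only tighten the phase-matching constraint $\sign(\bm{\mathrm{Ay}}) = \sign(\bm{\mathrm{Ax}})$, so membership in $\mathcal{W}$ is preserved, while the generic choice keeps $\bm{\mathrm{x}} \in \mathcal{H}$.

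The main obstacle is the explicit construction step. Remark \ref{remarknew}'s identity block inevitably produces zero entries of $\bm{\mathrm{Ax}}$ whenever $\bm{\mathrm{x}}$ has a vanishing coordinate, which is precisely why the similarity $\bm{\mathrm{P}}$ is required, and one must then simultaneously satisfy the Remark \ref{remarknew}-type constraints on $c_l$ together with the extra non-degeneracy $c_l x'_l \neq -x'_1$ to secure the $\mathcal{H}$-membership. All remaining work reduces to the rank and Zariski openness arguments already developed, so the delicate issue is verifying that this one-parameter family of choices can clear the finitely many excluded values on the unit circle and thereby certify that the Zariski open locus is non-empty.
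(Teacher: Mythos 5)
Your proposal is correct and follows essentially the same route as the paper: both reduce membership in $\mathcal{W}_{\bm{\mathrm{x}}}(m)\cap\mathcal{H}_{\bm{\mathrm{x}}}(m)$ to the rank condition of Lemma \ref{lemma3}, obtain Zariski openness from Lemma \ref{lemma4} because the entries of $\mathcal{D}_{\bm{\mathrm{A}}}(\bm{\mathrm{x}})$ are polynomials in $\Re(\bm{\mathrm{A}}),\Im(\bm{\mathrm{A}})$, and then certify non-emptiness with one explicit $(2d-1)$-row witness transported by the invariance under invertible $\bm{\mathrm{P}}$. The only divergence is the witness itself: the paper normalizes $\bm{\mathrm{x}}$ to $\bm{\mathrm{e_1}}$ and takes the rows $\bm{\mathrm{e_1}}^{\top}$, $\bm{\mathrm{e_1}}^{\top}+\bm{\mathrm{e_k}}^{\top}$, $\bm{\mathrm{e_1}}^{\top}+\bm{\mathrm{i}}\,\bm{\mathrm{e_k}}^{\top}$, which force $y_k=0$ directly and avoid any forbidden-value bookkeeping, whereas you normalize to a full-support vector and adapt Remark \ref{remarknew} with unit-modulus $c_l$ avoiding finitely many values --- both constructions are valid.
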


\begin{proof} By the definition in (\ref{4.3}) and (\ref{4.4}), for any invertible $\textbf{P}\in\mathbb{C}^{d\times d}$ one can easily show
\begin{equation}
    \label{4.9}
    \mathcal{W}_{\textbf{Px}}(m) = \mathcal{W}_{\textbf{x}}(m)\bm{\mathrm{P^{-1}}},~\mathcal{H}_{\textbf{Px}}(m) = \mathcal{H}_{\textbf{x}}(m)\bm{\mathrm{P^{-1}}}.
\end{equation}
Since for each nonzero $\x$ we have $\textbf{Px} = \bm{\mathrm{e_1}}$ for some invertible $\textbf{P}$, we can only consider $\textbf{x} = \bm{\mathrm{e_1}}$. By Lemma \ref{lemma3}, $$\textbf{A}\in \mathcal{W}_{\bm{\mathrm{e_1}}}(m) \cap \mathcal{H}_{\bm{\mathrm{e_1}}}(m)\iff \bm{\mathrm{e_1}}\in \mathcal{W}_{\bm{\mathrm{A}}}\cap \mathcal{H}_{\bm{\mathrm{A}}} \iff \rank\big(\mathcal{D}_{\textbf{A}}(\bm{\mathrm{e_1}})\big) \geq 2d +m -1.$$
Recall (\ref{disD}), obviously entries of  $\mathcal{D}_{\textbf{A}}(\bm{\mathrm{e_1}}) $ are polynomials of the $2md$ real variables $\Re(\textbf{A}),\Im(\textbf{A})$, hence Lemma \ref{lemma4} delivers that $\mathcal{W}_{\bm{\mathrm{e_1}}}(m) \cap \mathcal{H}_{\bm{\mathrm{e_1}}}(m)$ is Zariski open set of $\mathbb{C}^{m\times d}$. It remains to confirm $\mathcal{W}_{\bm{\mathrm{e_1}}}(m) \cap \mathcal{H}_{\bm{\mathrm{e_1}}}(m)\neq \varnothing$ when $m\geq 2d-1$, that is, we need to find $\bm{\mathrm{B}}\in\mathbb{C}^{m\times d}$ such that $\bm{\mathrm{e_1}} \in  \mathcal{W}_{\bm{\mathrm{B}}}\cap \mathcal{H}_{\bm{\mathrm{B}}}.$  We consider 
	\begin{equation}
	\label{constru}
		\bm{\mathrm{B_0}}=\begin{bmatrix}
			1 & \quad &\quad &\quad &\quad \\
			1 & 1 &\quad &\quad &\quad \\
			1 & \quad & 1 &\quad & \quad \\
			\vdots & \ & \ &\ddots & \ \\
			1&\ &\ &\ &1 \\
			1 & \ii&\quad &\quad &\quad \\
			1 & \quad & \ii &\quad & \quad \\
			\vdots & \ & \ &\ddots & \ \\
			1&\ &\ &\ &\ii 
		\end{bmatrix}\in \mathbb{C}^{(2d-1)\times d} ,~~ \textbf{B}=\begin{bmatrix}
			\bm{\mathrm{B_0}}\\ \bm{\mathrm{B_1}}
		\end{bmatrix}\in \mathbb{C}^{m\times d},
	\end{equation}
where $\bm{\mathrm{B_1 }}$ has all ones in its first column, and zeroes as its other entries. Evidently, $\sign(\bm{\mathrm{B_0e_1}}) = \bm{1}$ and so $\bm{\mathrm{e_1}}\in\mathcal{H}_{\textbf{B}}$. We assume $\sign(\bm{\mathrm{By}}) = \sign(\bm{\mathrm{Be_1}})=\mathbf{1}$ for some $\textbf{y} = [y_i]$, then the first measurement gives $y_1>0$. Moreover, the next $2(d-1)$ measurements can imply $y_1+y_k\in \mathbb{R}$, $y_1+\ii y_k\in\mathbb{R}$ for all $2\leq k\leq d$, which yields $(\ii -1)y_k\in\mathbb{R}$, and hence $y_k = 0 $ when $2\leq k\leq d$. Hence, we arrive at $\textbf{y} = y_1 \cdot \bm{\mathrm{e_1}}$, $y_1>0$. Therefore, when $m\geq 2d-1$, $\mathcal{W}_{\bm{\mathrm{e_1}}}(m) \cap \mathcal{H}_{\bm{\mathrm{e_1}}}(m)$ is non-empty Zariski open set. The proof is hence concluded. 
 \end{proof}

With the above lemmas in place, we are now ready to present the proof of   Theorem \ref{T6}.	
	
	\vspace{2mm}
	\noindent
{\it Proof of Theorem \ref{T6}:}	When $m\leq 2d-2$, since $ \mathcal{D}_{\bm{\mathrm{A}}}(\x)\in \mathbb{C}^{2m\times d+m)}$, we have $\rank( \mathcal{D}_{\bm{\mathrm{A}}}(\x))\leq 2m <2d+m-1$. Thus, Lemma \ref{lemma3} gives $\mathcal{W}_{\bm{\mathrm{A}}}\cap \mathcal{H}_{\textbf{A}} =\varnothing $, which further leads to \begin{equation}
	    \mathcal{W}_{\bm{\mathrm{A}}} \subset  \mathbb{C}^d \setminus \mathcal{H}_{\textbf{A}} =\bigcup_{j=1}^m\ker(\bm{\mathrm{\gamma_j^{\top}}}).
  \label{4.10}
	\end{equation}
	Note that (\ref{4.10}) confirms that $\mathcal{W}_{\bm{\mathrm{A}}}$ is nowhere dense (under Euclidean topology) and of zero Lebesgue measure. This immediately gives the lower bound $ \bm{\mathrm{m_{ae}}}(d)\geq 2d-1$.

	When $m \geq 2d-1$, we consider the set of $\textbf{A}$ denoted by $\Xi$ defined as follows: 
	\begin{equation}
	    \begin{aligned}
	        \label{4.11}
	        \Xi = \big\{\textbf{A}\in\mathbb{C}^{m\times d}: \textbf{A}\in \mathcal{W}_{\bm{\mathrm{e_1}}}(m) \cap \mathcal{H}_{\bm{\mathrm{e_1}}}(m);~\forall\mathcal{S}\subset [m],~0<|\mathcal{S}|\leq d,
	        \rank(\bm{\mathrm{A}}^{\mathcal{S}}_{[\mathcal{S}]})=|\mathcal{S}|;&\\\forall\mathcal{S}\subset [m],~0<|\mathcal{S}|<d,~\bm{\mathrm{e_1}}[d-|\mathcal{S}|]\in \mathcal{W}_{\textbf{A}(\mathcal{S})}\cap \mathcal{H}_{\textbf{A}(\mathcal{S})},~\text{see (\ref{4.6}) for } \textbf{A}(\mathcal{S})\big\}.&
	    \end{aligned}
	\end{equation}
In the following, we will  show $\Xi$ contains a generic $\textbf{A}$ in {\it step 1}, then in {\it step 2} we prove each $\textbf{A}$ in $\Xi$ satisfies (\ref{4.5}) that includes the almost everywhere magnitude retrievable property as a special case (i.e., when $\mathcal{S}=\varnothing$). This will give the upper bound $\bm{\mathrm{m_{ae}}}(d)\leq 2d-1$ and finally complete the proof.

\vspace{1mm}

\noindent{\it Step 1.} By Lemma \ref{lemmanew6} $\mathcal{W}_{\bm{\mathrm{e_1}}}(m) \cap \mathcal{H}_{\bm{\mathrm{e_1}}}(m)$ contains a generic $\textbf{A}$. It is evident that a generic $\textbf{A}$ satisfies $\rank(\textbf{A}^\mathcal{S}_{[\mathcal{S}]}) = |\mathcal{S}|$. Thus, we only need to show a generic $\textbf{A}$ satisfies the last property (the second line) in the definition of $\Xi$ (\ref{4.11}). More precisely, for a fixed $\mathcal{S} \subset [m]$, $0<|\mathcal{S}|<d$, we need to prove $\bm{\mathrm{e_1}}[d-|\mathcal{S}|]\in \mathcal{W}_{\textbf{A}(\mathcal{S})}\cap \mathcal{H}_{\textbf{A}(\mathcal{S})}$ holds for a generic $\textbf{A}$. This is indeed similar to the proof of Lemma \ref{lemmanew6}. (\ref{4.6}) gives $\textbf{A}(\mathcal{S})\in \mathbb{C}^{(m-|\mathcal{S})\times (d-|\mathcal{S}|)}$, combining with Lemma \ref{lemma3}, we have \begin{equation}
    \label{4.12}
    \begin{aligned}
       & \bm{\mathrm{e_1}}[d-|\mathcal{S}|]\in \mathcal{W}_{\textbf{A}(\mathcal{S})}\cap \mathcal{H}_{\textbf{A}(\mathcal{S})} \iff\\ &\rank \Big(\mathcal{D}_{\textbf{A}(\mathcal{S})}\big(\bm{\mathrm{e_1}}[d-|\mathcal{S}|]\big)\Big) \geq2(d-|\mathcal{S}|)+(m-|\mathcal{S}|)-1.
    \end{aligned}
\end{equation}
From (\ref{4.6}), one can see entries of $\textbf{A}(\mathcal{S})$ are of the form $\frac{f_{ij}(\textbf{A})}{g_{ij}(\textbf{A})}$ where $f_{ij}(\textbf{A})$, $g_{ij}(\textbf{A})$ are polynomials of $2md$ real variables $[\Re(\textbf{A}),\Im(\textbf{A})]$ with possibly complex coefficients, and so are the entries of $\mathcal{D}_{\textbf{A}(\mathcal{S})}\big(\bm{\mathrm{e_1}}[d-|\mathcal{S}|]\big)$ (check this from (\ref{disD})). Thus, we can invoke Lemma \ref{lemma4} to see the set of $\textbf{A}$ satisfying (\ref{4.12}) is Zariski open. We then show   (\ref{4.12}) holds  for some $\textbf{A}$. To see the existence of such $\textbf{A}$, we set $\textbf{A}^{\mathcal{S}}_{[d]\setminus [\mathcal{S}]}=\bm{0}$, $ \textbf{A}^\mathcal{S}_{[\mathcal{S}]} = \textbf{I}_{|\mathcal{S}|}$, $\textbf{A}_{[\mathcal{S}]}^{\mathcal{S}^c}=\bm{0}$, then   (\ref{4.6}) reads as $\textbf{A}(\mathcal{S}) = \textbf{A}^{\mathcal{S}^c}_{[d]\setminus [\mathcal{S}]}\in \mathbb{C}^{(m-|\mathcal{S})\times (d-|\mathcal{S}|)}$. Note that 
$$ m-|\mathcal{S}| \geq 2d-1-|\mathcal{S}| > 2(d-|\mathcal{S}|)-1,$$
thus we can set $\textbf{A}^{\mathcal{S}^c}_{[d]\setminus [\mathcal{S}]}$ to be a matrix of same form of $\textbf{B}$ in (\ref{constru}), then $\bm{\mathrm{e_1}}[d-|\mathcal{S}|]\in \mathcal{W}_{\textbf{A}(\mathcal{S})}\cap \mathcal{H}_{\textbf{A}(\mathcal{S})}$ follows. Taking a finite intersection over all $\mathcal{S}$, $0<|\mathcal{S}|<d$, it yields that a generic $\textbf{A}$ satisfies the second line of (\ref{4.12}). Therefore, $\Xi$ contains a generic $\textbf{A}$.

\vspace{1mm}

\noindent{\it Step 2.} We aim to show any element of $ \Xi$ satisfies (\ref{4.5}), so we consider a fixed $\textbf{A}\in\Xi$. For a fixed $\mathcal{S}\subset [m]$, we discuss the following three cases. 

\vspace{1mm}

\noindent{\it Case 1.} If $|\mathcal{S}|\geq d$, there exists $\mathcal{S}_0\subset \mathcal{S}$, $|\mathcal{S}_0| = d$. By (\ref{4.11}) we have $d\geq \rank(\textbf{A}^\mathcal{S}) \geq \rank(\textbf{A}^{\mathcal{S}_0})=d$, which gives $\rank(\textbf{A}^\mathcal{S})  = d$ and hence $\ker (\textbf{A}^\mathcal{S}) = \{\bm{0}\}$. Note that $\bm{0} \in \mathcal{W}_{\textbf{A}}$, (\ref{4.5}) holds trivially.

\vspace{1mm}

\noindent{\it Case 2.} If $\mathcal{S} =\varnothing$, (\ref{4.5}) states that $\mathcal{W}_{\textbf{A}}$ contains a generic $\x$ of $\mathbb{C}^d$. Our strategy is still similar to the proof of   Lemma \ref{lemmanew6} and   some arguments in {\it Step 1}, while the difference is that entries of $\mathcal{D}_{\textbf{A}}(\textbf{x})$ are viewed as functions of $\x$. From (\ref{disD}), a simple observation is that entries of $\mathcal{D}_{\textbf{A}}(\textbf{x})$ are 
polynomials (of degree at most 1) of $2d$ real variables $[\Re(\textbf{x})^{\bm{\top}},\Im(\textbf{x})^{\bm{\top}}]^{\bm{\top}}$. Besides, Lemma \ref{lemma3} gives $\mathcal{W}_{\textbf{A}}\cap \mathcal{H}_{\textbf{A}} = \{\textbf{x}: \rank\big(\mathcal{D}_{\textbf{A}}(\textbf{A})\big)\geq 2d+m-1\}$, which is a Zariski open set of $\mathbb{C}^{d}$ due to Lemma \ref{lemma4}. Also, it is non-empty since $\bm{\mathrm{e_1}}\in\mathcal{W}_{\textbf{A}}\cap \mathcal{H}_{\textbf{A}} $ by (\ref{4.11}). Hence, (\ref{4.5}) follows.

\vspace{1mm}

\noindent{\it Case 3.} If $0<|\mathcal{S}|<d$, by exactly the same argument in {\it Case 2} one can see $\mathcal{W}_{\textbf{A}(\mathcal{S})}\cap \mathcal{H}_{\textbf{A}(\mathcal{S})}$ is Zariski open, then $\bm{\mathrm{e_1}}[d-|\mathcal{S}|]\in\mathcal{W}_{\textbf{A}(\mathcal{S})}\cap \mathcal{H}_{\textbf{A}(\mathcal{S})} $ in (\ref{4.11}) can confirm it is non-empty. Thus, $\mathcal{W}_{\textbf{A}(\mathcal{S})}\cap \mathcal{H}_{\textbf{A}(\mathcal{S})}$ contains a generic point of $\mathbb{C}^{d-|\mathcal{S}|}$. We now invoke Lemma \ref{lemma5} to yield (\ref{4.5}). Recall $\rank(\textbf{A}^\mathcal{S}_{[\mathcal{S}]}) = |\mathcal{S}|$, then under the assumption that $\x\in\ker(\textbf{A}^\mathcal{S})$, Lemma \ref{lemma5} gives 
$$ \bm{\mathrm{x}}\in \mathcal{W}_{\bm{\mathrm{A}}}\cap \mathcal{H}_{\bm{\mathrm{A}}}(\mathcal{S})\iff \bm{\mathrm{x}}^{[d]\setminus [\mathcal{S}]} \in \mathcal{W}_{\bm{\mathrm{A}}(\mathcal{S})}\cap \mathcal{H}_{\bm{\mathrm{A}}(\mathcal{S})}.$$
Moreover, the above $\textbf{x}$ and $\bm{\mathrm{x}}^{[d]\setminus [\mathcal{S}]} $ are indeed connected by a linear isomorphism between $\ker(\textbf{A}^\mathcal{S})$ and $\mathbb{C}^{d-|\mathcal{S}|}$ (see the proof of Lemma \ref{lemma5}). Thus,  $ \mathcal{W}_{\bm{\mathrm{A}}} \cap \mathcal{H}_{\textbf{A}}(\mathcal{S})$ contains a generic point  $\ker(\textbf{A}^\mathcal{S})$, which implies (\ref{4.5}). 

Now we can conclude that  that $\bm{\mathrm{m_{ae}}}(d)\leq 2d-1$. Therefore, the minimal measurement number required for almost everywhere magnitude retrieval is $2d-1$.  \hfill $\square$

\begin{rem}
We point out that, Theorem \ref{T6} can be used to interpret some algorithms or numerical results   in previous works. For instance, the theoretical results in \cite{hayes1980signal,oppenheim1980iterative} guarantee that   $d-1$ Fourier phases are sufficient for reconstruction of $\bm{\mathrm{x}}\in\mathbb{R}^d$. However,  their iterative algorithm requires $2d$ Fourier phases, which seems a bit strange compared to their theoretical results. 
In fact, this is because their iterative algorithm does not utilize  the fact that $\bm{\mathrm{x}} $ is real-valued (see Figure 1 in \cite{hayes1980signal}), hence $\bm{\mathrm{x}}$ is treated as  complex-valued signal and requires $2d-1$ measurements. Moreover, the authors 
of \cite{kishore2020phasesense} compared their algorithm with MagnitudeCut in \cite{wu2016phase} 
and randomly generated $\bm{\mathrm{x}}\in \mathbb{C}^{64}$ and $\bm{\mathrm{A}}\in \mathbb{C}^{m\times 64}$ from Gaussian distribution. 
Consistent with Theorem \ref{T6}, both algorithms achieve successful reconstruction when $m\geq 128$, see Figure 3(a) 
in \cite{kishore2020phasesense}.
\label{remark3} 
\end{rem}

	\section{Two related new results}\label{sec5}
	
In this section, we exploit the previous theoretical framework 
to derive two related new results in phase-only reconstruction problems.

\subsection{Symmetric signal reconstruction}
	
	We first give a proposition concerning the ill-posedness of   symmetric signal reconstruction from Fourier phase. Indeed, this issue has been noticed in many early works. For instance, the uniqueness criteria described in  \cite{hayes1980signal,hayes1982reconstruction}   exclude   symmetric signals. Also,   \cite{levi1983signal} reported the failure in   recovering     symmetric signal from Fourier phase as an experimental result. In addition, the POCS algorithm    performances the worst under images of symmetric form \cite{urieli1998optimal}.

	Note that these previous works only considered $\textbf{x}\in\mathbb{R}^d$ and lacked rigorous argument on this issue of symmetric signal. Moreover, they assumed the $z$-transform of $\textbf{x}$ has no zeros on the unit circle, which evidently satisfies $\textbf{x}\in \mathcal{H}_{\textbf{A}}$ for Fourier measurement matrix $\textbf{A}$. Thus, the case when $\bm{x}\notin \mathcal{H}_{\textbf{A}}$ remains unclear.

	In the following, we consider $\textbf{x}\in \mathbb{C}^d$ that is conjugate symmetric. By using the discriminant matrix $\mathcal{D}_{\textbf{A}}(\textbf{x})$, we can precisely   present the ill-posedness of symmetric signal recovery from phase with a rigorous proof. Note that our result allows some Fourier measurement of $\textbf{x}$ to vanish (i.e., $\textbf{x}\notin \mathcal{H}_{\textbf{A}}$). 
	\begin{pro}
	\label{pro2}
	We consider a conjugate symmetric $\bm{\mathrm{x}}=\big[x_k\big]_{k=1-d}^{d -1}$ of odd length satisfying $x_{-k} = \overline{x_k}$, the $j$-th row of the measurement matrix $\bm{\mathrm{A}}\in \mathbb{C}^{m\times (2d-1)}$ is the Fourier measurement under frequency $\omega_j$, more precisely, it is given by 
	\begin{equation}
	    \nonumber
	    \bm{\mathrm{\gamma_j^\top}} = \begin{bmatrix}
	          e^{\bm{\mathrm{i}}(d-1)\omega_j} & \cdots & e^{\bm{\mathrm{i}}\omega_j} & 1 & e^{-\bm{\mathrm{i}}\omega_j} & \cdots &   e^{-\bm{\mathrm{i}}(d-1)\omega_j}
	    \end{bmatrix}.
	\end{equation}
	If $m-|\mathrm{N}(\bm{\mathrm{Ax}})| < 2d-2$, $\bm{\mathrm{x}}\notin \mathcal{W}_{\bm{\mathrm{A}}}$.
	\end{pro}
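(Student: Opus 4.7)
The plan is to exploit the fact that conjugate symmetry forces the Fourier measurements to be real-valued, which in turn lets me build an explicit family of counterexamples to unique reconstruction by perturbing $\textbf{x}$ inside the real-linear space of conjugate symmetric signals. Writing $\gamma_j^\top \textbf{z} = z_0 + 2\sum_{k=1}^{d-1}\Re(z_k e^{-\textbf{i}k\omega_j})$ whenever $z_{-k}=\overline{z_k}$, I would first record that on the real $(2d-1)$-dimensional subspace $V := \{\textbf{z}\in\mathbb{C}^{2d-1}: z_{-k}=\overline{z_k}\}$ of conjugate symmetric signals (parameterized by $z_0\in\mathbb{R}$ together with $\Re(z_k),\Im(z_k)$ for $1\le k\le d-1$), the map $\textbf{z}\mapsto\textbf{Az}$ takes values in $\mathbb{R}^m$ and is $\mathbb{R}$-linear. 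The decisive consequence is that the condition $(\textbf{Az})_j=0$ is a single real linear equation on $V$, not two; this is what eventually produces the threshold $2d-2$ instead of a naive $d-1$.

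The core of the argument is then a dimension count on the real subspace
\[
W := \{\textbf{z}\in V : (\textbf{Az})_j=0 \text{ for all } j\in[m]\setminus\mathrm{N}(\textbf{Ax})\}.
\]
Since $W$ is cut out of $V$ by at most $m-|\mathrm{N}(\textbf{Ax})|$ real linear equations, the hypothesis gives
\[
\dim_\mathbb{R}(W)\;\geq\;(2d-1)-\big(m-|\mathrm{N}(\textbf{Ax})|\big)\;\geq\;2.
\]
Because $\textbf{x}\in W$ tautologically, $\mathbb{R}\textbf{x}\subset W$ is at most one-dimensional, so I can select $\textbf{z}\in W$ that is not a real scalar multiple of $\textbf{x}$.

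To finish, I would set $\textbf{y}:=\textbf{x}+\epsilon\textbf{z}$ with $\epsilon>0$ chosen small enough that each nonzero real entry $(\textbf{Ax})_j$ keeps its sign under the real perturbation $\epsilon(\textbf{Az})_j$. By construction $(\textbf{Ay})_j=0$ for every $j\notin\mathrm{N}(\textbf{Ax})$, and for $j\in\mathrm{N}(\textbf{Ax})$ the sign of the real number $(\textbf{Ax})_j+\epsilon(\textbf{Az})_j$ agrees with that of $(\textbf{Ax})_j$, so $\sign(\textbf{Ay})=\sign(\textbf{Ax})$. But $\textbf{y}=t\textbf{x}$ for some $t>0$ would force $\textbf{z}=\tfrac{t-1}{\epsilon}\textbf{x}\in\mathbb{R}\textbf{x}$, contradicting the choice of $\textbf{z}$. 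Hence $\textbf{x}\notin \mathcal{W}_{\textbf{A}}$. The only delicate step is the dimension count, and specifically the observation that each vanishing measurement contributes one real (not one complex) constraint on $V$; miscounting this is the expected pitfall and is precisely what pins the hypothesis at $2d-2$. Real-valuedness of $\textbf{Az}$ and openness of the sign-preserving condition for nonzero entries are both routine.
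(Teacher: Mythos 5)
Your proof is correct, but it takes a genuinely different route from the paper. The paper proves this proposition by deploying its own machinery: it invokes Theorem \ref{T1}, writes out the $2m\times(4d-2+m)$ real discriminant matrix $\mathcal{D}_{\textbf{A}}(\textbf{x})$, observes that conjugate symmetry kills the block $\Im(\mathrm{dg}(\textbf{Ax}))$ and creates a flip symmetry between the real and imaginary parts of $\textbf{A}$, and then performs column operations to decouple the matrix into two blocks whose ranks sum to strictly less than $4d-3+|\mathrm{N}(\textbf{Ax})|$. You instead argue directly: you restrict to the real $(2d-1)$-dimensional subspace $V$ of conjugate symmetric signals, use the key observation that $\textbf{A}$ maps $V$ into $\mathbb{R}^m$ so that each vanishing measurement imposes only \emph{one} real linear constraint, and conclude by a dimension count that the constrained subspace $W$ has real dimension at least $2$, leaving room for a perturbation $\textbf{y}=\textbf{x}+\epsilon\textbf{z}$ with $\textbf{z}\notin\mathbb{R}\textbf{x}$ that preserves all signs. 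Both arguments hinge on the same structural fact (real-valuedness of the Fourier measurements of a conjugate symmetric signal, equation (\ref{5.1}) in the paper), and both correctly locate the threshold $2d-2$ in the "one real equation per zero measurement" count. Your version is more elementary and self-contained --- it exhibits the ambiguous signal explicitly rather than certifying its existence through a rank deficiency --- and it essentially upgrades the informal intuition of Remark \ref{remsymme} into the full proof, including the case of vanishing measurements. What the paper's version buys is a demonstration of the discriminant-matrix framework in action (the stated purpose of Section \ref{sec5}) and a rank decomposition that makes visible exactly which inequality becomes an equality when $m-|\mathrm{N}(\textbf{Ax})|\geq 2d-2$, feeding the converse discussion in Remark \ref{remsymme}. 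The only steps you flag as routine (openness of the sign condition, and that $t\textbf{y}=\textbf{x}$ forces $\textbf{z}\in\mathbb{R}\textbf{x}$) are indeed routine, so I see no gap.
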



\begin{proof}
	 We apply the discriminant matrix $\mathcal{D}_{\textbf{A}}(\textbf{x})$ (see (\ref{disD})) to prove the claim. Note that $x_0 \in \mathbb{R}$, then some simple algebra  shows the Fourier measurement of the conjugate symmetric $\x$ is real:
	 \begin{equation}
	 \label{5.1}
	   \bm{\mathrm{\gamma_j^\top x}} =   \sum_{k=1-d}^{d-1} e^{-\textbf{i}k \omega_j}x_k = x_0 + \sum_{k=1}^{d-1}e^{-\textbf{i}k \omega_j}x_k + \sum_{k=1-d}^{-1} e^{-\textbf{i}k \omega_j}\overline{x_{-k}} = x_0 + 2\sum_{k=1}^{d-1}\Re\big(e^{-\textbf{i}k \omega_j}x_k\big)\in\mathbb{R}.
	 \end{equation}
	 Thus, $\Re\big(\mathrm{dg}(\Ax)\big) = \mathrm{dg}(\Ax)$ and $\Im\big(\mathrm{dg}(\Ax)\big) = \bm{0}$. Furthermore, for $\textbf{B} = [\bm{\mathrm{b_1}},\bm{\mathrm{b_2}},\cdots, \bm{\mathrm{b_p}}]$ where $\bm{\mathrm{b_k}}$ denotes the $k$-th column, we define $\textbf{B}_{fl} := [\bm{\mathrm{b_p}},\cdots,\bm{\mathrm{b_2}},\bm{\mathrm{b_1}}]$ as the matrix obtained by flipping the columns of $\textbf{B}$.  By (\ref{disD}) we have
	 \begin{equation}
	     \label{5.2}
	     \mathcal{D}_{\textbf{A}}(\x) = \begin{bmatrix}
	          \textbf{R}_{fl} & \bm{1}_{m\times 1} &  \textbf{R} & \textbf{I}_{fl} & \bm{0}_{m\times 1} & -\textbf{I} & \mathrm{dg}(\textbf{Ax})\\
	         - \textbf{I}_{fl} & \bm{0}_{m\times 1} & \textbf{I} & \textbf{R}_{fl} & \bm{1}_{m\times 1} &  \textbf{R} & \bm{0}_{m\times m}
	     \end{bmatrix}\in\mathbb{R}^{(2m)\times (4d-2+m)},
	 \end{equation}
	  where $\textbf{I} = [\sin(k\omega_j)]_{(j,k)\in [m]\times [d-1]}$, $\textbf{R} = [\cos(k\omega_j)]_{(j,k)\in [m]\times [d-1]}$ are $m\times (d-1)$ matrices. By Theorem \ref{T1}, $\x\in\mathcal{W}_{
	 \textbf{A}}$ if and only if $\rank(  \mathcal{D}_{\textbf{A}}(\x) ) = 2\cdot (2d-1)+ |\mathrm{N}(\textbf{Ax})|-1$, hence we only need to show $\rank(  \mathcal{D}_{\textbf{A}}(\x) ) < 4d-3 + |\mathrm{N}(\textbf{Ax})|$. To this end, we apply elementary operations to simplify $ \mathcal{D}_{\textbf{A}}(\x) $. We first deal with the six blocks on the left of (\ref{5.2}) as follows:
	 \begin{equation}
	     \nonumber
	     \begin{bmatrix}
	          \textbf{R}_{fl} & \bm{1}_{m\times 1}  &  \textbf{R}\\
	           - \textbf{I}_{fl} & \bm{0}_{m\times 1}  & \textbf{I}
	     \end{bmatrix}\xrightarrow[~]{(i)}  \begin{bmatrix}
	          \textbf{R}  & \bm{1}_{m\times 1}  &  \textbf{R}\\
	           - \textbf{I} & \bm{0}_{m\times 1}  & \textbf{I}
	     \end{bmatrix}\xrightarrow[~]{(ii)}\begin{bmatrix}
	          \textbf{R}  & \bm{1}_{m\times 1}  &  \textbf{R}\\
	           \bm{0} & \bm{0}_{m\times 1}  & \textbf{I}
	     \end{bmatrix} \xrightarrow[~]{(iii)}\begin{bmatrix}
	          \textbf{R}  & \bm{1}_{m\times 1}  &  \bm{0}\\
	           \bm{0} & \bm{0}_{m\times 1}  & \textbf{I}
	     \end{bmatrix} ,
	 \end{equation}
	 where $(i)$ flips the $(1,1),(2,1)$-th block, $(ii)$ adds the third column to the first column,   then multiplies the first column by $\frac{1}{2}$, finally $(iii)$ uses the $(1,1)$-th block to eliminate the $(1,3)$-th block. We can deal with the $(i,j)$-th block, $1\leq i\leq 2,4\leq j\leq 6$ in (\ref{5.2}) similarly, and then $\mathcal{D}_{\textbf{A}}(\textbf{x})$ is transformed to be  
	$$ \begin{bmatrix}
	     \textbf{R} & \bm{1}_{m\times 1} & \bm{0}_{m\times (d-1)} & \textbf{I} & \bm{0}_{m\times 1} & \bm{0}_{m\times (d-1)} & \mathrm{dg}(\textbf{Ax}) \\
	     \bm{0}_{m\times (d-1)} & \bm{0}_{m\times 1}  & \textbf{I} & \bm{0}_{m\times (d-1)} & \bm{1}_{m\times 1} & \textbf{R} & \bm{0}_{m\times m}
	\end{bmatrix}.$$
	We let $\bm{\mathrm{D_1}} = [\bm{\mathrm{D_2}},\mathrm{dg}(\textbf{Ax}) ]$,  $\bm{\mathrm{D_2}} = [\textbf{I},\bm{1}_{m\times 1},\textbf{R}]$, then $\rank \big(\mathcal{D}_{\textbf{A}}(\textbf{x})\big) = \rank(\bm{\mathrm{D_1}})+\rank(\bm{\mathrm{D_2}})$ holds. Moreover,
we let	$\mathrm{N}(\textbf{Ax}) = \mathcal{T}$, then it is not hard to see $\rank (\bm{\mathrm{D_1}})= \rank ((\bm{\mathrm{D_2}})^{[m]\setminus \mathcal{T}})+ |\mathcal{T}|$. Thus,
	by putting pieces together, 
	it yields   
	\begin{equation}
	    \nonumber
	    \begin{aligned}
	        &\rank\big(\mathcal{D}_{\textbf{A}}(\textbf{x})\big)=\rank((\bm{\mathrm{D_2}})^{[m]\setminus \mathcal{T}})+|\mathcal{T}|+\rank(\bm{\mathrm{D_2}}) \\
	        & \leq \min \big\{m- |\mathcal{T}|, 2d-1\big\}+ |\mathcal{T}| + \min\big\{m,2d-1\big\}
	        \\ & < 2d-2+|\mathrm{N}(\textbf{Ax})| + 2d-1 = 4d-3+|\mathrm{N}(\textbf{Ax})| .
	    \end{aligned}
	\end{equation}
	Note that in the last line   we invoke the assumption $m-|\mathrm{N}(\textbf{Ax})| <2d-2$. The proof is concluded. 
 \end{proof}


 \begin{rem}
     \label{remsymme}
     When $\mathrm{N}(\bm{\mathrm{{Ax}}}) = m$, for each $j\in [m]$ either $\bm{\mathrm{\gamma_j^\top x}}>0$ or $\bm{\mathrm{\gamma_j^\top x}}<0$ holds (by (\ref{5.1})). In this case, the intuition is that any conjugate symmetric $\bm{\mathrm{y}} = [y_k]_{k=1-d}^{d-1}$ sufficiently close to $\bm{\mathrm{x}}$ possesses the same Fourier phases (by (\ref{5.1}) again), which directly yields $\textbf{x}\notin\mathcal{W}_{\textbf{A}}$. However, the result is non-trivial when zero measurement occurs. To justify the condition $m-|\mathrm{N}(
     \bm{\mathrm{Ax}})|<2d-2$, We point out that the reconstruction becomes possible when $m- |\mathrm{N}(\bm{\mathrm{Ax}}) |\geq 2d-2$. Indeed, under the mild condition that $\{\omega_j\}\subset (0,\pi)$ are mutually different, we have $\rank \big(\bm{\mathrm{{A}}}^{[m]\setminus \mathrm{N}(\bm{\mathrm{Ax}})}\big) \geq 2d-2$, hence   $\bm{\mathrm{{A}}}^{[m]\setminus \mathrm{N}(\bm{\mathrm{Ax}})}\bm{\mathrm{x}}= \bm{0}$    restricts $\bm{\mathrm{x}}$ to a linear subspace of $\mathbb{C}^{2d-1}$ with dimension at most $1$. Thus, one additional phase-only measurement (that is non-zero) can uniquely specify $\bm{\mathrm{x}}$ up to a positive scaling factor.
 \end{rem}
Likewise, similar result can be established for conjugate symmetric signal of even length. The details are left to avid readers.	
	\subsection{Selection of measurements}
	
	In this subsection, we use the discriminant matrix $\mathcal{E}_{\textbf{A}}(\textbf{x})$ to show an interesting property of the phase-only reconstruction problem. Our result  guarantees that, if $m$ $(m\geq 2d-1)$ phase-only measurements can uniquely specify a signal $\x$, then one can always select $2d-1$  measurements for reconstruction of $\x$ up to a positive scaling. To our best knowledge, there exists no previous result of this kind for phase-only reconstruction. 
	
	\begin{theorem}                          		Assume $m\geq 2d-1$, $\bm{\mathrm{x}}\in \mathcal{W}_{\bm{\mathrm{A}}}$. Then there exists $\mathcal{S}\subset [m]$ with size $|\mathcal{S}|=2d-1$, such that $\bm{\mathrm{x}}\in \mathcal{W}_{\bm{\mathrm{A}}^{\mathcal{S}}}$. On the other hand,  it is possible that $\bm{\mathrm{x}}\notin \mathcal{W}_{\bm{\mathrm{A}}^{\mathcal{S}}}$holds for all $ \mathcal{S}\subset [m]$ with size $|\mathcal{S}|=2d-2$.
		\label{T8}
	\end{theorem}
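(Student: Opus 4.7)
The plan is to reduce to a canonical measurement matrix and then exhibit $\mathcal{S}$ by a row-selection argument on the discriminant matrix $\mathcal{E}_{\textbf{A}}(\textbf{x})$ from Theorem \ref{T2}. Since $\textbf{x}\in\mathcal{W}_{\textbf{A}}$ forces $\rank(\textbf{A})=d$, after permuting rows so that $\textbf{A}^{[d]}$ is invertible and right-multiplying by $(\textbf{A}^{[d]})^{-1}$ we may assume $\textbf{A}=[\textbf{I}_d,\textbf{B}^{\top}]^{\top}$. The same right-multiplication applied to any row-submatrix $\textbf{A}^{\mathcal{S}}$ with $\mathcal{S}\supset[d]$ produces $\textbf{A}^{\mathcal{S}}(\textbf{A}^{[d]})^{-1}$, whose recoverable set relates to $\mathcal{W}_{\textbf{A}^{\mathcal{S}}}$ by the same linear isomorphism used at the start of Section \ref{sec2}. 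It therefore suffices to find $\mathcal{S}=[d]\cup\mathcal{S}_1$ with $\mathcal{S}_1\subset[m]\setminus[d]$ and $|\mathcal{S}_1|=d-1$ such that $\textbf{x}\in\mathcal{W}_{\textbf{A}^{\mathcal{S}}}$ in this canonical setting.

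Theorem \ref{T2} applied to $\textbf{A}$ gives $\rank(\mathcal{E}_{\textbf{A}}(\textbf{x}))=|\mathrm{N}(\textbf{x})|-1\leq d-1$, and since $\textbf{A}^{\mathcal{S}}$ is again canonical, $\mathcal{E}_{\textbf{A}^{\mathcal{S}}}(\textbf{x})$ is precisely the submatrix of $\mathcal{E}_{\textbf{A}}(\textbf{x})$ obtained by retaining only the blocks $\Psi_j(\textbf{x})$ with $j\in\mathcal{S}_1$. Pick $r:=|\mathrm{N}(\textbf{x})|-1$ linearly independent rows of $\mathcal{E}_{\textbf{A}}(\textbf{x})$ and let $\mathcal{S}_0\subset[m]\setminus[d]$ collect the indices from which these rows originate; because each index contributes at least one row, $|\mathcal{S}_0|\leq r\leq d-1$. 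The hypothesis $m-d\geq d-1$ then permits padding $\mathcal{S}_0$ arbitrarily inside $[m]\setminus[d]$ to some $\mathcal{S}_1$ with $|\mathcal{S}_1|=d-1$. Now $\rank(\mathcal{E}_{\textbf{A}^{\mathcal{S}}}(\textbf{x}))$ is sandwiched between the rank of its $\mathcal{S}_0$-submatrix (equal to $r$) and the rank of $\mathcal{E}_{\textbf{A}}(\textbf{x})$ (also $r$), hence equals $r$; Theorem \ref{T2} then delivers $\textbf{x}\in\mathcal{W}_{\textbf{A}^{\mathcal{S}}}$.

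For the ``likely'' second assertion, the point is that $2d-2$ measurements cannot even almost-everywhere magnitude retrieve. Theorem \ref{T6} guarantees $\mathcal{W}_{\textbf{A}^{\mathcal{S}}}$ is nowhere dense and of Lebesgue measure zero whenever $|\mathcal{S}|=2d-2$, so the finite union $\bigcup_{|\mathcal{S}|=2d-2}\mathcal{W}_{\textbf{A}^{\mathcal{S}}}$ is still nowhere dense and null. For a generic $\textbf{A}$ with $m\geq 2d-1$, $\mathcal{W}_{\textbf{A}}$ has full measure by Theorem \ref{T6}, so almost every $\textbf{x}\in\mathcal{W}_{\textbf{A}}$ misses every such $(2d-2)$-subset. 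The main bookkeeping subtlety in the first part is that each $j\in[m]\setminus[d]$ contributes one row of $\mathcal{E}_{\textbf{A}}(\textbf{x})$ when $\bm{\mathrm{\gamma_j^{\top}}}\textbf{x}\neq 0$ but two rows when $\bm{\mathrm{\gamma_j^{\top}}}\textbf{x}=0$, so the ``number of indices in $\mathcal{S}_0$'' may be strictly less than the ``number of rows used''; this only strengthens the inequality $|\mathcal{S}_0|\leq r$ and never obstructs the padding to $|\mathcal{S}_1|=d-1$.
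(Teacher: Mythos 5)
Your proposal is correct and follows essentially the same route as the paper: reduce to the canonical form, apply Theorem \ref{T2}, select $|\mathrm{N}(\textbf{x})|-1$ independent rows of $\mathcal{E}_{\textbf{A}}(\textbf{x})$, collect the (at most $|\mathrm{N}(\textbf{x})|-1$) contributing measurement indices, pad to size $d-1$, and conclude by the rank sandwich; the second assertion via Theorem \ref{T6} (nowhere dense/null union of the $(2d-2)$-subsets versus a full-measure $\mathcal{W}_{\textbf{A}}$) matches the paper's argument as well. The bookkeeping caveat you flag about one- versus two-row blocks $\Psi_j(\textbf{x})$ is exactly the point the paper's definition of $\mathcal{J}$ in (\ref{5.3}) handles.
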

	
\begin{proof}	Note that for invertible $\textbf{P}\in \mathbb{C}^{d\times d}$, $\x\in \mathcal{W}_{\textbf{A}^{\mathcal{S}}}$ if and only if $\bm{\mathrm{P^{-1}x}}\in \mathcal{W}_{\bm{\mathrm{A^{\mathcal{S}}P}}}$, and $\x\in\mathcal{W}_{\textbf{A}}$ indicates $\rank(\textbf{A})=d$, hence $\textbf{AP} = [\bm{\mathrm{I_d}},\bm{\mathrm{A_1^\top}}]^{\bm{\top}}$ for some invertible $\textbf{P}$. Therefore, without losing generality we can assume   $\textbf{A} = [\bm{\mathrm{I_d}},\bm{\mathrm{A_1^\top}}]^{\bm{\top}}$ ($\bm{\mathrm{A_1}}\in\mathbb{C}^{(m-d)\times d}$).

	This allows us to apply $\mathcal{E}_{\textbf{A}}(\textbf{x})$. Recall its construction (\ref{2.9}), we have
	$$\mathcal{E}_{\textbf{A}}(\x)=\begin{bmatrix}
		[\Psi_{d+1}(\x)]_{\mathrm{N}(\x)}\\
		[\Psi_{d+2}(\x)]_{\mathrm{N}(\x)}\\
		\vdots \\
		[\Psi_{m}(\x)]_{\mathrm{N}(\x)}
	\end{bmatrix}\in \mathbb{R}^{\hat{m}\times |\mathrm{N}(\textbf{x})|}$$
	for some $\hat{m}\geq m-d$. By Theorem \ref{T2}, $\x\in \mathcal{W}_{\bm{\mathrm{A}}}$ if and only if  $\rank(\mathcal{E}_{\textbf{A}}(\x))=|\mathrm{N}(\x)|-1$. Hence, there exists $\mathcal{S}_0\subset [\hat{m}]$, $|\mathcal{S}_0|=|\mathrm{N}(\x)|-1$, such that $\rank([\mathcal{E}_{\textbf{A}}(\x)]^{\mathcal{S}_0})=|\mathrm{N}(\x)|-1$. By (\ref{2.7}), (\ref{2.8}), each block $[\Psi_j (\textbf{x})]_{\mathrm{N}(\x)}$ has one or two rows, and  now we consider  
	\begin{equation}
	\label{5.3}
	    \mathcal{J}=\{d+1\leq j\leq m: \mathrm{at}\ \mathrm{least}\ 1\ \mathrm{row}\ \mathrm{of} \ [\Psi_j(\x)]_{\mathrm{N}(\x)}\ \mathrm{appears}\ \mathrm{in} \ [\mathcal{E}_{\textbf{A}}(\x)]^{\mathcal{S}_0}\}.
	\end{equation}
	Then evidently, $|\mathcal{J}|\leq |\mathcal{S}_0|=|\mathrm{N}(\x)|-1$.
	
	Furthermore, we can consider $\textbf{A}$'s submatrix $\textbf{A}^{[d]\cup \mathcal{J}}$. Simple observation confirms that $\mathcal{E}_{\textbf{A}^{[d]\cup \mathcal{J}}}(\x)$ is a submatrix of $\mathcal{E}_{\textbf{A}}(\x)$, and by (\ref{5.3}) $[\mathcal{E}_{\textbf{A}}(\x)]^{\mathcal{S}_0}$ is a submatrix of $\mathcal{E}_{\textbf{A}^{[d]\cup \mathcal{J}}}(\x)$.	Thus, it holds that
	$$|\mathrm{N}(\x)|-1=\rank(\mathcal{E}_{\textbf{A}}(\x))\geq \rank(\mathcal{E}_{\textbf{A}^{[d]\cup \mathcal{J}}}(\x))\geq \rank([\mathcal{E}_{\textbf{A}}(\x)]^{\mathcal{S}_0})=|\mathrm{N}(\x)|-1,$$
	which gives $\x\in \mathcal{W}_{\textbf{A}^{[d]\cup \mathcal{J}}}$. By further noting $$|[d]\cup \mathcal{J}|=d+|\mathcal{J}|\leq d+|\mathrm{N}(\x)|-1\leq 2d-1,$$
	we can find $\mathcal{S}\subset [m]$, such that $[d]\cup \mathcal{J}\subset \mathcal{S}$ and $|\mathcal{S}|=2d-1$, then $\x\in \mathcal{W}_{\textbf{A}^\mathcal{S}}$. This displays the first statement of the Theorem.

	It remains to show the possibility of $\textbf{x}\notin \mathcal{W}_{\textbf{A}^\mathcal{S}}$ for all $ |\mathcal{S}|= 2d-2$.
	First we invoke Theorem \ref{T6}, it gives that
for some $ \textbf{A}\in \mathbb{C}^{(2d-1)\times d}$, $\mathcal{W}_{\bm{\mathrm{A}}}$ contains a generic point of $\mathbb{C}^d$. This obviously implies $\mathcal{W}_{\bm{\mathrm{A}}}$ has dense interior (under Euclidean topology). Theorem \ref{T6} also delivers that when $|\mathcal{S}|=2d-2$, $\mathcal{W}_{\textbf{A}^{\mathcal{S}}}$ is nowhere dense (under Euclidean topology), and hence $\cup_{|\mathcal{S}|=2d-2}\mathcal{W}_{\textbf{A}^{\mathcal{S}}}$ is nowhere dense (under Euclidean topology). Thus, it displays   $ \mathcal{W}_{\bm{\mathrm{A}}} \setminus \bigcup_{|\mathcal{S}|=2d-2}\mathcal{W}_{\textbf{A}^{\mathcal{S}}}\neq \varnothing ,
	$  which concludes the proof. 
	\end{proof}

Theorem \ref{T8} sheds some light on the structure of $\mathcal{W}_{\bm{\mathrm{A}}}$. Specifically, it implies the relation $ \mathcal{W}_{\bm{\mathrm{A}}}=\bigcup_{|\mathcal{S}|=2d-1}\mathcal{W}_{\bm{\mathrm{A}}^\mathcal{S}}$ that may be useful for future study. 
	 
\begin{rem}
	In essence, the mathematical part of phase-only reconstruction problem is solving a phase-only system $\sign(\bm{\mathrm{{Ax}}}) = \bm{\mathrm{{b}}}$. Interestingly, many of our results  are reminiscent of the properties of linear system $\bm{\mathrm{Ax}} = \bm{\mathrm{b}}$. Specifically, the measurement number $2d-1$ for $\sign(\bm{\mathrm{{Ax}}}) = \bm{\mathrm{{b}}}$  seems to act similarly to the measurement number $d$ for  $\bm{\mathrm{Ax}} = \bm{\mathrm{b}}$. For example, when $m< 2d-1$ $\sign(\bm{\mathrm{{Ax}}})=\sign(\bm{\mathrm{{Ax_0}}})$ has more than one solution   for almost all $\bm{\mathrm{x_0}}$\footnote{If $\sign(\textbf{Ay})=\sign(\textbf{Ax}_0)$, then we view $\{t\cdot \textbf{y}:t>0\}$ as one solution of  $\sign(\textbf{Ax})=\sign(\textbf{Ax}_0)$.}; When $m\geq 2d-1$, however,  $\sign(\bm{\mathrm{{Ax}}})=\sign(\bm{\mathrm{{Ax_0}}})$ has a unique solution for a generic $\bm{\mathrm{A}}$ and a generic $\bm{\mathrm{x_0}}$, see Theorem \ref{T6}. For linear system, when $m< d$, $\bm{\mathrm{Ax}} = \bm{\mathrm{Ax_0}}$ always has infinite solutions, whereas for a generic $\bm{\mathrm{A}}$ with $m\geq d$ (those of full column rank), $\bm{\mathrm{Ax}} = \bm{\mathrm{Ax_0}}$ has unique solution $\bm{\mathrm{x_0}}$ for all $\bm{\mathrm{x_0}}$. Moreover, Theorem \ref{T8} can be viewed as the counterpart of the fact that $\rank(\bm{\mathrm{A}}) = d$ implies $\rank(\bm{\mathrm{A}}^\mathcal{S}) = d$ for some $|\mathcal{S}|=d$. These similar properties are perhaps due to the linearized nature of the phase-only system (e.g., see (\ref{eq1})).
	\label{remark4} 
\end{rem}

	\section{Comparison with previous works}
	\label{compare}

	The main aim of this section is to compare our results with some related works.
	
	\subsection{Previous uniqueness conditions}
	Recall that two present Theorems \ref{T1}-\ref{T2} give two necessary and sufficient uniqueness conditions, which to the best of our  knowledge are the first uniqueness results applicable to general measurement matrix $\A$ and complex-valued $\textbf{x}$. However, there  have been some uniqueness criteria for the special case of recovering $\x\in\mathbb{R}^d$ from the Fourier phase \cite{hayes1980signal,hayes1982reconstruction,oppenheim1981importance,ma1991novel,porat1999signal}. 
	
	Here, we give
a brief review of previous uniqueness results.
We note  that these previous results only apply to real-valued signal $\textbf{x} $  and  
	the Fourier measurement matrix $\textbf{A}$, whose $j$-th row with frequency $\omega_j$ is given by \begin{equation}
	      \bm{\mathrm{\gamma_j^\top}} = [e^{-\textbf{i}\omega_j},\cdots,e^{-\textbf{i}d\omega_j}],
	      \label{6.1}
	  \end{equation}  
	  The earliest uniqueness criterion is the so-called minimum-phase or maximum-phase condition (see \cite{oppenheim1981importance,quatieri1981iterative} for instance), which is rather restrictive and hence not included here. Later, a set of more   relaxed conditions that can accommodate most signals was proposed in \cite{hayes1980signal}, and then extended to multi-dimensional signals in \cite{hayes1982reconstruction}. 
When $\textbf{x}\in \mathbb{R}^d$ is considered,   their uniqueness condition is given in Condition \ref{condition1} stated below. In \cite{ma1991novel}, Ma proposed a new condition (see Condition \ref{condition2}   below) that   hinges on the non-singularity of a signal matrix. 

\begin{con}
    \label{condition1}
   The signal $\bm{\mathrm{x}}= [x_k]\in\mathbb{R}^d$ with $x_1\neq 0$ has  a z-transform that   does not have any zero in reciprocal pair or on the unit circle.  
\end{con}
	\begin{con}
	    \label{condition2}
	    The signal $\bm{\mathrm{x}}= [x_k]\in\mathbb{R}^d$ with $x_1\neq 0$ satisfies $\rank\big(\mathcal{B}(\bm{\mathrm{x}})\big) = d-1$, or equivalently $\mathcal{B}(\bm{\mathrm{x}})$ is invertible, where    $\mathcal{B}(\bm{\mathrm{x}})$  is defined to be \begin{equation}
	        \label{6.3}
	        \mathcal{B}(\bm{\mathrm{x}}) = \begin{bmatrix}
	             x_1 & 0 & \cdots & 0 & 0\\
	             x_2 & x_1 & \cdots & 0 & 0 \\
	             \vdots & \vdots & \ddots & \vdots & \vdots \\
	             x_{d-2} & x_{d-3}& \cdots & x_1 & 0 \\
	             x_{d-1} & x_{d-2} & \cdots & x_2 & x_1
	        \end{bmatrix} - \begin{bmatrix}
	             x_{3~} & x_{4~} & \cdots & x_d & 0 \\
	             x_{4~} & x_{5~} & \cdots & 0 & 0 \\
	             \vdots & \vdots & \cdot^{\cdot^{ {\cdot}}} & \vdots & \vdots\\
	             x_{d~} & 0_{~} & \cdots & 0 & 0 \\
	             0_{~} & 0_{~} & \cdots & 0 & 0
	        \end{bmatrix}.
	    \end{equation} 
Note that	$\mathcal{B}(\bm{\mathrm{x}})$ is of 
Toeplitz-minus-Hankel form. 		
	\end{con}
		We consider real-valued signal $\x\in\mathbb{R}^d$. Given $\A\in\mathbb{C}^{m\times d}$,   the set of  recoverable signals can be given by
	  \begin{equation}
	      \label{6.2}
	      \mathcal{W}_{\textbf{A},\mathbb{R}}= \big\{\x\in\mathbb{R}^d:\sign(\textbf{Ay}) = \sign(\textbf{Ax}),~\textbf{y}\in\mathbb{R}^d~\mathrm{implies}~\textbf{y} = t\cdot \textbf{x}~\text{for some }t>0 \big\}.
	  \end{equation}
	  If  more than  $d-1$ Fourier phases are sampled with mutually different frequencies, and $\textbf{Ax}$   contains no zeros, then  Theorem 5 in \cite{hayes1980signal} guarantees that, $\textbf{x}$ satisfying Condition \ref{condition1} belongs to $ \mathcal{W}_{\textbf{A},\mathbb{R}}(\textbf{x})$. For $\textbf{x}$ with $x_1\neq 0$, the main result in \cite{ma1991novel} states that Condition \ref{condition2} is  necessary and sufficient for $\textbf{x} \in\mathcal{W}_{\textbf{A},\mathbb{R}}(\textbf{x})$.

	Obviously, Conditions \ref{condition1}-\ref{condition2} only apply to the Fourier measurement matrix $\textbf{A}$, and it is unclear whether they can be generalized to complex-valued signal $\x\in\mathbb{C}^d$. Although our Theorems \ref{T1}-\ref{T2} are not directly applicable to their real-valued signal setting (due to the additional priori $\x\in\mathbb{R}^d$),  one can readily establish the  conditions  for $\x\in \mathcal{W}_{\textbf{A},\mathbb{R}}$ by   techniques similar to those in Theorems \ref{T1}-\ref{T2}. 
	These conditions will be presented in   Theorems 
	\ref{realcasedisd}-\ref{realdise}, with 
	their proofs   deferred to  Appendix \ref{appen2}.

	We assume that $\rank\big([\Re(\textbf{A})^\top,\Im(\textbf{A})^\top]^\top\big) = d$, otherwise we would have $\bm{\mathrm{Ax_0}}=\bm{0}$ for some non-zero $\bm{\mathrm{x_0}}\in \mathbb{R}^d$, which leads to $\mathcal{W}_{\textbf{A},\mathbb{R}} = \varnothing$. Given $\textbf{A}$ and real-valued signal $\x$,  our first uniqueness condition involves  the discriminant matrix $\mathcal{D}_{\textbf{A},\mathbb{R}}(\textbf{x})$ defined as
	\begin{equation}
	    \label{6.4}
	    \mathcal{D}_{\textbf{A},\mathbb{R}}(\textbf{x}) = \begin{bmatrix}
	         \Re\big(\textbf{A}\big) & \Re\big(\mathrm{dg}(\textbf{Ax})\big)\\
	       \Im\big(\textbf{A}\big)  & \Im\big(\mathrm{dg}(\textbf{Ax})\big)
	    \end{bmatrix}\in \mathbb{R}^{2m \times (m+d)},
	\end{equation}
	\begin{theorem}
	    \label{realcasedisd}
	    For $\bm{\mathrm{A}}\in\mathbb{C}^{m\times d}$ we assume that $ [\Re\bm{\mathrm{(A)^\top}},\Im\bm{\mathrm{(A)^\top}}]^{\bm{\top}}$ has full column rank and that $\bm{\mathrm{x}}\in\mathbb{R}^d$ is non-zero. Then $\bm{\mathrm{x}}\in\mathcal{W}_{\bm{\mathrm{A}},\mathbb{R}}$ if and only if $\rank\big(\mathcal{D}_{\bm{\mathrm{A}},\mathbb{R}}(\bm{\mathrm{x}})\big)=d+|\mathrm{N}(\bm{\mathrm{Ax}})|-1$.
	\end{theorem}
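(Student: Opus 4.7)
The proof should mirror the strategy for Theorem \ref{T1}, with the simplification that we no longer need the $\varphi$ operator on $\textbf{y}$ (since $\textbf{y}$ is already real), but we do still need to separate real and imaginary parts of the matrix equation.

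First I would introduce a real analog of the subspace $\mathrm{V}_{\textbf{x}}$ from (\ref{subspace}): set
\begin{equation*}
\mathrm{V}_{\textbf{x},\mathbb{R}} = \{\bm{\lambda}\in\mathbb{R}^{|\mathrm{N}(\textbf{Ax})|} : \textbf{Ay}=[\mathrm{dg}(\sign(\textbf{Ax}))]_{\mathrm{N}(\textbf{Ax})}\bm{\lambda}\ \text{has a solution}\ \textbf{y}\in\mathbb{R}^d\}.
\end{equation*}
Since $|\textbf{Ax}|^{\mathrm{N}(\textbf{Ax})}\in\mathrm{V}_{\textbf{x},\mathbb{R}}$, we have $\dim(\mathrm{V}_{\textbf{x},\mathbb{R}})\geq 1$. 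Then I would prove the real analog of Lemma \ref{lemma1}: namely, under the assumption that $[\Re(\textbf{A})^\top,\Im(\textbf{A})^\top]^\top$ has full column rank (which replaces the role of $\rank(\textbf{A})=d$ in the complex case, ensuring $\textbf{Ay}=\textbf{Ax}$ forces $\textbf{y}=\textbf{x}$ for real $\textbf{y}$), we have $\textbf{x}\in\mathcal{W}_{\textbf{A},\mathbb{R}}$ iff $\dim(\mathrm{V}_{\textbf{x},\mathbb{R}})=1$. The argument is the same as for Lemma \ref{lemma1}: any other $\tilde{\textbf{x}}\in\mathbb{R}^d$ with $\sign(\textbf{A}\tilde{\textbf{x}})=\sign(\textbf{Ax})$ produces a vector $|\textbf{A}\tilde{\textbf{x}}|^{\mathrm{N}(\textbf{Ax})}\in\mathrm{V}_{\textbf{x},\mathbb{R}}$, and one-dimensionality of $\mathrm{V}_{\textbf{x},\mathbb{R}}$ then forces $\tilde{\textbf{x}}=t\textbf{x}$ for some $t>0$; conversely, if $\dim(\mathrm{V}_{\textbf{x},\mathbb{R}})>1$, a carefully chosen $\bm{\lambda}_1$ in $\mathrm{V}_{\textbf{x},\mathbb{R}}\cap\mathbb{R}_+^{|\mathrm{N}(\textbf{Ax})|}$ not proportional to $|\textbf{Ax}|^{\mathrm{N}(\textbf{Ax})}$ yields a real $\textbf{y}_1$ with $\sign(\textbf{Ay}_1)=\sign(\textbf{Ax})$ contradicting uniqueness.

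Next I would compute $\dim(\mathrm{V}_{\textbf{x},\mathbb{R}})$ explicitly. Because both $\textbf{y}$ and $\bm{\lambda}$ are already real, simply taking the real and imaginary parts of $\textbf{Ay}=[\mathrm{dg}(\sign(\textbf{Ax}))]_{\mathrm{N}(\textbf{Ax})}\bm{\lambda}$ gives the equivalent real system
\begin{equation*}
\begin{bmatrix}\Re(\textbf{A}) & -\Re\big([\mathrm{dg}(\sign(\textbf{Ax}))]_{\mathrm{N}(\textbf{Ax})}\big)\\ \Im(\textbf{A}) & -\Im\big([\mathrm{dg}(\sign(\textbf{Ax}))]_{\mathrm{N}(\textbf{Ax})}\big)\end{bmatrix}\begin{bmatrix}\textbf{y}\\ \bm{\lambda}\end{bmatrix}=\bm{0}.
\end{equation*}
Since $[\Re(\textbf{A})^\top,\Im(\textbf{A})^\top]^\top$ has rank $d$, any $\bm{\lambda}\in\mathrm{V}_{\textbf{x},\mathbb{R}}$ determines $\textbf{y}$ uniquely, so the kernel of this block matrix has dimension equal to $\dim(\mathrm{V}_{\textbf{x},\mathbb{R}})$, giving
\begin{equation*}
\dim(\mathrm{V}_{\textbf{x},\mathbb{R}})=d+|\mathrm{N}(\textbf{Ax})|-\rank\left(\begin{bmatrix}\Re(\textbf{A}) & -\Re\big([\mathrm{dg}(\sign(\textbf{Ax}))]_{\mathrm{N}(\textbf{Ax})}\big)\\ \Im(\textbf{A}) & -\Im\big([\mathrm{dg}(\sign(\textbf{Ax}))]_{\mathrm{N}(\textbf{Ax})}\big)\end{bmatrix}\right).
\end{equation*}

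The last ingredient is identifying this rank with $\rank(\mathcal{D}_{\textbf{A},\mathbb{R}}(\textbf{x}))$. On the support $\mathrm{N}(\textbf{Ax})$ one has the factorization $\mathrm{dg}(\textbf{Ax})^{\mathrm{N}(\textbf{Ax})}_{\mathrm{N}(\textbf{Ax})}=[\mathrm{dg}(\sign(\textbf{Ax}))]_{\mathrm{N}(\textbf{Ax})}^{\mathrm{N}(\textbf{Ax})}\cdot\mathrm{dg}(|\textbf{Ax}|^{\mathrm{N}(\textbf{Ax})})$, in which the right factor is an invertible positive diagonal, while columns of $\mathrm{dg}(\textbf{Ax})$ outside the support are zero. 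So deleting the zero columns of $\mathcal{D}_{\textbf{A},\mathbb{R}}(\textbf{x})$ and then scaling the remaining columns by positive reals does not change the rank, yielding the required equality. Combining the three pieces and setting $\dim(\mathrm{V}_{\textbf{x},\mathbb{R}})=1$ gives $\rank(\mathcal{D}_{\textbf{A},\mathbb{R}}(\textbf{x}))=d+|\mathrm{N}(\textbf{Ax})|-1$. The mild obstacle is purely bookkeeping: keeping track of real vs.\ complex dimensions and justifying that the full-column-rank hypothesis on $[\Re(\textbf{A})^\top,\Im(\textbf{A})^\top]^\top$ is exactly the condition needed both for a nontrivial $\mathcal{W}_{\textbf{A},\mathbb{R}}$ and for the $\textbf{y}$-part of the kernel to be trivial; everything else is a transcription of the Theorem \ref{T1} argument to the real setting.
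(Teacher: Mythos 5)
Your proposal is correct and follows essentially the same route as the paper's proof in Appendix \ref{appen2}: the paper likewise introduces $\mathrm{V}_{\bm{\mathrm{x}},\mathbb{R}}$, proves the real analog of Lemma \ref{lemma1} (its Lemma \ref{lemma10}), reduces to the same real block system, and equates the rank of its coefficient matrix with $\rank(\mathcal{D}_{\bm{\mathrm{A}},\mathbb{R}}(\bm{\mathrm{x}}))$. Your final paragraph actually spells out the column-deletion and positive-scaling justification that the paper leaves as a "simple observation," which is a welcome bit of extra care but not a deviation in method.
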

	
	Recall the entry-wise notation $\textbf{A} = [r_{jk}\cdot e^{\textbf{i}\theta_{jk}}]$ and the $j$-th row $\bm{\mathrm{\gamma_j^\top}}$. Similar to the idea of $\mathcal{E}_{\textbf{A}}(\textbf{x})$, a different discriminant matrix can be defined. For $j\in [m]$ such that $\bm{\mathrm{\gamma_j^\top x}}\neq 0$, we let $e^{\textbf{i}\delta_j} = \sign(\bm{\mathrm{\gamma_j^\top x}})$ and define \begin{equation}
    \label{6.5}
    	\Psi_{j,\mathbb{R}}(\x):=\begin{bmatrix}
		r_{j1}\cdot\sin(\theta_{j1} -\delta_j) & r_{j2}\cdot\sin(\theta_{j2} -\delta_j)&\cdots&r_{jd}\cdot\sin(\theta_{jd} -\delta_j)
	\end{bmatrix}.
\end{equation}
	If $\bm{\mathrm{\gamma_j^\top x}}= 0$, we define \begin{equation}
    \label{6.6}
    	\Psi_{j,\mathbb{R}}(\x):=\begin{bmatrix}
		r_{j1}\cdot\sin(\theta_{j1} )&r_{j2}\cdot\sin(\theta_{j2} )&\cdots &r_{jd}\cdot\sin(\theta_{jd})\\
		r_{j1}\cdot\cos(\theta_{j1})&r_{j2}\cdot\cos(\theta_{j2})&\cdots&r_{jd}\cdot\cos(\theta_{jd})
	\end{bmatrix} = \begin{bmatrix}
	     \Re\big(\bm{\mathrm{\gamma_j^\top}}\big)\\\Im\big(\bm{\mathrm{\gamma_j^\top}}\big)
	\end{bmatrix}.
\end{equation}
	Then we stack these matrices to obtain 
	\begin{equation}
	    \label{6.7}
	    \mathcal{E}_{\textbf{A},\mathbb{R}}(\textbf{x}) = \begin{bmatrix}
	         	\Psi_{ 1,\mathbb{R}}(\x)\\
		\vdots\\
		\Psi_{m,\mathbb{R}}(\x)
	    \end{bmatrix}.
	\end{equation}
	The following result characterizes $\x\in \mathcal{W}_{\textbf{A},\mathbb{R}}$   via $\rank \big(  \mathcal{E}_{\textbf{A},\mathbb{R}}(\textbf{x})\big)$.
	\begin{theorem}
	    \label{realdise}
	      For $\bm{\mathrm{A}}\in\mathbb{C}^{m\times d}$ we assume that $ [\Re\bm{\mathrm{ (A)^\top}},\Im\bm{\mathrm{ (A)^\top}}]^{\bm{\top}}$ has full column rank, $\bm{\mathrm{x}}\in\mathbb{R}^d$ is non-zero, and $\bm{\mathrm{\gamma_j^\top x}}\neq 0$ for some $j\in [m]$. Then $\bm{\mathrm{x}}\in\mathcal{W}_{\bm{\mathrm{A}},\mathbb{R}}$ if and only if $\rank \big(  \mathcal{E}_{\bm{\mathrm{A}},\mathbb{R}}(\bm{\mathrm{x}})\big) = d-1$.
	\end{theorem}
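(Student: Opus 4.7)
The plan is to mirror the proof of Theorem \ref{T2}, specialized to real signals and without the need for a canonical measurement matrix. First I would rewrite $\sign(\bm{\mathrm{Ay}}) = \sign(\bm{\mathrm{Ax}})$ (for $\bm{\mathrm{y}}\in\mathbb{R}^d$) as the combination of a linear system and a finite list of strict inequalities. For each $j$ with $\bm{\mathrm{\gamma_j^\top x}}\neq 0$, writing the measurement condition as $e^{-\ii\delta_j}\bm{\mathrm{\gamma_j^\top y}}>0$ and separating imaginary and real parts produces exactly $\Psi_{j,\mathbb{R}}(\bm{\mathrm{x}})\bm{\mathrm{y}}=\bm{0}$ (the single row in (\ref{6.5})) together with $\sum_{k} r_{jk}\cos(\theta_{jk}-\delta_j)y_k>0$. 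For each $j$ with $\bm{\mathrm{\gamma_j^\top x}}=0$, the requirement $\bm{\mathrm{\gamma_j^\top y}}=0$ produces the two-row block in (\ref{6.6}) with no accompanying inequality. Concatenating over $j\in[m]$ gives the equivalence
\begin{equation*}
\sign(\bm{\mathrm{Ay}}) = \sign(\bm{\mathrm{Ax}}) \iff \begin{cases}\mathcal{E}_{\bm{\mathrm{A}},\mathbb{R}}(\bm{\mathrm{x}})\bm{\mathrm{y}}=\bm{0}\\\displaystyle\sum_{k=1}^d r_{jk}\cos(\theta_{jk}-\delta_j)y_k>0,\ \forall j\text{ with }\bm{\mathrm{\gamma_j^\top x}}\neq 0\end{cases}.
\end{equation*}
Plugging $\bm{\mathrm{y}} = \bm{\mathrm{x}}$ forces $\mathcal{E}_{\bm{\mathrm{A}},\mathbb{R}}(\bm{\mathrm{x}})\bm{\mathrm{x}}=\bm{0}$, which combined with $\bm{\mathrm{x}}\neq\bm{0}$ yields the a priori bound $\rank(\mathcal{E}_{\bm{\mathrm{A}},\mathbb{R}}(\bm{\mathrm{x}}))\leq d-1$.

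For the "if" direction, I would suppose $\rank(\mathcal{E}_{\bm{\mathrm{A}},\mathbb{R}}(\bm{\mathrm{x}})) = d-1$, so that $\ker(\mathcal{E}_{\bm{\mathrm{A}},\mathbb{R}}(\bm{\mathrm{x}})) = \mathrm{span}_{\mathbb{R}}(\bm{\mathrm{x}})$. Any $\bm{\mathrm{y}}\in\mathbb{R}^d$ with $\sign(\bm{\mathrm{Ay}}) = \sign(\bm{\mathrm{Ax}})$ must then satisfy $\bm{\mathrm{y}} = t\cdot\bm{\mathrm{x}}$ for some $t\in\mathbb{R}$, and evaluating the strict inequality at a fixed $j_0$ with $\bm{\mathrm{\gamma_{j_0}^\top x}}\neq 0$ (which exists by hypothesis) reduces it to $t\cdot |\bm{\mathrm{\gamma_{j_0}^\top x}}|>0$, forcing $t>0$ and hence $\bm{\mathrm{x}}\in\mathcal{W}_{\bm{\mathrm{A}},\mathbb{R}}$.

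For the "only if" direction, I would assume $\bm{\mathrm{x}}\in\mathcal{W}_{\bm{\mathrm{A}},\mathbb{R}}$ but, for contradiction, that $\rank(\mathcal{E}_{\bm{\mathrm{A}},\mathbb{R}}(\bm{\mathrm{x}})) < d-1$. Then $\dim\ker(\mathcal{E}_{\bm{\mathrm{A}},\mathbb{R}}(\bm{\mathrm{x}}))\geq 2$, so I can pick a real $\bm{\lambda}$ in this kernel arbitrarily close to $\bm{\mathrm{x}}$ yet linearly independent of it. Setting $\bm{\mathrm{y}}=\bm{\lambda}$ automatically satisfies $\mathcal{E}_{\bm{\mathrm{A}},\mathbb{R}}(\bm{\mathrm{x}})\bm{\mathrm{y}}=\bm{0}$, and because the positivity inequalities hold strictly at $\bm{\mathrm{x}}$ and depend continuously on $\bm{\mathrm{y}}$, they persist at $\bm{\lambda}$ for small enough perturbations. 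Hence $\sign(\bm{\mathrm{A}}\bm{\lambda}) = \sign(\bm{\mathrm{Ax}})$, and the uniqueness hypothesis forces $\bm{\lambda} = t\bm{\mathrm{x}}$ for some $t>0$, contradicting the linear independence of $\bm{\lambda}$ and $\bm{\mathrm{x}}$.

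The main obstacle is bookkeeping rather than any conceptually new difficulty: one must handle together the two row-types of $\mathcal{E}_{\bm{\mathrm{A}},\mathbb{R}}(\bm{\mathrm{x}})$ (those from non-vanishing measurements, which carry an accompanying strict inequality, and those from vanishing measurements, which enforce an equality with no inequality) and verify that the inequalities are genuinely open conditions preserved under real perturbations in the kernel. The full-column-rank hypothesis on $[\Re(\bm{\mathrm{A}})^\top,\Im(\bm{\mathrm{A}})^\top]^\top$ is not invoked in the argument above, but it guarantees that $\mathcal{W}_{\bm{\mathrm{A}},\mathbb{R}}$ is not vacuous, and no new tool beyond Theorem \ref{T2}'s method is required.
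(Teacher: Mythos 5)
Your proposal is correct and follows essentially the same route as the paper's proof in Appendix \ref{appen2}: the same reformulation of $\sign(\bm{\mathrm{Ay}})=\sign(\bm{\mathrm{Ax}})$ into $\mathcal{E}_{\bm{\mathrm{A}},\mathbb{R}}(\bm{\mathrm{x}})\bm{\mathrm{y}}=\bm{0}$ plus strict inequalities, the same rank-$(d-1)$ argument for the "if" direction, and the same perturbation-within-the-kernel contradiction for the "only if" direction. Your extra observations (that the strict inequality at $j_0$ reduces to $t\cdot|\bm{\mathrm{\gamma_{j_0}^\top x}}|>0$, and that the full-column-rank hypothesis is not actually invoked in the argument) are accurate.
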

	
	With the discriminant matrices $\mathcal{D}_{\textbf{A},\mathbb{R}}(\textbf{x})$ and $\mathcal{E}_{\textbf{A},\mathbb{R}}(\textbf{x})$, one can further explore the minimal measurement number or other interesting properties, but here we simply focus on the comparison between our Theorems \ref{realcasedisd}-\ref{realdise} and the previous Conditions \ref{condition1}-\ref{condition2}. 
 Specifically, our uniqueness conditions can be specialized to Fourier measurement matrix and hence readily encompass the special case studied in \cite{hayes1980signal,ma1991novel}. 
	Compared with Condition \ref{condition1},   
		our results exhibit two significant advantages. Firstly, our results are not only sufficient but also necessary, whereas Condition \ref{condition1} is only sufficient and can  exclude some signals of interest. 
	Secondly, 
	solving a polynomial equation can take essentially more efforts than calculating the matrix rank, and so our uniqueness conditions are more practically appealing. 
	Compared with Condition \ref{condition2} in \cite{ma1991novel}, our uniqueness criteria merit the generality of $\textbf{A}$ and can exactly recover Condition \ref{condition2}. This can be done by specializing our Theorem \ref{realdise} to Fourier measurement matrix $\textbf{A}$, see the next Proposition. 
	
	\begin{pro}
	\label{recover}
	Assume  $\bm{\mathrm{A}}\in\mathbb{R}^{m\times d}~(m\geq d-1)$ are the Fourier measurement matrix with rows given by (\ref{6.1}), and $\omega_1,\cdots,\omega_{d-1}\in (0,\pi)$ are mutually different. For non-zero $\bm{\mathrm{x}}\in\mathbb{R}^d$ such that $\bm{\mathrm{Ax}}$   contains no zero entries, our uniqueness condition $\rank\big(\mathcal{E}_{\bm{\mathrm{A}},\mathbb{R}}(\bm{\mathrm{x}})\big) = d-1$ is equivalent to $\rank\big(\mathcal{B}_{f}(\bm{\mathrm{x}})\big) = d-1$, where 
	\begin{equation}
	    \label{6.8}
	    \mathcal{B}_{f}(\textbf{x}) = \begin{bmatrix}
	         0 & x_1 & x_2 &\cdots & x_{d-2} & x_{d-1} \\
	         0 & 0 & x_1 &\cdots & x_{d-3} & x_{d-2}\\
	         \vdots & \vdots & \vdots & \ddots & \vdots & \vdots \\
	         0 & 0 & 0 & \cdots & x_1 & x_2 \\
	         0 & 0 & 0 & \cdots & 0 & x_1
	    \end{bmatrix}-\begin{bmatrix}
	         x_2 & x_3 & \cdots & x_{d-1}& x_d & 0 \\
	         x_3 & x_4 & \cdots & x_d & 0 & 0 \\
	         \vdots & \vdots & \cdot^{\cdot^{ {\cdot}}} & \vdots & \vdots & \vdots \\
	         x_{d-2} & x_d&\cdots  & 0 & 0 & 0 \\
	         x_d & 0 &\cdots &0 & 0 & 0 
	    \end{bmatrix} .
	\end{equation}
	     Moreover, if $x_1 \neq 0$, this is   equivalent to $\rank\big(\mathcal{B}(\bm{\mathrm{x}})\big) = d-1$ (see  (\ref{6.3}) for the definition of $\mathcal{B}(\bm{\mathrm{x}})$), hence the uniqueness criterion in Condition \ref{condition2} is recovered.  
	\end{pro}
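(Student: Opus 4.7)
The plan is to prove the first equivalence by showing $\ker \mathcal{E}_{\bm{\mathrm{A}},\mathbb{R}}(\bm{\mathrm{x}}) = \ker \mathcal{B}_{f}(\bm{\mathrm{x}})$ as subspaces of $\mathbb{R}^d$; since $\bm{\mathrm{x}}$ lies in both kernels (the second by a direct pairing check), the two ranks are simultaneously equal to $d-1$ iff the respective kernels are $1$-dimensional. First I would rewrite each row of $\mathcal{E}_{\bm{\mathrm{A}},\mathbb{R}}(\bm{\mathrm{x}})\bm{\mathrm{y}} = 0$: row $j$ reads $\sum_k \sin(k\omega_j+\delta_j)\,y_k = 0$, equivalently $\Im\bigl(e^{-\ii\delta_j}\bm{\mathrm{\gamma_j^\top y}}\bigr)=0$. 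Using that $\bm{\mathrm{x}}$ is real so $\overline{\bm{\mathrm{\gamma_j^\top x}}} = \overline{\bm{\mathrm{\gamma_j^\top}}}\bm{\mathrm{x}}$, and substituting $e^{-\ii\delta_j} = \overline{\bm{\mathrm{\gamma_j^\top}}}\bm{\mathrm{x}}/|\bm{\mathrm{\gamma_j^\top x}}|$, the condition becomes $\Im\bigl(\overline{\bm{\mathrm{\gamma_j^\top}}}\bm{\mathrm{x}}\cdot \bm{\mathrm{\gamma_j^\top y}}\bigr)=0$. Expanding the resulting double sum and grouping by $\sin(n\omega_j)$ gives, for every $j\in[m]$,
$$\sum_{n=1}^{d-1}\sin(n\omega_j)\,z_n(\bm{\mathrm{x}},\bm{\mathrm{y}}) = 0,\qquad z_n(\bm{\mathrm{x}},\bm{\mathrm{y}}):=\sum_{k=1}^{d-n}\bigl(x_{k+n}y_k - x_k y_{k+n}\bigr).$$

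Next I would verify, by direct inspection of the two summands in (\ref{6.8}), that the $i$-th entry of $\mathcal{B}_{f}(\bm{\mathrm{x}})\bm{\mathrm{y}}$ equals $-z_i(\bm{\mathrm{x}},\bm{\mathrm{y}})$ for all $i\in[d-1]$. Hence the kernels of $\mathcal{E}_{\bm{\mathrm{A}},\mathbb{R}}(\bm{\mathrm{x}})$ and $\mathcal{B}_{f}(\bm{\mathrm{x}})$ coincide provided the coefficient matrix $V:=[\sin(n\omega_j)]_{j\in[m],\,n\in[d-1]}$ has rank $d-1$. To establish this I would invoke the Chebyshev identity $\sin(n\theta)=\sin(\theta)U_{n-1}(\cos\theta)$ on the top $(d-1)\times(d-1)$ block of $V$, factoring it as $\diag\bigl(\sin\omega_1,\dots,\sin\omega_{d-1}\bigr)\cdot W\cdot L$, where $W=[(\cos\omega_j)^{n-1}]$ is a Vandermonde matrix and $L$ is upper triangular with diagonal entries encoding the nonzero leading coefficients of the $U_{n-1}$. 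Since $\omega_j\in(0,\pi)$ are mutually distinct, $\cos\omega_j$ are distinct and $\sin\omega_j>0$, so every factor is invertible and the first equivalence follows.

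For the second equivalence, assume $x_1\neq 0$. Specializing the identity $\mathcal{B}_{f}(\bm{\mathrm{x}})\bm{\mathrm{y}} \cdot\text{(entry $i$)} = -z_i(\bm{\mathrm{x}},\bm{\mathrm{y}})$ at $\bm{\mathrm{y}}=\bm{\mathrm{x}}$ yields $\mathcal{B}_{f}(\bm{\mathrm{x}})\bm{\mathrm{x}}=\bm{0}$, and since the coefficient of the first column in this identity is $x_1\neq 0$, the first column of $\mathcal{B}_{f}(\bm{\mathrm{x}})$ is a linear combination of the remaining $d-1$ columns. Thus $\rank(\mathcal{B}_{f}(\bm{\mathrm{x}}))$ equals the rank of the submatrix $\widetilde{\mathcal{B}}_{f}(\bm{\mathrm{x}})$ formed by columns $2,\dots,d$. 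A bookkeeping comparison of (\ref{6.3}) and (\ref{6.8}) then shows $\widetilde{\mathcal{B}}_{f}(\bm{\mathrm{x}})=\mathcal{B}(\bm{\mathrm{x}})^{\bm{\top}}$: the upper-triangular part of $\widetilde{\mathcal{B}}_{f}$ with entries $x_{k-i+1}[k\geq i]$ is the transpose of the lower-triangular part of $\mathcal{B}$ with entries $x_{i-k+1}[i\geq k]$, while the Hankel subtrahend $-x_{k+i+1}[k+i+1\leq d]$ is symmetric in $(i,k)$. Hence $\rank(\mathcal{B}_{f}(\bm{\mathrm{x}}))=\rank(\mathcal{B}(\bm{\mathrm{x}}))$, completing the proof. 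The main obstacle I foresee is the rank lemma on $V$: although the Chebyshev factorization is clean, one must genuinely invoke both the lower bound $m\geq d-1$ and the restriction $\omega_j\in(0,\pi)$ (the latter to guarantee injectivity of $\cos$ and positivity of $\sin$ on the frequency range). The index bookkeeping in the second equivalence is combinatorially fussy but routine once the reindexing $k'=k-1$ is in place.
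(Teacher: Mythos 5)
Your proposal is correct and follows essentially the same route as the paper's proof: multiplying the $j$-th row condition by $|\bm{\mathrm{\gamma_j^\top x}}|$ and grouping by $\sin(n\omega_j)$ is exactly the paper's factorization $\widetilde{\mathcal{E}}=[\sin(n\omega_j)]\cdot\mathcal{B}_f(\bm{\mathrm{x}})$ (your $z_n$ are the entries of $-\mathcal{B}_f(\bm{\mathrm{x}})\bm{\mathrm{y}}$), and the column-deletion step for the second equivalence is likewise identical. You merely supply two details the paper leaves implicit — the Chebyshev/Vandermonde justification that $[\sin(n\omega_j)]$ has full column rank, and the explicit identification $(\mathcal{B}_f(\bm{\mathrm{x}}))_{[d]\setminus\{1\}}=\mathcal{B}(\bm{\mathrm{x}})^{\bm{\top}}$ — both of which check out.
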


	\begin{proof}
	We assume the Fourier measurement matrix $\textbf{A} = [e^{-\textbf{i}k\omega_j}]_{j\in [m],k\in [d]}$. Because $\textbf{Ax}$   contains no  zero entries, from (\ref{6.5}) and (\ref{6.7}), $\mathcal{E}_{\textbf{A},\mathbb{R}}(\textbf{x})\in\mathbb{R}^{m\times d}$ is given by $$ \mathcal{E}_{\textbf{A},\mathbb{R}}(\textbf{x}) = [\sin(-k\omega_j - \delta_j)] = [-\Im (e^{\textbf{i}(k\omega_j+\delta_j)}) ],~\mathrm{where}~e^{\textbf{i}\delta_j} = \sign\big(\sum_{l=1}^d e^{-\textbf{i}l\omega_j}x_l\big).$$
	Without changing the rank, we multiply the $j$-th row of $\mathcal{E}_{\textbf{A},\mathbb{R}}(\textbf{x}) $ by $-|\sum_{l=1}^d e^{-\textbf{i}l\omega_j}x_l|$, then the resulting matrix (with the same rank of $\mathcal{E}_{\textbf{A},\mathbb{R}}(\textbf{x})$) reads as  $[\sum_{l=1}^d \sin\big((k-l)\omega_j\big)x_l]_{j\in[m],k\in[d]}$, which is denoted by $\widetilde{\mathcal{E}}$ in this proof. Now we let $u = k-l\in [1-d,d-1]$, $x_k=0$ for $k\geq d+1$ or $k\leq 0$, then the $(j,k)$-th entry of $\widetilde{\mathcal{E}}$ is given by  
	$$\sum_{l=1}^d\sin\big((k-l)\omega_j\big)x_l = \sum_{u= 1-d}^{d-1} \sin(u\omega_j)x_{k-u} = \sum_{u=1}^{d-1}\sin (u\omega_j)\big(x_{k-u}-x_{k+u}\big).$$
	This delivers $\widetilde{\mathcal{E}} = [\sin(u\omega_j)]_{j\in [m], u\in [d-1]} \cdot  \mathcal{B}_f(\textbf{x})$ with $\mathcal{B}_f(\textbf{x})$ given in (\ref{6.8}). Note that when $m\geq d-1$, $\omega_1,\cdots , \omega_{d-1}\in (0,\pi)$ are mutually different, we have $\rank \big([\sin(u\omega_j)]\big) = d-1$, hence $\rank(\widetilde{\mathcal{E}}) = d-1$ if and only if $\rank \big(\mathcal{B}_f(\textbf{x})\big) = d-1$.

	To  show that this recovers Condition \ref{condition2} for $\x$ with $x_1\neq 0$, we first observe that $\rank\big((\mathcal{B}_f(\textbf{x}))_{[d]\setminus \{1\}}\big)= \rank \big(\mathcal{B}(\textbf{x})\big)$. Obviously,    $\rank \big(\mathcal{B}(\textbf{x})\big) = d-1$ in condition \ref{condition2} trivially leads to our $\rank \big(\mathcal{B}_f(\textbf{x})\big) = d-1$. On the other hand, it is not hard to verify   $\mathcal{B}_f(\textbf{x})\textbf{x}=0 $. So when $\rank \big(\mathcal{B}_f(\textbf{x})\big) = d-1$, if $x_1 \neq 0$, it must hold that $\rank\big((\mathcal{B}_f(\textbf{x}))_{[d]\setminus \{1\}}\big) = d-1$, which gives $\rank \big(\mathcal{B}(\textbf{x})\big) = d-1$. The proof is concluded. 
	\end{proof}
	
	
	\subsection{Phase-only compressed sensing}\label{pocs}
    
    A recent line of research is concerned with
    phase-only compressed sensing  where the goal is to recover a sparse $\x\in\mathbb{R}^d$ from $\sign(\textbf{Ax})$ for some $\textbf{A}\in\mathbb{C}^{m\times d}$ \cite{feuillen2020ell,boufounos2013sparse,jacques2021importance,chen2022uniform}. However, note that these works and our paper are not directly comparable. Technically, the key strategy of these works is to establish the restricted isotropy property (or its variants) via various  concentration inequalities, which is in sharp contrast to our non-probabilistic arguments. Secondly, these results are only valid for  $\textbf{A}$ whose entries are i.i.d. drawn from complex Gaussian distribution, whereas ours are for   general $\A$ and do not require randomness. More prominently, the theoretical results in \cite{boufounos2013sparse,feuillen2020ell} 
	do not provide exact reconstruction, and the exact recovery guarantee in \cite{jacques2021importance} (see their Theorem 3.1) is non-uniform and only for a fixed signal in $\mathbb{R}^d$\footnote{After the revision of this paper,
     we improved the non-uniform result for 
		real-valued $\textbf{x}$ in \cite{jacques2021importance} to a uniform guarantee for complex-valued $\textbf{x}$, see our subsequent work \cite{chen2022uniform}.}. In comparison, our main results are uniform and guarantee that $\mathcal{W}_{\textbf{A}}$   contains a generic signal or even all signals in $\mathbb{C}^d$. Of course,  the  strength of these works is  that, the signal structure like sparsity can be effectively incorporated into the recovery to reduce measurement number. Also, measurement noise is considered in \cite{jacques2021importance,chen2022uniform}. Admittedly, these aspects are beyond the range of our current theoretical results, and it would be interesting to consider whether our theoretical framework can be extended to phase-only compressed sensing.

    \subsection{Phase versus magnitude}
    
    We provide one more interesting comparison to close this section. Recall that our Theorem \ref{T6} states that a generic $\textbf{A}$ of $\mathbb{C}^{(2d-1)\times d}$ is almost everywhere magnitude retrievable. Nevertheless, it was shown in Theorem 3.5 of \cite{huang2021almost} that, a generic $\textbf{A}$ of $\mathbb{C}^{(2d-1)\times d}$ is not almost evewhere phase retrievable\footnote{More precisely, the set of $\x$ that can not be uniquely specified (up to a global phase factor) by $|\textbf{Ax}|$ has positive Lebesgue measure.}. For general linear measurement, this seems to   indicate the phase is (slightly) more informative than the magnitude.

	\section{Affine phase-only reconstruction}
	\label{sec6}
	
    In this section, we 
		study the reconstruction of $\textbf{x}$ from phases of the affine measurements $\sign(\textbf{Ax}+\textbf{b})$ with $\textbf{A}\in\mathbb{C}^{m\times d}$, $\textbf{b}\in\mathbb{C}^{m\times 1}$. For convenience, we call $[\textbf{A},\textbf{b}]\in \mathbb{C}^{m\times (d+1)}$ the measurement matrix and term this problem   affine phase-only reconstruction.

    There are several motivations for considering this extension. For example,   linear measurement becomes   affine measurement when some entries  $\textbf{x}$ are known  a priori. This extension is also motivated by related recovery problems. Specifically, some  recent works   began to study affine phase retrieval, e.g., regarding the minimal measurement number \cite{gao2018phase,huang2021phase}, Newton's method \cite{gao2022newton}. Another problem related to phase-only measurement is 1-bit compressed sensing, where the aim is to recover the sparse real-valued signal $\textbf{x} $ from $\sign(\textbf{Ax})$ with $\textbf{A}\in\mathbb{R}^{m\times d}$ (e.g., see \cite{boufounos20081,plan2012robust})\footnote{Because $\textbf{Ax}$ in 1-bit compressed sensing is real-valued, entries of $\sign(\textbf{Ax})$ are binary ($1$ or $-1$), unlike the  phase-only measurement herein.}. To overcome some limitations in 1-bit compressed sensing, it is fruitful to introduce   $\textbf{b}\in\mathbb{R}^d$ and study the recovery of $\textbf{x}$ from $\sign(\textbf{Ax}+\textbf{b})$ ($\textbf{b}$ is called dither or dithering noise in related papers), for instance, extension of Gaussian sensing vectors to sub-Gaussian or even heavy-tailed ones \cite{dirksen2021non,chen2022high}, faster convergence rate \cite{baraniuk2017exponential}.

    In the affine case, an essential difference is that   the trivial ambiguity can be removed. Thus, the set of   signals that can be reconstructed 
 should be accordingly defined as
    \begin{equation}
        \label{7.1}
        \mathcal{W}_{\textbf{A,b}}:=\{\textbf{x}\in \mathbb{C}^d:\sign(\textbf{Ay}+\textbf{b})=\sign(\textbf{Ax}+\textbf{b})\ \mathrm{implies} \ \textbf{y}=\textbf{x}\}.
    \end{equation}
	We will establish the uniqueness criteria for affine phase-only reconstruction, which we then use to study  the problem of minimal measurement number. Except for Theorem \ref{T13}, the implications and proofs of other results are parallel to the corresponding ones for phase-only reconstruction. Thus, we relegate these proofs to Appendix \ref{appen3}. 
	
		\subsection{Discriminant matrices}
	We begin with some simple facts. If $\rank(\A)<d$, then there exists nonzero $\bm{\mathrm{y_0}}\in \ker(\A)$, thus implying $\mathcal{W}_{\textbf{A,b}}=\varnothing$ due to  $\sign(\textbf{Ax}+\textbf{b})=\sign(\bm{\mathrm{A(x+y_0)}}+\textbf{b})$. Moreover, if $\rank(\textbf{A})=d$ while $\textbf{b}=\bm{\mathrm{Ax_0}}\in \textbf{A}\mathbb{C}^d:=\{\textbf{Ay}:\textbf{y}\in\mathbb{C}^d\}$, 
	for any $\textbf{x}\neq -\bm{\mathrm{x_0}}$ we have $$\sign(\bm{\mathrm{A(2x+x_0)+b}})=\sign(\bm{\mathrm{A(2x+2x_0}}))=\sign(\Ax+\bm{\mathrm{Ax_0}})=\sign(\Ax+\textbf{b}).$$ Note that $2\textbf{x}+\bm{\mathrm{x_0}}\neq \textbf{x}$, hence $\textbf{x}\notin \mathcal{W}_{\bm{\mathrm{A,b}}}$. So in this case, $\mathcal{W}_{\textbf{A,b}}=\{-\bm{\mathrm{x_0}}\}$. Thus, we always assume $\rank(\textbf{A})=d,~\textbf{b}\notin \textbf{A}\mathbb{C}^d$ in this section, and these two assumptions may not be explicitly mentioned in the following.

	Evidently, exchanging rows of $[\textbf{A},\textbf{b}]$ cannot change $\mathcal{W}_{\textbf{A,b}}$. For any invertible $\textbf{P}\in \mathbb{C}^{d\times d}$, $\hat{\textbf{x}}\in \mathbb{C}^d$, it is  not difficult to verify that $\x\in \mathcal{W}_{\textbf{A,b}}$ if and only if $\bm{\mathrm{P^{-1}x}}+\hat{\textbf{x}}\in \mathcal{W}_{\textbf{AP},\textbf{b}-\bm{\mathrm{AP\hat{x}}}}$. This gives $\mathcal{W}_{\textbf{A,b}}=\textbf{P}\mathcal{W}_{\bm{\mathrm{AP,b-AP\hat{x}}}}-\hat{\textbf{x}}$, i.e., $\mathcal{W}_{\textbf{A,b}}$ and $\mathcal{W}_{\bm{\mathrm{AP,b-AP\hat{x}}}}$ only differ by an invertible affine transformation. 

	The facts above enable a canonical form of $[\textbf{A},\textbf{b}]$ that may be used without losing generality.  Specifically,   we can ensure $\rank(\textbf{A}^{[d]})=d$ by exchanging rows. Note that
	\begin{equation}
	    \label{7.2}
	    \Big[\textbf{A} \big(\textbf{A}^{[d]}\big)^{-1},\textbf{b}-\textbf{A} \big(\textbf{A}^{[d]}\big)^{-1} \textbf{b}^{[d]}\Big] = \begin{bmatrix}
	     \bm{\mathrm{I_d}} & \bm{0} \\ \bm{\mathrm{A_1}} & \bm{\mathrm{b_1}}
	\end{bmatrix}  
	\end{equation}
	for some $\bm{\mathrm{A_1}}\in \mathbb{C}^{(m-d)\times d}$, $\bm{\mathrm{b_1}}\in \mathbb{C}^{(m-d)\times 1}$. In subsequent development, the right-hand side of (\ref{7.2}) is referred to as the canonical measurement matrix.

	Given $[\textbf{A},\textbf{b}]\in \mathbb{C}^{m\times (d+1)}$, $\textbf{x}\in \mathbb{C}^d$, we define
	\begin{equation}
		\begin{aligned}
				\mathcal{D}_{\textbf{A,b}}(\x) =\begin{bmatrix}
					\Re(\A) & \Im(\A) & \Re(\mathrm{dg}(\textbf{Ax+b}))\\
					-\Im(\A) & \Re(\A) & -\Im(\mathrm{dg}(\textbf{Ax+b}))
				\end{bmatrix}.
				\label{7.3}
		\end{aligned}
	\end{equation}

	\begin{theorem}
		$\bm{\mathrm{x}}\in \mathcal{W}_{\bm{\mathrm{A,b}}}$ if and only if $\rank(\mathcal{D}_{\bm{\mathrm{A,b}}}(\bm{\mathrm{x}}))=2d+|\mathrm{N}(\bm{\mathrm{Ax}}+\bm{\mathrm{b}})|$.
		\label{T9}
	\end{theorem}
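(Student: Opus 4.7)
The plan is to mirror the proof of Theorem \ref{T1}, with the key conceptual change that once a nonzero affine shift $\bm{\mathrm{b}}\notin \textbf{A}\mathbb{C}^d$ is present the trivial positive-scaling ambiguity disappears; this is exactly why the target rank rises from $2d+|\mathrm{N}(\Ax)|-1$ (in Theorem \ref{T1}) to $2d+|\mathrm{N}(\Ax+\bm{\mathrm{b}})|$ here, i.e.\ the ``$-1$'' correction vanishes. Writing $\mathrm{U}_{\x}:=\mathrm{dg}(\sign(\Ax+\bm{\mathrm{b}}))$ and $\mathcal{N}:=\mathrm{N}(\Ax+\bm{\mathrm{b}})$, I would first note that $\sign(\Ay+\bm{\mathrm{b}})=\sign(\Ax+\bm{\mathrm{b}})$ is equivalent to the existence of $\bm{\lambda}\in\mathbb{R}_{+}^{|\mathcal{N}|}$ with $\textbf{A}\y+\bm{\mathrm{b}}=(\mathrm{U}_{\x})_{\mathcal{N}}\bm{\lambda}$. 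Applying $\varphi_1$ and using $\varphi_1(\textbf{A}\textbf{B})=\varphi(\textbf{A})\varphi_1(\textbf{B})$ together with $\varphi_1(\bm{\lambda})=[\bm{\lambda}^{\top},\bm{0}^{\top}]^{\top}$ transforms this into the real affine system
\begin{equation*}
[\varphi(\textbf{A})\ -\varphi_1((\mathrm{U}_{\x})_{\mathcal{N}})]\begin{bmatrix}\varphi_1(\y)\\ \bm{\lambda}\end{bmatrix}=-\varphi_1(\bm{\mathrm{b}}),
\end{equation*}
which always admits the particular solution $(\varphi_1(\x),|\Ax+\bm{\mathrm{b}}|^{\mathcal{N}})$ whose $\bm{\lambda}$-component is strictly positive.

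Next I would characterize $\x\in\mathcal{W}_{\bm{\mathrm{A,b}}}$ through the kernel of the coefficient matrix $\textbf{M}:=[\varphi(\textbf{A})\ -\varphi_1((\mathrm{U}_{\x})_{\mathcal{N}})]$. Since the particular $\bm{\lambda}$-part is strictly positive, any element of $\ker(\textbf{M})$ can be multiplied by a sufficiently small $t$ and added to the particular solution while preserving positivity; if its $\varphi_1(\y)$-projection is nonzero, the resulting $\y\neq \x$ satisfies $\sign(\Ay+\bm{\mathrm{b}})=\sign(\Ax+\bm{\mathrm{b}})$, contradicting $\x\in\mathcal{W}_{\bm{\mathrm{A,b}}}$. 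Conversely, $\ker(\textbf{M})=\{\bm{0}\}$ forces $\y=\x$ by uniqueness of the affine solution. Furthermore, $(\mathrm{U}_{\x})_{\mathcal{N}}$ has full column rank $|\mathcal{N}|$, hence so does $\varphi_1((\mathrm{U}_{\x})_{\mathcal{N}})$, which means the $\bm{\lambda}$-part of any kernel element is determined by its $\varphi_1(\y)$-part; therefore the $\varphi_1(\y)$-projection of $\ker(\textbf{M})$ is trivial iff $\ker(\textbf{M})=\{\bm{0}\}$. This yields the equivalence $\x\in\mathcal{W}_{\bm{\mathrm{A,b}}}\Longleftrightarrow \rank(\textbf{M})=2d+|\mathcal{N}|$.

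Finally, I would identify $\rank(\textbf{M})$ with $\rank(\mathcal{D}_{\bm{\mathrm{A,b}}}(\x))$ by inspecting the third column block $\varphi_1(\mathrm{dg}(\Ax+\bm{\mathrm{b}}))$ of $\mathcal{D}_{\bm{\mathrm{A,b}}}(\x)$: its columns indexed outside $\mathcal{N}$ vanish, while its $k$-th column for $k\in \mathcal{N}$ equals the positive scalar $|(\Ax+\bm{\mathrm{b}})_{k}|$ times the corresponding column of $\varphi_1((\mathrm{U}_{\x})_{\mathcal{N}})$. Hence the column spans coincide and $\rank(\mathcal{D}_{\bm{\mathrm{A,b}}}(\x))=\rank(\textbf{M})$, so the claimed rank condition $\rank(\mathcal{D}_{\bm{\mathrm{A,b}}}(\x))=2d+|\mathrm{N}(\Ax+\bm{\mathrm{b}})|$ follows. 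The main subtlety I anticipate is the careful treatment of the positivity constraint $\bm{\lambda}>\bm{0}$ in extracting an equivalent uniqueness statement for the affine system; the perturbation argument above bypasses the auxiliary subspace $\mathrm{V}_{\x}$ used in Lemma \ref{lemma1}, and the absence of a scaling mode in the affine setting is precisely what removes the ``$-1$'' from the rank formula.
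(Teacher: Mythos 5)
Your proof is correct and follows essentially the same route as the paper: reduce the sign constraint to the real affine system with coefficient matrix $[\varphi(\textbf{A})\ -\varphi_1((\mathrm{U}_{\x})_{\mathrm{N}(\textbf{Ax}+\textbf{b})})]$, show uniqueness of its solution is equivalent to $\x\in\mathcal{W}_{\textbf{A},\textbf{b}}$, and identify that matrix's rank with $\rank(\mathcal{D}_{\textbf{A},\textbf{b}}(\x))$. The only cosmetic difference is that the paper packages the positivity/perturbation step into Lemma \ref{lemma6} via the affine submanifold $\mathrm{V}_{\x}$ of feasible $\bm{\lambda}$, while you argue directly on $\ker(\textbf{M})$; your observation that the full column rank of $\varphi_1((\mathrm{U}_{\x})_{\mathcal{N}})$ ties the $\bm{\lambda}$-part to the $\y$-part correctly closes the case of kernel elements with vanishing $\y$-projection.
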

	

	Consider a canonical $[\textbf{A},\textbf{b}]$ as the right-hand side of (\ref{7.2}), we denote its $(j,k)$-th entry  by $r_{jk}e^{\textbf{i}\theta_{jk}}$ with $r_{jk}\geq 0$. The signal can also be written in a polar form $\textbf{x}=[|x_k|e^{\textbf{i}\alpha_k}]_{ k\in[ d]}$. The first $d$ measurements of the canonical measurement matrix give $\sign(\textbf{x})$. Let $\bm{\mathrm{\gamma_j^\top}},b_j$ be the $j$-th row of $\textbf{A}$, $j$-th entry of $\textbf{b}$ respectively. For $j\in [m]\setminus [d]$ such that $\bm{\mathrm{\gamma_j^{\top}x}}+b_j \neq 0$, we  assume $\sign(\bm{\mathrm{\gamma_j^{\top}x}}+b_j)=e^{\textbf{i}\delta_j}$ and define 
	\begin{equation}
	    \label{7.4}
	    \Psi_j^{'}(\x)=\begin{bmatrix}
		r_{j1}\cdot\sin(\theta_{j1}+\alpha_1-\delta_j) & r_{j2}\cdot\sin(\theta_{j2}+\alpha_2-\delta_j)&\cdots&r_{jd}\cdot\sin(\theta_{jd}+\alpha_d-\delta_j)
	\end{bmatrix}.
	\end{equation}
	While when $\bm{\mathrm{\gamma_j^\top}}x+b_j=0$, we let 
	\begin{equation}
	    \label{7.5}
	    \Psi^{'}_j(\x)=\begin{bmatrix}
		r_{j1}\cdot\sin(\theta_{j1}+\alpha_1)&r_{j2}\cdot\sin(\theta_{j2}+\alpha_2)&\cdots&r_{jd}\cdot\sin(\theta_{jd}+\alpha_d)\\
		r_{j1}\cdot\cos(\theta_{j1}+\alpha_1)&r_{j2}\cdot\cos(\theta_{j2}+\alpha_2)&\cdots&r_{jd}\cdot\cos(\theta_{jd}+\alpha_d)
	\end{bmatrix}.
	\end{equation}
	By stacking these blocks, we define another discriminant matrix to be 
	\begin{equation}
	    \label{7.6}
	    \mathcal{E}_{\textbf{A,b}}^0(\x)=\begin{bmatrix}
		\Psi^{'}_{d+1}(\x)\\
		\vdots\\
		\Psi^{'}_m(\x)
	\end{bmatrix};\mathcal{E}_{\textbf{A,b}}(\x)=\big(\mathcal{E}_{\textbf{A}}^0(\x)\big)_{\mathrm{N}(\x)}.
	\end{equation}
	
	\begin{theorem}
		Consider  a canonical measurement matrix $[\bm{\mathrm{A}},\bm{\mathrm{b}}]$ (see the right-hand side of (\ref{7.2})), then $\bm{\mathrm{x}}\in \mathcal{W}_{\bm{\mathrm{A,b}}} $ if and only if $\rank(\mathcal{E}_{\bm{\mathrm{A,b}}}(\bm{\mathrm{x}}))=|\mathrm{N}(\bm{\mathrm{x}})|$.
		\label{T10}
	\end{theorem}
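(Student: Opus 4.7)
\vspace{2mm}
\noindent\textbf{Proof plan for Theorem \ref{T10}.} The plan mirrors the strategy of Theorem \ref{T2}, with the essential difference that the positive-scaling ambiguity is absent in the affine setting, which is exactly why the rank condition reads $|\mathrm{N}(\textbf{x})|$ rather than $|\mathrm{N}(\textbf{x})|-1$. I would first exploit the canonical form: the first $d$ measurements yield $\sign(\textbf{y})=\sign(\textbf{x})$, so any $\textbf{y}$ with $\sign(\textbf{Ay}+\textbf{b})=\sign(\textbf{Ax}+\textbf{b})$ must satisfy $\mathrm{N}(\textbf{y})=\mathrm{N}(\textbf{x})$ and the phases $\alpha_k$ of the nonzero entries agree with those of $\textbf{x}$. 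Consequently $\mathcal{E}_{\textbf{A,b}}(\textbf{y})=\mathcal{E}_{\textbf{A,b}}(\textbf{x})$, which allows me to translate everything into a linear system on the real vector $|\textbf{y}|^{\mathrm{N}(\textbf{x})}\in\mathbb{R}_+^{|\mathrm{N}(\textbf{x})|}$.

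Next, I would reformulate the remaining $m-d$ affine phase constraints exactly as in the proof of Theorem \ref{T2}, only with an extra inhomogeneous term coming from $b_j$. For $j\in[m]\setminus[d]$ with $\bm{\mathrm{\gamma_j^\top x}}+b_j\neq 0$, writing the condition $(\bm{\mathrm{\gamma_j^\top y}}+b_j)/e^{\textbf{i}\delta_j}\in\mathbb{R}_+$ and taking the imaginary part produces a single scalar equation $\Psi'_j(\textbf{x})|\textbf{y}|^{\mathrm{N}(\textbf{x})}=-\Im(b_je^{-\textbf{i}\delta_j})$, while the real part gives a positivity inequality. For $j$ with $\bm{\mathrm{\gamma_j^\top x}}+b_j=0$ the condition $\bm{\mathrm{\gamma_j^\top y}}=-b_j$ splits into two rows of the same shape. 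Collecting these yields a single affine system $\mathcal{E}_{\textbf{A,b}}(\textbf{x})\,|\textbf{y}|^{\mathrm{N}(\textbf{x})}=\textbf{v}$ together with a finite collection of strict positivity constraints, and the particular value $\textbf{y}=\textbf{x}$ is automatically a solution.

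For the ``if'' direction, assume $\rank(\mathcal{E}_{\textbf{A,b}}(\textbf{x}))=|\mathrm{N}(\textbf{x})|$, so the matrix has trivial kernel. Subtracting the identity at $\textbf{y}=\textbf{x}$ from the identity at a hypothetical other preimage $\textbf{y}$ gives $\mathcal{E}_{\textbf{A,b}}(\textbf{x})(|\textbf{y}|^{\mathrm{N}(\textbf{x})}-|\textbf{x}|^{\mathrm{N}(\textbf{x})})=\bm{0}$, hence $|\textbf{y}|^{\mathrm{N}(\textbf{x})}=|\textbf{x}|^{\mathrm{N}(\textbf{x})}$, and combined with $\sign(\textbf{y})=\sign(\textbf{x})$ this forces $\textbf{y}=\textbf{x}$, establishing $\textbf{x}\in\mathcal{W}_{\textbf{A,b}}$.

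For the ``only if'' direction, suppose $\rank(\mathcal{E}_{\textbf{A,b}}(\textbf{x}))<|\mathrm{N}(\textbf{x})|$. Pick a nonzero $\bm{\lambda}\in\ker(\mathcal{E}_{\textbf{A,b}}(\textbf{x}))$ and consider $|\textbf{x}|^{\mathrm{N}(\textbf{x})}+t\bm{\lambda}$ for small $t\neq 0$. Use this to define $\textbf{y}\neq\textbf{x}$ with $\sign(\textbf{y})=\sign(\textbf{x})$. The point to check—this is the main obstacle—is that the strict positivity constraints (real parts remaining positive at the indices $j$ where $\bm{\mathrm{\gamma_j^\top x}}+b_j\neq 0$) continue to hold after the perturbation; this follows because these are open conditions, so choosing $|t|$ small enough preserves them while the linear equalities are preserved exactly by the choice of $\bm{\lambda}$. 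Hence $\sign(\textbf{Ay}+\textbf{b})=\sign(\textbf{Ax}+\textbf{b})$ with $\textbf{y}\neq\textbf{x}$, contradicting $\textbf{x}\in\mathcal{W}_{\textbf{A,b}}$.
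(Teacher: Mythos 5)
Your proposal is correct and follows essentially the same route as the paper's proof in Appendix C: the same reformulation of $\sign(\textbf{Ay}+\textbf{b})=\sign(\textbf{Ax}+\textbf{b})$ into an inhomogeneous linear system $\mathcal{E}_{\textbf{A,b}}(\textbf{x})\,|\textbf{y}|^{\mathrm{N}(\textbf{x})}=-b(\textbf{x})$ plus strict positivity constraints, the same uniqueness argument via full column rank in the "if" direction, and the same small-perturbation construction along a kernel vector in the "only if" direction.
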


	\subsection{Reconstruction of  all signals}
	
The measurement matrix $[\textbf{A},\textbf{b}]$ is said to be {\it affine magnitude retrievable} if $\mathcal{W}_{\textbf{A,b}}=\mathbb{C}^d$. The minimal measurement number for this property 
	\begin{equation}
	\label{87add}
	    \bm{\mathrm{m_{all}^{'}}}(d)=\min\{m\in \mathbb{N}_+:\text{Some } [\textbf{A},\textbf{b}]\in \mathbb{C}^{m\times (d+1)} \text{ is affine magnitude retrievable}\},
	\end{equation}
	is the focus of this subsection.

	As done in in Theorem \ref{T5}, we can establish the upper bound $4d+1$ by identifying the measurement matrices that are not affine magnitude retrievable. However, instead of following this approach, we can directly construct an affine magnitude retrievable $[\textbf{A},\textbf{b}]$. 	It is interesting to note that $3d$ measurements can also recover all signals in $\mathbb{C}^d$ in affine phase retrievable, see \cite{gao2018phase}. 
	
	\begin{theorem}
		For $d\in \mathbb{N}_+$, $\bm{\mathrm{m_{all}^{'}}}(d)\leq 3d$.
		\label{T13}
	\end{theorem}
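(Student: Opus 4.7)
The plan is to exhibit a concrete measurement matrix with exactly $3d$ rows that is affine magnitude retrievable, by allotting three affine measurements per coordinate so that the reconstruction decouples across components. Concretely, I would take
\[
\textbf{A}=\begin{bmatrix}\bm{\mathrm{I}_d}\\ \bm{\mathrm{I}_d}\\ \bm{\mathrm{I}_d}\end{bmatrix}\in\mathbb{C}^{3d\times d},\qquad \textbf{b}=\begin{bmatrix}\bm{1}_{d\times 1}\\ \omega\cdot\bm{1}_{d\times 1}\\ \omega^{2}\cdot\bm{1}_{d\times 1}\end{bmatrix}\in\mathbb{C}^{3d\times 1},
\]
where $\omega=e^{2\pi\textbf{i}/3}$. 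A quick check confirms that $\textbf{A}$ has full column rank and $\textbf{b}\notin\textbf{A}\mathbb{C}^{d}$ (otherwise some $\textbf{x}$ would simultaneously equal $\bm{1}$, $\omega\bm{1}$, and $\omega^{2}\bm{1}$), so the standing assumptions of Section \ref{sec6} are met. With this choice the entries of $\textbf{Ax}+\textbf{b}$ are exactly the scalars $\{x_{k}+c:k\in[d],\ c\in\{1,\omega,\omega^{2}\}\}$, and the recovery problem reduces to $d$ independent scalar problems.

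The crux is then a short scalar lemma: for each $x\in\mathbb{C}$, the triple $\bigl(\sign(x+1),\sign(x+\omega),\sign(x+\omega^{2})\bigr)$ uniquely determines $x$. I would argue as follows. Suppose $y\in\mathbb{C}$ gives the same three phases. If $\sign(x+c)=0$ for some $c\in\{1,\omega,\omega^{2}\}$, then $x=-c$ and the matching $\sign(y+c)=0$ forces $y=-c=x$. Otherwise, for each such $c$ there exists $t_{c}>0$ with $y+c=t_{c}(x+c)$; geometrically, $-c$, $x$, and $y$ are collinear and $y$ lies on the ray from $-c$ through $x$. If $y\neq x$, this collinearity would hold for every $c\in\{1,\omega,\omega^{2}\}$, placing all three points $-1,-\omega,-\omega^{2}$ on the single line through $x$ and $y$ — impossible, because these are the vertices of an equilateral triangle. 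Hence $y=x$.

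Applying the scalar uniqueness coordinate-wise, any $\textbf{y}\in\mathbb{C}^{d}$ with $\sign(\textbf{Ay}+\textbf{b})=\sign(\textbf{Ax}+\textbf{b})$ must satisfy $y_{k}=x_{k}$ for every $k\in[d]$, giving $\textbf{y}=\textbf{x}$ and hence $\mathcal{W}_{\textbf{A},\textbf{b}}=\mathbb{C}^{d}$. The definition (\ref{87add}) then yields $\bm{\mathrm{m_{all}^{'}}}(d)\leq 3d$. There is no real technical obstacle here: the only delicate point is the zero-phase case, dealt with by a one-line case split; and the construction does not invoke the discriminant-matrix machinery of Theorems \ref{T9}--\ref{T10} at all, reflecting that once the positive-scaling ambiguity is removed by the affine shift, a direct per-coordinate argument becomes possible. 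The same proof would go through with any triple of shifts $\{c_{1},c_{2},c_{3}\}$ whose negatives are non-collinear; the cube roots of unity are simply the most symmetric choice.
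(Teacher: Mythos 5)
Your proof is correct, and it follows the same high-level strategy as the paper's: build a concrete $3d$-row measurement matrix by spending three affine scalar measurements on each coordinate, so that the problem decouples and reduces to showing that three shifted phases determine a complex scalar. Where you differ is in the choice of shifts and, more substantively, in how the scalar lemma is proved. The paper uses the shifts $\{0,1,\mathbf{i}\}$ (i.e.\ measurements $\sign(x_k)$, $\sign(x_k+1)$, $\sign(x_k+\mathbf{i})$) and argues algebraically: splitting into the cases $a\notin\mathbb{R}$ and $a\notin\mathbf{i}\mathbb{R}$, it extracts $|a|$ from a linear equation obtained by taking the imaginary part of a positivity condition, much as in Proposition \ref{pro1}. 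You instead use $\{1,\omega,\omega^2\}$ and a geometric argument: matching nonzero phases at shift $c$ forces $-c$, $x$, $y$ to be collinear, so $x\neq y$ would place all three shift points on a single line, contradicting their non-collinearity. Your version is cleaner in two respects: it avoids the case split over whether $a$ is real, and it isolates the actual mechanism at work — any three shifts whose negatives are non-collinear suffice — which also explains \emph{why} the paper's choice $\{0,1,\mathbf{i}\}$ works. The zero-phase case is handled correctly by your one-line split, and your verification that $\rank(\textbf{A})=d$ and $\textbf{b}\notin\textbf{A}\mathbb{C}^d$ is a harmless sanity check (the direct argument already gives $\mathcal{W}_{\textbf{A},\textbf{b}}=\mathbb{C}^d$). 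No gaps.
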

	
\begin{proof}
	Let $\textbf{x} = [x_k]$, we consider $[\textbf{A},\textbf{b}]$ that gives $3d$ measurements $\big\{\sign(x_k),\sign(x_k+1),\sign(x_k+ \textbf{i}):k\in [d]\big\}$. It suffices to show   any $a\in\mathbb{C}$ can be determined by $\sign(a)$, $\sign(a+1)$ and $\sign(a+\textbf{i})$. For $a=0,-1,-\textbf{i}$, the result holds trivially, so we can only discuss the following two cases.
	
	\noindent{\it Case 1.} If $a\notin \mathbb{R}$, then by $a+1 = |a|\sign(a)+1$, we obtain $\big(|a|\sign(a)+1\big)/\big(\sign(a+1)\big)>0$. By taking the imaginary part it gives \begin{equation}
	\label{7.7}
	    \Im \Big(\frac{\sign(a)}{\sign(a+1)}\Big)\cdot |a| + \Im \Big(\frac{1}{\sign(a+1)}\Big)=0.
	\end{equation}
    If $\sign(a)/\sign(a+1) \in \mathbb{R}$, then $\frac{a}{a+1}\in \mathbb{R}$, which can lead to $a\in \mathbb{R}$, which is contradictory to our initial assumption. Hence, the coefficient of $|a|$ in (\ref{7.7}) is non-zero, and $|a|$ can be obtained via (\ref{7.7}). Combining with $\sign(a)$, we obtain $a$.

    \noindent{\it Case 2.} If $a\notin \textbf{i}\mathbb{R} = \{\textbf{i}u:u\in\mathbb{R}\}$, then can similarly obtain $|a|$ by using $\Im \big(\frac{|a|\sign(a)+\textbf{i}}{\sign(a+\textbf{i})}\big)=0$, the proof is hence concluded. 	
	\end{proof}
	
	By using $\mathcal{E}_{\textbf{A,b}}(\x)$ we can derive $2d+1$ as a lower bound. 
	
	\begin{theorem}
		For $d\in \mathbb{N}_+$, $\bm{\mathrm{m_{all}^{'}}}(d)\geq 2d+1$. 
		\label{T12}
	\end{theorem}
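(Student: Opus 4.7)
The plan is to mimic the construction used in Theorem \ref{T4}. By the reduction described around (\ref{7.2}), affine magnitude retrievability is preserved under the invertible affine transformations that bring $[\textbf{A},\textbf{b}]$ into canonical form, so we may assume $\textbf{A}^{[d]}=\bm{\mathrm{I}_d}$ and $\textbf{b}^{[d]}=\bm{0}$. Adopt the polar notation $\textbf{A}=[r_{jk}e^{\textbf{i}\theta_{jk}}]$, $\textbf{x}=[|x_k|e^{\textbf{i}\alpha_k}]$; by Theorem \ref{T10}, affine magnitude retrievability forces $\rank\big(\mathcal{E}_{\textbf{A,b}}(\textbf{x})\big)=|\mathrm{N}(\textbf{x})|$ for every $\textbf{x}\in\mathbb{C}^d$. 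The target is to exhibit one specific $\textbf{x}_0$ with $|\mathrm{N}(\textbf{x}_0)|=d$, suffering no zero measurement (so each block $\Psi_j^{'}$ contributes just one row), yet whose $(d+1)$-th row $\Psi_{d+1}^{'}(\textbf{x}_0)$ of $\mathcal{E}_{\textbf{A,b}}(\textbf{x}_0)$ vanishes identically. Once built, the $(m-d)\times d$ matrix $\mathcal{E}_{\textbf{A,b}}(\textbf{x}_0)$ has rank $d$ despite containing a zero row, forcing $m-d-1\geq d$ and hence $m\geq 2d+1$.

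From (\ref{7.4}), the row $\Psi_{d+1}^{'}(\textbf{x})$ vanishes precisely when $\theta_{d+1,k}+\alpha_k\equiv \delta_{d+1}\pmod{\pi}$ for every $k\in [d]$. I would achieve this by setting $\alpha_k=\phi-\theta_{d+1,k}$ for a common phase $\phi$, so that $\textbf{x}=e^{\textbf{i}\phi}\textbf{E}\bm{\Lambda}$ with $\textbf{E}=\mathrm{dg}([e^{-\textbf{i}\theta_{d+1,1}},\ldots,e^{-\textbf{i}\theta_{d+1,d}}]^{\top})$ and $\bm{\Lambda}\in\mathbb{R}_+^d$ still to be chosen. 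This gives
\[
\bm{\mathrm{\gamma_{d+1}^\top x}}+b_{d+1}=e^{\textbf{i}\phi}\Big(\textstyle\sum_k r_{d+1,k}\lambda_k+b_{d+1}e^{-\textbf{i}\phi}\Big),
\]
so $\phi$ should be chosen to make $b_{d+1}e^{-\textbf{i}\phi}$ a non-negative real number (take $\phi=\arg b_{d+1}$ if $b_{d+1}\neq 0$, any $\phi$ otherwise). Then the inner factor is a strictly positive real, which forces $\delta_{d+1}\equiv \phi\pmod \pi$ and makes every entry of $\Psi_{d+1}^{'}(\textbf{x})$ vanish.

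It remains to choose $\bm{\Lambda}\in\mathbb{R}_+^d$ so that $\bm{\mathrm{\gamma_j^\top x}}+b_j\neq 0$ for all $j\in [m]\setminus[d]$. Each such condition amounts to the non-vanishing of an affine-$\mathbb{C}$-linear function of $\bm{\Lambda}\in\mathbb{R}^d$, so its bad locus is cut out by at most two real linear equations and has codimension at least $1$ in $\mathbb{R}_+^d$ (rows of the form $[\bm{\mathrm{\gamma_j^\top}},b_j]=\bm{0}$, which would render the $j$-th measurement vacuous, are harmlessly discarded at the outset). A finite union of codimension-$\geq 1$ subsets cannot cover the open set $\mathbb{R}_+^d$, so a valid $\bm{\Lambda}_0$ exists and produces the desired $\textbf{x}_0=e^{\textbf{i}\phi}\textbf{E}\bm{\Lambda}_0$. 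The main (modest) obstacle is verifying that the two constraints are compatible --- fixing $\phi$ to kill the first row while choosing $\bm{\Lambda}$ to avoid zero measurements elsewhere --- but the latter is a genericity argument on a full-dimensional real parameter space, so the two choices do not interfere. Applying Theorem \ref{T10} to $\textbf{x}_0$ then yields $\rank\big(\mathcal{E}_{\textbf{A,b}}(\textbf{x}_0)\big)=d$, and the rank count above concludes $m\geq 2d+1$.
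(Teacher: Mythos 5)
Your proof is correct and follows essentially the same route as the paper's: reduce to canonical form, pick $\alpha_k=\phi-\theta_{d+1,k}$ with $\phi=\arg b_{d+1}$ (the paper's $\theta_{d+1,d+1}$) so the $(d+1)$-th measurement has phase $e^{\mathbf{i}\phi}$ and the corresponding row of $\mathcal{E}_{\bm{\mathrm{A,b}}}$ vanishes, choose $\bm{\Lambda}\in\mathbb{R}_+^d$ generically to land in $\mathcal{H}_{\bm{\mathrm{A,b}}}$, and conclude $m-d-1\geq d$ from Theorem \ref{T10}. Your write-up actually fills in details (the affine genericity argument for $\bm{\Lambda}$ and the disposal of zero rows) that the paper leaves implicit by referring to the proof of Theorem \ref{T4}.
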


	\subsection{Reconstruction of almost all signals}

In this subsection, we switch to the minimal measurement number  for reconstruction of almost all signals. We say $[\textbf{A},\textbf{b}]$ is {\it almost everywhere affine magnitude retrievable} if $\mathbb{C}^d \setminus \mathcal{W}_{\textbf{A,b}}$
is of zero Lebesgue measure. Accordingly, the minimal measurement number required for this property is given by 
\begin{equation}
    \label{7.8}
    \bm{\mathrm{m^{'}_{ae}}}(d) = \big\{m : \text{some }[\textbf{A},\textbf{b}]\in\mathbb{C}^{m\times (d+1)}\text{ is almost everywhere affine magnitude retrievable}\big\}
\end{equation}

 More notations are needed to present the result. We use $\ker(\textbf{A,b})$ to denote $\{\textbf{x}:\textbf{Ax}+\textbf{b}=\bm{0}\}$. Given $\mathcal{S}\subset [m]$, then  $\mathcal{S}^c$ represents $[m]\setminus \mathcal{S}$, and we define 
	\begin{equation}
	    \label{7.9}
	    \mathcal{H}_{\textbf{A,b}}(\mathcal{S})=\{\x\in \mathbb{C}^d:\bm{\mathrm{\gamma_j^{\top}x}}+b_j=0,~\forall j\in \mathcal{S};\bm{\mathrm{\gamma_j^{\top}x}}+b_j\neq 0,~\forall j\in \mathcal{S}^c\}.
	\end{equation}
 For brevity, we write 
 \begin{equation}
     \label{7.10}
     \mathcal{H}_{\textbf{A,b}} = \mathcal{H}_{\textbf{A,b}}(\varnothing).
 \end{equation}
 
	\begin{theorem}
		Consider $[\bm{\mathrm{A,b}}]\in \mathbb{C}^{m\times (d+1)}$. When $m \leq 2d-1$, $\mathcal{W}_{\bm{\mathrm{A,b}}}$ is nowhere dense (under Euclidean topology) and of zero Lebesgue measure. When $m\geq 2d$, a generic $[\bm{\mathrm{A,b}}]$ satisfies
	\begin{equation}
		\begin{aligned}
		    \text{for all }\mathcal{S}\subset &[m] \text{ such that } \ker(\bm{\mathrm{A}}^\mathcal{S},\bm{\mathrm{b}}^\mathcal{S})\neq \varnothing,~\mathcal{W}_{\bm{\mathrm{A,b}}}\cap \ker(\bm{\mathrm{A}}^\mathcal{S},\bm{\mathrm{b}}^\mathcal{S}) \\
		    &\text{contains a generic point of } \ker(\bm{\mathrm{A}}^\mathcal{S},\bm{\mathrm{b}}^\mathcal{S})
		\end{aligned}
		\label{7.11}
	\end{equation} 
		Specifically, letting $S=\varnothing$  gives $\mathcal{W}_{\bm{\mathrm{A,b}}}$ contains a generic point of $\mathbb{C}^d$. Therefore, the minimal measurement number for almost everywhere affine magnitude retrievable property is $ \bm{\mathrm{m_{ae}^{'}}}(d)= 2d$.
		\label{T14}
	\end{theorem}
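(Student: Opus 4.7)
The plan is to mirror the proof of Theorem \ref{T6} with affine modifications throughout. For the lower bound $\bm{\mathrm{m_{ae}^{'}}}(d)\geq 2d$, the block structure of $\mathcal{D}_{\bm{\mathrm{A,b}}}(\x)$ in (\ref{7.3}) yields the unconditional bound $\rank(\mathcal{D}_{\bm{\mathrm{A,b}}}(\x))\leq 2d+|\mathrm{N}(\bm{\mathrm{A}}\x+\bm{\mathrm{b}})|$, since the final $m$ columns vanish on rows outside $\mathrm{N}(\bm{\mathrm{A}}\x+\bm{\mathrm{b}})$. Combined with $\rank\leq 2m$ and Theorem \ref{T9}, any $\x\in\mathcal{W}_{\bm{\mathrm{A,b}}}$ satisfies $|\mathrm{N}(\bm{\mathrm{A}}\x+\bm{\mathrm{b}})|\leq 2m-2d$; when $m\leq 2d-1$ this is strictly less than $m$, so $\x\notin\mathcal{H}_{\bm{\mathrm{A,b}}}$. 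Discarding redundant rows with $\bm{\gamma_j}=\bm{0}$, $\mathcal{W}_{\bm{\mathrm{A,b}}}$ is contained in the finite union of affine hyperplanes $\bigcup_{j\in[m]}\{\x:\bm{\gamma_j^\top}\x+b_j=0\}$, which is nowhere dense in $\mathbb{C}^d$ and Lebesgue negligible.

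For the upper bound when $m\geq 2d$, I would first establish affine analogs of Lemmas \ref{lemma3}, \ref{lemma5}, and \ref{lemmanew6}, then re-run the architecture of the proof of Theorem \ref{T6}. The affine Lemma \ref{lemma3}, namely \emph{$\x\in\mathcal{W}_{\bm{\mathrm{A,b}}}\cap\mathcal{H}_{\bm{\mathrm{A,b}}}$ iff $\rank(\mathcal{D}_{\bm{\mathrm{A,b}}}(\x))\geq 2d+m$}, is immediate from Theorem \ref{T9} and the rank bound just derived, and Lemma \ref{lemma4} applies verbatim to the polynomial entries of $\mathcal{D}_{\bm{\mathrm{A,b}}}(\x)$. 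For the affine Lemma \ref{lemmanew6} (\emph{for each fixed $\x_0$, a generic $[\bm{\mathrm{A,b}}]\in\mathbb{C}^{m\times(d+1)}$ satisfies $\x_0\in\mathcal{W}_{\bm{\mathrm{A,b}}}\cap\mathcal{H}_{\bm{\mathrm{A,b}}}$}), I would exploit the polynomial automorphism $[\bm{\mathrm{A,b}}]\mapsto[\bm{\mathrm{A}},\bm{\mathrm{b}}+\bm{\mathrm{A}}\x_0]$ (a Zariski homeomorphism of $\mathbb{C}^{m\times(d+1)}$) to reduce to $\x_0=\bm{0}$; Zariski openness then follows from Lemma \ref{lemma4} and nonemptiness from the explicit construction $\bm{\mathrm{A}}=[\bm{\mathrm{I_d}}^\top,(\ii\bm{\mathrm{I_d}})^\top]^\top$, $\bm{\mathrm{b}}=\bm{1}$, under which $\sign(y_k+1)=1$ forces $y_k\in\mathbb{R}$ while $\sign(\ii y_k+1)=1$ forces $\ii y_k\in\mathbb{R}$, so $\y=\bm{0}$.

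The affine Lemma \ref{lemma5} is obtained by solving $\bm{\mathrm{A}}^\mathcal{S}\x+\bm{\mathrm{b}}^\mathcal{S}=\bm{0}$ for $\x^{[\mathcal{S}]}$ under $\rank(\bm{\mathrm{A}}^\mathcal{S}_{[\mathcal{S}]})=|\mathcal{S}|$; substitution rewrites $\bm{\mathrm{A}}^{\mathcal{S}^c}\x+\bm{\mathrm{b}}^{\mathcal{S}^c}$ as $\bm{\mathrm{A}}(\mathcal{S})\x^{[d]\setminus[\mathcal{S}]}+\bm{\mathrm{b}}(\mathcal{S})$, where $\bm{\mathrm{A}}(\mathcal{S})$ is as in (\ref{4.6}) and $\bm{\mathrm{b}}(\mathcal{S}):=\bm{\mathrm{b}}^{\mathcal{S}^c}-\bm{\mathrm{A}}^{\mathcal{S}^c}_{[\mathcal{S}]}(\bm{\mathrm{A}}^\mathcal{S}_{[\mathcal{S}]})^{-1}\bm{\mathrm{b}}^\mathcal{S}$; the same bijective correspondence used in the proof of Lemma \ref{lemma5} then yields $\x\in\mathcal{W}_{\bm{\mathrm{A,b}}}\cap\mathcal{H}_{\bm{\mathrm{A,b}}}(\mathcal{S})\iff \x^{[d]\setminus[\mathcal{S}]}\in\mathcal{W}_{\bm{\mathrm{A}}(\mathcal{S}),\bm{\mathrm{b}}(\mathcal{S})}\cap\mathcal{H}_{\bm{\mathrm{A}}(\mathcal{S}),\bm{\mathrm{b}}(\mathcal{S})}$. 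I would then define $\Xi\subset\mathbb{C}^{m\times(d+1)}$ in direct analogy with (\ref{4.11}) by requiring $\bm{0}\in\mathcal{W}_{\bm{\mathrm{A,b}}}\cap\mathcal{H}_{\bm{\mathrm{A,b}}}$; rank conditions $\rank(\bm{\mathrm{A}}^\mathcal{S}_{[\mathcal{S}]})=|\mathcal{S}|$ for all $0<|\mathcal{S}|\leq d$; and for each $\mathcal{S}$ with $0<|\mathcal{S}|<d$, a point of $\ker(\bm{\mathrm{A}}(\mathcal{S}),\bm{\mathrm{b}}(\mathcal{S}))$ chosen as a rational function of $[\bm{\mathrm{A,b}}]$ belonging to $\mathcal{W}_{\bm{\mathrm{A}}(\mathcal{S}),\bm{\mathrm{b}}(\mathcal{S})}\cap\mathcal{H}_{\bm{\mathrm{A}}(\mathcal{S}),\bm{\mathrm{b}}(\mathcal{S})}$. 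Genericity of $\Xi$ follows from Lemma \ref{lemma4} and the constructions above; the verification of (\ref{7.11}) for every $[\bm{\mathrm{A,b}}]\in\Xi$ then proceeds by the same three-case analysis on $|\mathcal{S}|$ as in Step 2 of the proof of Theorem \ref{T6}, noting that for generic $[\bm{\mathrm{A,b}}]$ and $|\mathcal{S}|>d$ the kernel $\ker(\bm{\mathrm{A}}^\mathcal{S},\bm{\mathrm{b}}^\mathcal{S})$ is empty, so (\ref{7.11}) is vacuously true there.

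The main obstacle will be the affine kernel $\ker(\bm{\mathrm{A}}^\mathcal{S},\bm{\mathrm{b}}^\mathcal{S})$: unlike its linear counterpart it can be empty (precisely when $\bm{\mathrm{b}}^\mathcal{S}$ fails to lie in the column span of $\bm{\mathrm{A}}^\mathcal{S}$), which is exactly why (\ref{7.11}) is quantified only over $\mathcal{S}$ with $\ker(\bm{\mathrm{A}}^\mathcal{S},\bm{\mathrm{b}}^\mathcal{S})\neq\varnothing$. Consequently, the designated point in $\ker(\bm{\mathrm{A}}(\mathcal{S}),\bm{\mathrm{b}}(\mathcal{S}))$ used in defining $\Xi$ must be chosen as a rational function of the entries of $[\bm{\mathrm{A,b}}]$, and careful bookkeeping is needed both to certify Zariski openness of the defining conditions (so that Lemma \ref{lemma4} applies) and to rule out the codimension-$(|\mathcal{S}|-d)$ coincidences where these rational functions could fail to be defined.
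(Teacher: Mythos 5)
Your proposal is correct and follows essentially the same route as the paper: the same rank bound on $\mathcal{D}_{\bm{\mathrm{A,b}}}(\bm{\mathrm{x}})$ for the lower bound, the same affine analogues of Lemmas \ref{lemma3}, \ref{lemma5} and \ref{lemmanew6} (these are the paper's Lemmas \ref{lemma7}, \ref{lemma8}, \ref{lemma11}), the same shift $[\bm{\mathrm{A}},\bm{\mathrm{b}}]\mapsto[\bm{\mathrm{A}},\bm{\mathrm{b}}+\bm{\mathrm{Ax_0}}]$ reducing to $\bm{\mathrm{x_0}}=\bm{0}$, the same explicit witness $[\bm{\mathrm{I}}_d^{\bm{\top}},(\ii\bm{\mathrm{I}}_d)^{\bm{\top}}]^{\bm{\top}}$ with $\bm{\mathrm{b}}=\bm{1}$, and the same $\Xi$-set architecture with a three-case analysis. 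One correction: the ``main obstacle'' you flag largely dissolves. The designated point in condition (c) of $\Xi$ should be the fixed point $\bm{0}\in\mathbb{C}^{d-|\mathcal{S}|}$ of the \emph{reduced} space, tested for membership in $\mathcal{W}_{\bm{\mathrm{A}}(\mathcal{S}),\bm{\mathrm{b}}(\mathcal{S})}\cap\mathcal{H}_{\bm{\mathrm{A}}(\mathcal{S}),\bm{\mathrm{b}}(\mathcal{S})}$ --- not a point of $\ker(\bm{\mathrm{A}}(\mathcal{S}),\bm{\mathrm{b}}(\mathcal{S}))$, since any such point has all measurements equal to zero and can never lie in $\mathcal{H}_{\bm{\mathrm{A}}(\mathcal{S}),\bm{\mathrm{b}}(\mathcal{S})}$; no rational-function bookkeeping is needed, because for $0<|\mathcal{S}|<d$ the rank condition $\rank(\bm{\mathrm{A}}^{\mathcal{S}}_{[\mathcal{S}]})=|\mathcal{S}|$ already forces $\ker(\bm{\mathrm{A}}^{\mathcal{S}},\bm{\mathrm{b}}^{\mathcal{S}})$ to be a nonempty affine subspace isomorphic to $\mathbb{C}^{d-|\mathcal{S}|}$, and emptiness can only occur for $|\mathcal{S}|\geq d$. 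Relatedly, for $|\mathcal{S}|=d$ the kernel is generically a \emph{single} point rather than empty, so that case is not vacuous: you must note (as the paper does) that a one-point kernel is automatically contained in $\mathcal{W}_{\bm{\mathrm{A,b}}}$ by full column rank.
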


 To close this section, we present an interesting property analogous to Theorem \ref{T8}.

	\begin{theorem}
		Assume $m\geq 2d$, $\bm{\mathrm{x}}\in \mathcal{W}_{\bm{\mathrm{A,b}}}$, then there exists $\mathcal{S}\subset [m]$ with size $|\mathcal{S}|=2d$ such that $\bm{\mathrm{x}}\in \mathcal{W}_{\bm{\mathrm{A}}^\mathcal{S},\bm{\mathrm{b}}^\mathcal{S}}$. On the other hand, it is possible that $\bm{\mathrm{x}}\notin \mathcal{W}_{\bm{\mathrm{A}}^\mathcal{S},\bm{\mathrm{b}}^\mathcal{S}}$ holds for all $ \mathcal{S}\subset [m]$ with size $|\mathcal{S}|=2d-1$, .
		\label{T16}
	\end{theorem}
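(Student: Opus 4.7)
The strategy is to mirror the proof of Theorem~\ref{T8}, using $\mathcal{E}_{\textbf{A,b}}(\textbf{x})$ in place of $\mathcal{E}_{\textbf{A}}(\textbf{x})$ and Theorem~\ref{T10} in place of Theorem~\ref{T2}. The invariance $\mathcal{W}_{\textbf{A,b}}=\textbf{P}\,\mathcal{W}_{\textbf{AP},\textbf{b}-\textbf{AP}\hat{\textbf{x}}}-\hat{\textbf{x}}$ recorded in Section~\ref{sec6}, together with $(\textbf{AP})^\mathcal{S}=\textbf{A}^\mathcal{S}\textbf{P}$ and $(\textbf{b}-\textbf{AP}\hat{\textbf{x}})^\mathcal{S}=\textbf{b}^\mathcal{S}-\textbf{A}^\mathcal{S}\textbf{P}\hat{\textbf{x}}$, shows that the existence of a $2d$-subset $\mathcal{S}$ with $\textbf{x}\in\mathcal{W}_{\textbf{A}^\mathcal{S},\textbf{b}^\mathcal{S}}$ is preserved under the passage to a canonical measurement matrix. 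I therefore assume without loss of generality that $[\textbf{A},\textbf{b}]$ is in the canonical form of (\ref{7.2}).

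Granted $\textbf{x}\in\mathcal{W}_{\textbf{A,b}}$, Theorem~\ref{T10} gives $\rank(\mathcal{E}_{\textbf{A,b}}(\textbf{x}))=|\mathrm{N}(\textbf{x})|$, so I pick a collection of $|\mathrm{N}(\textbf{x})|$ rows forming a basis of the row space. Since each $j\in[m]\setminus[d]$ contributes a block $[\Psi_j'(\textbf{x})]_{\mathrm{N}(\textbf{x})}$ of at most two rows, the index set
\[\mathcal{J}=\{\,j\in[m]\setminus[d]:\text{at least one row of } [\Psi_j'(\textbf{x})]_{\mathrm{N}(\textbf{x})} \text{ belongs to the basis}\,\}\]
satisfies $|\mathcal{J}|\leq|\mathrm{N}(\textbf{x})|$. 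The submatrix $[\textbf{A}^{[d]\cup\mathcal{J}},\textbf{b}^{[d]\cup\mathcal{J}}]$ is still in canonical form, and its discriminant matrix is exactly the submatrix of $\mathcal{E}_{\textbf{A,b}}(\textbf{x})$ obtained by stacking the blocks indexed by $\mathcal{J}$; it therefore contains the chosen basis and has rank $|\mathrm{N}(\textbf{x})|$. Applying Theorem~\ref{T10} once more yields $\textbf{x}\in\mathcal{W}_{\textbf{A}^{[d]\cup\mathcal{J}},\textbf{b}^{[d]\cup\mathcal{J}}}$. Because $|[d]\cup\mathcal{J}|\leq d+|\mathrm{N}(\textbf{x})|\leq 2d$, I extend $[d]\cup\mathcal{J}$ arbitrarily to some $\mathcal{S}\subset[m]$ with $|\mathcal{S}|=2d$; enlarging the index set only enlarges $\mathcal{W}$, since additional sign constraints cannot create new solutions. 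This settles the first statement.

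For the second statement, I invoke Theorem~\ref{T14} twice. At $m=2d$, a generic $[\textbf{A},\textbf{b}]\in\mathbb{C}^{2d\times(d+1)}$ has $\mathcal{W}_{\textbf{A,b}}$ containing a generic point of $\mathbb{C}^d$, hence having dense Euclidean interior; at $m=2d-1$, every $\mathcal{W}_{\textbf{A}^\mathcal{S},\textbf{b}^\mathcal{S}}$ with $|\mathcal{S}|=2d-1$ is nowhere dense. The finite union $\bigcup_{|\mathcal{S}|=2d-1}\mathcal{W}_{\textbf{A}^\mathcal{S},\textbf{b}^\mathcal{S}}$ remains nowhere dense, so its complement inside $\mathcal{W}_{\textbf{A,b}}$ is nonempty and any point there furnishes the desired $\textbf{x}$. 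The main obstacle is the bookkeeping around the blocks $[\Psi_j'(\textbf{x})]_{\mathrm{N}(\textbf{x})}$, whose row count depends on whether $\bm{\gamma}_j^\top\textbf{x}+b_j$ vanishes, and the careful accounting that the bound $d+|\mathrm{N}(\textbf{x})|\leq 2d$ replaces the sharper $d+|\mathrm{N}(\textbf{x})|-1\leq 2d-1$ of Theorem~\ref{T8}, precisely because Theorem~\ref{T10} demands rank $|\mathrm{N}(\textbf{x})|$ rather than $|\mathrm{N}(\textbf{x})|-1$, the extra unit reflecting the absence of the positive-scaling ambiguity in the affine setting.
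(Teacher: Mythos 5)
Your proposal is correct and follows essentially the same route as the paper's proof: reduce to the canonical form, use Theorem \ref{T10} to get $\rank(\mathcal{E}_{\bm{\mathrm{A,b}}}(\bm{\mathrm{x}}))=|\mathrm{N}(\bm{\mathrm{x}})|$, select a rank-attaining set of rows, collect the contributing block indices into $\mathcal{J}$ with $|\mathcal{J}|\le|\mathrm{N}(\bm{\mathrm{x}})|$, sandwich the ranks to conclude $\bm{\mathrm{x}}\in\mathcal{W}_{\bm{\mathrm{A}}^{[d]\cup\mathcal{J}},\bm{\mathrm{b}}^{[d]\cup\mathcal{J}}}$, pad to $|\mathcal{S}|=2d$, and then derive the second statement from the two regimes of Theorem \ref{T14} via a nowhere-dense union argument. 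The only cosmetic difference is that you spell out the invariance under canonical reduction and the monotonicity of $\mathcal{W}$ under adding rows, both of which the paper leaves implicit.
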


	\section{Concluding remarks}\label{sec7}
	
	In this paper, we 
	 built   a theoretical framework for phase-only reconstruction with a general measurement matrix $\textbf{A}\in\mathbb{C}^{m\times d}$ and the underlying complex-valued signal $\textbf{x}\in \mathbb{C}^d$.   Necessary and sufficient uniqueness conditions based on the rank of discriminant matrices were proposed. When specialized to real-valued signal and Fourier measurement matrix, our conditions recover the     those derived in previous studies (Section \ref{compare}). We derived an upper bound $4d-2$ and a lower bound $2d$ for the minimal measurement number required to reconstruct all signals. For reconstruction of almost all signals, $2d-1$ generic measurements are sufficient and minimal. Beyond measurement number, we   applied discriminant matrices to obtain two interesting results  (Section \ref{sec5}). We also note that, the theoretical framework can be readily extended to affine phase-only reconstruction (Section \ref{sec6}). To conclude the paper, we point out two possible directions for future research. Firstly, while we have shown $2d\leq \bm{\mathrm{m_{all}}}(d)\leq 4d-2$, a natural question is to derive tighter bound. Secondly, it is also interesting to extend the current framework to a noisy setting or phase-only compressed sensing.  
	 


\begin{appendix}

	\section{Proofs: real-valued signal reconstruction}
	\label{appen2}
	We first give the proofs of Theorem \ref{realcasedisd}, \ref{realdise}. The main strategy is adapted from the proofs of Theorem \ref{T1}, \ref{T2}. For   specific $\A\in\mathbb{C}^{m\times d}$ and $\textbf{x}\in\mathbb{R}^d$, we consider a linear system 
	\begin{equation}
	    \label{B.1}
	    \textbf{Ay} = \big[\mathrm{dg}(\sign(\textbf{Ax}))\big]_{\mathrm{N}(\textbf{Ax})}\bm{\lambda}
	\end{equation}
	with real variables $\textbf{y}\in\mathbb{R}^d$ and $\bm{\lambda}\in \mathbb{R}^{|\mathrm{N}(\textbf{Ax})|}$. Given $\bm{\lambda}$, (\ref{B.1}) can be viewed as a linear system of $\textbf{y}$, and it is not hard to see 
	\begin{equation}
	    \label{B.2}
	    \mathrm{V}_{\textbf{x},\mathbb{R}} := \big\{ \bm{\lambda}\in \mathbb{R}^{|\mathrm{N}(\textbf{Ax})|}:(\ref{B.2}) \text{ is consistent (i.e., has a solution)}\big\} 
	\end{equation}
	is a linear subspace. 
	To support the proof of Theorem \ref{realcasedisd} we first present a Lemma.
	
	\begin{lem}
	\label{lemma10}
	     For $\bm{\mathrm{A}}\in\mathbb{C}^{m\times d}$ we assume $ [\Re\bm{\mathrm{(A)^\top}},\Im\bm{\mathrm{(A)^\top}}]^{\bm{\top}}$ has full column rank, $\bm{\mathrm{x}}\in\mathbb{R}^d$ is non-zero. Then $\bm{\mathrm{x}}\in\mathcal{W}_{\bm{\mathrm{A}},\mathbb{R}}$ if and only if $\dim( \mathrm{V}_{\bm{\mathrm{x}},\mathbb{R}}) = 1$.
	\end{lem}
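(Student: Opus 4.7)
The plan is to mimic the proof of Lemma \ref{lemma1}, adjusting for the fact that we are now working over $\mathbb{R}^d$ instead of $\mathbb{C}^d$. First I would note the trivial lower bound $\dim(\mathrm{V}_{\x,\mathbb{R}})\geq 1$, which follows by verifying that $|\Ax|^{\mathrm{N}(\Ax)}$ itself lies in $\mathrm{V}_{\x,\mathbb{R}}$: the identity $\Ax=[\mathrm{dg}(\sign(\Ax))]_{\mathrm{N}(\Ax)}|\Ax|^{\mathrm{N}(\Ax)}$ shows that (\ref{B.1}) is consistent with $\y=\x\in\mathbb{R}^d$ and $\bm{\lambda}=|\Ax|^{\mathrm{N}(\Ax)}$.

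For the ``if'' direction, assume $\dim(\mathrm{V}_{\x,\mathbb{R}})=1$ and take any $\tilde{\x}\in\mathbb{R}^d$ with $\sign(\A\tilde{\x})=\sign(\Ax)$. Then $\mathrm{N}(\A\tilde{\x})=\mathrm{N}(\Ax)$, and the relation $\A\tilde{\x}=[\mathrm{dg}(\sign(\Ax))]_{\mathrm{N}(\Ax)}|\A\tilde{\x}|^{\mathrm{N}(\Ax)}$ shows $|\A\tilde{\x}|^{\mathrm{N}(\Ax)}\in\mathrm{V}_{\x,\mathbb{R}}$. Since both $|\A\tilde{\x}|^{\mathrm{N}(\Ax)}$ and $|\Ax|^{\mathrm{N}(\Ax)}$ are strictly positive vectors in a one-dimensional subspace, they are positive scalar multiples, so $|\A\tilde{\x}|=t\cdot|\Ax|$ for some $t>0$. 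Multiplying by $\mathrm{dg}(\sign(\Ax))$ gives $\A\tilde{\x}=t\cdot\Ax$, equivalently $\A(\tilde{\x}-t\x)=\bm{0}$. Here is where the new full-column-rank hypothesis on $[\Re(\A)^\top,\Im(\A)^\top]^\top$ enters: separating $\A\bm{\mathrm{z}}=\bm{0}$ into its real and imaginary parts for $\bm{\mathrm{z}}=\tilde{\x}-t\x\in\mathbb{R}^d$ gives $[\Re(\A)^\top,\Im(\A)^\top]^\top\bm{\mathrm{z}}=\bm{0}$, hence $\bm{\mathrm{z}}=\bm{0}$. Thus $\tilde{\x}=t\cdot\x$, proving $\x\in\mathcal{W}_{\A,\mathbb{R}}$.

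For the ``only if'' direction, suppose toward contradiction that $\dim(\mathrm{V}_{\x,\mathbb{R}})>1$. Since $|\Ax|^{\mathrm{N}(\Ax)}$ lies in the open cone $\mathbb{R}_+^{|\mathrm{N}(\Ax)|}\cap \mathrm{V}_{\x,\mathbb{R}}$, by a small perturbation within $\mathrm{V}_{\x,\mathbb{R}}$ I can produce $\bm{\lambda}_1\in\mathbb{R}_+^{|\mathrm{N}(\Ax)|}\cap\mathrm{V}_{\x,\mathbb{R}}$ that is not a positive multiple of $|\Ax|^{\mathrm{N}(\Ax)}$. By definition of $\mathrm{V}_{\x,\mathbb{R}}$, there is a real $\y_1\in\mathbb{R}^d$ with $\A\y_1=[\mathrm{dg}(\sign(\Ax))]_{\mathrm{N}(\Ax)}\bm{\lambda}_1$, and strict positivity of $\bm{\lambda}_1$ forces $\sign(\A\y_1)=\sign(\Ax)$. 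Invoking $\x\in\mathcal{W}_{\A,\mathbb{R}}$ then yields $\y_1=t_1\x$ for some $t_1>0$, so $[\mathrm{dg}(\sign(\Ax))]_{\mathrm{N}(\Ax)}\bm{\lambda}_1=t_1\A\x=t_1[\mathrm{dg}(\sign(\Ax))]_{\mathrm{N}(\Ax)}|\Ax|^{\mathrm{N}(\Ax)}$. Because $[\mathrm{dg}(\sign(\Ax))]_{\mathrm{N}(\Ax)}$ has full column rank (its columns are orthonormal up to a unit-modulus phase), this cancels to $\bm{\lambda}_1=t_1|\Ax|^{\mathrm{N}(\Ax)}$, contradicting the choice of $\bm{\lambda}_1$.

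The main subtlety, and the only point where the proof genuinely differs from Lemma \ref{lemma1}, is the step that converts $\A(\tilde{\x}-t\x)=\bm{0}$ into $\tilde{\x}=t\x$: over $\mathbb{C}^d$ one uses $\rank(\A)=d$, but over $\mathbb{R}^d$ one must instead separate real and imaginary parts and appeal to the full-column-rank hypothesis on $[\Re(\A)^\top,\Im(\A)^\top]^\top$ imposed in the statement. Everything else is a routine transcription of the complex-case argument to the real setting.
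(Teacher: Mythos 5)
Your proof is correct and follows essentially the same route as the paper's: establish $\dim(\mathrm{V}_{\x,\mathbb{R}})\geq 1$ via $|\Ax|^{\mathrm{N}(\Ax)}$, reuse the Lemma \ref{lemma1} argument to get $\A\tilde{\x}=t\Ax$, and then replace the complex full-rank step with the real/imaginary-part separation using the full column rank of $[\Re(\A)^\top,\Im(\A)^\top]^\top$, with the "only if" direction being the same perturbation-and-contradiction argument. You correctly identified the one genuinely new ingredient, which is exactly where the paper's proof also departs from the complex case.
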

	\begin{proof}
	We write $\mathrm{U}_{\textbf{x},\mathbb{R}} = \mathrm{dg}(\sign(\textbf{Ax}))$, then (\ref{B.1}) equals $\textbf{Ay} = (\mathrm{U}_{\textbf{x},\mathbb{R}})_{\mathrm{N}(\textbf{Ax})}\bm{\lambda}$. Note that 
	we have $(\mathrm{U}_{\textbf{x},\mathbb{R}})_{\mathrm{N}(\textbf{Ax})}|\Ax|^{\mathrm{N}(\Ax)}=\mathrm{U}_{\textbf{x},\mathbb{R}}|\textbf{Ax}|=\textbf{Ax}$, hence $|\Ax|^{\mathrm{N}(\Ax)}\in \mathrm{V}_{\textbf{x},\mathbb{R}}$.  Since $|\textbf{Ax}|^{\mathrm{N}(\textbf{Ax})}\neq 0$, we have $\dim(\mathrm{V}_{\textbf{x},\mathbb{R}})\geq1$.

		We start from the "if" part. Assume  $\sign(\bm{\mathrm{A\tilde{x}}})=\sign(\Ax)$ for some $\bm{\mathrm{\tilde{x}}}\in\mathbb{R}^d$, then by exactly the same argument in the "if" part of the proof of Lemma \ref{lemma1}, we can obtain $\bm{\mathrm{A\tilde{x}}} = t\cdot \Ax $ for some $t>0$. Since $\bm{\mathrm{x,\tilde{x}}}\in\mathbb{R}^d$, this is equivalent to $\Re(\bm{\mathrm{A }})\bm{\mathrm{\tilde{x}}} =  \Re (\textbf{A}) ( t\cdot \textbf{x})  ,\Im(\bm{\mathrm{A }})\bm{\mathrm{\tilde{x}}} =  \Im (\textbf{A}) ( t\cdot \textbf{x})$. Combining with the full column rank of  $[\Re\bm{\mathrm{(A)^\top}},\Im\bm{\mathrm{(A)^\top}}]^{\bm{\top}}$, $\bm{\mathrm{\tilde{x}}} = t\cdot \textbf{x}$. Hence $\x \in \mathcal{W}_{\bm{\mathrm{A}}}$ is concluded.
	    The proof for the "only if" part is exactly parallel to that of the proof for Lemma \ref{lemma1}, thus we omit the details.
	    \end{proof}

	   We now give the proof of Theorem \ref{realcasedisd}. 
	    
	      
	      \vspace{2mm}
	    \noindent
	    {\it Proof of Theorem \ref{realcasedisd}:} The idea is to calculate $\dim(\mathrm{V}_{\textbf{x},\mathbb{R}})$ and then invoke Lemma \ref{lemma10}. Since $\textbf{y}, \bm{\lambda}$ in (\ref{B.1}) are real, the equation (\ref{B.1}) can be equivalently written as 
	    \begin{equation}
	        \label{B.3}
	        \begin{bmatrix}
	             \Re(\textbf{A}) & -\Re\big(  (\mathrm{U}_{\textbf{x},\mathbb{R}} )_{\mathrm{N}(\textbf{Ax})}\big)  \\
	             \Im (\textbf{A}) &  -\Im\big(  (\mathrm{U}_{\textbf{x},\mathbb{R}} )_{\mathrm{N}(\textbf{Ax})}\big) 
	        \end{bmatrix} \begin{bmatrix}
	             \textbf{y} \\ \bm{\lambda}
	        \end{bmatrix} = \bm{0}.
	    \end{equation}
	    Due to the full column rank of $[\Re\bm{\mathrm{(A)^\top}},\Im\bm{\mathrm{(A)^\top}}]^{\bm{\top}}$, for each $\bm{\lambda}\in  \mathrm{V}_{\textbf{x},\mathbb{R}}$, there exists a unique $\textbf{y}$ such that (\ref{B.3}) holds. Thus, the solution space of (\ref{B.3}) has the same dimension as $\mathrm{V}_{\textbf{x},\mathbb{R}}$. Thus, if we denote the coefficient matrix of (\ref{B.3}) by $\bm{\mathrm{\widetilde{D}}}$, a   result from linear system yields $ \dim (\mathrm{V}_{\textbf{x},\mathbb{R}}) = d+|\mathrm{N}(\textbf{Ax})| - \rank (\bm{\mathrm{\widetilde{D}}})$. By noting $\rank (\bm{\mathrm{\widetilde{D}}}) = \rank (\mathcal{D}_{\textbf{A},\mathbb{R}}(\textbf{x}))$ and using Lemma \ref{lemma10}, we can finish the proof. \hfill $\square$
	    
	    \vspace{2mm}
	    
	 Next, we give the proof of Theorem \ref{realdise}.
	    
	    \vspace{2mm}
	   \noindent
	   {\it Proof of Theorem \ref{realdise}:} We   use the notations   in (\ref{6.5}) and (\ref{6.6}).    Consider   $\textbf{y}=[y_k]\in\mathbb{R}^d$. When $\bm{\mathrm{\gamma_j^\top x}} \neq 0$, we let $e^{\textbf{i}\delta_j}=\sign(\bm{\mathrm{\gamma_j^\top x}})$, then   $\sign(\bm{\mathrm{\gamma_j^\top y}}) =\sign(\bm{\mathrm{\gamma_j^\top x}}) =  e^{\textbf{i}\delta_j}$ if and only if $\bm{\mathrm{\gamma_j^\top y}}\cdot e^{-\textbf{i}\delta_j}>0$, which can be further equivalently written as $\Psi_{j,\mathbb{R}}(\textbf{x})\textbf{y} =0$ and $\sum_{k=1}^d r_{jk}\cdot \cos(\theta_{jk}-\delta_j)\cdot y_k >0$. When $\bm{\mathrm{\gamma_j^\top x}} = 0$, then $\sign(\bm{\mathrm{\gamma_j^\top y}}) =\sign(\bm{\mathrm{\gamma_j^\top x}})$ if and only if $\bm{\mathrm{\gamma_j^\top y}} =0$. This is equivalent to $\Psi_{j,\mathbb{R}}(\textbf{x})\textbf{y} =\bm{0}$. Therefore, by (\ref{6.7}) it delivers that 
	   \begin{equation}
	       \label{B.4}
	       \sign(\textbf{Ax}) = \sign(\textbf{Ay}) \iff  \begin{cases}
	      \mathcal{E}_{\textbf{A},\mathbb{R}}(\textbf{x})\cdot \textbf{y} = \bm{0}   \\ 
	      \sum_{k=1}^d r_{jk}\cdot \cos(\theta_{jk}-\delta_j)\cdot y_k >0,~\forall~ j~\text{s.t.}~\bm{\mathrm{\gamma_j^\top x}}\neq 0
	       \end{cases}.
	   \end{equation}
	   Recall the assumption $\bm{\mathrm{\gamma_j^\top x}}\neq 0$ for some $j$. By letting $\textbf{y} = \textbf{x}$, we obtain 
	   \begin{equation}
	       \label{B.5}
	       \begin{cases}
	      \mathcal{E}_{\textbf{A},\mathbb{R}}(\textbf{x})\cdot \textbf{x} = \bm{0}   \\ 
	      \sum_{k=1}^d r_{jk}\cdot \cos(\theta_{jk}-\delta_j)\cdot x_k >0,~\forall~ j~\text{s.t.}~\bm{\mathrm{\gamma_j^\top x}}\neq 0
	       \end{cases}.
	   \end{equation}
	   Note that the first equation in (\ref{B.5}) gives $\rank\big( \mathcal{E}_{\textbf{A},\mathbb{R}}(\textbf{x})\big) \leq  d-1$.

	   We consider the "if" part and assume $\rank\big( \mathcal{E}_{\textbf{A},\mathbb{R}}(\textbf{x})\big) =  d-1$. Then when $\sign(\textbf{Ay}) = \sign(\textbf{Ax})$, we   combine the first equation on the right-hand side of (\ref{B.4}) and the first equation of (\ref{B.5}),  it delivers that $\textbf{y} = t \cdot \textbf{x}$ for some $t\in\mathbb{R}$. Then we further compare the second equation on the right-hand side of (\ref{B.4}) and the second equation of (\ref{B.5}), then it follows that $t>0$. Hence, we arrive at $\textbf{x}\in\mathcal{W}_{\textbf{A},\mathbb{R}}$.

	  Next,  we consider the "only if" part. Due to $\rank\big( \mathcal{E}_{\textbf{A},\mathbb{R}}(\textbf{x})\big) \leq  d-1$, we only need to show contradiction under the assumption $\rank\big( \mathcal{E}_{\textbf{A},\mathbb{R}}(\textbf{x})\big) <  d-1$. If this happens, then $\dim \big(\ker(\mathcal{E}_{\textbf{A},\mathbb{R}}(\textbf{x}))\big) = d-\rank(\mathcal{E}_{\textbf{A},\mathbb{R}}(\textbf{x}))\geq 2$. Note that $\textbf{x}\in \ker(\mathcal{E}_{\textbf{A},\mathbb{R}}(\textbf{x}))$, one can find $\textbf{y} = [y_k]\in \ker(\mathcal{E}_{\textbf{A},\mathbb{R}}(\textbf{x}))$ such that $\textbf{x}$ and $\textbf{y}$ are sufficiently close but linearly independent. By letting $\textbf{x}$, $\textbf{y}$ be sufficiently close, due to the second equation of (\ref{B.5}), the second equation on the right-hand side of (\ref{B.4}) can be guaranteed. Thus, this  $\textbf{y}$ satisfies the right-hand side of (\ref{B.4}), and hence we have $\sign(\textbf{Ay}) = \sign(\textbf{Ax})$. Now we invoke the condition $\textbf{x}\in\mathcal{W}_{\textbf{A},\mathbb{R}}$, it can give $\textbf{y} = t_1 \cdot \textbf{x}$ for some $t_1 >0$. This is contradictory to our choice of $\textbf{y}$. The proof is concluded. \hfill $\square$

	\section{Proofs:  affine  phase-only reconstruction }
	\label{appen3}
	We consider the equation 
\begin{equation}
	\textbf{Ay}+\textbf{b}=\mathrm{dg}(\sign(\textbf{Ax}+\textbf{b}))_{\mathrm{N}(\textbf{Ax}+\textbf{b})}\bm{\lambda},
	\label{C.1}
\end{equation}	
where $\y\in \mathbb{C}^d$, $\bm{\lambda}\in \mathbb{R}^{|\mathrm{N}(\textbf{Ax}+\textbf{b})|}.$ We assume $\rank(\A)=d$, $\textbf{b}\notin \textbf{A}\mathbb{C}^d$.

\begin{lem}
$	\x\in \mathcal{W}_{\textbf{A,b}}$ if and only if $\{\bm{\lambda}\in \mathbb{R}^{|\mathrm{N}(\textbf{Ax+b})|}:(\ref{C.1})\  \mathrm{has}\ \mathrm{solution}\}$ is a $0$-dimensional linear submanifold, i.e., a set containing a single nonzero point.
\label{lemma6}
\end{lem}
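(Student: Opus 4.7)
The plan is to mirror the proof of Lemma~\ref{lemma1} with the modifications dictated by the affine term $\textbf{b}$. Write $\mathrm{V}_{\x,\textbf{b}}$ for the solution set in question and $\mathrm{U}_{\x,\textbf{b}}=\mathrm{dg}(\sign(\textbf{Ax}+\textbf{b}))$. The starting observation is that choosing $\y=\x$ and $\bm{\lambda}=|\textbf{Ax}+\textbf{b}|^{\mathrm{N}(\textbf{Ax}+\textbf{b})}$ solves (\ref{C.1}), so $\mathrm{V}_{\x,\textbf{b}}$ is nonempty and contains this particular positive vector. Moreover, since (\ref{C.1}) is affine in $(\y,\bm{\lambda})$, the set $\mathrm{V}_{\x,\textbf{b}}$ is an affine submanifold of $\mathbb{R}^{|\mathrm{N}(\textbf{Ax}+\textbf{b})|}$. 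Hence ``$0$-dimensional linear submanifold'' is exactly the assertion that the point $|\textbf{Ax}+\textbf{b}|^{\mathrm{N}(\textbf{Ax}+\textbf{b})}$ is the unique member of $\mathrm{V}_{\x,\textbf{b}}$. Note the structural difference from Lemma~\ref{lemma1}: the positive-scaling ambiguity is absent in the affine problem, so we target a single point rather than a one-dimensional line.

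For the ``if'' direction, I would assume $\mathrm{V}_{\x,\textbf{b}}=\{|\textbf{Ax}+\textbf{b}|^{\mathrm{N}(\textbf{Ax}+\textbf{b})}\}$ and take any $\tilde{\x}$ with $\sign(\textbf{A}\tilde{\x}+\textbf{b})=\sign(\textbf{Ax}+\textbf{b})$. This forces the supports to agree, so writing $\textbf{A}\tilde{\x}+\textbf{b}=(\mathrm{U}_{\x,\textbf{b}})_{\mathrm{N}(\textbf{Ax}+\textbf{b})}|\textbf{A}\tilde{\x}+\textbf{b}|^{\mathrm{N}(\textbf{Ax}+\textbf{b})}$ places $|\textbf{A}\tilde{\x}+\textbf{b}|^{\mathrm{N}(\textbf{Ax}+\textbf{b})}$ in $\mathrm{V}_{\x,\textbf{b}}$. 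Uniqueness then yields $|\textbf{A}\tilde{\x}+\textbf{b}|=|\textbf{Ax}+\textbf{b}|$, hence $\textbf{A}\tilde{\x}+\textbf{b}=\textbf{Ax}+\textbf{b}$, and finally $\tilde{\x}=\x$ via $\rank(\textbf{A})=d$.

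For the ``only if'' direction, I would argue by contradiction, paralleling the corresponding step in Lemma~\ref{lemma1}. Suppose $\dim\mathrm{V}_{\x,\textbf{b}}\geq 1$. Because $|\textbf{Ax}+\textbf{b}|^{\mathrm{N}(\textbf{Ax}+\textbf{b})}$ lies in the open set $\mathbb{R}_+^{|\mathrm{N}(\textbf{Ax}+\textbf{b})|}$ and $\mathrm{V}_{\x,\textbf{b}}$ is an affine submanifold of positive dimension through this point, I can choose $\bm{\lambda}_1\in\mathrm{V}_{\x,\textbf{b}}\cap\mathbb{R}_+^{|\mathrm{N}(\textbf{Ax}+\textbf{b})|}$ with $\bm{\lambda}_1\neq|\textbf{Ax}+\textbf{b}|^{\mathrm{N}(\textbf{Ax}+\textbf{b})}$. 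Let $\y_1$ be any corresponding solution of (\ref{C.1}); strict positivity of $\bm{\lambda}_1$ together with the support matching built into $(\mathrm{U}_{\x,\textbf{b}})_{\mathrm{N}(\textbf{Ax}+\textbf{b})}$ gives $\sign(\textbf{A}\y_1+\textbf{b})=\sign(\textbf{Ax}+\textbf{b})$. Invoking $\x\in\mathcal{W}_{\textbf{A,b}}$ yields $\y_1=\x$, and then the full column rank of $(\mathrm{U}_{\x,\textbf{b}})_{\mathrm{N}(\textbf{Ax}+\textbf{b})}$ forces $\bm{\lambda}_1=|\textbf{Ax}+\textbf{b}|^{\mathrm{N}(\textbf{Ax}+\textbf{b})}$, contradicting the choice of $\bm{\lambda}_1$.

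The main subtlety, as opposed to real difficulty, is making sure the second point $\bm{\lambda}_1$ can be chosen strictly positive so that equality of signs (rather than merely equality of phases on a possibly different support) holds; this is handled by picking $\bm{\lambda}_1$ close enough to $|\textbf{Ax}+\textbf{b}|^{\mathrm{N}(\textbf{Ax}+\textbf{b})}$ along the affine submanifold. Otherwise the argument is a direct adaptation of Lemma~\ref{lemma1}, with ``$t>0$ scaling'' replaced everywhere by exact equality.
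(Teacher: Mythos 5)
Your proposal is correct and follows essentially the same route as the paper's proof: both establish that the $\bm{\lambda}$-set is an affine submanifold containing $|\textbf{Ax}+\textbf{b}|^{\mathrm{N}(\textbf{Ax}+\textbf{b})}$, prove the ``if'' part by mapping $|\textbf{A}\tilde{\x}+\textbf{b}|^{\mathrm{N}(\textbf{Ax}+\textbf{b})}$ into this set and using $\rank(\textbf{A})=d$, and prove the ``only if'' part by exhibiting a second positive point when the dimension is at least one. The only (immaterial) difference is that in the contradiction step the paper deduces $\bm{\mathrm{y_0}}\neq\x$ directly from $\bm{\lambda_0}\neq|\textbf{Ax}+\textbf{b}|^{\mathrm{N}(\textbf{Ax}+\textbf{b})}$, whereas you run the same implication in reverse via the full column rank of $(\mathrm{U}_{\x,\textbf{b}})_{\mathrm{N}(\textbf{Ax}+\textbf{b})}$.
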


\begin{proof}
To lighten the notation we use \begin{equation}
    \label{C.2}
   \begin{aligned}
        &\mathrm{V}_{\x}=\{\bm{\lambda}\in \mathbb{R}^{|\mathrm{N}(\textbf{Ax}+\textbf{b})|}:(\ref{C.1})\ \text{is consistent (or has a solution)}\}\\
        &\mathrm{U}_{\x}=\mathrm{dg}(\sign(\textbf{Ax}+\textbf{b}))
   \end{aligned}
\end{equation} 
in this proof. Hence $$(\ref{C.1}) \iff \textbf{Ay+b}=(\mathrm{U}_{\x})_{\mathrm{N}(\textbf{Ax}+\textbf{b})}\bm{\lambda}.$$ Since $(\mathrm{U}_{\x})_{\mathrm{N}(\textbf{Ax}+\textbf{b})}|\textbf{Ax+b}|^{\mathrm{N}(\textbf{Ax}+\textbf{b})}=\mathrm{U}_{\x}|\textbf{Ax+b}|=\textbf{Ax+b}$, we have $|\textbf{Ax+b}|^{\mathrm{N}(\textbf{Ax}+\textbf{b})}\in \mathrm{V}_{\x}$. Let us prove that $\mathrm{V}_{\x}$ is a linear submanifold of $\mathbb{R}^{|\mathrm{N}(\textbf{Ax}+\textbf{b})|}$. Given $\bm{\lambda_1},\bm{\lambda_2} \in \mathrm{V}_{\x}$, $t \in \mathbb{R}$, by definition there exist $\bm{\mathrm{y_1}}$ and $\bm{\mathrm{y_2}}$ such that $\Ay_1+\textbf{b}=(\mathrm{U}_{\x})_{\mathrm{N}(\textbf{Ax}+\textbf{b})}\bm{\lambda_1}$, $\Ay_2+\textbf{b}=(\mathrm{U}_{\x})_{\mathrm{N}(\textbf{Ax}+\textbf{b})}\bm{\lambda_2}$. Thus, we have   $$\textbf{A}(t\cdot \bm{\mathrm{y_1}}+(1-t)\cdot \bm{\mathrm{y_2}})+\textbf{b}=(\mathrm{U}_{\x})_{\mathrm{N}(\textbf{Ax}+\textbf{b})}(t\cdot \bm{\lambda_1}+(1-t)\cdot\bm{\lambda_2}),$$ which implies $t\cdot \bm{\lambda_1}+(1-t)\cdot\bm{\lambda_2}\in \mathrm{V}_{\x}$. Therefore, $\mathrm{V}_{\x}$ is a linear submanifold.

For the "if" part, assume $\sign(\bm{\mathrm{A\tilde{x}+b}})=\sign(\textbf{Ax}+\textbf{b})$, then $|\bm{\mathrm{A\tilde{x}+b}}|^{\mathrm{N}(\textbf{Ax}+\textbf{b})}\in \mathrm{V}_{\x}$. Under the condition that $\mathrm{V}_{\textbf{x}}$ contains a single point, we obtain $|\bm{\mathrm{A\tilde{x}+b}}|=|\textbf{Ax+b}|$, and hence  $\bm{\mathrm{A\tilde{x}+b}}=\textbf{Ax+b}$. Combining with $\rank(\A)=d$, it delivers $\textbf{x} = \bm{\mathrm{\tilde{x}}}$, thus concluding $\x\in \mathcal{W}_{\textbf{A,b}}$.

For the "only if" part, we assume $\x\in \mathcal{W}_{\textbf{A,b}}$ and only need to rule out the possibility of $\dim(\mathrm{V}_{\x})\geq 1$. Note that $|\textbf{Ax+b}|^{\mathrm{N}(\textbf{Ax}+\textbf{b})}\in \mathbb{R}_+^{|\mathrm{N}(\textbf{Ax}+\textbf{b})|},$ if $\dim(\mathrm{V}_{\x})\geq 1$, there exists $\bm{\lambda_0}\in\mathrm{V}_{\x}\cap\mathbb{R}_+^{|\mathrm{N}(\textbf{Ax}+\textbf{b})|}$ such that $\bm{\lambda_0} \neq |\textbf{Ax+b}|^{\mathrm{N}(\textbf{Ax}+\textbf{b})}$. Since $\bm{\lambda_0} \in \mathrm{V}_{\x}$, there exists $\bm{\mathrm{y_0}}$ such that $\bm{\mathrm{Ay_0+b}}=(\mathrm{U}_{\x})_{\mathrm{N}(\textbf{Ax}+\textbf{b})}\bm{\lambda_0}$. This implies $\sign(\bm{\mathrm{Ay_0+b}})=\sign(\textbf{Ax+b})$. However,   $\bm{\lambda_0} \neq |\textbf{Ax+b}|^{\mathrm{N}(\textbf{Ax}+\textbf{b})}$ gives $\bm{\mathrm{y_0}} \neq \textbf{x}$, which is contradictory to our initial assumption $\x\in \mathcal{W}_{\textbf{A,b}}.$ 
\end{proof}

\noindent 
{\it Proof of Theorem \ref{T9}:} We continue to use notations introduced in (\ref{C.2}), and our strategy is to calculate $\dim (\mathrm{V}_{\textbf{x}})$ and then invoke Lemma \ref{lemma6}. To this end, noting  $\bm{\lambda}\in\mathbb{R}^{|\mathrm{N}(\textbf{Ax+b})|}$,  (\ref{C.1}) is equivalent to        the real linear system $\varphi(\A)\varphi_1(\textbf{y})-\varphi_1((\mathrm{U}_{\x})_{\mathrm{N}(\textbf{Ax}+\textbf{b})})\bm{\lambda} =-\varphi_1(\textbf{b})$, that is,   \begin{equation}
    \begin{bmatrix}
         \Re(\textbf{A}) & \Im(\textbf{A}) & -\Re \big((\mathrm{U}_{\textbf{x}})_{\mathrm{N}(\textbf{Ax+b})}\big) \\
         -\Im(\textbf{A}) & \Re(\textbf{A}) & \Im\big((\mathrm{U}_{\textbf{x}})_{\mathrm{N}(\textbf{Ax+b})}\big)
    \end{bmatrix}\begin{bmatrix}
          \varphi_1(\textbf{y}) \\ \bm{\lambda}
    \end{bmatrix} =-\varphi_1(\textbf{b}).
    \label{C.3}              
\end{equation} 
Since $\rank \big(\varphi(\textbf{A})\big) = 2\cdot \rank(\textbf{A})  =2d$, so for each $\bm{\lambda}\in \mathrm{V}_{\textbf{x}}$, there exists unique $\varphi_1(\textbf{y})$ such that (\ref{C.3}) holds. Thus, the real linear system (\ref{C.3}) possesses a solution space with the same dimension as $\mathrm{V}_{\textbf{x}}$. Denote the coefficient matrix of (\ref{C.3}) via $\widetilde{\mathcal{D}}$, then it  gives
\begin{equation}
    \label{C.4}
    \dim(\mathrm{V}_{\x}) = 2d+|\mathrm{N}(\textbf{Ax}+\textbf{b})|-\rank(\widetilde{\mathcal{D}}) = 2d+|\mathrm{N}(\textbf{Ax}+\textbf{b})| - \rank\big(\mathcal{D}_{\textbf{A,b}}(\textbf{x})   \big),
\end{equation}
where $\rank(\widetilde{\mathcal{D}})=\rank\big(\mathcal{D}_{\textbf{A,b}}(\textbf{x})   \big)$ is a simple observation. Now we can invoke Lemma \ref{lemma6} and conclude the proof. \hfill $\square$

\vspace{2mm}
\noindent
{\it Proof of Theorem \ref{T10}:}
 For convenience, in this proof we use the shorthand 
 \begin{equation}
     \label{C.5}
    \begin{aligned}
         &\mathcal{J}_1=\{j\in[m]\setminus [d]:\bm{\mathrm{\gamma_j^{\top}x}}+b_j \neq 0\},\\
         &\mathcal{J}_2=\{j\in[m]\setminus [d]:\bm{\mathrm{\gamma_j^{\top}x}}+b_j=0\}.
    \end{aligned}
 \end{equation}
  Note that we consider canonical $[\textbf{A},\textbf{b}]$ in a form of the right-hand side of (\ref{7.2}), hence the first $d$ measurements give the phases of the signal, i.e., $\sign(x)$.
  For $j\in \mathcal{J}_1  $, recall  (\ref{7.4}) and we further define  $b_j(\x)=[r_{j,d+1}\sin(\theta_{j,d+1}-\delta_j)](\in\mathbb{R}^{1\times 1})$. for $j\in \mathcal{J}_2$, recall (\ref{7.5}) and we give an additional notation $b_j(\x)=\begin{bmatrix}
	r_{j,d+1}\sin(\theta_{j,d+1})&r_{j,d+1}\cos(\theta_{j,d+1})
\end{bmatrix}^\top\in\mathbb{R}^{2\times 1}$. Now, we concatenate all $b_j(\textbf{x})$ and let $b(\x)=[b_{d+1}(\x)^{\top},b_{d+2}(\x)^{\top},\cdots,b_{m}(\x)^{\top}]^{\top}$. By some algebra, one can easily verify  $\sign(\Ay+\textbf{b})=\sign(\textbf{Ax}+\textbf{b}) \iff$
\begin{equation}
\label{C.6}
\begin{aligned}
 \begin{cases}
	 \sign(\y)=\sign(\x)\\
	 \mathcal{E}_{\textbf{A,b}}(\x)|\textbf{y}|^{\mathrm{N}(\x)}+b(\x)=\bm{0}\\
   \sum_{k\in \mathrm{N}(\x)} r_{jk}\cos(\theta_{jk}+\alpha_k-\delta_j)|y_k|+r_{j,d+1}\cos(\theta_{j,d+1}-\delta_j)>0,\ \forall j\in  \mathcal{J}_1
	\end{cases}
\end{aligned}
\end{equation}
Specifically we can let $\textbf{y}=\textbf{x}$, then it gives 
\begin{equation}
    \label{C.7}
    \begin{cases}
        \mathcal{E}_{\textbf{A,b}}(\x)|\x|^{\mathrm{N}(\x)}+b(\x)=\bm{0}\\
        \sum_{k\in \mathrm{N}(\x)} r_{jk}\cos(\theta_{jk}+\alpha_k-\delta_j)|x_k|+r_{j,d+1}\cos(\theta_{j,d+1}-\delta_j)>0,\ \forall j\in \mathcal{J}_1
    \end{cases}.
\end{equation}

The "if" part is straightforward. Indeed, when $ \sign(\Ay+\textbf{b})=\sign(\textbf{Ax}+\textbf{b})$, the second equation of (\ref{C.6}) holds. Combining with the first equation in (\ref{C.7}), and the assumption $\rank\big(\mathcal{E}_{\textbf{A,b}}(\textbf{x})\big)= |\mathrm{N}(\textbf{x})|$, we obtain $|\textbf{y}|^{\mathrm{N}(\textbf{x})} = |\textbf{x}|^{\mathrm{N}(\textbf{x})}$. Due to $\sign(\textbf{y})=\sign(\textbf{x})$, we arrive at $\y=\x$, thus confirming $\textbf{x}\in\mathcal{W}_{\textbf{A,b}}$.

For the "only if" part, by assuming $\textbf{x}\in\mathcal{W}_{\textbf{A,b}}$ and  $\rank(\mathcal{E}_{\textbf{A,b}}(\x))<|\mathrm{N}(\x)|$, we only need to show the contradiction. Under these two conditions, we can find $\hat{\textbf{y}}\in \mathbb{R}_+^{|\mathrm{N}(\x)|}$, $\hat{\textbf{y}}\neq |\x|^{\mathrm{N}(\x)}$, such that $\mathcal{E}_{\textbf{A,b}}(\x)\hat{\textbf{y}}+b(\x)=\bm{0}$. Based on $\hat{\textbf{y}}$, we can  construct $\textbf{y}^0 = [y^0_k]$ such that $\sign(\textbf{y}^0 )=\sign(\x)$, and $|\textbf{y}^0 |^{\mathrm{N}(\x)}=\hat{\textbf{y}}$. Moreover, we can   choose $\hat{\textbf{y}}$ sufficiently close to $|\x|^{\mathrm{N}(\x)}$ to guarantee $$\displaystyle \sum_{k\in \mathrm{N}(\x)}r_{jk}\cos(\theta_{jk}+\alpha_k-\delta_j)|y^0_k|+r_{j,d+1}\cos(\theta_{j,d+1}-\delta_j)>0,\ \forall j\in \mathcal{J}_1,$$
which displays the third equation of (\ref{C.6}).
Thus, $\textbf{y}^0$ satisfies (\ref{C.6}), and hence $\sign(\Ay^0+\textbf{b})=\sign(\textbf{Ax}+\textbf{b})$. Now we invoke $\x\in \mathcal{W}_{\textbf{A,b}}$ and obtain $\x = \textbf{y}^0$. This is contradictory to $$|\textbf{y}^0|^{\mathrm{N}(\textbf{x})}=\hat{\textbf{y}}\neq |\x|^{\mathrm{N}(\textbf{x})}.$$
The proof is hence concluded. \hfill $\square$

\vspace{2mm}
\noindent
{\it   Proof of Theorem \ref{T12}:} Recall the entry-wise notations 
$[\textbf{A},\textbf{b}]=[r_{jk}e^{\textbf{i}\theta_{jk}}]$, $\textbf{x} = [x_k]^\top$, and $e^{\textbf{i}\delta_j} = \sign(\bm{\mathrm{\gamma_j^\top x}}+b_j)$ when $\bm{\mathrm{\gamma_j^\top x}}+b_j \neq 0$. We now consider a canonical measurement matrix (see the right-hand side of (\ref{7.2})) and $\textbf{x}$ with $\alpha_k=\theta_{d+1,d+1}-\theta_{d+1,k}$, $k\in[d]$. Similar to the proof of Theorem \ref{T4}, we can   find positive numbers $\lambda_1,\lambda_2,\cdots,\lambda_d$ such that \begin{equation}
\label{C.8}
     x=[\lambda_1\cdot e^{\textbf{i}(\theta_{d+1,d+1}-\theta_{d+1,1})},\cdots,\lambda_d\cdot e^{\textbf{i}(\theta_{d+1,d+1}-\theta_{d+1,d})}]^{\top}\in \mathcal{H}_{\textbf{A,b}},
\end{equation} 
where $\mathcal{H}_{\textbf{A,b}}$ is defined in (\ref{7.10}). Note that due to $\textbf{x}\in \mathcal{H}_{\textbf{A,b}}$ and $\mathrm{N}(\x) = [d]$, we have $\mathcal{E}_{\textbf{A,b}}(\x)\in\mathbb{R}^{(m-d)\times d}$.
Now, a simple calculation can give $\delta_{d+1}=\theta_{d+1,d+1}$ (up to an integer multiple of $2\pi$), and hence the first row of $\mathcal{E}_{\textbf{A,b}}(\x)$ equals zero. To conclude the proof, we invoke Theorem \ref{T10}, it yields $m-d-1\geq d$, hence $m\geq 2d+1$ follows.    \hfill $\square$ 

\vspace{2mm}
For clarity, we give several Lemmas to support the proof of Theorem \ref{T14}.

\begin{lem}
	$\x\in \mathcal{W}_{\textbf{A,b}}\cap \mathcal{H}_{\textbf{A,b}}$ if and only if $\rank(\mathcal{D}_{\textbf{A,b}}(\x))\geq 2d+m.$
	\label{lemma7}
\end{lem}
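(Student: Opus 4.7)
The statement is a direct affine analogue of Lemma \ref{lemma3}, and the proof plan mirrors that one almost exactly. The \emph{only if} direction is immediate: if $\bm{\mathrm{x}}\in \mathcal{W}_{\textbf{A,b}}\cap \mathcal{H}_{\textbf{A,b}}$, then $|\mathrm{N}(\textbf{Ax}+\textbf{b})|=m$ by the definition (\ref{7.10}) of $\mathcal{H}_{\textbf{A,b}}$, and Theorem \ref{T9} immediately gives $\rank(\mathcal{D}_{\textbf{A,b}}(\bm{\mathrm{x}}))=2d+m$, which certainly satisfies $\geq 2d+m$.

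For the \emph{if} direction, the key observation is the identity extracted from the proof of Theorem \ref{T9}, namely equation (\ref{C.4}):
\begin{equation*}
\dim(\mathrm{V}_{\bm{\mathrm{x}}}) = 2d+|\mathrm{N}(\textbf{Ax}+\textbf{b})| - \rank(\mathcal{D}_{\textbf{A,b}}(\bm{\mathrm{x}})).
\end{equation*}
Since $\mathrm{V}_{\bm{\mathrm{x}}}$ is a non-empty affine submanifold (it contains $|\textbf{Ax}+\textbf{b}|^{\mathrm{N}(\textbf{Ax}+\textbf{b})}$, as shown in the proof of Lemma \ref{lemma6}), we have $\dim(\mathrm{V}_{\bm{\mathrm{x}}})\geq 0$, and therefore
\begin{equation*}
\rank(\mathcal{D}_{\textbf{A,b}}(\bm{\mathrm{x}})) \leq 2d+|\mathrm{N}(\textbf{Ax}+\textbf{b})|.
\end{equation*}
Combining this upper bound with the hypothesis $\rank(\mathcal{D}_{\textbf{A,b}}(\bm{\mathrm{x}}))\geq 2d+m$ yields $|\mathrm{N}(\textbf{Ax}+\textbf{b})|\geq m$, and the reverse inequality is trivial, so equality holds and $\bm{\mathrm{x}}\in \mathcal{H}_{\textbf{A,b}}$.

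With $|\mathrm{N}(\textbf{Ax}+\textbf{b})|=m$ established, the two inequalities $2d+m\leq \rank(\mathcal{D}_{\textbf{A,b}}(\bm{\mathrm{x}}))\leq 2d+m$ pinch the rank to exactly $2d+|\mathrm{N}(\textbf{Ax}+\textbf{b})|$, so Theorem \ref{T9} delivers $\bm{\mathrm{x}}\in \mathcal{W}_{\textbf{A,b}}$, completing the proof. There is no real obstacle here; the argument is essentially a bookkeeping exercise that chains together (\ref{C.4}), the non-emptiness of $\mathrm{V}_{\bm{\mathrm{x}}}$, and Theorem \ref{T9}, with the mild care needed only to invoke the right ingredients from the affine setting (canonical form is not required, and the assumptions $\rank(\textbf{A})=d$, $\textbf{b}\notin \textbf{A}\mathbb{C}^d$ are already in force throughout Section \ref{sec6}).
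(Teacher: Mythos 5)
Your proof is correct and follows essentially the same route as the paper: the "only if" direction is immediate from Theorem \ref{T9}, and the "if" direction forces $|\mathrm{N}(\textbf{Ax}+\textbf{b})|=m$ from the upper bound $\rank(\mathcal{D}_{\textbf{A,b}}(\textbf{x}))\leq 2d+|\mathrm{N}(\textbf{Ax}+\textbf{b})|$ before applying Theorem \ref{T9} again. The only cosmetic difference is that the paper obtains this upper bound by rank subadditivity of the two blocks of $\mathcal{D}_{\textbf{A,b}}(\textbf{x})$, whereas you obtain it from the dimension identity (\ref{C.4}) together with the non-emptiness of $\mathrm{V}_{\textbf{x}}$; both are valid one-line derivations of the same inequality.
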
 

\begin{proof}
The "only if" part comes from Theorem \ref{T9} directly. For the "if" part, it is evident that $$ \rank(\mathcal{D}_{\textbf{A,b}}(\x))\leq \rank(\varphi(\A))+\rank(\varphi_1((\mathrm{dg}(\textbf{Ax+b})))\leq 2d+|\mathrm{N}(\textbf{Ax}+\textbf{b})|.$$ Combining with $\rank(\mathcal{D}_{\textbf{A,b}}(\x))\geq 2d+m$, we have $|\mathrm{N}(\textbf{Ax}+\textbf{b})|\geq m$. This implies $|\mathrm{N}(\textbf{Ax}+\textbf{b})|=m$ and hence $\x\in \mathcal{H}_{\textbf{A,b}}$. We use Theorem \ref{T9} again, the result follows. 
\end{proof}

\begin{lem}
	Assume $\mathcal{S}\subset[m]$, $1\leq |\mathcal{S}|<d$, and $\rank(\textbf{A}^\mathcal{S}_{[\mathcal{S}]})=|\mathcal{S}|$, $\x\in \ker(\textbf{A}^\mathcal{S},\textbf{b}^\mathcal{S})$. Define 
		\begin{equation}
			\begin{aligned}
				\textbf{A}^{'}(\mathcal{S})&=\textbf{A}^{\mathcal{S}^c}_{[d]\setminus [\mathcal{S}]}-\textbf{A}^{\mathcal{S}^c}_{[\mathcal{S}]}(\textbf{A}^\mathcal{S}_{[\mathcal{S}]})^{-1}\textbf{A}^\mathcal{S}_{[d]\setminus[\mathcal{S}]}\in \mathbb{C}^{(m-|\mathcal{S}|)\times (d-|\mathcal{S}|)};\\
				\textbf{b}^{'}(\mathcal{S})&=	\textbf{b}^{\mathcal{S}^c}-\textbf{A}^{\mathcal{S}^c}_{[\mathcal{S}]}(\textbf{A}^\mathcal{S}_{[\mathcal{S}]})^{-1}	\textbf{b}^\mathcal{S}\in \mathbb{C}^{m-|\mathcal{S}|}.
				\label{C.9}
			\end{aligned}
	\end{equation}
Then we have $\x\in \mathcal{W}_{\textbf{A,b}}\cap \mathcal{H}_{\textbf{A,b}}(	\mathcal{S})$ if and only if $\textbf{x}^{[d]\setminus[\mathcal{S}]}\in \mathcal{W}_{\textbf{A}^{'}(\mathcal{S}),\textbf{b}^{'}(\mathcal{S})}\cap \mathcal{H}_{\textbf{A}^{'}(\mathcal{S}),\textbf{b}^{'}(	\mathcal{S})}$.
	\label{lemma8}
\end{lem}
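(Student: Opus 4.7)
The plan is to mirror the elimination strategy used in the proof of Lemma \ref{lemma5}, adapted to the affine setting where the trivial scaling ambiguity is absent. First I would exploit the invertibility of $\textbf{A}^\mathcal{S}_{[\mathcal{S}]}$ together with the hypothesis $\textbf{A}^\mathcal{S}\textbf{x}+\textbf{b}^\mathcal{S}=\bm{0}$ to eliminate the coordinates indexed by $[\mathcal{S}]$:
\begin{equation}
\nonumber
\textbf{x}^{[\mathcal{S}]}=-(\textbf{A}^\mathcal{S}_{[\mathcal{S}]})^{-1}\bigl(\textbf{A}^\mathcal{S}_{[d]\setminus[\mathcal{S}]}\textbf{x}^{[d]\setminus[\mathcal{S}]}+\textbf{b}^\mathcal{S}\bigr).
\end{equation}
Substituting this into $\textbf{A}^{\mathcal{S}^c}\textbf{x}+\textbf{b}^{\mathcal{S}^c}$ and using the definitions in (\ref{C.9}) yields the identity
\begin{equation}
\nonumber
\textbf{A}^{\mathcal{S}^c}\textbf{x}+\textbf{b}^{\mathcal{S}^c}=\textbf{A}'(\mathcal{S})\,\textbf{x}^{[d]\setminus[\mathcal{S}]}+\textbf{b}'(\mathcal{S}),
\end{equation}
which is the analog of (\ref{4.8}) and forms the algebraic backbone of the proof.

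Next I would handle the forward direction. Assume $\textbf{x}\in\mathcal{W}_{\textbf{A,b}}\cap\mathcal{H}_{\textbf{A,b}}(\mathcal{S})$. The membership in $\mathcal{H}_{\textbf{A,b}}(\mathcal{S})$ means $\textbf{A}^{\mathcal{S}^c}\textbf{x}+\textbf{b}^{\mathcal{S}^c}$ has no vanishing entry, so the identity above immediately gives $\textbf{x}^{[d]\setminus[\mathcal{S}]}\in\mathcal{H}_{\textbf{A}'(\mathcal{S}),\textbf{b}'(\mathcal{S})}$. To obtain the $\mathcal{W}$ part, suppose $\sign\bigl(\textbf{A}'(\mathcal{S})\textbf{y}_0+\textbf{b}'(\mathcal{S})\bigr)=\sign\bigl(\textbf{A}'(\mathcal{S})\textbf{x}^{[d]\setminus[\mathcal{S}]}+\textbf{b}'(\mathcal{S})\bigr)$ for some $\textbf{y}_0\in\mathbb{C}^{d-|\mathcal{S}|}$. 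I would lift $\textbf{y}_0$ to a $d$-dimensional $\textbf{y}$ by setting $\textbf{y}^{[d]\setminus[\mathcal{S}]}=\textbf{y}_0$ and defining $\textbf{y}^{[\mathcal{S}]}$ via the analogous elimination relation forced by $\textbf{A}^\mathcal{S}\textbf{y}+\textbf{b}^\mathcal{S}=\bm{0}$. Then by construction $\textbf{A}^\mathcal{S}\textbf{y}+\textbf{b}^\mathcal{S}=\bm{0}=\textbf{A}^\mathcal{S}\textbf{x}+\textbf{b}^\mathcal{S}$, while the identity above ensures $\sign(\textbf{A}^{\mathcal{S}^c}\textbf{y}+\textbf{b}^{\mathcal{S}^c})=\sign(\textbf{A}^{\mathcal{S}^c}\textbf{x}+\textbf{b}^{\mathcal{S}^c})$. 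Concatenating yields $\sign(\textbf{Ay}+\textbf{b})=\sign(\textbf{Ax}+\textbf{b})$, and invoking $\textbf{x}\in\mathcal{W}_{\textbf{A,b}}$ gives $\textbf{y}=\textbf{x}$ (here the absence of a scaling ambiguity, compared with Lemma \ref{lemma5}, actually simplifies the bookkeeping), hence $\textbf{y}_0=\textbf{x}^{[d]\setminus[\mathcal{S}]}$.

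The backward direction proceeds symmetrically: starting from $\textbf{x}^{[d]\setminus[\mathcal{S}]}\in\mathcal{W}_{\textbf{A}'(\mathcal{S}),\textbf{b}'(\mathcal{S})}\cap\mathcal{H}_{\textbf{A}'(\mathcal{S}),\textbf{b}'(\mathcal{S})}$, the identity shows $\textbf{A}^{\mathcal{S}^c}\textbf{x}+\textbf{b}^{\mathcal{S}^c}$ has no zero entry, so $\textbf{x}\in\mathcal{H}_{\textbf{A,b}}(\mathcal{S})$. For the $\mathcal{W}$ part, assuming $\sign(\textbf{Ay}+\textbf{b})=\sign(\textbf{Ax}+\textbf{b})$, the $\mathcal{S}$-block forces $\textbf{A}^\mathcal{S}\textbf{y}+\textbf{b}^\mathcal{S}=\bm{0}$, which together with the elimination relation determines $\textbf{y}^{[\mathcal{S}]}$ uniquely from $\textbf{y}^{[d]\setminus[\mathcal{S}]}$; the $\mathcal{S}^c$-block then reduces to $\sign(\textbf{A}'(\mathcal{S})\textbf{y}^{[d]\setminus[\mathcal{S}]}+\textbf{b}'(\mathcal{S}))=\sign(\textbf{A}'(\mathcal{S})\textbf{x}^{[d]\setminus[\mathcal{S}]}+\textbf{b}'(\mathcal{S}))$, and the affine reconstruction hypothesis on the reduced system yields $\textbf{y}^{[d]\setminus[\mathcal{S}]}=\textbf{x}^{[d]\setminus[\mathcal{S}]}$, hence $\textbf{y}=\textbf{x}$.

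I do not anticipate a real obstacle: once the single algebraic identity $\textbf{A}^{\mathcal{S}^c}\textbf{x}+\textbf{b}^{\mathcal{S}^c}=\textbf{A}'(\mathcal{S})\textbf{x}^{[d]\setminus[\mathcal{S}]}+\textbf{b}'(\mathcal{S})$ is in hand, both directions are essentially a translation of the Lemma \ref{lemma5} argument. The only point requiring mild care is the lifting construction of $\textbf{y}$ in each direction, where one must verify that the components on $[\mathcal{S}]$ and on $[d]\setminus[\mathcal{S}]$ do glue into a genuine solution of the full affine phase-only system; this is guaranteed by $\rank(\textbf{A}^\mathcal{S}_{[\mathcal{S}]})=|\mathcal{S}|$ and will be written out explicitly in the appendix proof.
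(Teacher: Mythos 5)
Your proposal is correct and follows essentially the same route as the paper's proof: the same elimination of $\textbf{x}^{[\mathcal{S}]}$ via the invertibility of $\textbf{A}^{\mathcal{S}}_{[\mathcal{S}]}$, the same key identity $\textbf{A}^{\mathcal{S}^c}\textbf{x}+\textbf{b}^{\mathcal{S}^c}=\textbf{A}'(\mathcal{S})\textbf{x}^{[d]\setminus[\mathcal{S}]}+\textbf{b}'(\mathcal{S})$, and the same lifting constructions in both directions. As a small bonus, your forward direction correctly concludes $\textbf{y}=\textbf{x}$ outright, whereas the paper's write-up momentarily lapses into the linear-case conclusion $\textbf{y}=t\cdot\textbf{x}$ before recovering.
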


\begin{proof}
Note that $\x\in  \ker(\textbf{A}^\mathcal{S},\textbf{b}^\mathcal{S})$ equals $\textbf{A}^\mathcal{S}\textbf{x}+\textbf{b}^\mathcal{S} = \bm{0}$. Due to $\rank(\textbf{A}^\mathcal{S}_{[\mathcal{S}]})=|\mathcal{S}|$, this can be equivalently given by
\begin{equation}
    \label{C.10}
    \textbf{A}^\mathcal{S}_{[\mathcal{S}]}\textbf{x}^{[\mathcal{S}]} + \textbf{A}^\mathcal{S}_{[d] \setminus  [\mathcal{S}]}\textbf{x}^{[d] \setminus [\mathcal{S}]}+\textbf{b}^\mathcal{S} = \bm{0} \iff \textbf{x}^{[\mathcal{S}]} = -\big(\textbf{A}^\mathcal{S}_{[\mathcal{S}]}\big)^{-1}\textbf{A}^\mathcal{S}_{[d]\setminus [\mathcal{S}]}\textbf{x}^{[d]\setminus [\mathcal{S}]} -  \big(\textbf{A}^\mathcal{S}_{[\mathcal{S}]}\big)^{-1}\textbf{b}^\mathcal{S}.
\end{equation}
Based on (\ref{C.10}), and recall the notations in (\ref{C.9}), some algebra can verify 
\begin{equation}
    \label{C.11}
    \textbf{A}^{\mathcal{S}^c}\textbf{x}+\textbf{b}^{\mathcal{S}^c} = \textbf{A}^{'}(\mathcal{S})\textbf{x}^{[d]\setminus [\mathcal{S}]} + \textbf{b}^{'}(\mathcal{S}).
\end{equation}
  Then the rest of this proof is analogous to that of Lemma \ref{lemma5}.

  For the "only if" part, we assume $\textbf{x}\in \mathcal{W}_{\textbf{A,b}}\cap \mathcal{H}_{\textbf{A,b}}(\mathcal{S})$. By definition of $\mathcal{H}_{\textbf{A,b}}(\mathcal{S})$ (see (\ref{7.9})), the left-hand side (and hence the right-hand side) of (\ref{C.11})  contains no zero entries. This gives $\textbf{x}^{[d]\setminus[\mathcal{S}]}\in \mathcal{H}_{\textbf{A}^{'}(\mathcal{S}),\textbf{b}^{'}(	\mathcal{S})}$. To show $\textbf{x}^{[d]\setminus[\mathcal{S}]}\in \mathcal{W}_{\textbf{A}^{'}(\mathcal{S}),\textbf{b}^{'}(	\mathcal{S})}$, we assume $$\sign\big(\textbf{A}^{'}(\mathcal{S})\textbf{x}^{[d]\setminus [\mathcal{S}]} + \textbf{b}^{'}(\mathcal{S})\big) = \sign\big(\textbf{A}^{'}(\mathcal{S})\bm{\mathrm{y_0}} + \textbf{b}^{'}(\mathcal{S})\big),\text{ for some }\bm{\mathrm{y_0}}\in\mathbb{C}^{d-|\mathcal{S}|}.$$
  Motivated by (\ref{C.10}) we consider 
  $$\textbf{y}=\begin{bmatrix}
      -\big(\textbf{A}^\mathcal{S}_{[\mathcal{S}]}\big)^{-1}\textbf{A}^\mathcal{S}_{[d]\setminus [\mathcal{S}]}\bm{\mathrm{y_0}} -  \big(\textbf{A}^\mathcal{S}_{[\mathcal{S}]}\big)^{-1}\textbf{b}^\mathcal{S}  \\ \bm{\mathrm{y_0}}
  \end{bmatrix}$$
  that satisfies $\textbf{A}^{\mathcal{S}}\textbf{y} + \textbf{b}^\mathcal{S} = \bm{0} = \textbf{A}^{\mathcal{S}}\textbf{x} + \textbf{b}^\mathcal{S}$, it is not hard to see \begin{equation}
      \nonumber
      \begin{aligned}
         & \sign\big(\textbf{A}^{\mathcal{S}^c}\textbf{y}+\textbf{b}^{\mathcal{S}^c}\big)=\sign\big(\textbf{A}^{'}(\mathcal{S})\bm{\mathrm{y_0}}+\textbf{b}^{'}(\mathcal{S})\big) \\=& \sign\big(\textbf{A}^{'}(\mathcal{S})\textbf{x}^{[d]\setminus [\mathcal{S}]} + \textbf{b}^{'}(\mathcal{S})\big) = \sign\big(\textbf{A}^{\mathcal{S}^c}\textbf{x}+\textbf{b}^{\mathcal{S}^c}\big).
      \end{aligned}
  \end{equation}
  Therefore, we obtain $\sign\big(\textbf{Ay+b}\big)=\sign\big(\textbf{Ax+b}\big)$, which together with $\x\in\mathcal{W}_{\textbf{A,b}}$ can yield $\y = t\cdot \x$ for some $t>0$. This evidently leads to $\bm{\mathrm{y_0}}  = t\cdot \textbf{x}^{[d]\setminus [\mathcal{S}]}$, and hence $\textbf{x}^{[d]\setminus [\mathcal{S}]}\in\mathcal{W}_{\textbf{A}^{'}(\mathcal{S}),\textbf{b}^{'}(\mathcal{S})}$.

  We go into the "if" part and assume $\textbf{x}^{[d]\setminus[\mathcal{S}]}\in\mathcal{W}_{\textbf{A}^{'}(\mathcal{S}),\textbf{b}^{'}(\mathcal{S})}\cap \mathcal{H}_{\textbf{A}^{'}(\mathcal{S}),\textbf{b}^{'}(\mathcal{S})}$. This implies the right-hand side (and hence also the left-hand side) of (\ref{C.11})     contains no zero entries. Since $\textbf{A}^\mathcal{S}\textbf{x}+\textbf{b}^\mathcal{S} = \bm{0}$, we obtain $\textbf{x}\in\mathcal{H}_{\textbf{A,b}}(\mathcal{S})$. It remains to show $\textbf{x}\in\mathcal{W}_{\textbf{A,b}}$. For this purpose, we assume $\sign(\textbf{Ax+b}) = \sign(\textbf{Ay+b})$ for some $\textbf{y}\in\mathbb{C}^d$, which gives $\textbf{A}^{\mathcal{S}}\textbf{y}+\textbf{b}^{\mathcal{S}}=\bm{0}$, or equivalently, $\textbf{y}^{[\mathcal{S}]} = -\big(\textbf{A}^\mathcal{S}_{[\mathcal{S}]}\big)^{-1}\textbf{A}^\mathcal{S}_{[d]\setminus [\mathcal{S}]}\textbf{y}^{[d]\setminus [\mathcal{S}]} -  \big(\textbf{A}^\mathcal{S}_{[\mathcal{S}]}\big)^{-1}\textbf{b}^\mathcal{S}$. Based on this relation, recall (\ref{C.11}), some algebra gives \begin{equation}
      \nonumber\begin{aligned}
         &\sign \big(\textbf{A}^{'}(\mathcal{S})\textbf{y}^{[d]\setminus [\mathcal{S}]}+\textbf{b}^{'}(\mathcal{S})\big) = \sign\big(\textbf{A}^{\mathcal{S}^c}\textbf{y}+\textbf{b}^{\mathcal{S}^c}\big) \\
         =&\sign\big(\textbf{A}^{\mathcal{S}^c}\textbf{x}+\textbf{b}^{\mathcal{S}^c}\big) = \sign\big(\textbf{A}^{'}(\mathcal{S})\textbf{x}^{[d]\setminus [\mathcal{S}]} + \textbf{b}^{'}(\mathcal{S})\big).
      \end{aligned}
  \end{equation}
  Now, we can obtain $\textbf{y}^{[d]\setminus [\mathcal{S}]} =\textbf{x}^{[d]\setminus [\mathcal{S}]} $ by $\textbf{x}^{[d]\setminus [\mathcal{S}]} \in \mathcal{W}_{\textbf{A}^{'}(\mathcal{S}),\textbf{b}^{'}(\mathcal{S})}$, which directly leads to $\textbf{x} = \textbf{y}$. Thus, $\textbf{x}\in\mathcal{W}_{\textbf{A,b}}$ and the proof is concluded. 
 \end{proof}

Similar to (\ref{4.3}), (\ref{4.4}), we introduce the notations that are more amenable for analyzing a fixed signal $\x$. Specifically we let 
\begin{equation}
    \label{C.12}
    \begin{aligned}
       & \mathcal{W}^{'}_{\textbf{x}}(m) = \big\{[\textbf{A},\textbf{b}]\in \mathbb{C}^{m\times (d+1)}: \textbf{x}\in\mathcal{W}_{\textbf{A,b}}\big\}, \\
       & \mathcal{H}^{'}_{\textbf{x}}(m) = \big\{[\textbf{A},\textbf{b}]\in\mathbb{C}^{m\times (d+1)}: \textbf{x}\in\mathcal{H}_{\textbf{A,b}}\big\}.
    \end{aligned}
\end{equation}
By definition, $\mathcal{W}^{'}_{\textbf{x}}(m)\cap \mathcal{H}^{'}_{\textbf{x}}(m)$ can be interpreted as the measurement matrix that can reconstruct $\textbf{x}$ from purely phase-only measurements. 

\begin{lem}
    \label{lemma11}
    Consider a fixed signal $\x\in\mathbb{C}^d$, then $\mathcal{W}^{'}_{\textbf{x}}(m)\cap \mathcal{H}^{'}_{\textbf{x}}(m)$ contains a generic $[\textbf{A},\textbf{b}]$ in $\mathbb{C}^{m\times (d+1)}$ when $m\geq 2d$.
\end{lem}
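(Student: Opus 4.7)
The plan is to mimic the proof of Lemma \ref{lemmanew6} with two adjustments tailored to the affine setting. First, I would exploit the translation symmetry of affine measurements to reduce to the base point $\bm{0}$. Specifically, for the fixed signal $\textbf{x}$, consider the $\mathbb{C}$-linear bijection
\begin{equation}
    \nonumber
    \Phi:\mathbb{C}^{m\times (d+1)}\to \mathbb{C}^{m\times (d+1)},\quad \Phi\big([\textbf{A},\textbf{b}]\big)= [\textbf{A},\textbf{b}+\textbf{Ax}],
\end{equation}
whose inverse $[\textbf{A},\textbf{b}]\mapsto[\textbf{A},\textbf{b}-\textbf{Ax}]$ is also $\mathbb{C}$-linear. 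Via the substitution $\textbf{y}\leftrightarrow \textbf{y}-\textbf{x}$, the relation $\sign(\textbf{Ay}+\textbf{b}) = \sign(\textbf{Ax}+\textbf{b})$ is equivalent to $\sign\big(\textbf{A}(\textbf{y}-\textbf{x})+(\textbf{b}+\textbf{Ax})\big)=\sign(\textbf{b}+\textbf{Ax})$, and this yields the identity
\begin{equation}
    \nonumber
    \mathcal{W}^{'}_{\textbf{x}}(m)\cap \mathcal{H}^{'}_{\textbf{x}}(m)=\Phi^{-1}\big(\mathcal{W}^{'}_{\bm{0}}(m)\cap \mathcal{H}^{'}_{\bm{0}}(m)\big).
\end{equation}
Since $\Phi$ is a $\mathbb{C}$-linear bijection, it maps Zariski open sets to Zariski open sets and preserves non-emptiness, so it suffices to handle the base case $\textbf{x}=\bm{0}$.

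Next I would turn the condition into a rank condition on a polynomial matrix. By Lemma \ref{lemma7}, the membership $\bm{0}\in \mathcal{W}_{\textbf{A,b}}\cap \mathcal{H}_{\textbf{A,b}}$ is equivalent to $\rank\big(\mathcal{D}_{\textbf{A,b}}(\bm{0})\big)\geq 2d+m$. From (\ref{7.3}), every entry of $\mathcal{D}_{\textbf{A,b}}(\bm{0})$ is a polynomial in the $2m(d+1)$ real variables $\Re(\textbf{A}),\Im(\textbf{A}),\Re(\textbf{b}),\Im(\textbf{b})$, so Lemma \ref{lemma4} immediately yields that $\mathcal{W}^{'}_{\bm{0}}(m)\cap \mathcal{H}^{'}_{\bm{0}}(m)$ is Zariski open in $\mathbb{C}^{m\times (d+1)}$.

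It then remains to verify non-emptiness when $m\geq 2d$, which is the only step requiring an explicit construction. I would take $\textbf{b}_0=\bm{1}_m$ and
\begin{equation}
    \nonumber
    \textbf{A}_0 = \begin{bmatrix}\bm{\mathrm{I_d}} \\ \textbf{i}\bm{\mathrm{I_d}} \\ \bm{0}_{(m-2d)\times d}\end{bmatrix}\in\mathbb{C}^{m\times d}.
\end{equation}
If $\sign(\textbf{A}_0\textbf{y}+\textbf{b}_0)=\bm{1}$, then the first $d$ measurements force $y_k+1\in\mathbb{R}_+$ and hence $y_k\in\mathbb{R}$, while the next $d$ measurements force $\textbf{i}y_k+1\in\mathbb{R}_+$ so $y_k$ is purely imaginary; combining these gives $\textbf{y}=\bm{0}$, hence $\bm{0}\in \mathcal{W}_{\textbf{A}_0,\textbf{b}_0}$ (the last $m-2d$ trivial rows impose no further constraint since $\sign(0+1)=1$ always holds). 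A direct check also confirms the standing assumptions $\rank(\textbf{A}_0)=d$ and $\textbf{b}_0\notin \textbf{A}_0\mathbb{C}^d$ (as $\textbf{A}_0\textbf{z}=\bm{1}$ would force both $z_k=1$ and $z_k=-\textbf{i}$), while $\bm{0}\in\mathcal{H}_{\textbf{A}_0,\textbf{b}_0}$ follows since $\textbf{b}_0$ has no zero entry. The main conceptual point is picking the right reduction target: whereas Lemma \ref{lemmanew6} used the scaling freedom to reduce to $\bm{\mathrm{e_1}}$, the affine case forces the use of translation (i.e., $\Phi$) rather than an invertible linear change of variables, which is precisely what allows absorbing $\textbf{x}$ entirely and raises the threshold from $2d-1$ to $2d$.
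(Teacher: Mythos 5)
Your proof is correct and follows essentially the same route as the paper's: the translation $[\textbf{A},\textbf{b}]\mapsto[\textbf{A},\textbf{b}+\textbf{Ax}]$ reducing to $\textbf{x}=\bm{0}$ is exactly the paper's right-multiplication by $\begin{bmatrix}\bm{\mathrm{I_d}} & \textbf{x}\\ \bm{0} & 1\end{bmatrix}$ in (\ref{C.13}), the rank reformulation via Lemma \ref{lemma7} and Zariski openness via Lemma \ref{lemma4} are the same, and the witness $[\bm{\mathrm{A_0}},\bm{\mathrm{b_0}}]$ is the identical construction. Your additional verification of the standing assumptions $\rank(\bm{\mathrm{A_0}})=d$ and $\bm{\mathrm{b_0}}\notin \bm{\mathrm{A_0}}\mathbb{C}^d$ is a small bonus the paper omits.
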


\begin{proof}
By (\ref{C.12}) and some simple arguments, we have 
\begin{equation}
    \begin{aligned}
       \label{C.13}
     & [\textbf{A},\textbf{b}]\in \mathcal{W}_{\textbf{x}}^{'}(m)\cap \mathcal{H}_{\textbf{x}}^{'}(m)\iff \textbf{x} \in \mathcal{W}_{\textbf{A,b}} \cap \mathcal{H}_{\textbf{A,b}}\iff \bm{0}\in\mathcal{W}_{\textbf{A,Ax+b}}\cap \mathcal{H}_{\textbf{A,Ax+b}}\\
      &\iff  [\textbf{A},\textbf{b}]\begin{bmatrix}
          \bm{I_d} & \textbf{x} \\ \bm{0} & 1
     \end{bmatrix} \in \mathcal{W}_{\textbf{0}}^{'}(m)\cap \mathcal{H}_{\textbf{0}}^{'}(m) \iff  [\textbf{A},\textbf{b}]  \in \Big(\mathcal{W}_{\textbf{0}}^{'}(m)\cap \mathcal{H}_{\textbf{0}}^{'}(m)\Big) \begin{bmatrix}
          \bm{I_d} & \textbf{-x} \\ \bm{0} & 1
     \end{bmatrix} .
    \end{aligned}
\end{equation}
Hence, we only need to consider $\textbf{x}=\bm{0}$. Then by Lemma \ref{lemma7}, for $[\textbf{A},\textbf{b}]\in\mathbb{C}^{m\times (d+1)}$ 
$$ [\textbf{A},\textbf{b}]\in\mathcal{W}_{\textbf{0}}^{'}(m)\cap \mathcal{H}_{\textbf{0}}^{'}(m) \iff \bm{0}\in\mathcal{W}_{\textbf{A,b}}\cap\mathcal{H}_{\textbf{A,b}} \iff \rank\big(\mathcal{D}_{\textbf{A,b}}(\textbf{0})\big)\geq 2d+m.$$
Moreover, by (\ref{7.3}) one can see entries of $\mathcal{D}_{\textbf{A,b}}(\textbf{0})$ are polynomials of the real variables $\Re(\textbf{A}),\Im(\textbf{A})$, so  Lemma \ref{lemma4} delivers that $\mathcal{W}_{\textbf{0}}^{'}(m)\cap \mathcal{H}_{\textbf{0}}^{'}(m)$ is Zariski open set of $\mathbb{C}^{m\times (d+1)}$. It remains to   find one $[\bm{\mathrm{A_0}},\bm{\mathrm{b_0}}]\in\mathbb{C}^{m\times (d+1)}$ such that $\bm{0}\in \mathcal{W}_{\bm{\mathrm{A_0,b_0}}}\cap\mathcal{H}_{\bm{\mathrm{A_0,b_0}}}$ when $m\geq 2d$, and let us consider 
\begin{equation}
    \nonumber
    \big[\bm{\mathrm{A_0}},\bm{\mathrm{b_0}}\big] = \begin{bmatrix}
         \bm{\mathrm{I}}_d & \bm{1}_{d\times 1} \\
         \textbf{i}\bm{\mathrm{I}}_d & \bm{1}_{d\times 1} \\
         \bm{0}_{(m-2d)\times d} & \bm{1}_{(m-2d)\times 1}
    \end{bmatrix}.
\end{equation}
Obviously, $\sign \big(\bm{\mathrm{A_0 \cdot0 + b_0 }}\big) = \bm{1}_{2m \times 1}$. Assuming $\sign\big(\bm{\mathrm{A_0 y + b_0}}\big)= \bm{1}_{2m\times 1}$ for some $\textbf{y} = [y_k]\in \mathbb{C}^d$, for each $k\in [d]$ we have $\sign(y_k+1)=\sign(\textbf{i}y_k+ 1)=1$. This directly implies $y_k+1\in\mathbb{R}$ and $\textbf{i}  y_k + 1 \in\mathbb{R}$, which can further lead to $y_k = 0$. Hence $\textbf{y}=\bm{0}$ and $\bm{0}\in \mathcal{W}_{\bm{\mathrm{A_0,b_0}}}$. On the other hand, evidently we have $\bm{0} \in \mathcal{H}_{\bm{\mathrm{A_0,b_0}}}$, so the proof can be concluded. 
\end{proof}


\noindent
{\it Proof of Theorem \ref{T14}:} 
Based on several previous lemmas, our strategy is parallel to the proof of Theorem \ref{T8}.
Note that $m\leq 2d-1$ leads to
$$ \rank\big(\mathcal{D}_{\textbf{A,b}}(\textbf{x})\big) \leq 2m <2d+m,
$$hence Lemma \ref{lemma7} gives $\mathcal{W}_{\textbf{A,b}}\cap \mathcal{H}_{\textbf{A,b}}=\varnothing$, or equivalently $$\mathcal{W}_{\textbf{A,b}}\subset (\mathcal{H}_{\textbf{A,b}})^c = \big\{\textbf{x}\in\mathbb{C}^d:\text{for some }j\in [m], \bm{\mathrm{\gamma_j^\top x}}+b_j \neq 0 \big\}.$$ Thus, 
$\mathcal{W}_{\textbf{A,b}}$ is nowhere dense (under Euclidean topology) and of zero Lebesgue measure. When $m\geq 2d$, we consider the following set of $[\textbf{A},\textbf{b}]\in\mathbb{C}^{m\times (d+1)}$ satisfying     property (a), (b):

\begin{equation}
    \label{C.14}
    \Xi =\left\{[\textbf{A},\textbf{b}] : ~\begin{aligned}
        &\mathrm{(a)~}\forall \mathcal{S}\in [m],0<|\mathcal{S}|\leq d,\rank (\textbf{A}^\mathcal{S}_{[\mathcal{S}]}) = |\mathcal{S}|; \\
        &\mathrm{(b)~}[\textbf{A},\textbf{b}]\in \mathcal{W}_{\bm{0}}^{'}(m)\cap \mathcal{H}_{\bm{0}}^{'}(m) ;\\
        &\mathrm{(c)~}\forall \mathcal{S}\subset [m],0<|\mathcal{S}|<d, \bm{0}_{(d-|\mathcal{S}|)\times 1}\in \mathcal{W}_{\textbf{A}^{'}(\mathcal{S}), \textbf{b}^{'}(\mathcal{S})}\cap  \mathcal{H}_{\textbf{A}^{'}(\mathcal{S}), \textbf{b}^{'}(\mathcal{S})}
    \end{aligned}\right\}
\end{equation}
 where   $\textbf{A}^{'}(\mathcal{S}),\textbf{b}^{'}(\mathcal{S})$ in property (c) are defined in (\ref{C.9}). We will prove the result via two steps.

 \vspace{1mm}
 \noindent
 {\it Step 1.} We aim to show that $\Xi$ contains a generic measurement matrix in $\mathbb{C}^{m\times (d+1)}$. Evidently, a generic $[\textbf{A},\textbf{b}]$ satisfies property (a) in (\ref{C.14}). By Lemma \ref{lemma11}, there also exists a generic $[\textbf{A},\textbf{b}]$ satisfying property (b). Thus, it remains to show a generic measurement matrix satisfies property (c), and evidently we can only consider a fixed $\mathcal{S}\subset [m]$, $0<|\mathcal{S}|<d$ (Since the result can be extended to any possible $\mathcal{S}$ via a finite intersection). We first use Lemma \ref{lemma7}, it yields 
 \begin{equation}
     \label{C.15}
     \begin{aligned}
       & \bm{0}_{(d-|\mathcal{S}|)\times 1}\in \mathcal{W}_{\textbf{A}^{'}(\mathcal{S}), \textbf{b}^{'}(\mathcal{S})}\cap  \mathcal{H}_{\textbf{A}^{'}(\mathcal{S}), \textbf{b}^{'}(\mathcal{S})}\\  \iff& \rank \Big(\mathcal{D}_{\textbf{A}^{'}(\mathcal{S}), \textbf{b}^{'}(\mathcal{S})}\big(\bm{0}_{(d-|\mathcal{S}|)\times 1}\big)\Big) \geq 2(d-|\mathcal{S}|) + (m-|\mathcal{S}|).
     \end{aligned}
 \end{equation}
 Recall (\ref{C.9}) and the definition of discriminant matrix in (\ref{7.3}), one can see the entries of $\mathcal{D}_{\textbf{A}^{'}(\mathcal{S}), \textbf{b}^{'}(\mathcal{S})}\big(\bm{0}_{(d-|\mathcal{S}|)\times 1}\big)$ are in the form of $\frac{f_{ij}(\textbf{A},\textbf{b})}{g_{ij}(\textbf{A},\textbf{b})}$ where $f_{ij}$, $g_{ij}$ are polynomials of the real variables $\Re(\textbf{A})$, $\Im(\textbf{A})$, $\Re(\textbf{b})$, $\Im(\textbf{b})$ with complex coefficients. Thus, Lemma \ref{lemma4} delivers that for a fixed $\mathcal{S}$, the set of $[\textbf{A},\textbf{b}]$ satisfying the second line (and hence also the first line) of (\ref{C.15}) is Zariski open. To show it contains a generic point of $\mathbb{C}^{m\times (d+1)}$, we still need to show it is non-empty. To raise an example, we consider $[\textbf{A},\textbf{b}]$ with $\textbf{A}$ satisfying $\textbf{A}^{\mathcal{S}^c}_{[\mathcal{S}]} = \bm{0}$, note that $m\geq 2d$ implies $m-|\mathcal{S}|\geq 2(d-|\mathcal{S}|)$, we can further let 
 \begin{equation}
     \nonumber
     \mathbb{C}^{(m-|\mathcal{S}|)\times (d-|\mathcal{S}|+1)}\ni \big[\textbf{A}^{\mathcal{S}^c}_{[d]\setminus [\mathcal{S}]}, \textbf{b}^{\mathcal{S}^c}\big]   = \begin{bmatrix}
          \textbf{I}_{d-|\mathcal{S}|} & \bm{1}_{(d-|\mathcal{S}|)\times 1}\\
          \textbf{i}\textbf{I}_{d-|\mathcal{S}|} & \bm{1}_{(d-|\mathcal{S}|)\times 1}\\
         \bm{0} & \bm{1} 
     \end{bmatrix}.
 \end{equation}By (\ref{C.9}) we have $\big[\textbf{A}^{'}(\mathcal{S}),\textbf{b}^{'}(\mathcal{S})\big]= \big[\textbf{A}^{\mathcal{S}^c}_{[d]\setminus [\mathcal{S}]}, \textbf{b}^{\mathcal{S}^c}\big]$, then combining with the proof of Lemma \ref{lemma11},   these $[\textbf{A},\textbf{b}]$ (that we consider) satisfy the first line of (\ref{C.15}). Thus, a generic   $[\textbf{A},\textbf{b}]$ satisfies the first line of (\ref{C.15}), and {\it Step 1} can be concluded. 
 
 \vspace{1mm}
 \noindent
 {\it Step 2.}  This step focuses on showing the elements of $\Xi$ satisfy (\ref{7.11}), which can then directly yield $\bm{\mathrm{m^{'}_{ae}}}(d) = 2d$. For this purpose, we only need to consider a fixed $[\textbf{A},\textbf{b}]\in \Xi$. To show (\ref{7.11}), we discuss the following cases according to $\mathcal{S}$ with $\ker (\textbf{A}^\mathcal{S},\textbf{b}^\mathcal{S})$.
 
 \vspace{1mm}
  \noindent{\it Case  1.} If $|\mathcal{S}|\geq d$, then by (a) in (\ref{C.14}) it is immediate that $\rank(\textbf{A}^\mathcal{S})=d$, which gives $\ker (\textbf{A}^\mathcal{S},\textbf{b}^\mathcal{S})$ only contains a single point. Moreover, it is easy to confirm this point belongs to $\mathcal{W}_{\textbf{A,b}}$, and hence (\ref{7.11}) is true trivially.

 \vspace{1mm}
 \noindent{\it Case  2.} If $\mathcal{S} = \varnothing$, then (\ref{7.11}) states that $\mathcal{W}_{\textbf{A,b}}$ contains a generic point of $\mathbb{C}^d$. Our idea is similar to the proof of Lemma \ref{lemma11}, and the only difference is that the measurement matrix is fixed now, while the signal $\textbf{x}$ will be viewed as variable. First, Lemma \ref{lemma7} gives \begin{equation}
     \label{C.16}
     \mathcal{W}_{\textbf{A,b}}\cap \mathcal{H}_{\textbf{A,b}} = \big\{ \textbf{x}:\rank \big( \mathcal{D}_{\textbf{A,b}}(\textbf{x})\big)  \geq 2d+m\big\}.
 \end{equation}
 Observing (\ref{7.3}), one can see entries of $\mathcal{D}_{\textbf{A,b}}(\textbf{x})$ are polynomials of the real variables $\Re(\x),\Im(\x)$ (with degree at most 1). Thus, Lemma \ref{lemma4} delivers that $\mathcal{W}_{\textbf{A,b}}\cap \mathcal{H}_{\textbf{A,b}}$ is Zariski open. Moreover, it is non-empty due to property (b) in (\ref{C.14}), hence the desired result is displayed.

 \vspace{1mm}
 \noindent{\it Case 3.} If $0<|\mathcal{S}|<d$, by exactly the same argument in {\it Case 2}, one can prove $\mathcal{W}_{\textbf{A}^{'}(\mathcal{S}), \textbf{b}^{'}(\mathcal{S})}\cap  \mathcal{H}_{\textbf{A}^{'}(\mathcal{S}), \textbf{b}^{'}(\mathcal{S})}$ is Zariski open. Combining with (c) in (\ref{C.14}), $\mathcal{W}_{\textbf{A}^{'}(\mathcal{S}), \textbf{b}^{'}(\mathcal{S})}\cap  \mathcal{H}_{\textbf{A}^{'}(\mathcal{S}), \textbf{b}^{'}(\mathcal{S})}$ contains a generic point of $\mathbb{C}^{d-|\mathcal{S}|}$. We now invoke Lemma \ref{lemma8} to show (\ref{7.11}). Recall $\rank (\textbf{A}^\mathcal{S}_{[\mathcal{S}]})$, under the condition $\textbf{x}\in \ker(\textbf{A}^\mathcal{S},\textbf{b}^{\mathcal{S}})$, Lemma \ref{lemma8} gives 
 $$\x\in \mathcal{W}_{\textbf{A,b}}\cap \mathcal{H}_{\textbf{A,b}}(	\mathcal{S})\iff\textbf{x}^{[d]\setminus[\mathcal{S}]}\in \mathcal{W}_{\textbf{A}^{'}(\mathcal{S}),\textbf{b}^{'}(\mathcal{S})}\cap \mathcal{H}_{\textbf{A}^{'}(\mathcal{S}),\textbf{b}^{'}(	\mathcal{S})}.$$
 Moreover, the above $\textbf{x}$ and $\textbf{x}^{[d]\setminus [\mathcal{S}]}$ are connected by a linear isomorphism between $\ker(\textbf{A}^\mathcal{S},\textbf{b}^\mathcal{S})$ and $\mathbb{C}^{d-|\mathcal{S}|}$ (See the proof of Lemma \ref{lemma8}, especially (\ref{C.10})). Therefore, it yields that $\mathcal{W}_{\textbf{A,b}}\cap \mathcal{H}_{\textbf{A,b}}(\mathcal{S})$ contains a generic point of $\ker(\textbf{A}^\mathcal{S},\textbf{b}^\mathcal{S})$, which can imply (\ref{7.11}). The proof is complete. \hfill $\square$

 \vspace{2mm}
 \noindent
{\it Proof of Theorem \ref{T16}:} 
With no loss of generality, we consider the canonical measurement matrix $$[\textbf{A},\textbf{b}]=\begin{bmatrix}
	\textbf{I}_d & \bm{0}\\ \bm{\mathrm{A_1}} &\bm{\mathrm{b_1}}
\end{bmatrix}$$ 
for some $[\bm{\mathrm{A_1}},\bm{\mathrm{b_1}}]\in \mathbb{C}^{(m-d)\times (d+1)}$. We plan to use the discriminant matrix $\mathcal{E}_{\textbf{A,b}}(\x)$ to yield the result. Recall (\ref{7.4}), (\ref{7.5}) and (\ref{7.6}), specifically 
$$\mathcal{E}_{\textbf{A,b}}(\x)=\begin{bmatrix}
[\Psi_{d+1}^{'}(\x)]_{\mathrm{N}(\x)}\\ 
\vdots\\
[\Psi_{m}^{'}(\x)]_{\mathrm{N}(\x)}
\end{bmatrix}.$$
Due to $\x\in \mathcal{W}_{\textbf{A,b}}$, Theorem \ref{T10} gives $\rank(\mathcal{E}_{\textbf{A,b}}(\x))=|\mathrm{N}(\x)|$. Thus, there exists $\mathcal{S}_0\subset [m-d]$, such that $|\mathcal{S}_0|=|\mathrm{N}(\x)|$, $\rank([\mathcal{E}_{\textbf{A,b}}(\x)]^{\mathcal{S}_0})=|\mathrm{N}(\x)|$. Now we consider  $$\mathcal{J}=\{j\in[m]\setminus [d]: \mathrm{at}\ \mathrm{least}\ 1\ \mathrm{row~of}\ \big[\Psi^{'}_{j}(\x)\big]_{\mathrm{N}(\x)}~\mathrm{appears}\ \mathrm{in}\ (\mathcal{E}_{\textbf{A,b}}(\x))^{\mathcal{S}_0}\}.$$ Then evidently, $|\mathcal{J}| \leq |\mathcal{S}_0| = |\mathrm{N}(\x)|$. Moreover, we can consider the submatrix of $[\textbf{A},\textbf{b}]$ given by $[\textbf{A}^{[d]\cup\mathcal{J}},\textbf{b}^{[d]\cup\mathcal{J}}]$, then we have the rank relation as 
\begin{equation}
    \begin{aligned}
      |\mathrm{N}(\x)| = \rank\big(\mathcal{E}_{\textbf{A,b}}(\x)\big)\geq\rank\big(\mathcal{E}_{\textbf{A}^{[d]\cup\mathcal{J}},\textbf{b}^{[d]\cup\mathcal{J}}}(\x)\big) \geq \rank \big((\mathcal{E}_{\textbf{A,b}}(\x))^{\mathcal{S}_0}\big)=|\mathrm{N}(\x)|,
      \nonumber
    \end{aligned}
\end{equation}
which leads to $\rank(\mathcal{E}_{\textbf{A}^{[d]\cup \mathcal{J}},\textbf{b}^{[d]\cup \mathcal{J}}})= |\mathrm{N}(\x)|$. We use Theorem \ref{T10} again, then we obtain $\x\in \mathcal{W}_{\textbf{A}^{[d]\cup \mathcal{J}},\textbf{b}^{[d]\cup \mathcal{J}}}$. Since $|[d]\cup \mathcal{S}_0|\leq d+|\mathcal{S}_0|\leq 2d$, we can find $\mathcal{S}\subset [m]$, $|\mathcal{S}|=2d$ such that $\x\in \mathcal{W}_{\textbf{A}^\mathcal{S},\textbf{b}^\mathcal{S}}$. For the latter part, from Theorem \ref{T14}, there exists $[\textbf{A},\textbf{b}]\in \mathbb{C}^{2d\times d}$, such that $\mathcal{W}_{\textbf{A,b}}$ contains a generic point, but $\mathcal{W}_{\textbf{A}^\mathcal{S},\textbf{b}^\mathcal{S}}$ where $|\mathcal{S}|=2d-1$ is nowhere dense (under Euclidean topology) and of zero Lebesgue measure. Thus, there exists some $\hat{\textbf{x}}$ such that $\hat{\textbf{x}}\in\mathcal{W}_{\textbf{A,b}}$ but $\hat{\textbf{x}}\notin \mathcal{W}_{\textbf{A}^\mathcal{S},\textbf{b}^\mathcal{S}}$ for all $\mathcal{S}$ with $|\mathcal{S}|=2d-1$. \hfill $\square$
\end{appendix}	
  
\end{document}